\DeclareMathOperator*{\argmin}{arg\,min}
\newtheorem{thm}{Theorem}
\begin{document}

{\baselineskip = 8mm 

\begin{center}
\textbf{\LARGE Multivariate regression modeling in integrative analysis via sparse regularization} %\\[1mm]
\end{center}

\begin{center}
{\large Shuichi Kawano$^{1}$, \ Toshikazu Fukushima$^{2}$, \\
Junichi Nakagawa$^{2}$ \ and \ Mamoru Oshiki$^{3}$}
\end{center}

\begin{center}
\begin{minipage}{14cm}
{
\begin{center}
{\it {\footnotesize 

%\vspace{1.2mm}

%$^1$Department of Mathematical Sciences, Faculty of Mathematics, Kyushu University, 
$^1$Faculty of Mathematics, Kyushu University, 
744 Motooka Nishi-ku Fukuoka 819-0395, Japan. \\

skawano@math.kyushu-u.ac.jp

\vspace{1.2mm}

$^2$Advanced Technology Research Laboratories, Research \& Development, Nippon Steel Corporation, %\\
Futtsu, Chiba 293-8511, Japan. \\

\vspace{1.2mm}

$^3$Division of Environmental Engineering, Faculty of Engineering, Hokkaido University, 
Sapporo, Hokkaido 060-8628, Japan. \\

%\vspace{1.2mm}

}}

\end{center}

%-----------------------------------------------
%\vspace{1mm} 

}
\end{minipage}
\end{center}

\vspace{1mm} 

\begin{abstract}
\noindent 
%For data with multiple responses, multivariate regression analysis is an essential statistical method. 
The multivariate regression model basically offers the analysis of a single dataset with multiple responses. 
However, such a single-dataset analysis often leads to unsatisfactory results. 
Integrative analysis is an effective method to pool useful information from multiple independent datasets and provides better performance than single-dataset analysis. 
%We consider the problem of constructing a multivariate regression model under multiple datasets. 
%An integrative analysis is one of useful statistical methods to combine information from multiple independent datasets. 
In this study, we propose a multivariate regression modeling in integrative analysis. 
The integration is achieved by sparse estimation that performs variable and group selection. 
Based on the idea of alternating direction method of multipliers, we develop its computational algorithm that enjoys the convergence property. 
The performance of the proposed method is demonstrated through Monte Carlo simulation and analyzing wastewater treatment data with microbe measurements.

\end{abstract}

\begin{center}
\begin{minipage}{14cm}
{
\vspace{3mm}

{\small \noindent {\bf Key Words and Phrases:} 
%Alternating direction method of multipliers, 
%Dimension reduction, 
Group selection,
Integrative analysis,
%Meta-analysis,
Regularization,
Sparsity,
Wastewater treatment.
%Variable selection.
%Coordinate descent, Generalized linear model, Principal component regression, Sparse regularization, Variable selection.
%Bayesian predictive distribution, Information criterion, Kullback-Leibler information, 
%$L_1$ regularization, Markov chain Monte Carlo.
}

%\vspace{3mm}

%{\small \noindent {\bf Mathematics Subject Classification:} 62H30, 62G05, 68T10.}
}
\end{minipage}
\end{center}

%%%%%%%%%%%%%%%%%%%%%%%%%%%%%%%%%%%%%%%%%%%%%%%%%%%%%%%%%%%%%%%%%%%%%%%%%%%%%%%%%%%%%%%%%%%%%%%%%%%%%%%%%%%%%%%%%%%%%%%%%%%%%%%%%%%%%%%%%%%%%%%%%%%%%%%%%%%%%%%%
\baselineskip = 8mm

%\vspace{8mm}

\section{Introduction}

Multivariate regression models are widely used for analyzing data with multiple continuous responses and have been studied exhaustively \citep{bedrick1994model,liu1997segmented,rousseeuw2004robust,peng2010regularized,obozinski2011support,qian2022large}. 
The existing multivariate regression methods basically offer the analysis of a single dataset. 
However, single-dataset analysis causes models with low prediction accuracy and results with poor reproducibility \citep{tseng2015integrating,zhao2015integrative}. 
%To overcome this problem, the multi-datasets analysis has been developed. 
If there are multiple datasets from multiple independent studies with comparable designs, multi-datasets analysis can be used to extract useful information and increase sample size. 
Due to the property, multi-datasets analysis generally provides better performance than single-dataset analysis. 
Among multi-datasets analysis methods, integrative analysis has received considerable attention over the past decade \citep{zhao2015integrative}. 
A characteristic of integrative analysis is to analyze raw data from multiple datasets jointly and can outperform classical multi-datasets analysis methods such as meta-analysis, which pools summary statistic obtained by analyzing multiple datasets separately. 
Thus far, there have been many researches about integrative analysis: multiple regression \citep{liu2014integrative,huang2017promotingJASA,chang2022integrative}, logistic regression \citep{ma2011integrative,tang2016fused}, survival data analysis \citep{liu2011high,ma2012integrative,liu2013incorporating,liu2013estimation,cheng2015identification,zhang2016penalized,deng2021integrative,tang2021poststratification,ventz2022integration}, boosting \citep{huang2017promoting,sun2020integrative}, and multivariate analysis \citep{fang2018integrative,dondelinger2020joint,fan2020integrating}. 
Meanwhile, integrative analysis for multivariate regression models has not been fully explored. 
%It means that it is difficult to uncover valuable information from multiple datasets with multiple continuous responses. 

In this article, we propose a multivariate regression modeling in integrative analysis. 
Multiple datasets are integrated by performing group selection across each dataset. 
The group selection is achieved by group lasso \citep{yuan2006model}. 
High-dimensional and low sample size data are becoming common in the current statistical context.
To deal with such data, we perform model estimation and covariate selection simultaneously by lasso \citep{tibshirani1996regression}.
The computational algorithm of the proposed multivariate regression method is established by the technique of alternating direction method of multipliers \citep{boyd2011distributed}. 
We also show the convergence property of the algorithm.

The rest of this article is organized as follows. 
In Section \ref{sec:model}, the proposed multivariate regression method is described. 
The computational algorithm and its theoretical property are described in Section \ref{sec:computation}. 
Simulation studies and analyzing wastewater treatment data with microbiome measurement are contained in Section \ref{sec:NumericalStudy}. 
Conclusions are given in Section \ref{sec:Conclusions}.

%Supplementary materials can be found at \url{https://github.com/ShuichiKawano/spcr-svd/blob/master/suppl_spcr-svd.pdf}. 

%The R language software package {\tt spcr}, which implements SPCR-glm, is available on the Comprehensive R Archive Network (R Core Team, 2016). 

%\begin{figure}
%\centering
%\includegraphics[scale=0.33]{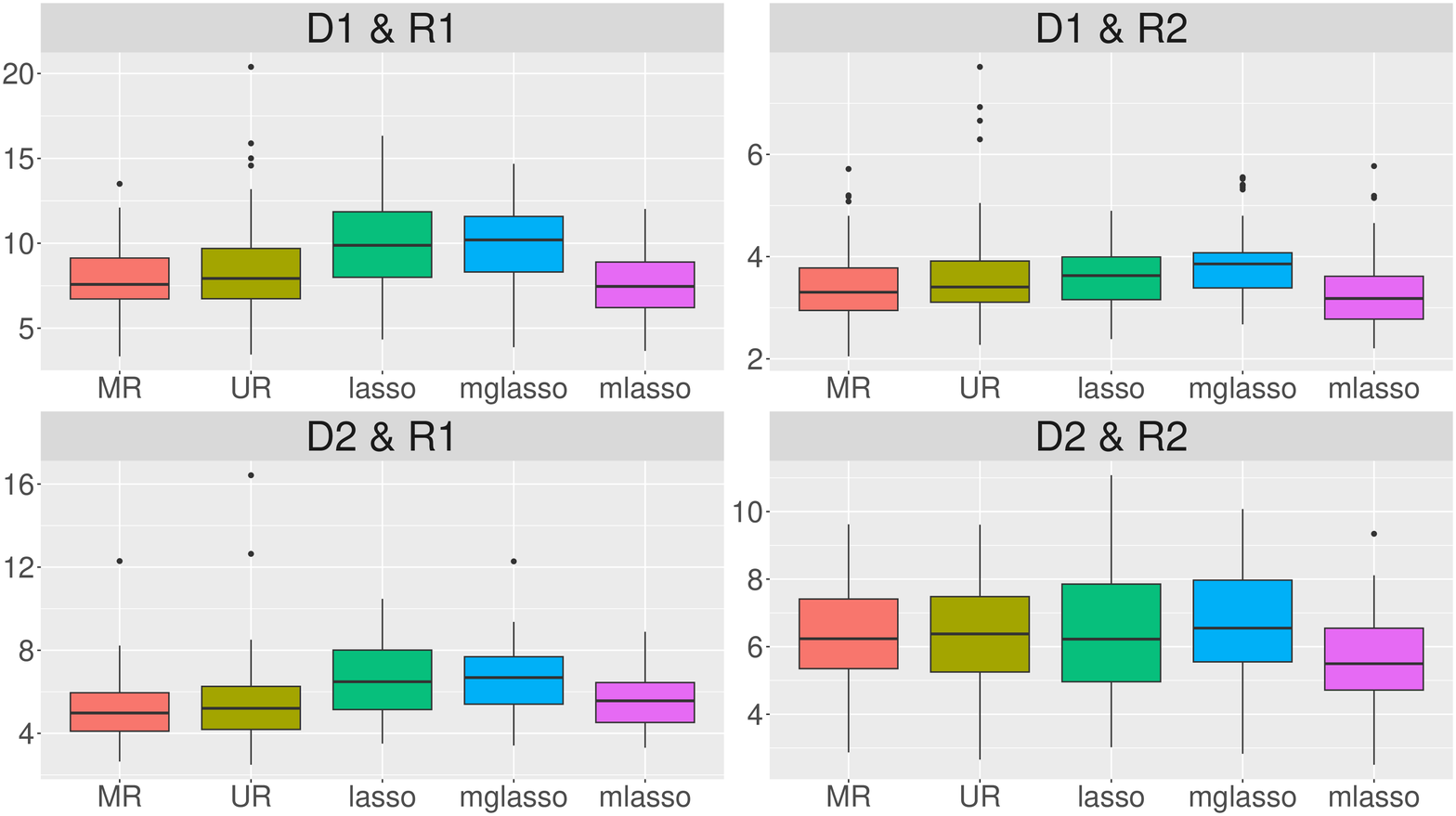}
%\caption{Common logarithm of run-times (seconds) for the simulation in Section \ref{sec:MonteCarlo}.
%}
%\label{fig:timings}
%\end{figure}

%%%%%%%%%%%%%%%%%%%%%%%%%%%%%%%%%%%%%%%%%%%%%%%%%%%%%%%%%%%%%%%%%%%%%%%%%%%%%%%%%%%%%%%%%%%%%%%%%%%%%%%%%%%%%%%%%%%%%%%%%%%%%%%%%%%%%%%%%%%%%%%%%%%%%%%%%%%%%%%%
\section{Model}
\label{sec:model}

Let $\bm y$ be a $q$-dimensional vector of response variables and $\bm x$ be a $p$-dimensional vector of covariates. 
Suppose that we have $M$ datasets for the variables: $\{ ({\bm y}_i^m, {\bm x}_i^m); i=1,\ldots,n_m \} \ (m=1,\ldots,M)$,  where $n_m$ is a sample size of the $m$-th dataset. 
In addition, let $\bm z^m$ be an $r_m$-dimensional vector of explanatory variables that are included in only the $m$-th dataset. 
For the variables, suppose that we obtain $M$ datasets $\{ \bm z^m_i; i=1,\ldots,n_m \} \ (m=1,\ldots,M)$.

We consider a multivariate regression model in the $m$-th dataset in the form
\begin{equation*}
Y^m = \bm 1_{n_m} (\bm \alpha^m)^\top +  X^m B^m + Z^m C^m+ E^m,
\end{equation*}
where $\bm 1_{n_m}$ is an $n_m$-dimensional vector of which all components are one, $\bm \alpha^m$ is a $q$-dimensional vector of intercepts, $Y^m = ({\bm y}_1^m,\ldots,{\bm y}_{n_m}^m)^\top$ is an $n_m \times q$ matrix of response variables, $X^m = ({\bm x}_1^m,\ldots,{\bm x}_{n_m}^m)^\top$ is an $n_m \times p$ matrix of explanatory variables, $B^m$ is a $p \times q$ matrix of coefficients, $Z^m = ({\bm z}_1^m,\ldots,{\bm z}_{n_m}^m)^\top$ is an $n_m \times r_m$ matrix of explanatory variables included in only the $m$-th dataset, $C^m$ is an $r_m \times q$ matrix of coefficients, and $E^m$ is an $n_m \times q$ matrix of errors with mean $O_{n_m \times q}$ and variance-covariance matrix $I_{n_m} \otimes \Sigma$. 
Here, $O_{n_m \times q}$ is an $n_m \times q$ zero matrix, $I_{n_m}$ is an $n_m \times n_m$ identity matrix, and $\Sigma$ is a $q \times q$ positive definite matrix. 
We denote the $(j,k)$-th element of $B^m$ as $\beta_{jk}^m$ and set ${\bm \beta}_{jk}=(\beta_{jk}^1,\ldots,\beta_{jk}^M)^\top \ (j=1,\ldots,p; \ k=1,\ldots,q)$. 
We assume homogeneity models: $B^m$'s have the same sparse structure across datasets such that $I(\beta_{jk}^1=0)=\cdots=I(\beta_{jk}^M=0)$ for all $(j,k)$.

To estimate the parameters $\bm \alpha^m, B^m$, $C^m$ under the homogeneity model, we consider the following minimization problem
\begin{equation}
\min_{\substack{\bm \alpha^1,\ldots,\bm \alpha^M \\ B^1,\ldots,B^M \\ C^1,\ldots,C^M}} \left[ \sum_{m=1}^M \frac{1}{2 n_m} \left\| Y^m - \bm 1_{n_m} (\bm \alpha^m)^\top -  X^m B^m - Z^m C^m \right\|_F^2 + \lambda \sum_{j=1}^p \sum_{k=1}^q \| {\bm \beta}_{jk} \|_2 + \gamma \sum_{m=1}^M \| C^m \|_1 \right],
\label{eq:mini1_lasso}
\end{equation}
where $\| \cdot \|_F$ is the Frobenous norm, $\lambda$ and $\gamma$ are regularization parameters with non-negative value, and $\| \cdot \|_q$ is the $L_q \ (q=1,2)$ norm of a vector. 
The second term is the group lasso penalty, which guarantees the homogeneity structure of a model. 
The third term is the lasso penalty. 
This term induces variable selection for the variables $\bm z^m \ (m=1,\ldots,M)$. 
In general, the group lasso includes the square of the number of dimensions of the parameter vector in the penalty term. 
Note that we omit it in this study, because it is the constant $\sqrt{M}$.

%%%%%%%%%%%%%%%%%%%%%%%%%%%%%%%%%%%
%%%%%%%%%%%%%%%%%%%%%%%%%%%%%%%%%%%
\section{Computation}
\label{sec:computation}

%%%%%%%%%%%%%%%%%%%%%%%%%%%%%%%%%%%%%%%%%%%%%%%%%%%%%%%%%%%%%%%%%%%%%%%%%%%%%%%%%%
\subsection{Estimation algorithm}
\label{sec:EstimationAlgorithm}

We use the alternating direction method of multipliers (ADMM; \citet{boyd2011distributed}) to obtain an estimate of the parameters $\bm \alpha^1,\ldots,\bm \alpha^M, B^1,\ldots,B^M, C^1,\ldots,C^M$. 
We first rewrite the minimization problem \eqref{eq:mini1_lasso} as
\begin{align}
%\begin{split}
&\min_{\substack{\bm \alpha^1,\ldots,\bm \alpha^M \\ B^1,\ldots,B^M \\ C^1,\ldots,C^M \\ D^1,\ldots,D^M \\ \bm \eta_{11},\ldots,\bm \eta_{pq}}} \left[ \sum_{m=1}^M \frac{1}{2 n_m} \left\| Y^m - \bm 1_{n_m} (\bm \alpha^m)^\top -  X^m B^m - Z^m C^m \right\|_F^2 + \lambda \sum_{j=1}^p \sum_{k=1}^q \| {\bm \eta}_{jk} \|_2 + \gamma \sum_{m=1}^M  \| D^m \|_1 \right] \nonumber \\
& \text{subject to} \quad {\bm \eta}_{jk} = {\bm \beta}_{jk}, \quad C^m = D^m.
%\end{split}
\label{eq:mini2_lasso}
\end{align}
From the problem \eqref{eq:mini2_lasso}, we can obtain the scaled augmented Lagrangian
\begin{align}
\begin{split}
& \sum_{m=1}^M \frac{1}{2 n_m} \left\| Y^m - \bm 1_{n_m} (\bm \alpha^m)^\top -  X^m B^m - Z^m C^m \right\|_F^2 
+ \lambda \sum_{j=1}^p \sum_{k=1}^q \| {\bm \eta}_{jk} \|_2 
+ \gamma \sum_{m=1}^M \| D^m \|_1 \\
& \hspace{3mm} + \frac{\rho}{2} \sum_{j=1}^p \sum_{k=1}^q \| {\bm \eta}_{jk} - {\bm \beta}_{jk} + {\bm u}_{jk} \|_2^2
+ \frac{\rho}{2} \sum_{m=1}^M \| C^m - D^m + V^m \|_F^2,
\end{split}
\label{eq:augmentedLagrangian}
\end{align}
where $\bm u_{jk} \ (j=1,\ldots,p;\ k=1,\ldots,q)$ and $V^m \ (m=1,\ldots,M)$ are dual variables and $\rho$ is a penalty parameter with a positive value.

When we set $(\bm \alpha^m)^\ell, (B^m)^\ell, (C^m)^\ell, (D^m)^\ell, \bm \eta_{jk}^\ell, \bm u_{jk}^\ell, (V^m)^\ell \ (m=1,\ldots,M, \ j=1,\ldots, p, \ k=1,\ldots, q)$ as the estimates of $\bm \alpha^m, B^m, C^m, D^m, \bm \eta_{jk}, \bm u_{jk}, V^m$ in the $\ell$-th iteration, respectively, the idea of ADMM algorithm induces the parameter update as follows:
\begin{align*}
%\begin{split}
%\begin{cases}
%\left\{
(\bm \alpha^m)^{\ell+1} &= \argmin_{\bm \alpha^m} \left\| Y^m - \bm 1_{n_m} (\bm \alpha^m)^\top -  X^m (B^m)^{\ell} - Z^m (C^m)^{\ell} \right\|_F^2, \\
(B^m)^{\ell+1} &= \argmin_{B^m} \Bigg[ \frac{1}{2 n_m} \left\| Y^m - \bm 1_{n_m} \{(\bm \alpha^m)^{\ell+1}\}^\top -  X^m B^m - Z^m (C^m)^{\ell} \right\|_F^2 \\
& \hspace{30pt} + \frac{\rho}{2} \| B^m - (H^m)^{\ell} - (U^m)^{\ell} \|_F^2 \Bigg], \\
(C^m)^{\ell+1} &= \argmin_{C^m} \Bigg[ \frac{1}{2 n_m} \left\| Y^m - \bm 1_{n_m} \{(\bm \alpha^m)^{\ell+1}\}^\top -  X^m (B^m)^{\ell+1} - Z^m (C^m)^{\ell} \right\|_F^2 \\
& \hspace{30pt} + \frac{\rho}{2} \| C^m - (D^m)^{\ell} + (V^m)^{\ell} \|_F^2 \Bigg], \\
(D^m)^{\ell+1} &= \argmin_{D^m} \left[ \frac{\rho}{2} \| (C^m)^{\ell+1} - D^m + (V^m)^{\ell} \|_F^2 
+ \gamma \| D^m \|_1 \right], \\
\bm \eta_{jk}^{\ell+1} &= \argmin_{\bm \eta_{jk}} \left[ \frac{\rho}{2} \| {\bm \eta}_{jk} - {\bm \beta}_{jk}^{\ell+1} + {\bm u}_{jk}^{\ell} \|_2^2
+ \lambda \| {\bm \eta}_{jk} \|_2 \right], \\
\bm u_{jk}^{\ell+1} &= {\bm u}_{jk}^{\ell} + {\bm \eta}_{jk}^{\ell+1} - {\bm \beta}_{jk}^{\ell+1}, \\
(V^m)^{\ell+1} &= (V^m)^{\ell} + (C^m)^{\ell+1} - (D^m)^{\ell+1},
%\right\}
%\end{cases}
%\end{split}
\end{align*}
where $H^m$ and $U^m$ are, respectively, a matrix whose the $(j,k)$-th element is $\eta_{jk}^m$ and $u_{jk}^m$. 
In the update of $B^m$, we note that the equation $\sum_{j=1}^p \sum_{k=1}^q \| {\bm \eta}_{jk} - {\bm \beta}_{jk} + {\bm u}_{jk} \|_2^2 = \sum_{m=1}^M \| B^m - H^m - U^m \|_F^2$ is used. 
The update formula of $\bm \alpha^m, B^m, C^m$ is easy to obtain. 
Meanwhile, minimization of $D^m$ and $\bm \eta_{jk}$ can be done using soft-thresholding operator. 
We use two soft-thresholding operators: the soft-thresholding operator for a scalar and a vector, which is
\begin{equation*}
S(a,b)={\rm sign} (a) (|a|-b)_+, \qquad \mathcal{S} ({\bm c},d) = \left( 1 - \frac{d}{\|\bm c\|_2} \right)_+ {\bm c}, 
\end{equation*}
respectively. 
Here, $a$, $b$, $d$ are scalars and $\bm c$ is a vector. 
Overall, the update is summarized in Algorithm \ref{algorithm:algorithm1}. 
%This gives rise to the ADMM algorithm as follows: 

\begin{algorithm}[htb]
  \caption{ADMM algorithm for multivariate regression model in integrative analysis}
  \begin{enumerate}
    \item
    Initialization: $\ell=0$ and $(\bm \alpha^m)^0, (B^m)^0, (C^m)^0, (D^m)^0, \bm \eta_{jk}^0, \bm u_{jk}^0, (V^m)^0 \ (m=1,\ldots,M, \ j=1,\ldots, p, \ k=1,\ldots, q)$.

    \item
    Update $\ell = \ell+1$.
    \begin{enumerate}
    \item 
    $\displaystyle{(\bm \alpha^m)^{\ell+1} = \frac{1}{n_m} \left\{ Y^m - X^m (B^m)^{\ell} - Z^m (C^m)^{\ell} \right\}^\top \bm 1_{n_m}}$
    
    \item 
    $(B^m)^{\ell+1} = \left\{ \left(X^m\right)^\top X^m + n_m \rho I_p \right\}^{-1} \Big[ \left( X^m \right)^\top \left[ Y^m - \bm 1_{n_m} \{(\bm \alpha^m)^{\ell+1}\}^\top - Z^m (C^m)^{\ell} \right]$ \\
    \hspace{10pt} $ + \  n_m  \rho \{(H^m)^{\ell} + (U^m)^{\ell}\} \Big]$
    
    \item
    $(C^m)^{\ell+1} = \left\{ \left(Z^m\right)^\top Z^m + n_m \rho I_{r_m} \right\}^{-1}  \Big[ \left( Z^m \right)^\top \left\{ Y^m - \bm 1_{n_m} \{(\bm \alpha^m)^{\ell+1} \}^\top - X^m (B^m)^{\ell+1} \right\}$ \\
    \hspace{10pt} $+ \ n_m  \rho \{ (D^m)^{\ell} - (V^m)^{\ell} \} \Big]$
    
    \item
    $\displaystyle{((D^m)^{\ell+1})_{ij} = {S} \left( ( (C^m)^{\ell+1} + (V^m)^{\ell} )_{ij} , \frac{\gamma}{\rho} \right), \quad (i=1,\ldots,r_m; \ j=1,\ldots,q)}$
    
    \item
    ${\bm \eta}_{jk}^{\ell+1} = \displaystyle{\mathcal{S} \left( {\bm \beta}_{jk}^{\ell+1} - {\bm u}_{jk}^{\ell} , \frac{\lambda}{\rho} \right)}$
    
    \item
    $\bm u_{jk}^{\ell+1} = {\bm u}_{jk}^{\ell} + {\bm \eta}_{jk}^{\ell+1} - {\bm \beta}_{jk}^{\ell+1}$
    
    \item
    $(V^m)^{\ell+1} = (V^m)^{\ell} + (C^m)^{\ell+1} - (D^m)^{\ell+1}$
    \end{enumerate}

  \item
  Repeat Step 2 until convergence. 
  In our numerical study, the convergence condition is that the $\ell_2$ norm of the difference between two consecutive quantities of z\eqref{eq:augmentedLagrangian} is smaller than a prefixed threshold. 
  \end{enumerate}
  \label{algorithm:algorithm1}
\end{algorithm}

We provide the property of Algorithm \ref{algorithm:algorithm1}. 
We set $\bm \theta=( (\bm \alpha^1)^\top,\ldots, (\bm \alpha^M)^\top, (\mathrm{vec}(B^1))^\top,\ldots, $\\$(\mathrm{vec}(B^M))^\top, (\mathrm{vec}(C^1))^\top,\ldots, (\mathrm{vec}(C^M))^\top)^\top$ and \\$\mathcal L (\bm \theta) = \sum_{m=1}^M \left\| Y^m - \bm 1_{n_m} (\bm \alpha^m)^\top -  X^m B^m - Z^m C^m \right\|_F^2/(2 n_m) + \lambda \sum_{j=1}^p \sum_{k=1}^q \| {\bm \beta}_{jk} \|_2 + $\\$\gamma \sum_{m=1}^M \| C^m \|_1$. 
In addition, let $\bm \theta^\ell$ be the estimate of $\bm \theta$ in the $\ell$-th iteration derived from Algorithm \ref{algorithm:algorithm1}. 
Then Algorithm \ref{algorithm:algorithm1} satisfies the following convergence property. 
\begin{thm}
Assume that there exists at least one solution $\bm \theta^\ast$ of \eqref{eq:mini1_lasso}. 
Then $\lim_{\ell \to \infty} \mathcal L (\bm \theta^\ell) = \mathcal L (\bm \theta^\ast)$ holds.
Furthermore, $\lim_{\ell \to \infty} \| \bm \theta^\ell - \bm \theta^\ast \|_2=0$ holds whenever $\bm \theta^\ast$ is a unique solution. 
\end{thm}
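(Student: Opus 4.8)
The plan is to recognize problem \eqref{eq:mini2_lasso} as a standard two-block ADMM and to invoke the convergence theory of \citet{boyd2011distributed}. I would group the primal variables as $x=(\bm\alpha^1,\ldots,\bm\alpha^M,B^1,\ldots,B^M,C^1,\ldots,C^M)$ and the auxiliary variables as $z=(\bm\eta_{11},\ldots,\bm\eta_{pq},D^1,\ldots,D^M)$, so that the constraints ${\bm\eta}_{jk}={\bm\beta}_{jk}$ and $C^m=D^m$ form a linear system $Ax+Gz=0$ in which the $z$-block enters through $G=-I$. Writing $f(x)=\sum_{m=1}^M\|Y^m-\bm 1_{n_m}(\bm\alpha^m)^\top-X^mB^m-Z^mC^m\|_F^2/(2n_m)$ and $g(z)=\lambda\sum_{j=1}^p\sum_{k=1}^q\|\bm\eta_{jk}\|_2+\gamma\sum_{m=1}^M\|D^m\|_1$, the first step is to check the hypotheses of the convergence theorem: $f$ is a convex quadratic and $g$ is a finite sum of $L_1$ and $L_2$ norms, so both are closed, proper, and convex, while the coupling constraint is affine.

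Next I would establish that the (unaugmented) Lagrangian possesses a saddle point. Since the objective is convex and the constraints are affine, the assumed existence of a minimizer $\bm\theta^\ast$ of \eqref{eq:mini1_lasso} --- equivalently a primal optimum of \eqref{eq:mini2_lasso} --- yields strong duality and an optimal multiplier, which furnishes the required saddle point. With both hypotheses in force, the theorem of \citet{boyd2011distributed} gives vanishing primal residuals, $\|{\bm\eta}_{jk}^\ell-{\bm\beta}_{jk}^\ell\|_2\to0$ and $\|(C^m)^\ell-(D^m)^\ell\|_F\to0$, together with convergence of the split objective $f(x^\ell)+g(z^\ell)$ to the optimal value $\mathcal L(\bm\theta^\ast)$. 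The one subtlety is that $\mathcal L(\bm\theta^\ell)$ evaluates the penalties at ${\bm\beta}_{jk}^\ell$ and $(C^m)^\ell$ rather than at ${\bm\eta}_{jk}^\ell$ and $(D^m)^\ell$; because the norms are continuous and the residuals vanish, the two objective sequences share the same limit, which gives $\lim_{\ell\to\infty}\mathcal L(\bm\theta^\ell)=\mathcal L(\bm\theta^\ast)$.

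For the second assertion I would argue by subsequences. First I would show that $\{\bm\theta^\ell\}$ is bounded: the Lyapunov estimate underlying the convergence theorem forces the dual iterates $\bm u_{jk}^\ell,(V^m)^\ell$ to converge and, since $G=-I$, keeps the $z$-iterates bounded; the vanishing residuals then bound $\{((B^m)^\ell,(C^m)^\ell)\}$, and the intercept update in Algorithm \ref{algorithm:algorithm1}, being an affine image of $(B^m)^\ell$ and $(C^m)^\ell$, bounds $\{(\bm\alpha^m)^\ell\}$. Any convergent subsequence $\bm\theta^{\ell_j}\to\bar{\bm\theta}$ is feasible by residual convergence and attains $\mathcal L(\bar{\bm\theta})=\mathcal L(\bm\theta^\ast)$ by the first part and continuity of $\mathcal L$, hence is itself a minimizer of \eqref{eq:mini1_lasso}. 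If $\bm\theta^\ast$ is the unique minimizer, every subsequential limit equals $\bm\theta^\ast$, so the whole bounded sequence converges and $\lim_{\ell\to\infty}\|\bm\theta^\ell-\bm\theta^\ast\|_2=0$.

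The main obstacle I anticipate is the boundedness step. In the high-dimensional, low-sample-size regime $f$ need not be coercive, because $X^m$ and $Z^m$ may lack full column rank, so a bounded objective value does not by itself bound $B^m$ or $C^m$; boundedness must instead be drawn from the iteration, via convergence of the dual variables and the $G=-I$ structure of the splitting. A secondary point worth verifying is that the sequential updates of $\bm\alpha^m$, $B^m$, and $C^m$ in Algorithm \ref{algorithm:algorithm1} realize the exact joint minimization of the augmented Lagrangian over the $x$-block demanded by two-block ADMM theory; this is immediate when the design is centered and the $X$- and $Z$-blocks are orthogonal, and should otherwise be reconciled with the theorem's hypotheses.
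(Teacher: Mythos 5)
Your route founders on the point you flag at the very end and then set aside as ``secondary'': it is in fact the crux. For the two-block convergence theorem of \citet{boyd2011distributed} to apply, the $x$-update must be the \emph{exact joint} minimizer of the augmented Lagrangian \eqref{eq:augmentedLagrangian} over the entire block $(\bm\alpha^1,\ldots,\bm\alpha^M,B^1,\ldots,B^M,C^1,\ldots,C^M)$ with $(\bm\eta_{jk},D^m)$ and the duals held fixed. Algorithm \ref{algorithm:algorithm1} does not do this: it performs a single Gauss--Seidel sweep, updating $\bm\alpha^m$ with $(B^m)^{\ell},(C^m)^{\ell}$ fixed, then $B^m$ with $(C^m)^{\ell}$ fixed, then $C^m$ with $(B^m)^{\ell+1}$ fixed. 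Because the loss couples these variables through $\| Y^m - \bm 1_{n_m}(\bm\alpha^m)^\top - X^mB^m - Z^mC^m\|_F^2$, one sweep produces the joint minimizer only under orthogonality conditions (centered columns, $(X^m)^\top Z^m=O$) that the paper never assumes and that fail for generic data. What Algorithm \ref{algorithm:algorithm1} actually is, from the two-block viewpoint, is an ADMM with an \emph{inexact} (one-pass block-coordinate) $x$-minimization, equivalently a Gauss--Seidel multi-block scheme; the packaged theorem does not cover such schemes, and multi-block/inexact variants are known to require additional hypotheses (e.g., summable inexactness or strong convexity conditions) to converge. As written, your argument proves the theorem for a different algorithm than the one in the statement, and closing that mismatch is not a routine verification but the main technical burden.

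For comparison, the paper does not invoke the packaged theorem at all. Its supplementary proof follows \citet{ye2011split}: it writes the first-order optimality conditions generated by each update and by a solution $\bm\theta^\ast$ of \eqref{eq:mini1_lasso}, shows $\bm\theta^\ast$ corresponds to a fixed point, combines inner products of the error sequences into a telescoping identity all of whose terms are nonnegative by subgradient and convexity inequalities, concludes that the resulting Bregman-type gaps and residuals vanish, and only then transfers the conclusion from $(\bm\eta_{jk},\bm d_k^m)$ to $(\bm\beta_{jk},\bm c_k^m)$ by continuity and norm equivalence --- the same device you use for your ``one subtlety.'' For the uniqueness assertion it again defers to the argument of \citet{ye2011split} applied to the convex function $\mathcal L$, rather than your boundedness-plus-subsequence argument (which is sound in outline, but inherits the same unproven premise, since both the Lyapunov monotonicity and the vanishing residuals you rely on are conclusions of the two-block theory you cannot yet invoke). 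To salvage your approach you would either need an inexact-ADMM convergence result compatible with the one-pass Gauss--Seidel $x$-update, or you would need to redo the direct optimality-condition analysis --- which is essentially what the paper's proof does.
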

\begin{proof}
See the supplementary material S1.
\end{proof}

In this theorem, we note that the $L_2$ norm in $\| \bm \theta^\ell - \bm \theta^\ast \|_2$ is not essential. 
This theorem also stands for general norms of a vector space.

%%%%%%%%%%%%%%%%%%%%%%%%%%%%%%%%%%%%%%%%%%%%%%%%%%%%%%%%%%%%%%%%%%%%%%%%%%%%%%%%%%
\subsection{Selection of tuning parameter}
\label{sec:SelectionTuning}

We have three tuning parameters: $\lambda, \gamma, \rho$. 
According to \cite{boyd2011distributed}, the penalty parameter $\rho$ is fixed as one. 
The two regularization parameters $\lambda, \gamma$ are selected by $K$-fold cross-validation. 
When we divide the original $m$-th dataset into the $K$ datasets $(Y^{m}_{(1)}, X^{m}_{(1)}, Z^{m}_{(1)}), \ldots, (Y^{m}_{(K)}, X^{m}_{(K)}, Z^{m}_{(K)})$, the objective function for the $K$-fold cross-validation is
\begin{equation}
{\rm CV} = \frac{1}{K} \sum_{k=1}^K \sum_{m=1}^M \frac{1}{2 n_m^{(k)}} \left\| Y^m_{(k)} - \bm 1_{n_m^{(k)}} (\hat{\bm \alpha}^m_{(-k)})^\top -  X^m_{(k)} \hat{B}^m_{(-k)} - Z^m_{(k)} \hat{C}^m_{(-k)} \right\|_F^2,
\label{eq:CV}
\end{equation}
where $\hat{\bm \alpha}^m_{(-k)}, \hat{B}^m_{(-k)}, \hat{C}^m_{(-k)}$ are the estimates of ${\bm \alpha}^m, {B}^m, {C}^m$, respectively, computed with the data excluding the $k$-th dataset, and $n_m^{(k)}$ means the sample size in $m$-th and $k$-th dataset.

We choose the values of the regularization parameters $\lambda, \gamma$ from the minimizers of CV in \eqref{eq:CV}.

%%%%%%%%%%%%%%%%%%%%%%%%%%%%%%%%%%%%%%%%%%%%%%%%%%%%%%%%%%%%%%%%%%%%%%%%%%%%%%%%%%%%%%%%%%%%%%%%%%%%%%%%%%%%%%%%%%%%%%%%%%%%%%%%%%%%%%%%%%%%%%%%%%%%%%%%%%%%%%%%
\section{Numerical study}
\label{sec:NumericalStudy}

%%%%%%%%%%%%%%%%%%%%%%%%%%%%%%%%%%%%%%%%%%%%%%%%%%%%%%%%%%%%%%%%%%%%%%%%%%%%%%%%%%
\subsection{Monte Carlo simulations}
\label{sec:MonteCarlo}

We investigated the usefulness of our proposed method through Monte Carlo simulations. 
Data were generated from the true model
\begin{equation*}
Y^m = X^m B^\ast + Z^m C^{\ast m} + E^m, \quad (m=1,\ldots,M).
\end{equation*}
We considered $M=2, 3$.

Regardless of the number of datasets, we set as follows. 
We set $q=2$ and $B^\ast = (B_1^{\ast \top}, B_2^{\ast \top})^\top$. 
Here
\begin{equation*}
B_1^{\ast} = 
\begin{pmatrix}
1 & 1 & 1 & 1 & 1 & 0 & 0 & 0 & 0 & 0 \\
0 & 0 & 0 & 0 & 0 & 0.5 & 0.5 & 0.5 & 0.5 & 0.5 \\
\end{pmatrix}^\top
\end{equation*}
and $B_2^{\ast} = O_{s \times 2}$. 
We considered $s=5, 50$. 
Each row of the design matrix corresponding to $B_1^\ast$ was independently generated from a multivariate normal distribution having mean zero vector and variance-covariance matrix of which the $(i, j)$-th element is $\rho_x^{|i-j|}$. 
We considered $\rho_x=0.1, 0.9$. 
On the other hand, each element of the design matrix corresponding to $B_2^\ast$ was independently generated from $N(0,1)$. 
Each row of the error matrix $E^m$ was independently generated from a multivariate normal distribution having mean zero vector and variance-covariance matrix of which the $(i, j)$-th element is $\rho_y^{|i-j|}$. 
We considered $\rho_y=0.1, 0.9$. 
The sample size was set to $n=15, 25, 50, 75$.

For the case $M=2$, we set as follows. 
We set $C^{\ast m} = (C_1^{\ast m \top}, C_2^{\ast m \top})^\top$. 
Here
\begin{equation}
\begin{split}
C_1^{\ast 1} &= 
\begin{pmatrix}
1 & 1 & 1 & 1 & 1 & 0 & 0 & 0 & 0 & 0 \\
0 & 0 & 0 & 0 & 0 & 0.5 & 0.5 & 0.5 & 0.5 & 0.5 \\
\end{pmatrix}^\top, \\
C_1^{\ast 2} &= 
\begin{pmatrix}
0 & 0 & 0 & 0 & 0 & 0.5 & 0.5 & 0.5 & 0.5 & 0.5 \\
1 & 1 & 1 & 1 & 1 & 0 & 0 & 0 & 0 & 0 \\
\end{pmatrix}^\top
\end{split}
\label{eq:SimTrueCoef}
\end{equation}
and $C_2^{\ast m} = O_{s \times 2}$. 
Each row of the design matrix for $C_1^{\ast m}$ was generated from similar manner of that of $B_1^\ast$. 
Each element of the design matrix for $C_2^{\ast m}$ was independently generated from $N(0,1)$.

For the case $M=3$, we set as follows. 
We set $C^{\ast m} = (C_1^{\ast m \top}, C_2^{\ast m \top}, C_3^{\ast m \top})^\top$. 
Here $C_1^{\ast 1}$ and $C_1^{\ast 2}$ were the same as in \eqref{eq:SimTrueCoef}, 
\begin{equation*}
C_1^{\ast 3} = 
\begin{pmatrix}
0 & 0 & 0 & 1 & 1 & 1 & 1 & 0.5 & 0.5 & 0.5 \\
1 & 1 & 1 & 0.5 & 0.5 & 0.5 & 0.5 & 0 & 0 & 0 \\
\end{pmatrix}^\top,
\end{equation*}
and $C_2^{\ast m} = O_{s \times 2}$. 
The generation of each row of the design matrix for $C_1^{\ast m}$ and each element of the design matrix for $C_2^{\ast m}$ was the same as in the case $M=2$.

We compared our proposed multivariate regression method (MR) with univariate multiple regression method in integrative analysis (UR), multiple regression method estimated by lasso (lasso), multivariate regression method estimated by group lasso (mglasso), and multivariate regression method estimated by lasso (mlasso). 
%For MR and UR, we used $\gamma=\gamma_1=\cdots=\gamma_M$ and $\rho=1$ in ADMM. 
For MR and UR, we used $K=5$ in \eqref{eq:CV}. 
The comparative methods lasso and mglasso were computed by the package \textbf{glmnet} in the software \textsf{R}.

The simulation was conducted 100 times. 
The performance was evaluated in terms of mean squared error (MSE) given by $\mathrm{MSE}=E[ (y-\hat{y})^2 ]$, false positive rate (FPR), and false negative rate (FNR). 
MSE was estimated by 1,000 random samples.
FPR and FNR are defined as
\begin{equation*}
\mathrm{FPR}=
\frac{1}{100}
\sum_{k=1}^{100}
\frac{\left|\left\{ 
j:\hat{\zeta}^{(k)}_{j}\neq0~\wedge~\zeta^{\ast}_{j}= 0
\right\}\right|}
{\left|\left\{
j:\zeta^{\ast}_{j}\neq 0 
\right\}\right|}, \quad 
\mathrm{FNR}=
\frac{1}{100}
\sum_{k=1}^{100}
\frac{\left|\left\{
j:\hat{\zeta}^{(k)}_{j}=0~\wedge~\zeta^{\ast}_{j} \neq 0
\right\}\right|}
{\left|\left\{
j:\zeta^{\ast}_{j}= 0 
\right\}\right|}. 
\end{equation*}
Here, ${\zeta}^{*}_{j}$ is the true $j$-th element, $\hat{\zeta}^{(k)}_{j}$ is the estimated $j$-th element for the $k$-th simulation, and $|\{\ast\}|$ is the number of elements included in a set $\{\ast\}$, where we set $\bm \zeta = (\mathrm{vec}(B^1)^\top,\ldots, \mathrm{vec}(B^M)^\top, \mathrm{vec}(C^{1})^\top,\ldots, \mathrm{vec}(C^{M})^\top)^\top$. 
%Table \ref{table:case1_TPRTNR} represents the means and standard deviations of TPR and TNR. 
%Many methods provided higher ratios of TPRs, whereas SPCR sometimes did not. 
%SPLS provided the highest ratios of TNRs in all situations. 
%These tendencies were essentially unchanged among all cases. 
%The results from Case 2 to Case 5 are shown in the supplementary material. 

We summarize boxplots of MSE from Figures \ref{fig:SimuM2n15} to \ref{fig:SimuM2n75} for $M=2$ and Figures S.1 to S.4 for $M=3$ in the supplementary material S2. 
In the figures, D1 and D2 indicate, respectively, a first dataset and a second dataset, while R1 and R2 indicate, respectively, a first response variable and a second response variable. 
Therefore, the term ``D1 \& R1" means the result for a first response variable obtained by analyzing a first dataset. 
First, we discuss the results for $M=2$. 
The lasso and mglasso provide relatively larger MSE than MR, UR, and mlasso. 
The mlasso gives the smallest MSE when $n$ is small, while it is as small as or larger than MR and UR when $n$ is large. 
The MR and UR produce similar MSE, but we note that the UR can sometimes have large variances (e.g., see Figures \ref{fig:SimuM2n15s50rx09ry01} and \ref{fig:SimuM2n15s50rx09ry09}). 
Next, we discuss those for $M=3$. 
The overall result is the same as when $M=2$.
The mlasso in $M=3$ is unstable, because it gives the smallest or largest MSE when $n=15, 25$.

%The results for $\sigma=1,2$ had similar tendencies. 
%PCR and PLS were worst in almost all cases. 
%First, we discuss the results among SPCRsvd-LADMM, SPCRsvd-ADMM, and SPCR. 
%SPCRsvd-LADMM and SPCRsvd-ADMM were competitive with SPCR. 
%In particular, SPCRsvd-LADMM and SPCRsvd-ADMM provided smaller MSEs than SPCR in almost all cases when $k=1$. 
%Note that SPCR provided large values of standard deviation in some cases. 
%This means that SPCR sometimes produces so large value of MSE. 
%This fact can cause instability of SPCR. 
%Compared to SPLS, SPCRsvd-LADMM and SPCRsvd-ADMM were slightly inferior in many cases when $k=5$. 
%However, SPLS produced so large values of MSEs in many cases when $k=1$.  
%From this experiment, we observed that SPCRsvd-LADMM and SPCRsvd-ADMM provided relatively stable smaller values of MSEs than other methods. %, while our proposed methods 

The results of FPR and FNR are summarized in Figures \ref{fig:SimuM2n15_FPRFNR} to \ref{fig:SimuM2n75_FPRFNR} for $M=2$ and Figures S.5 to S.8 for $M=3$. 
In the figures, for example, the term ``FPR (D1)" represents the result for FPR obtained by analyzing a first dataset. 
As the overall result for $M=3$ is the same as when $M=2$, we describe the results for $M=2$. 
First, we discuss the results for FPR. 
In many cases, lasso gives the smallest MSE when $n=15, 25$, while lasso and mlasso are smallest when $n=50, 75$. 
The MR and UR are larger than other methods when $n$ is small, but they are as small as lasso and mlasso when $n$ is large. 
The mlasso provides large variances when $n=15, 25$. 
Next, we discuss for FNR. 
The MR and UR often give the smallest FNR, followed by mlasso. 
The lasso and mglasso have relatively large FNR.

%%%%%%%%%%%%%%%%%%%%%%%%%%%%%%%%%%%%%%%%%%%%%%%%%%%%%%%
%%%%%%%%%%%%%%%%%%%%%%%%%%%%%%%%%%%%%%%%%%%%%%%%%%%%%%%
\begin{figure}[htbp]
\begin{minipage}[b]{0.5\linewidth}
\centering
\includegraphics[width=8cm,height=4.6cm]{M2n15s5rx01ry01.eps}
\vspace{-3.5mm}
\subcaption{$s=5, \rho_x=0.1, \rho_y=0.1$}
%\label{Poi_SPCR}
\vspace{2.5mm}
\end{minipage}
\begin{minipage}[b]{0.5\linewidth}
\centering
\includegraphics[width=8cm,height=4.6cm]{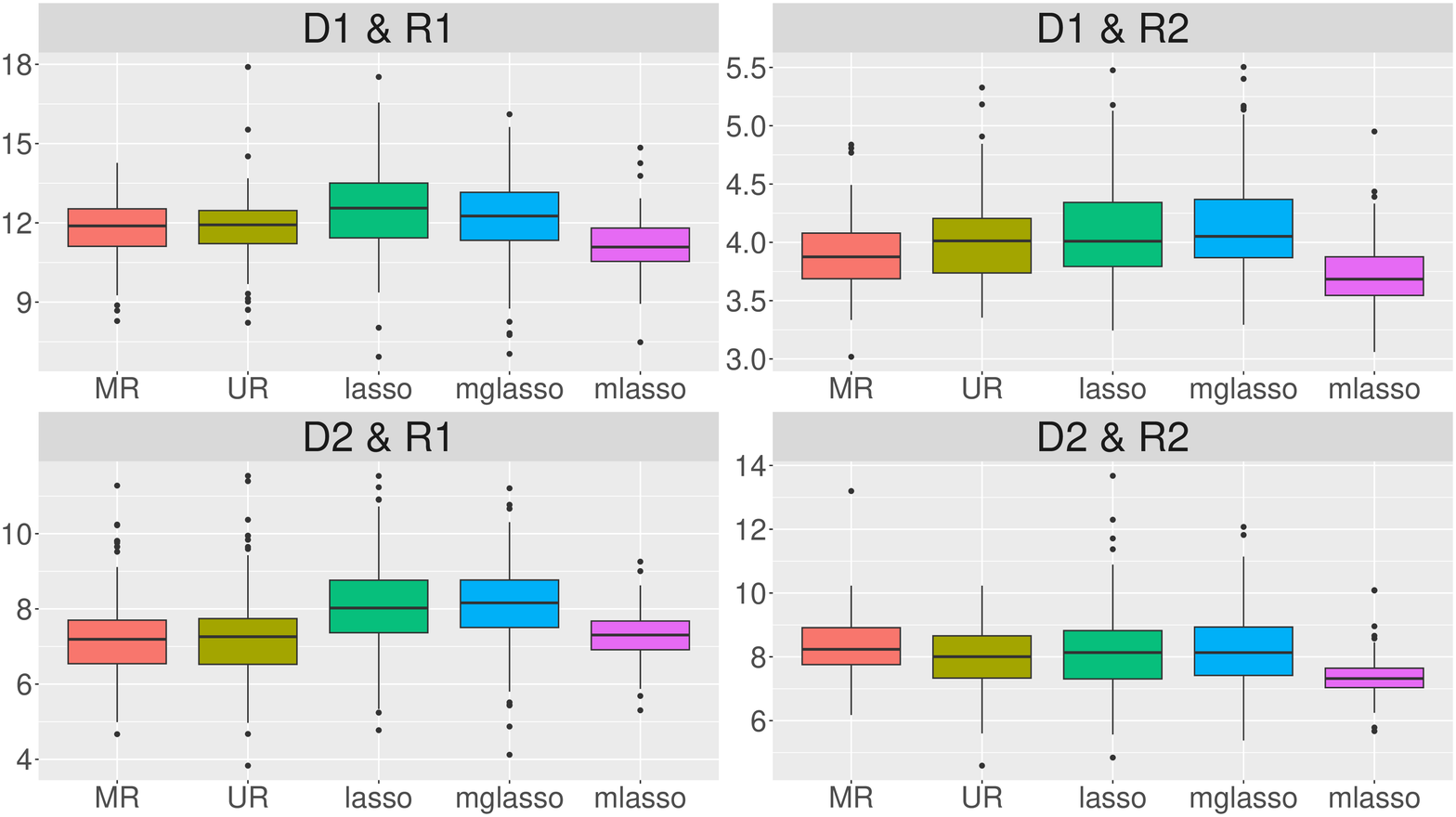} %\\ [-0.5cm]
\vspace{-3.5mm}
\subcaption{$s=50, \rho_x=0.1, \rho_y=0.1$}
%\label{Poi_PCA}
\vspace{2.5mm}
\end{minipage}
\begin{minipage}[b]{0.5\linewidth}
\centering
\includegraphics[width=8cm,height=4.6cm]{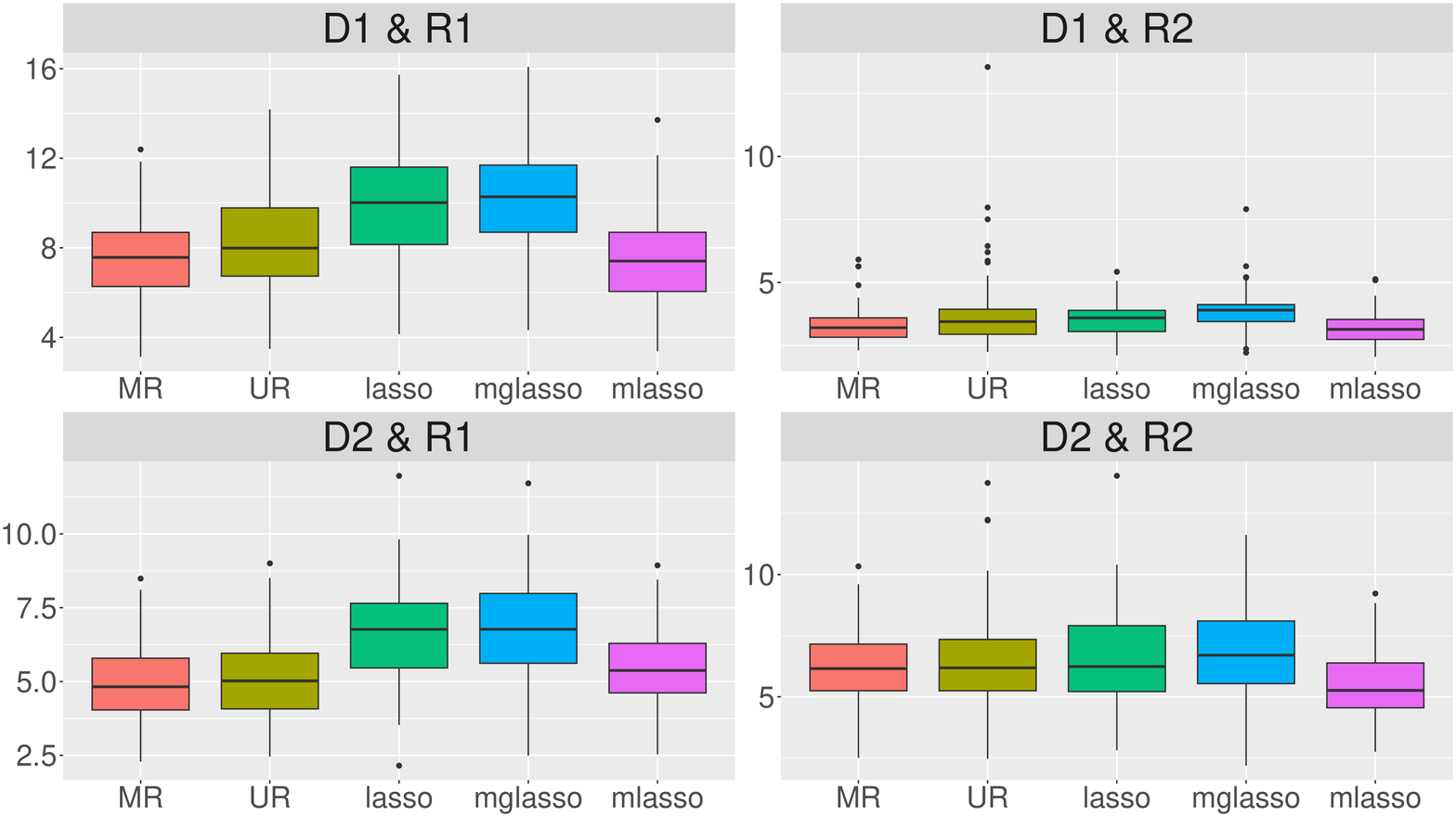} %\\ [-0.5cm] 
\vspace{-3.5mm}
\subcaption{$s=5, \rho_x=0.1, \rho_y=0.9$}
%\label{Poi_PLS-GLR}
\vspace{2.5mm}
\end{minipage}
\begin{minipage}[b]{0.5\linewidth}
\centering
\includegraphics[width=8cm,height=4.6cm]{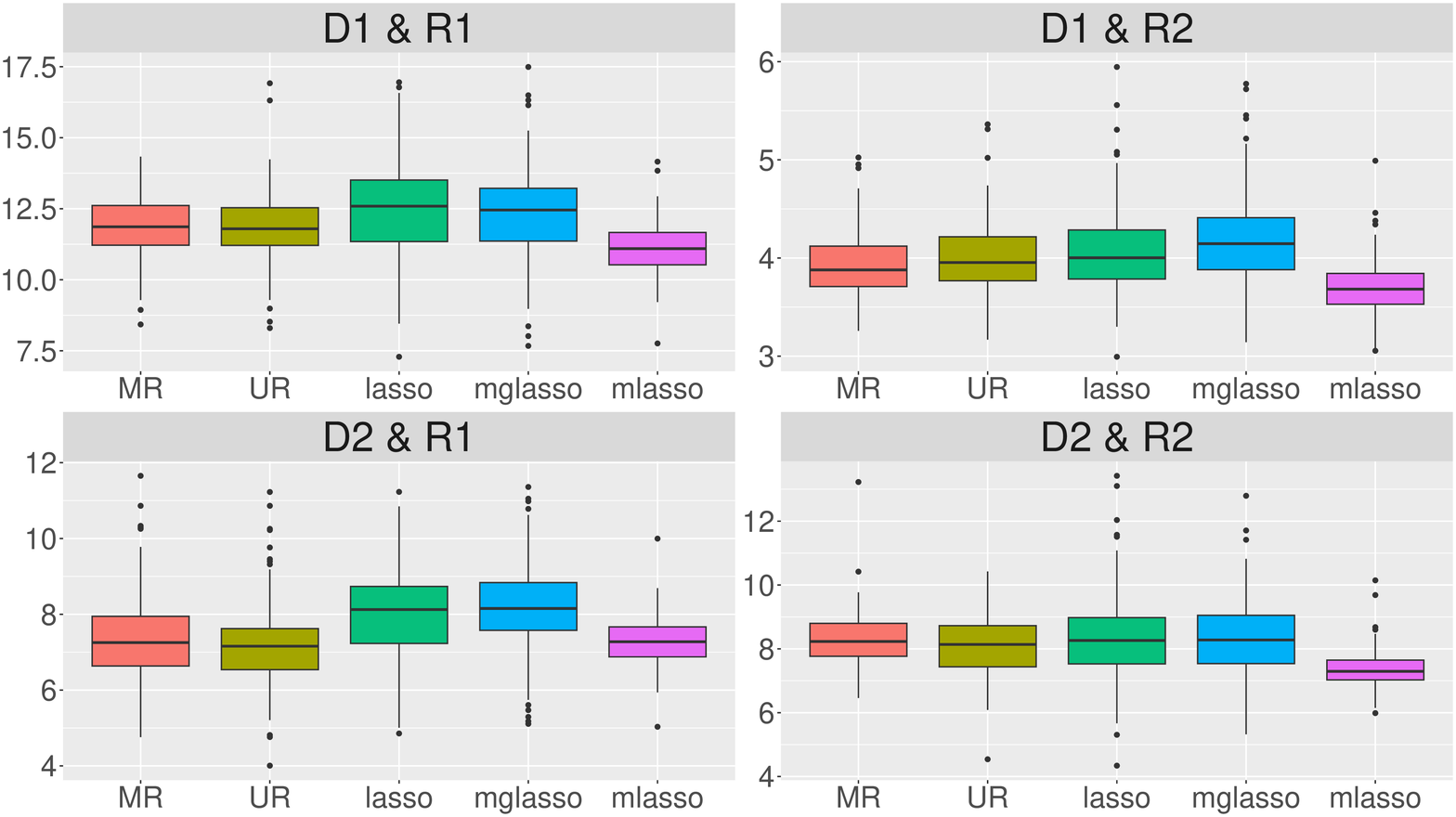}
\vspace{-3.5mm}
\subcaption{$s=50, \rho_x=0.1, \rho_y=0.9$}
%\label{Poi_SPCA}
\vspace{2.5mm}
\end{minipage}
\begin{minipage}[b]{0.5\linewidth}
\centering
\includegraphics[width=8cm,height=4.6cm]{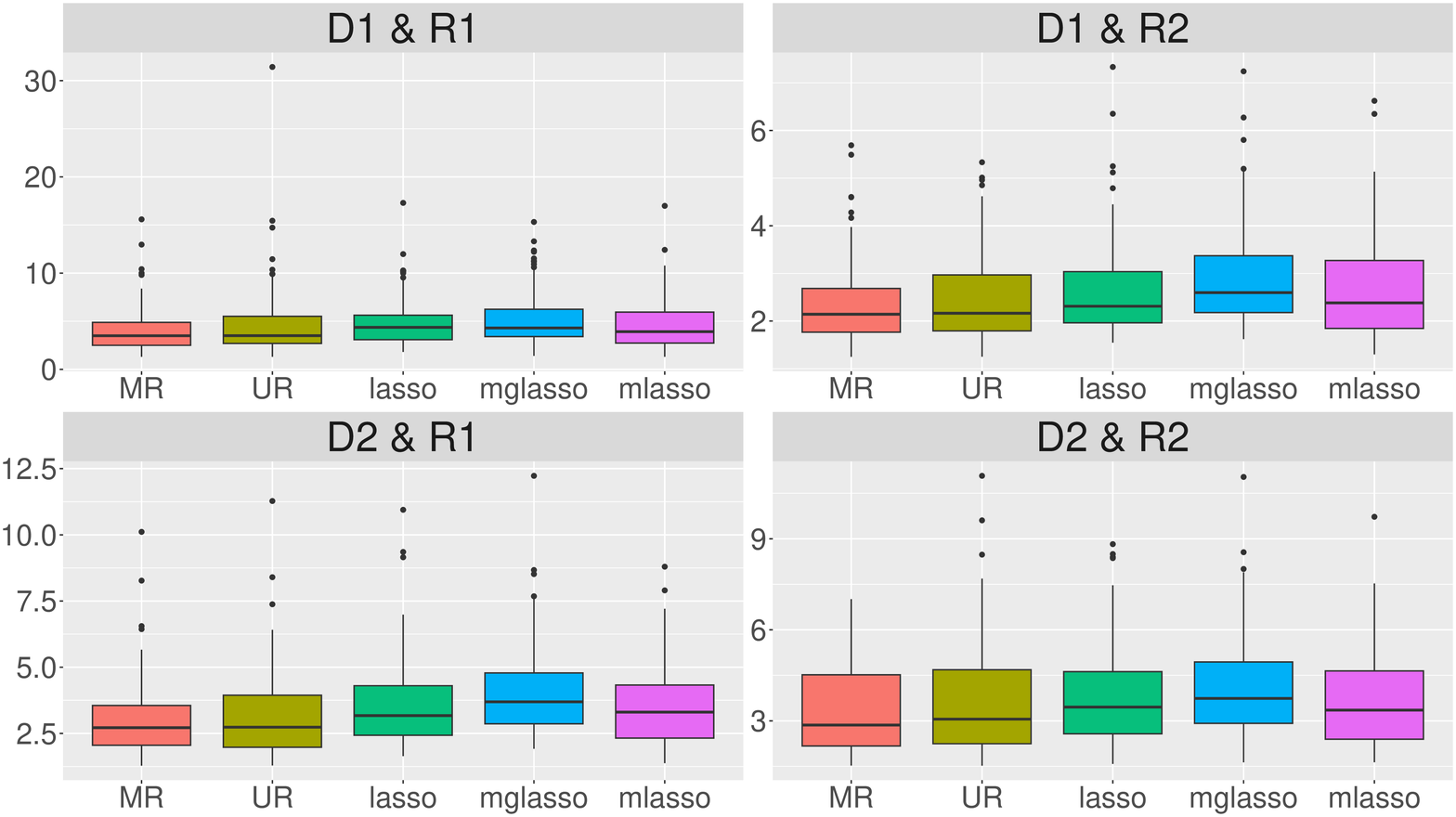}
\vspace{-3.5mm}
\subcaption{$s=5, \rho_x=0.9, \rho_y=0.1$}
%\label{Poi_DSPCA}
\vspace{2.5mm}
\end{minipage}
\begin{minipage}[b]{0.5\linewidth}
\centering
\includegraphics[width=8cm,height=4.6cm]{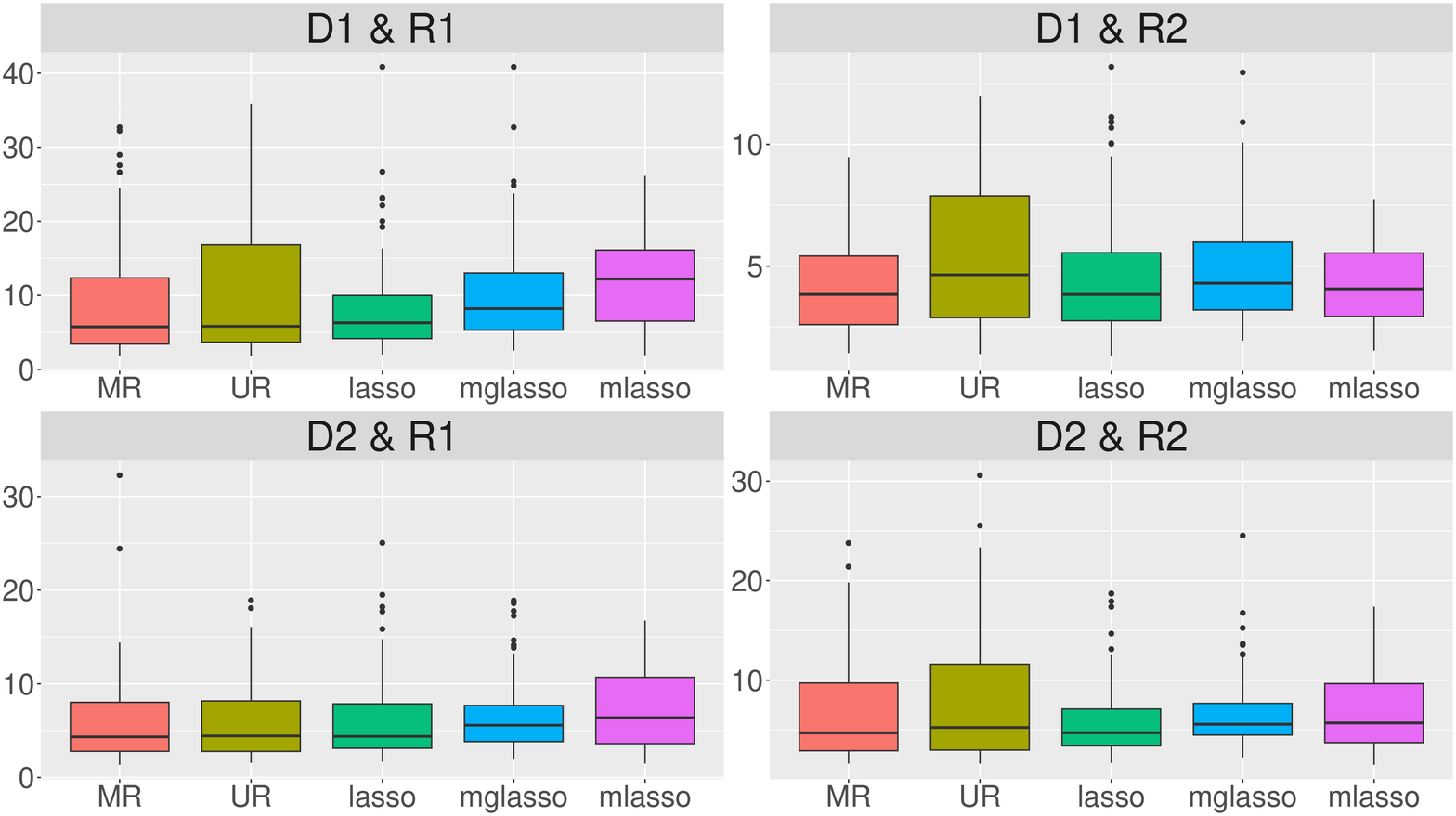}
\vspace{-3.5mm}
\subcaption{$s=50, \rho_x=0.9, \rho_y=0.1$}
\label{fig:SimuM2n15s50rx09ry01}
\vspace{2.5mm}
\end{minipage}
\begin{minipage}[b]{0.5\linewidth}
\centering
\includegraphics[width=8cm,height=4.6cm]{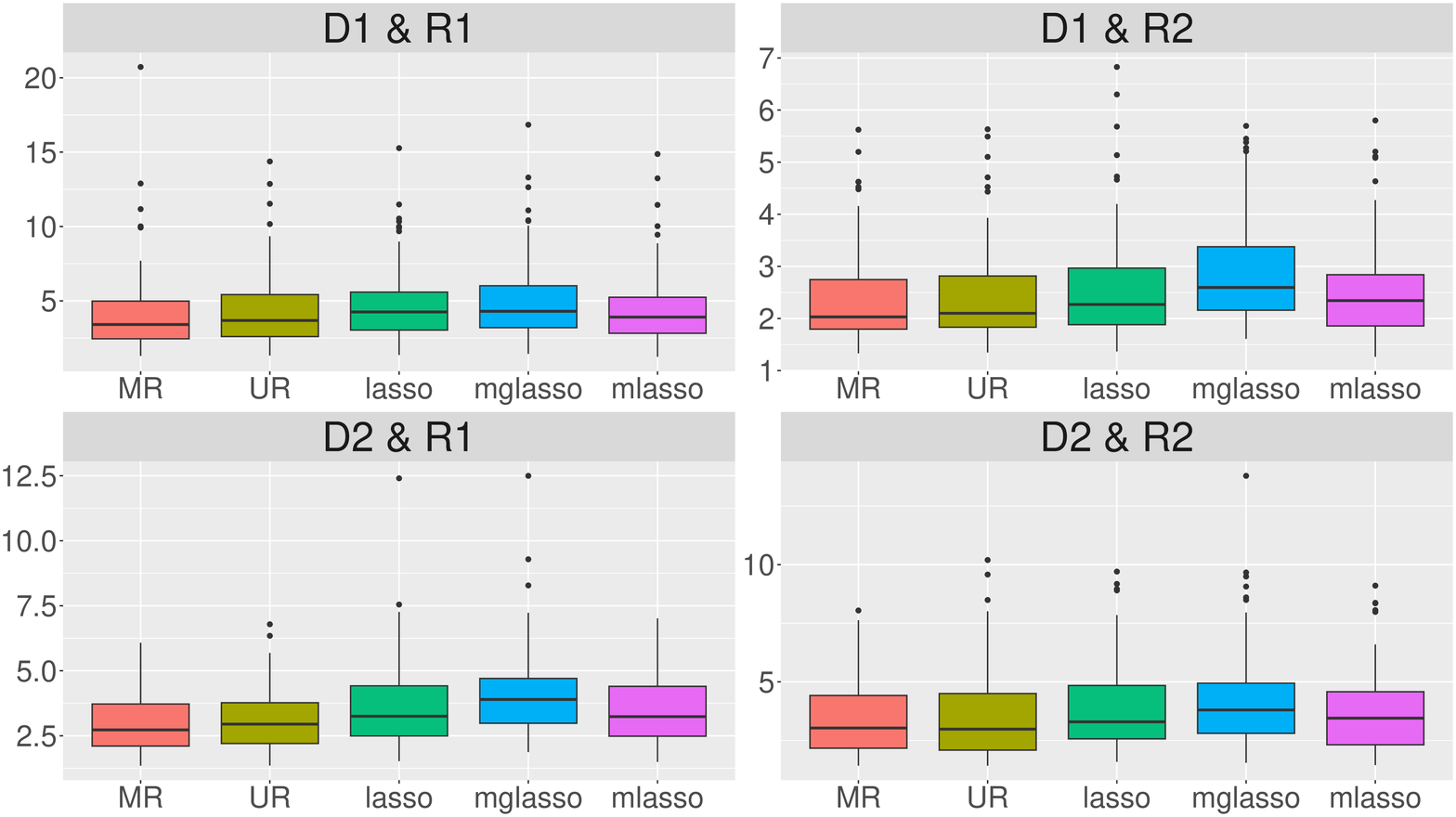}
\vspace{-3.5mm}
\subcaption{$s=5, \rho_x=0.9, \rho_y=0.9$}
%\label{Poi_SPC}
\vspace{2.5mm}
\end{minipage}
\begin{minipage}[b]{0.5\linewidth}
\centering
\includegraphics[width=8cm,height=4.6cm]{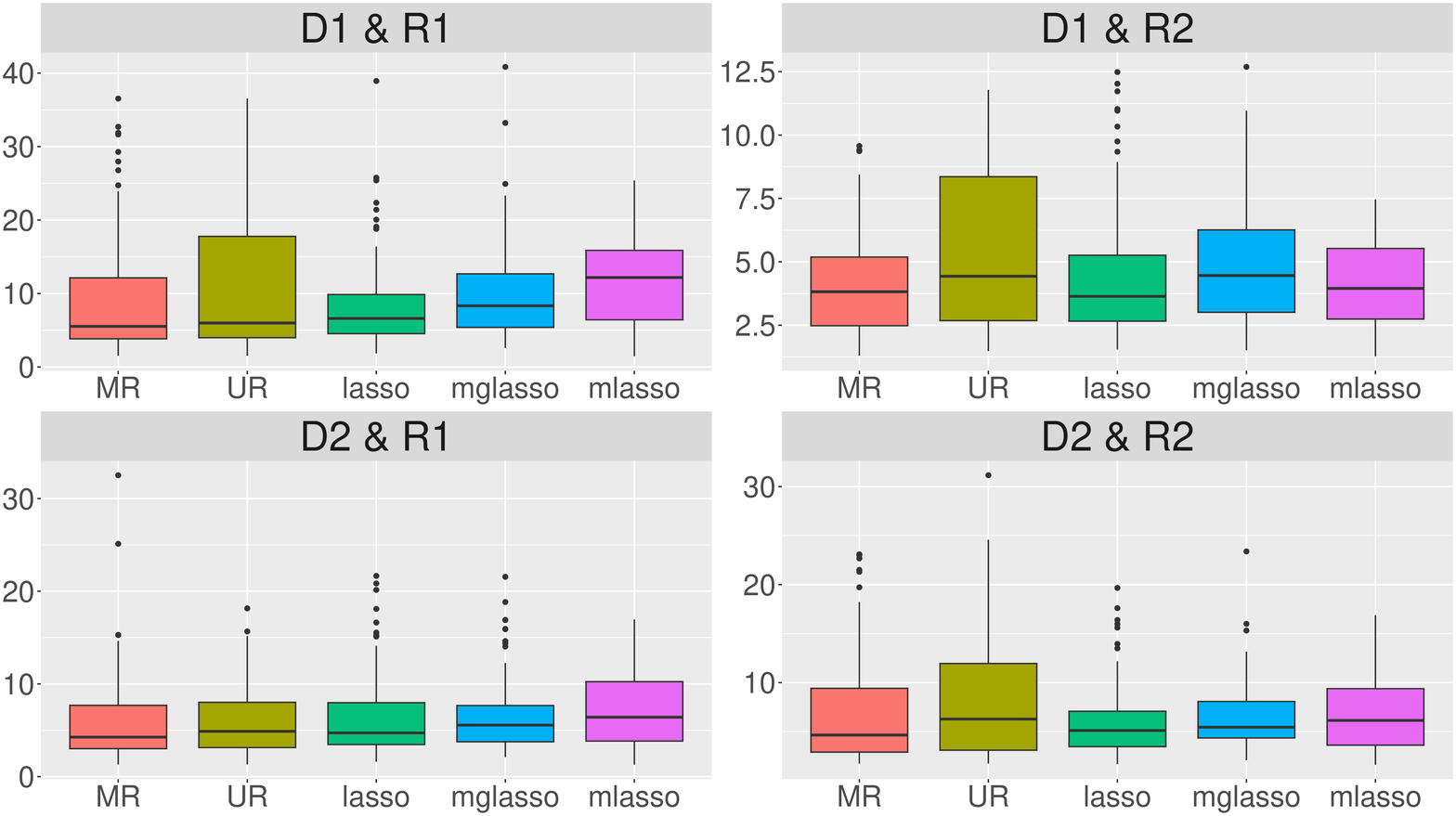}
\vspace{-3.5mm}
\subcaption{$s=50, \rho_x=0.9, \rho_y=0.9$}
\label{fig:SimuM2n15s50rx09ry09}
\vspace{2.5mm}
\end{minipage}
\caption{Boxplots of MSE for $n=15$ when the case $M=2$.
%The bold values correspond to the smallest means among SPCRsvd-LADMM, SPCRsvd-ADMM, and SPCR.  
%Scatter plots of principal and PLS components for the doctor visits data. 
%(a) True structure of the principal components. 
%(b) PCA. 
%(c) PLS-GLR. 
%(d) SPCR-glm. }
}
\label{fig:SimuM2n15}
\end{figure}

%%%%%%%%%%%%%%%%%%%%%%%%%%%%%%%%%%%%%%%%%%%%%%%%%%%%%%%
%%%%%%%%%%%%%%%%%%%%%%%%%%%%%%%%%%%%%%%%%%%%%%%%%%%%%%%
\begin{figure}[htbp]
\begin{minipage}[b]{0.5\linewidth}
\centering
\includegraphics[width=8cm,height=4.6cm]{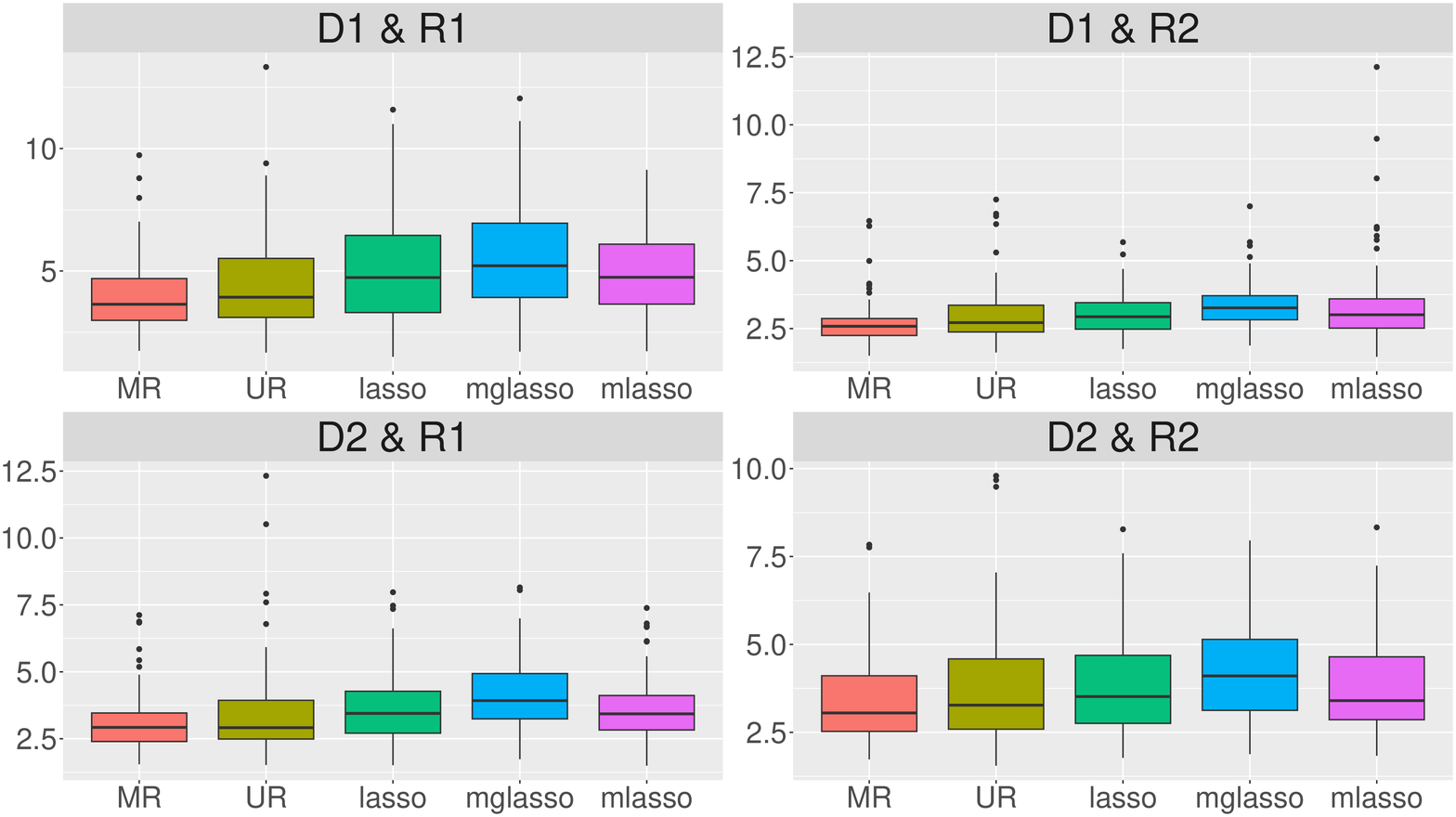}
\vspace{-3.5mm}
\subcaption{$s=5, \rho_x=0.1, \rho_y=0.1$}
%\label{Poi_SPCR}
\vspace{2.5mm}
\end{minipage}
\begin{minipage}[b]{0.5\linewidth}
\centering
\includegraphics[width=8cm,height=4.6cm]{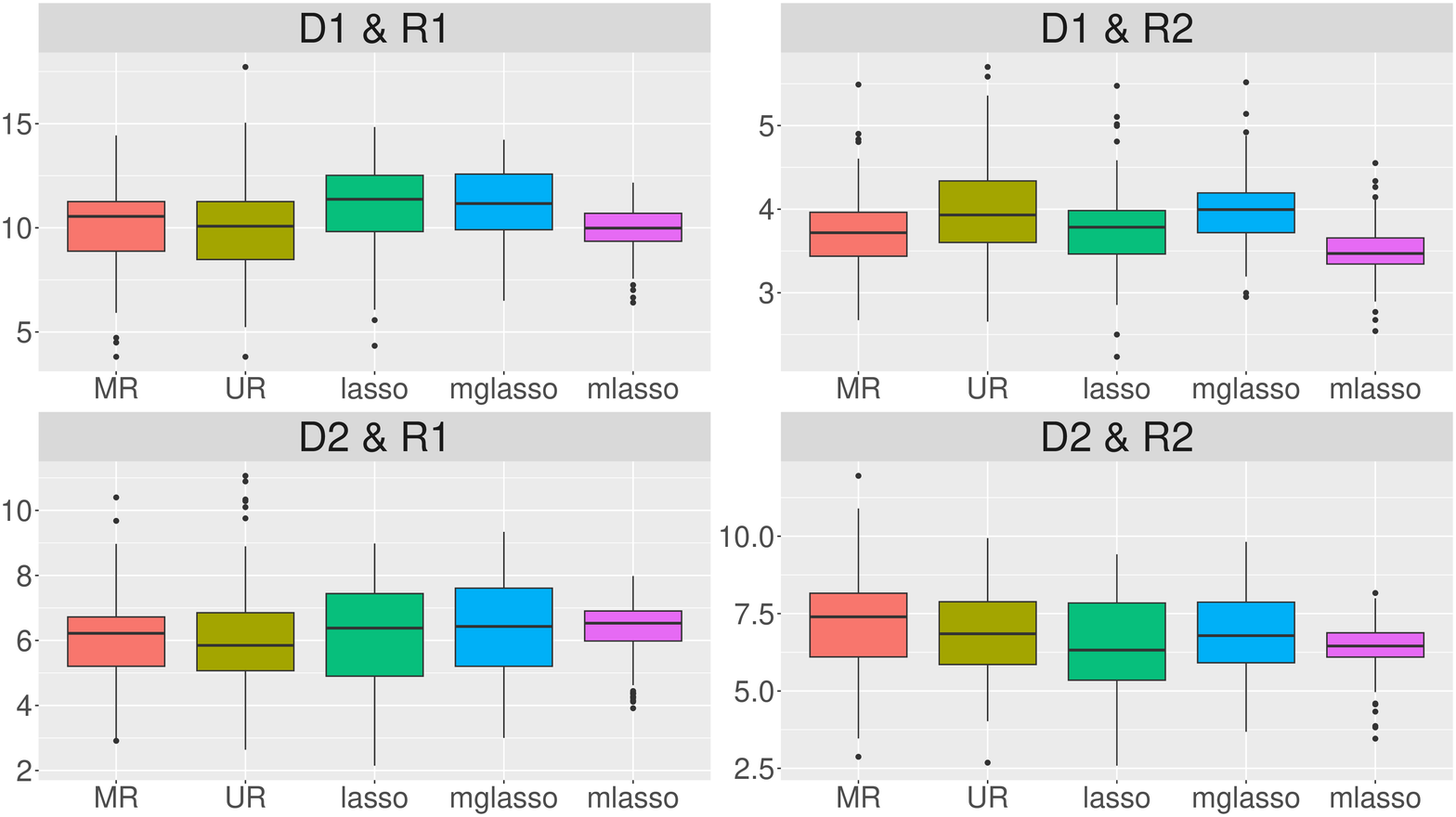} %\\ [-0.5cm]
\vspace{-3.5mm}
\subcaption{$s=50, \rho_x=0.1, \rho_y=0.1$}
%\label{Poi_PCA}
\vspace{2.5mm}
\end{minipage}
\begin{minipage}[b]{0.5\linewidth}
\centering
\includegraphics[width=8cm,height=4.6cm]{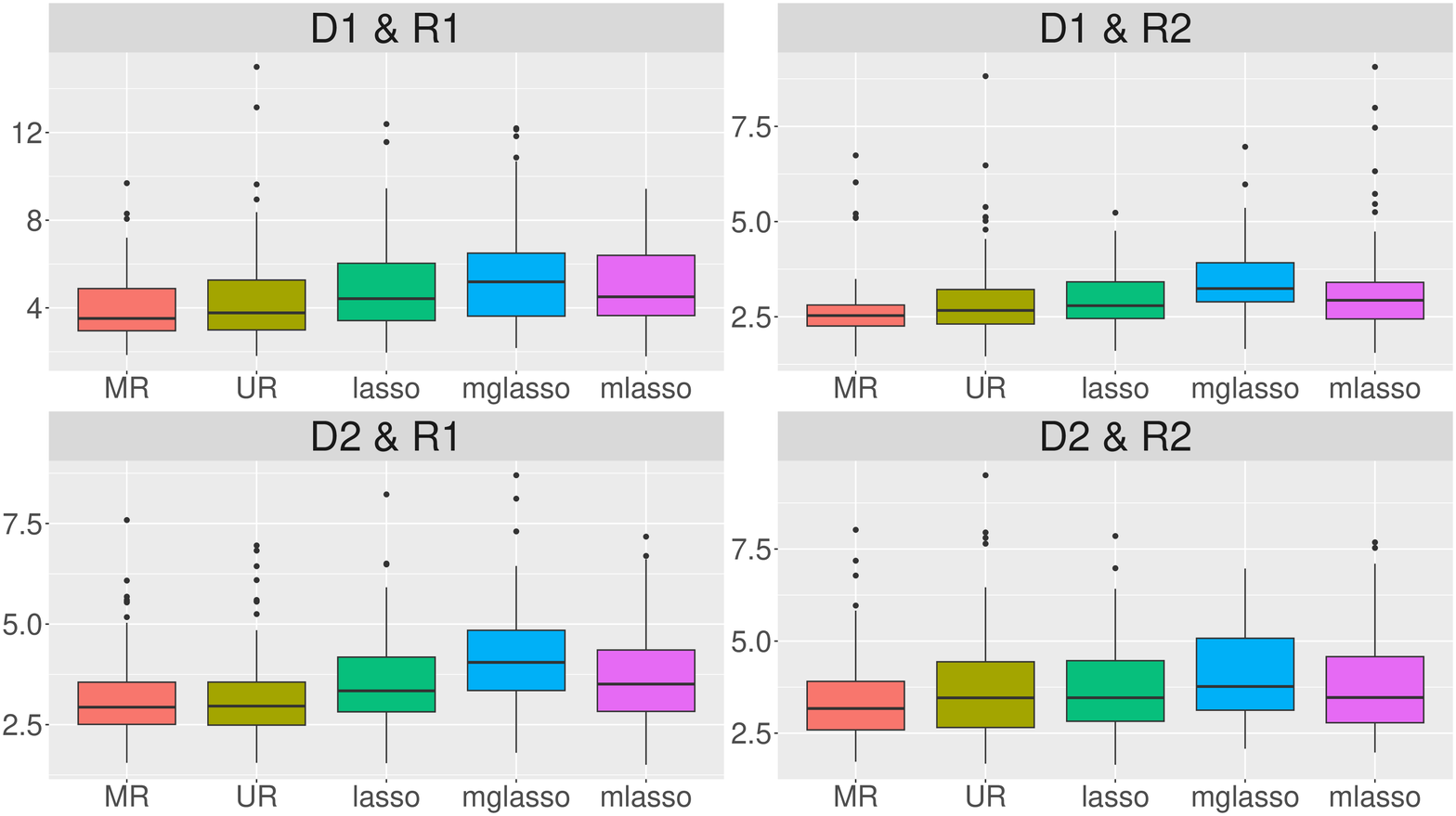} %\\ [-0.5cm] 
\vspace{-3.5mm}
\subcaption{$s=5, \rho_x=0.1, \rho_y=0.9$}
%\label{Poi_PLS-GLR}
\vspace{2.5mm}
\end{minipage}
\begin{minipage}[b]{0.5\linewidth}
\centering
\includegraphics[width=8cm,height=4.6cm]{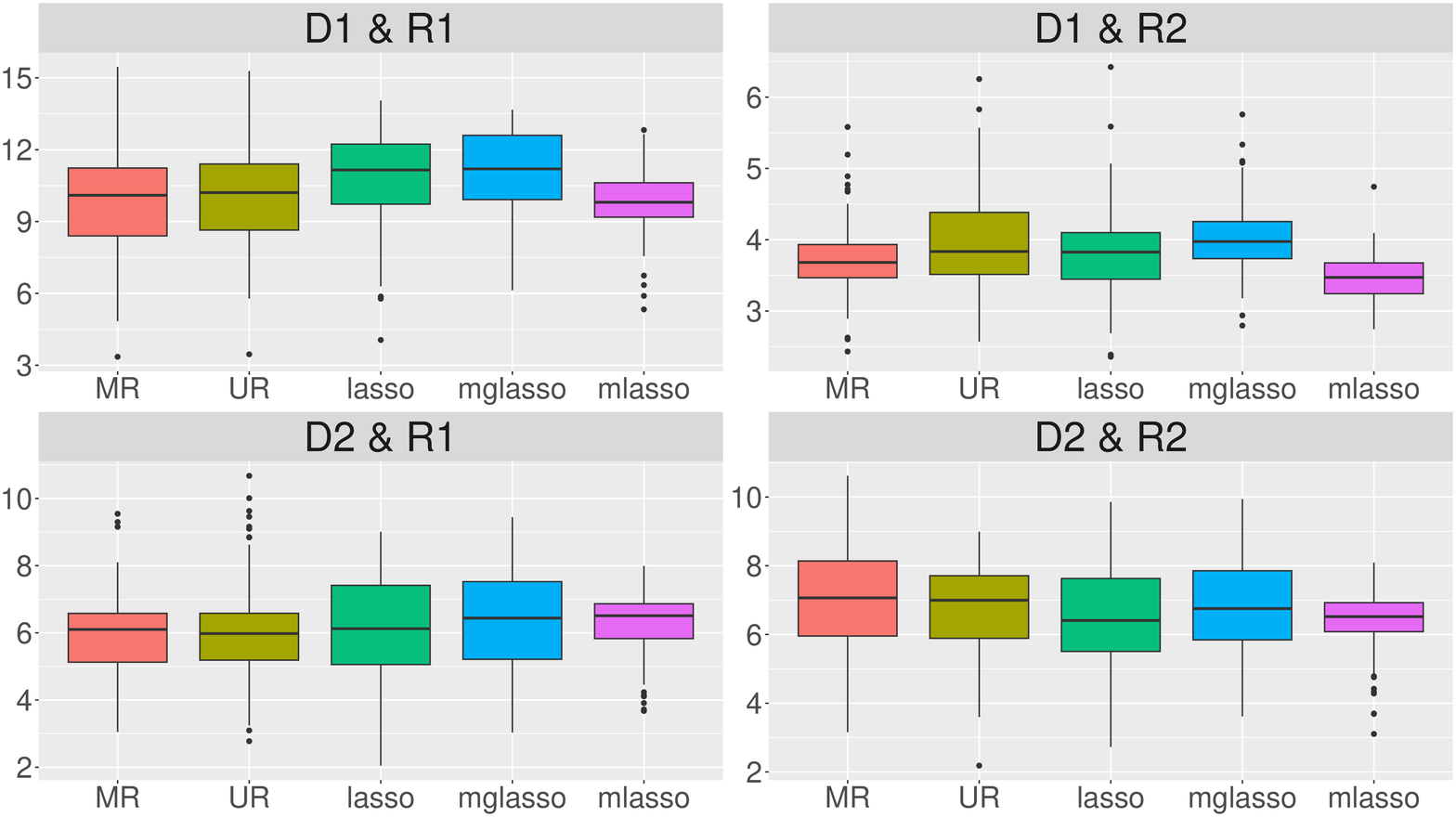}
\vspace{-3.5mm}
\subcaption{$s=50, \rho_x=0.1, \rho_y=0.9$}
%\label{Poi_SPCA}
\vspace{2.5mm}
\end{minipage}
\begin{minipage}[b]{0.5\linewidth}
\centering
\includegraphics[width=8cm,height=4.6cm]{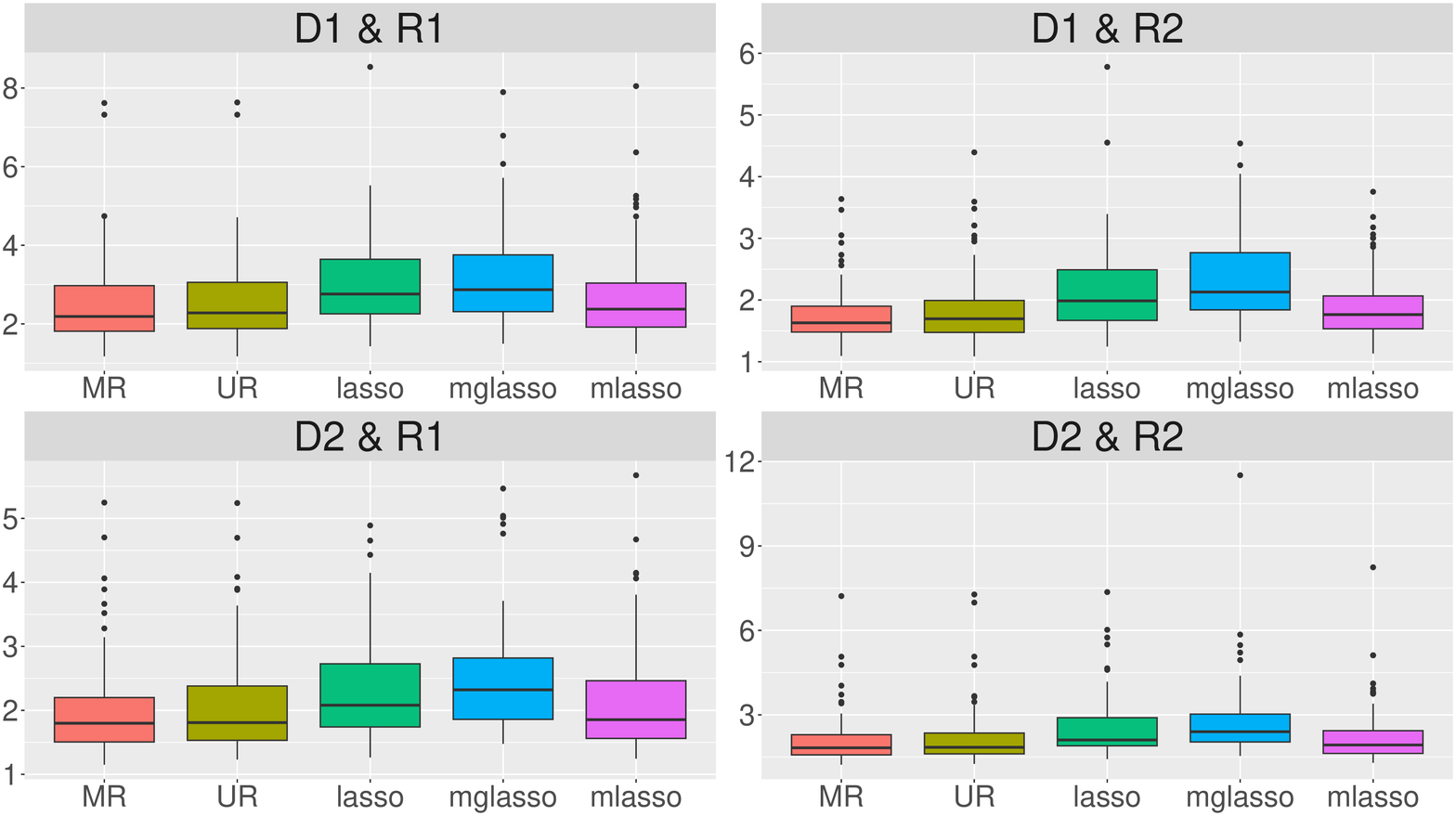}
\vspace{-3.5mm}
\subcaption{$s=5, \rho_x=0.9, \rho_y=0.1$}
%\label{Poi_DSPCA}
\vspace{2.5mm}
\end{minipage}
\begin{minipage}[b]{0.5\linewidth}
\centering
\includegraphics[width=8cm,height=4.6cm]{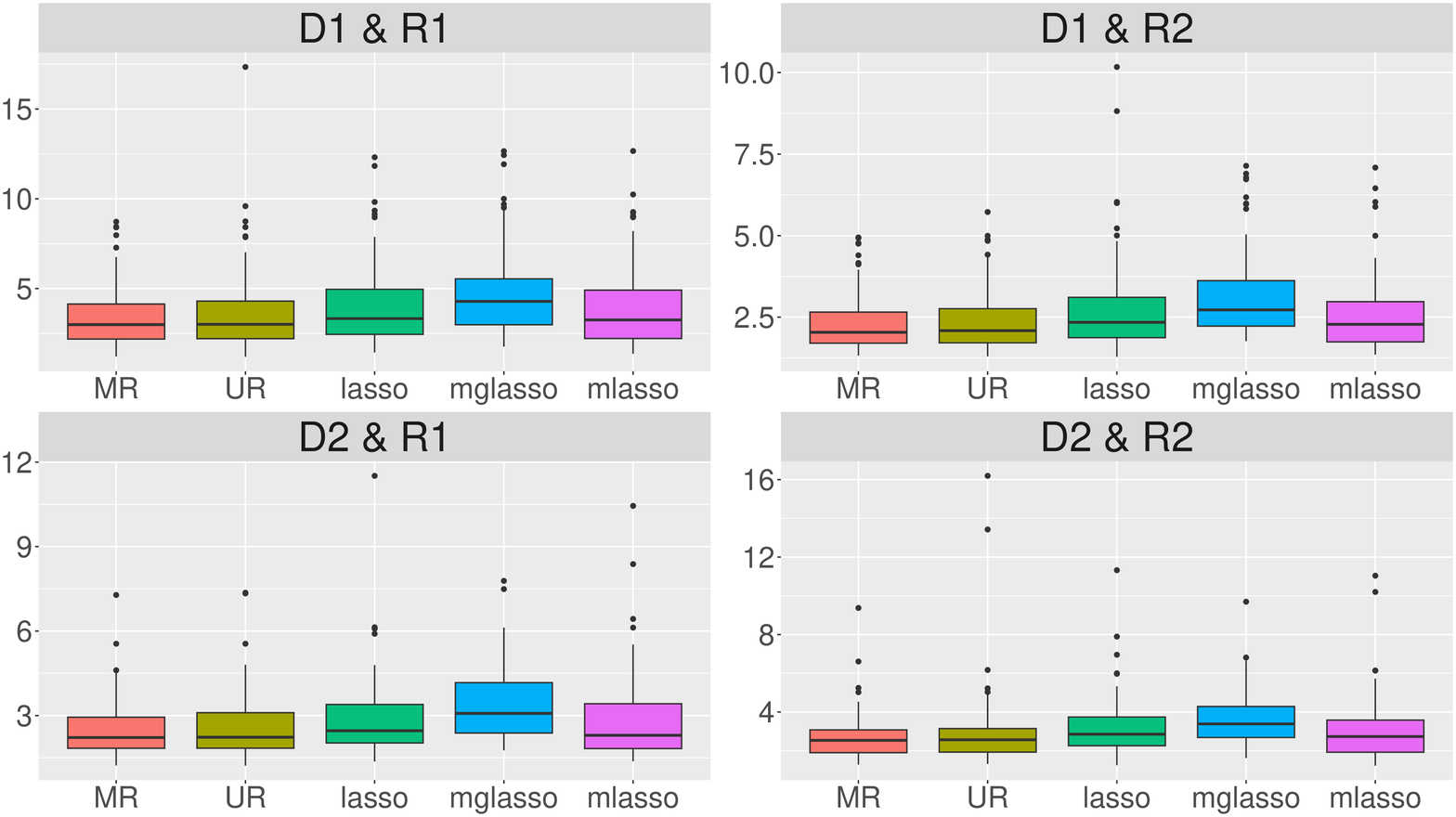}
\vspace{-3.5mm}
\subcaption{$s=50, \rho_x=0.9, \rho_y=0.1$}
%\label{Poi_FPS}
\vspace{2.5mm}
\end{minipage}
\begin{minipage}[b]{0.5\linewidth}
\centering
\includegraphics[width=8cm,height=4.6cm]{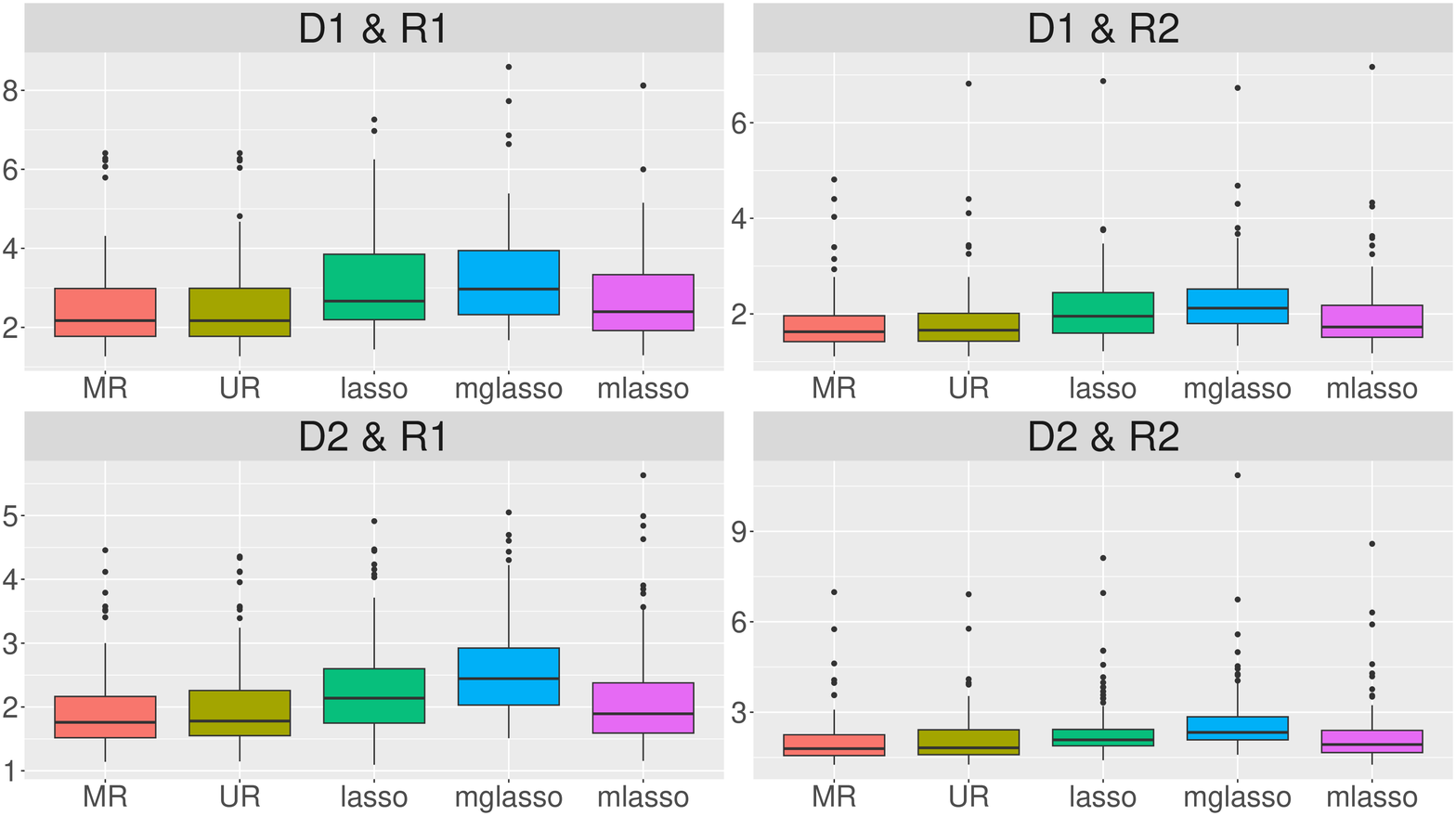}
\vspace{-3.5mm}
\subcaption{$s=5, \rho_x=0.9, \rho_y=0.9$}
%\label{Poi_SPC}
\vspace{2.5mm}
\end{minipage}
\begin{minipage}[b]{0.5\linewidth}
\centering
\includegraphics[width=8cm,height=4.6cm]{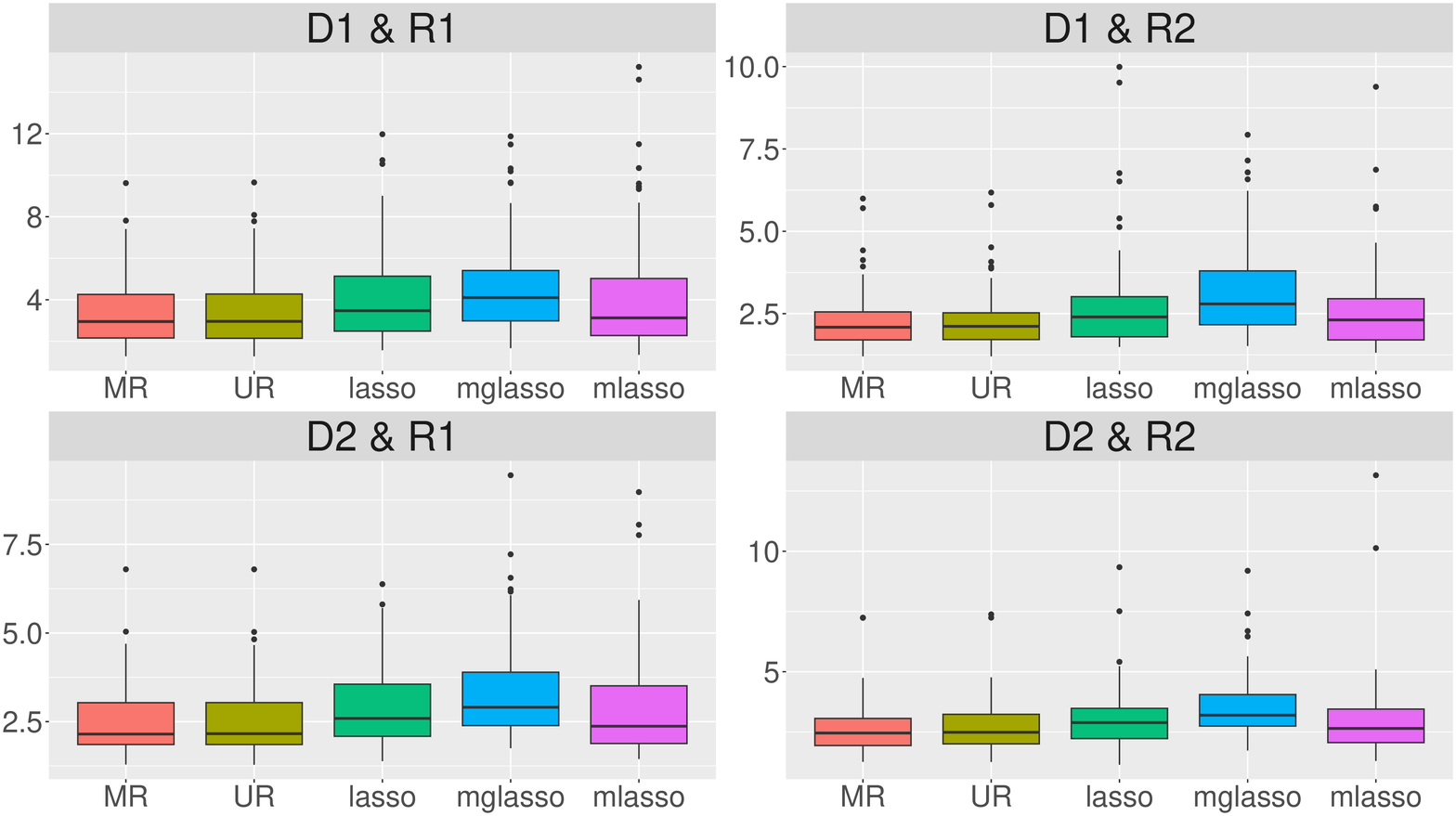}
\vspace{-3.5mm}
\subcaption{$s=50, \rho_x=0.9, \rho_y=0.9$}
%\label{Poi_SPC}
\vspace{2.5mm}
\end{minipage}
\caption{Boxplots of MSE for $n=25$ when the case $M=2$.
%The bold values correspond to the smallest means among SPCRsvd-LADMM, SPCRsvd-ADMM, and SPCR.  
%Scatter plots of principal and PLS components for the doctor visits data. 
%(a) True structure of the principal components. 
%(b) PCA. 
%(c) PLS-GLR. 
%(d) SPCR-glm. }
}
\label{fig:SimuM2n25}
\end{figure}

%%%%%%%%%%%%%%%%%%%%%%%%%%%%%%%%%%%%%%%%%%%%%%%%%%%%%%%
%%%%%%%%%%%%%%%%%%%%%%%%%%%%%%%%%%%%%%%%%%%%%%%%%%%%%%%
\begin{figure}[htbp]
\begin{minipage}[b]{0.5\linewidth}
\centering
\includegraphics[width=8cm,height=4.6cm]{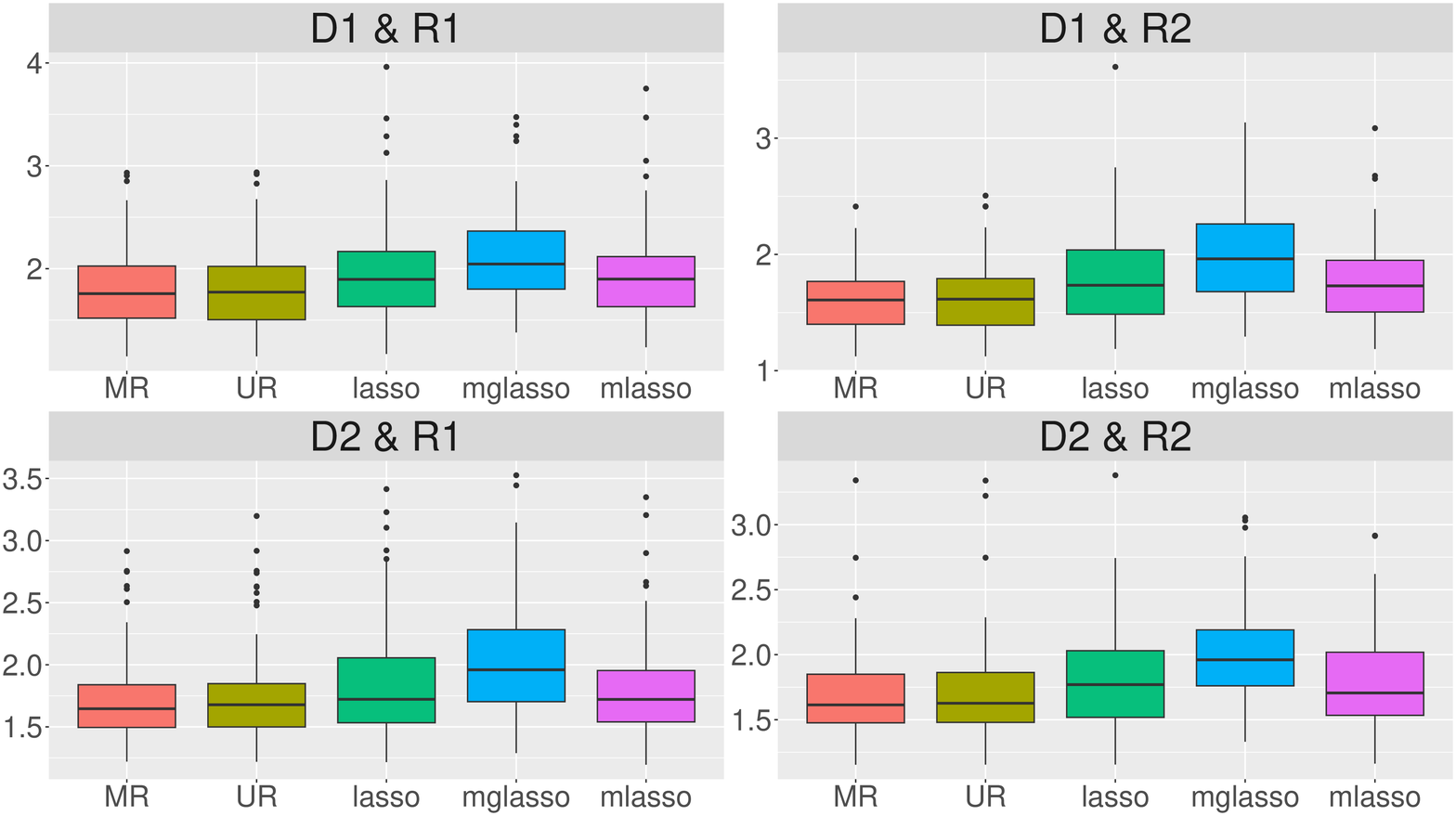}
\vspace{-3.5mm}
\subcaption{$s=5, \rho_x=0.1, \rho_y=0.1$}
%\label{Poi_SPCR}
\vspace{2.5mm}
\end{minipage}
\begin{minipage}[b]{0.5\linewidth}
\centering
\includegraphics[width=8cm,height=4.6cm]{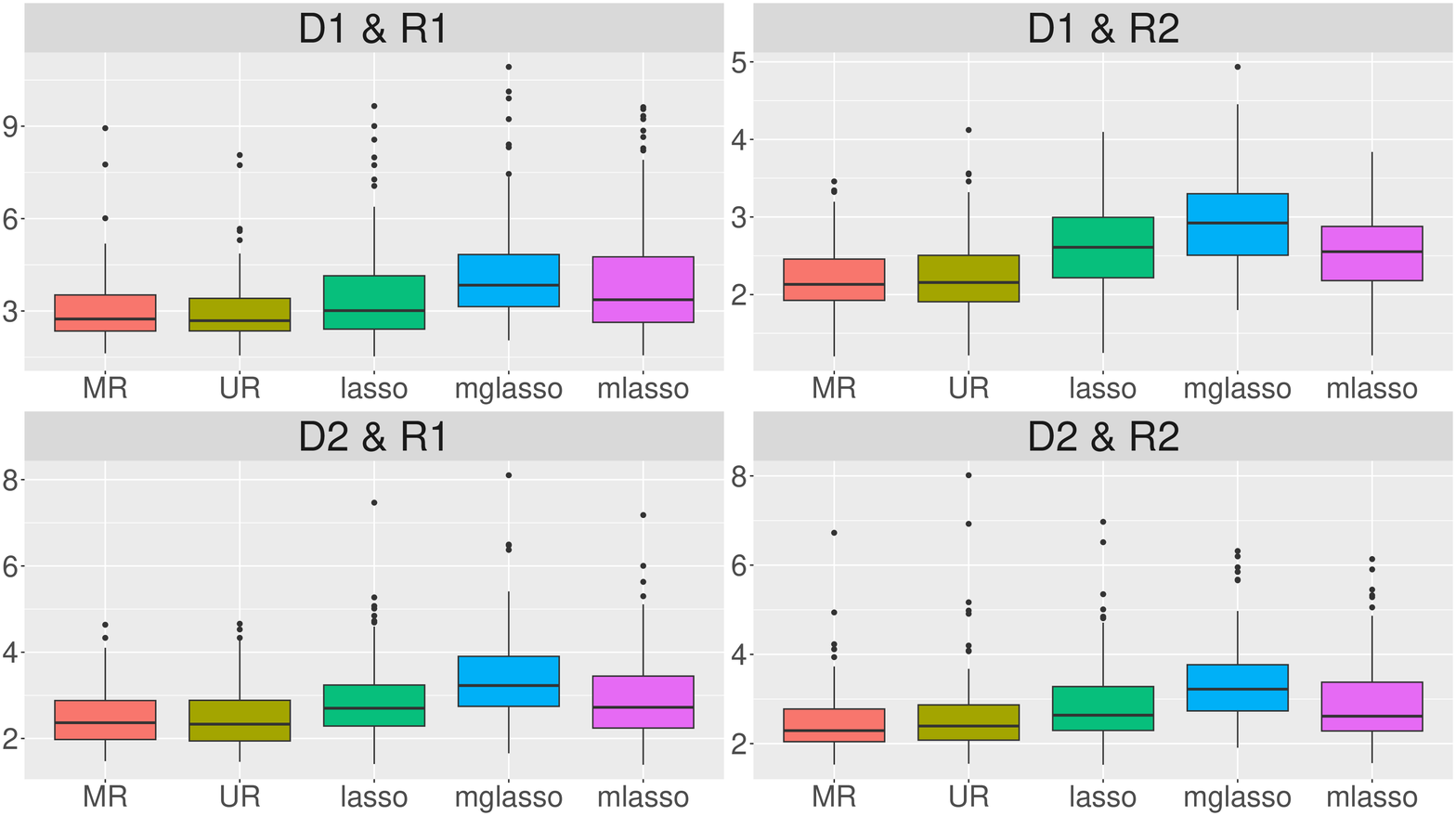} %\\ [-0.5cm]
\vspace{-3.5mm}
\subcaption{$s=50, \rho_x=0.1, \rho_y=0.1$}
%\label{Poi_PCA}
\vspace{2.5mm}
\end{minipage}
\begin{minipage}[b]{0.5\linewidth}
\centering
\includegraphics[width=8cm,height=4.6cm]{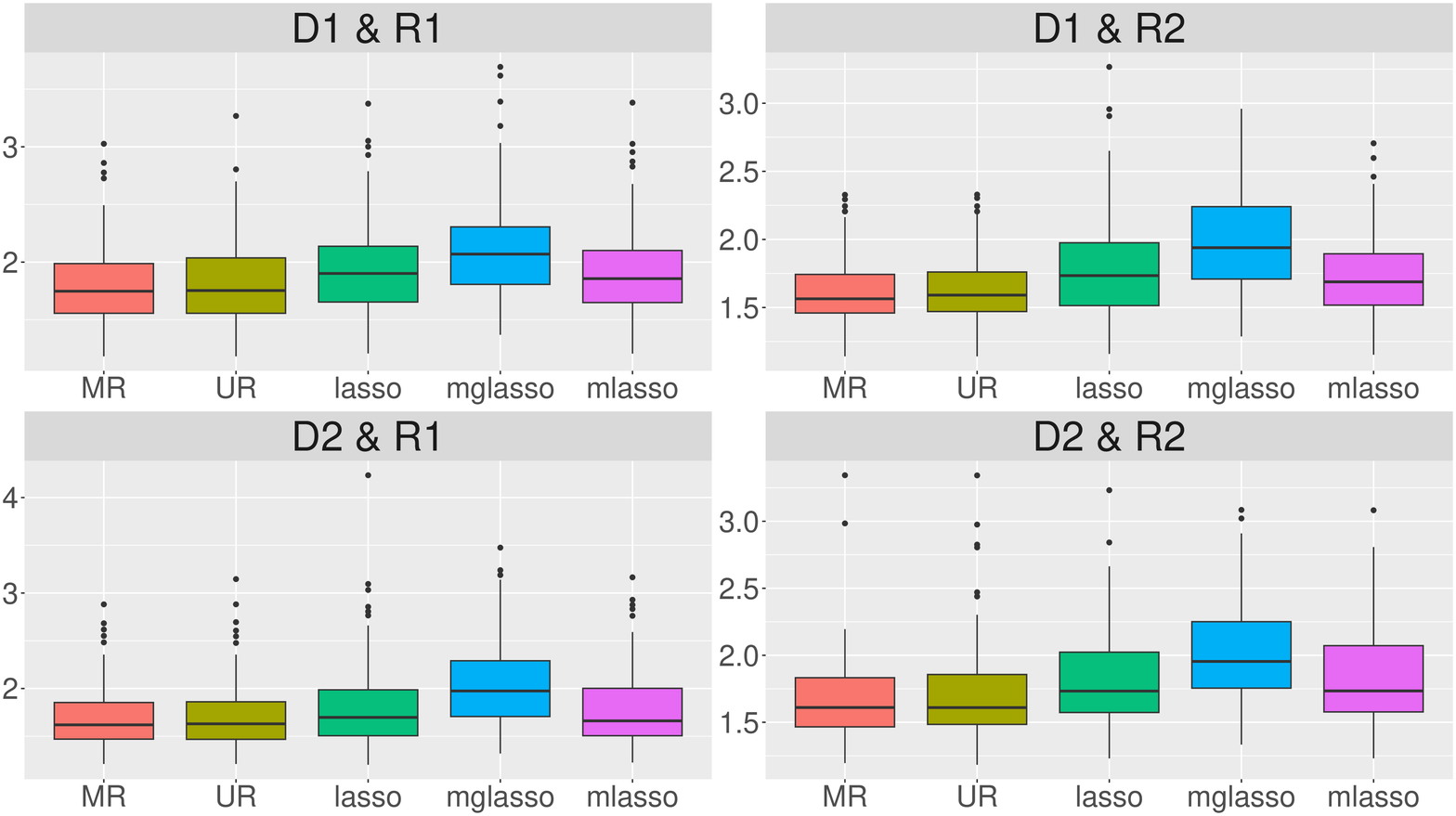} %\\ [-0.5cm] 
\vspace{-3.5mm}
\subcaption{$s=5, \rho_x=0.1, \rho_y=0.9$}
%\label{Poi_PLS-GLR}
\vspace{2.5mm}
\end{minipage}
\begin{minipage}[b]{0.5\linewidth}
\centering
\includegraphics[width=8cm,height=4.6cm]{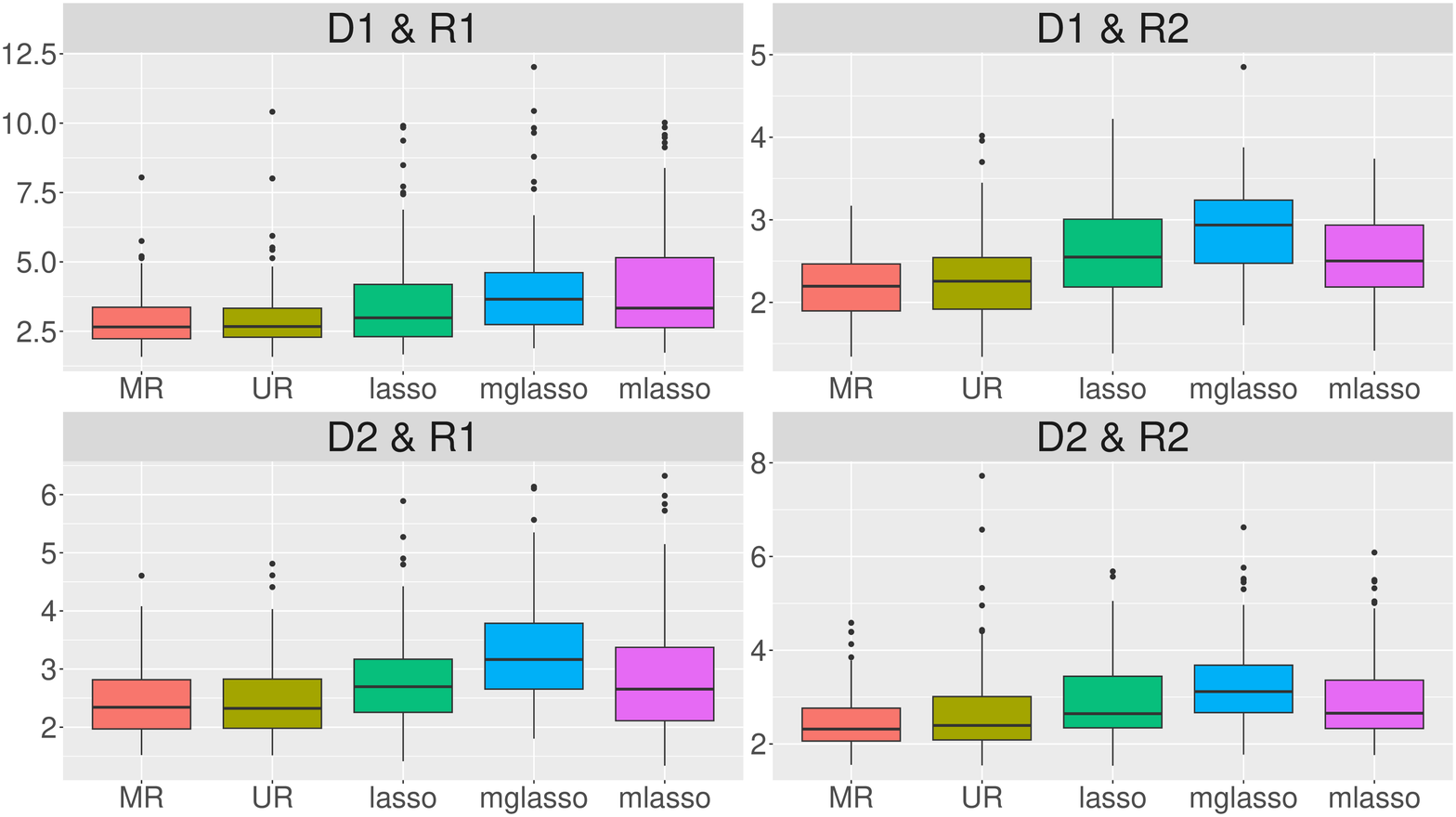}
\vspace{-3.5mm}
\subcaption{$s=50, \rho_x=0.1, \rho_y=0.9$}
%\label{Poi_SPCA}
\vspace{2.5mm}
\end{minipage}
\begin{minipage}[b]{0.5\linewidth}
\centering
\includegraphics[width=8cm,height=4.6cm]{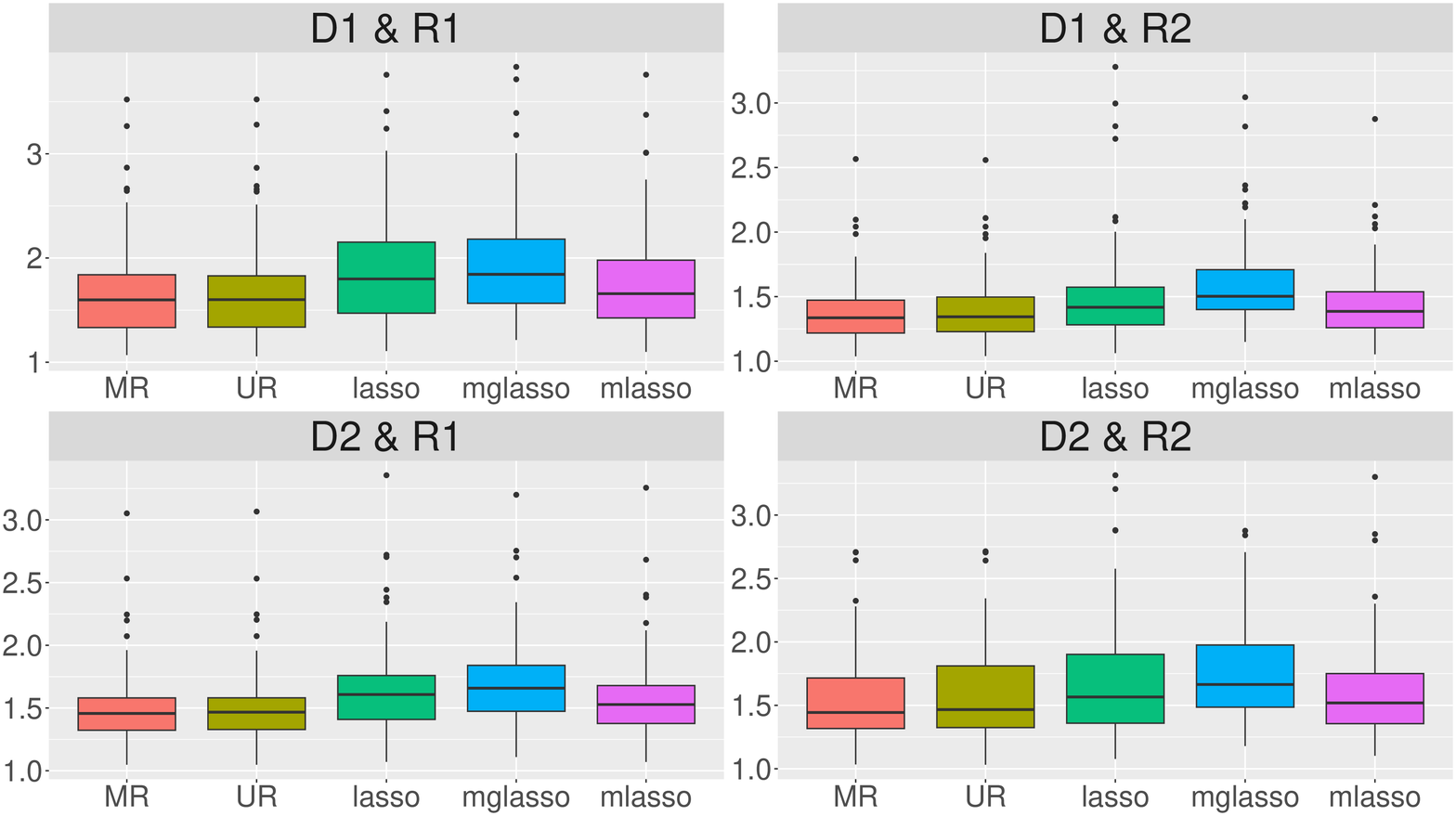}
\vspace{-3.5mm}
\subcaption{$s=5, \rho_x=0.9, \rho_y=0.1$}
%\label{Poi_DSPCA}
\vspace{2.5mm}
\end{minipage}
\begin{minipage}[b]{0.5\linewidth}
\centering
\includegraphics[width=8cm,height=4.6cm]{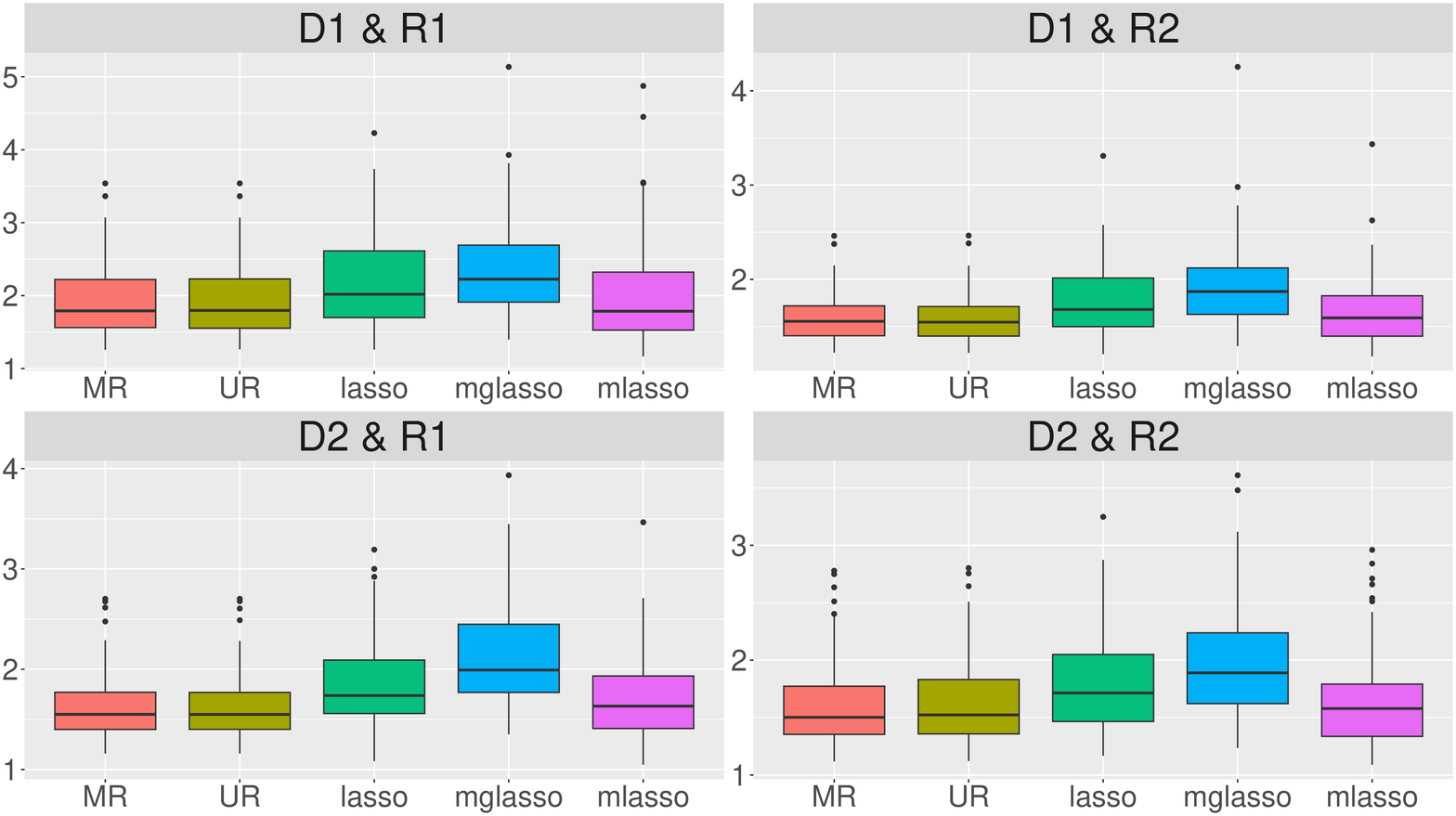}
\vspace{-3.5mm}
\subcaption{$s=50, \rho_x=0.9, \rho_y=0.1$}
%\label{Poi_FPS}
\vspace{2.5mm}
\end{minipage}
\begin{minipage}[b]{0.5\linewidth}
\centering
\includegraphics[width=8cm,height=4.6cm]{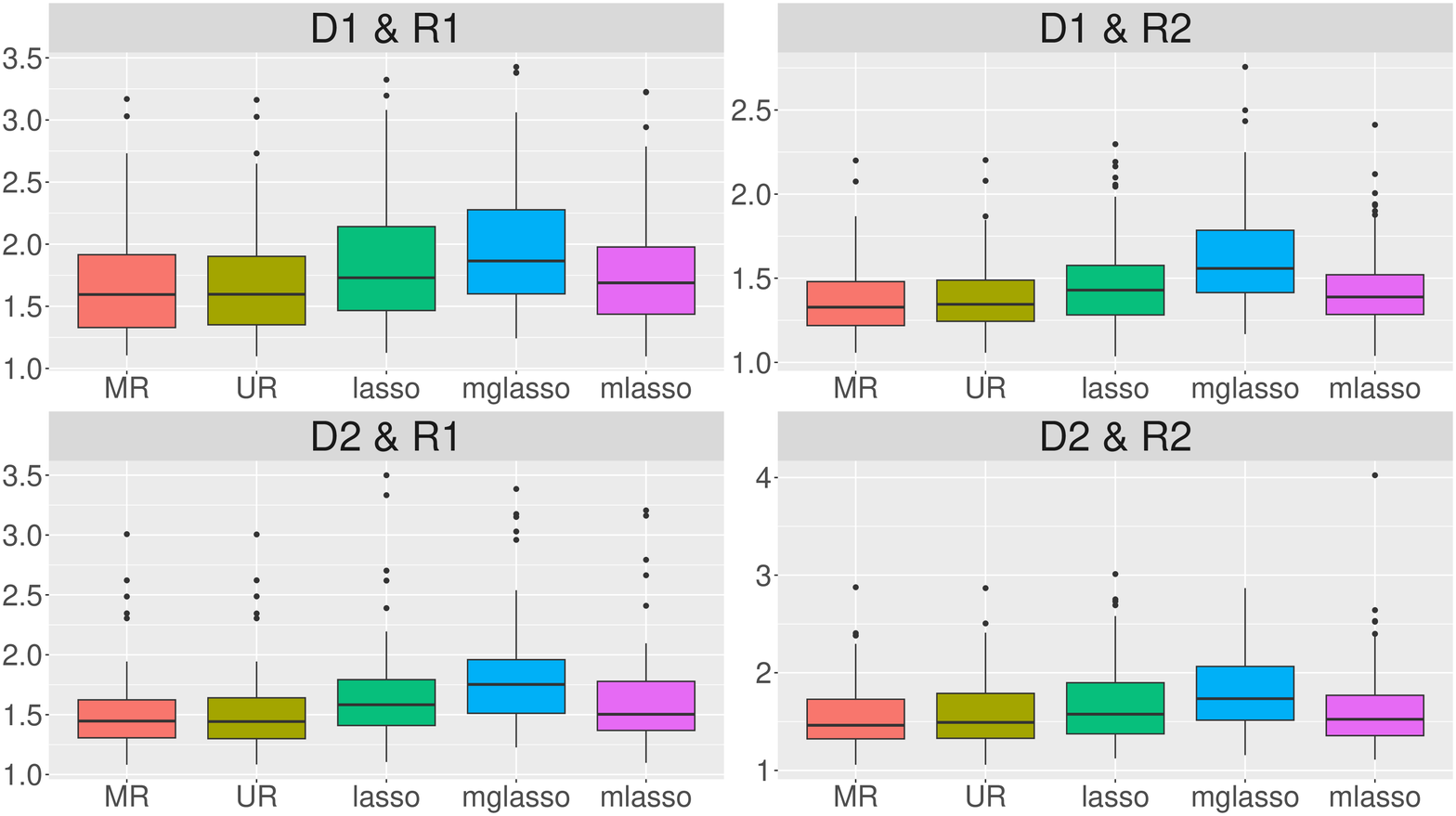}
\vspace{-3.5mm}
\subcaption{$s=5, \rho_x=0.9, \rho_y=0.9$}
%\label{Poi_SPC}
\vspace{2.5mm}
\end{minipage}
\begin{minipage}[b]{0.5\linewidth}
\centering
\includegraphics[width=8cm,height=4.6cm]{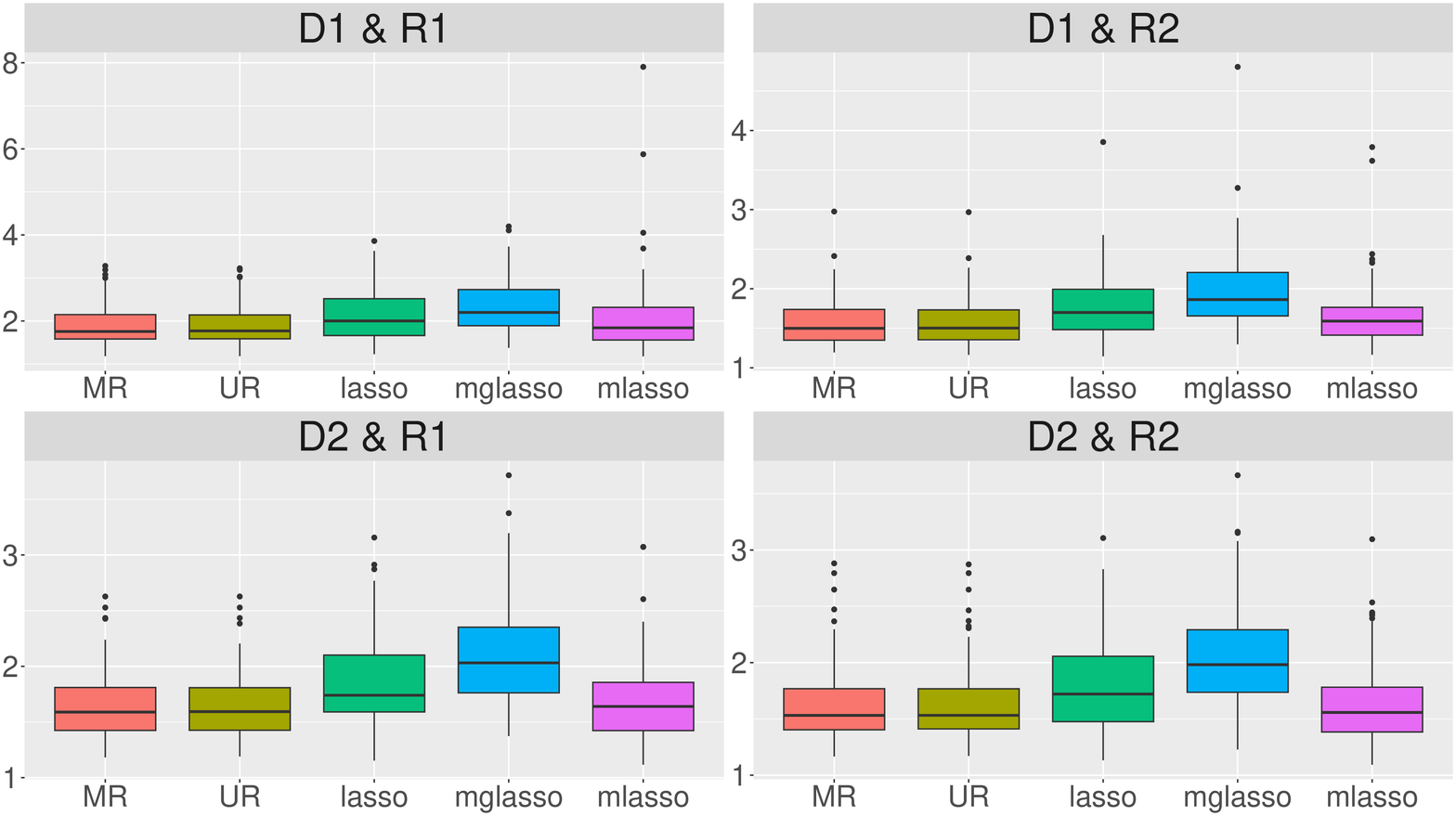}
\vspace{-3.5mm}
\subcaption{$s=50, \rho_x=0.9, \rho_y=0.9$}
%\label{Poi_SPC}
\vspace{2.5mm}
\end{minipage}
\caption{Boxplots of MSE for $n=50$ when the case $M=2$.
%The bold values correspond to the smallest means among SPCRsvd-LADMM, SPCRsvd-ADMM, and SPCR.  
%Scatter plots of principal and PLS components for the doctor visits data. 
%(a) True structure of the principal components. 
%(b) PCA. 
%(c) PLS-GLR. 
%(d) SPCR-glm. }
}
\label{fig:SimuM2n50}
\end{figure}

%%%%%%%%%%%%%%%%%%%%%%%%%%%%%%%%%%%%%%%%%%%%%%%%%%%%%%%
%%%%%%%%%%%%%%%%%%%%%%%%%%%%%%%%%%%%%%%%%%%%%%%%%%%%%%%
\begin{figure}[htbp]
\begin{minipage}[b]{0.5\linewidth}
\centering
\includegraphics[width=8cm,height=4.6cm]{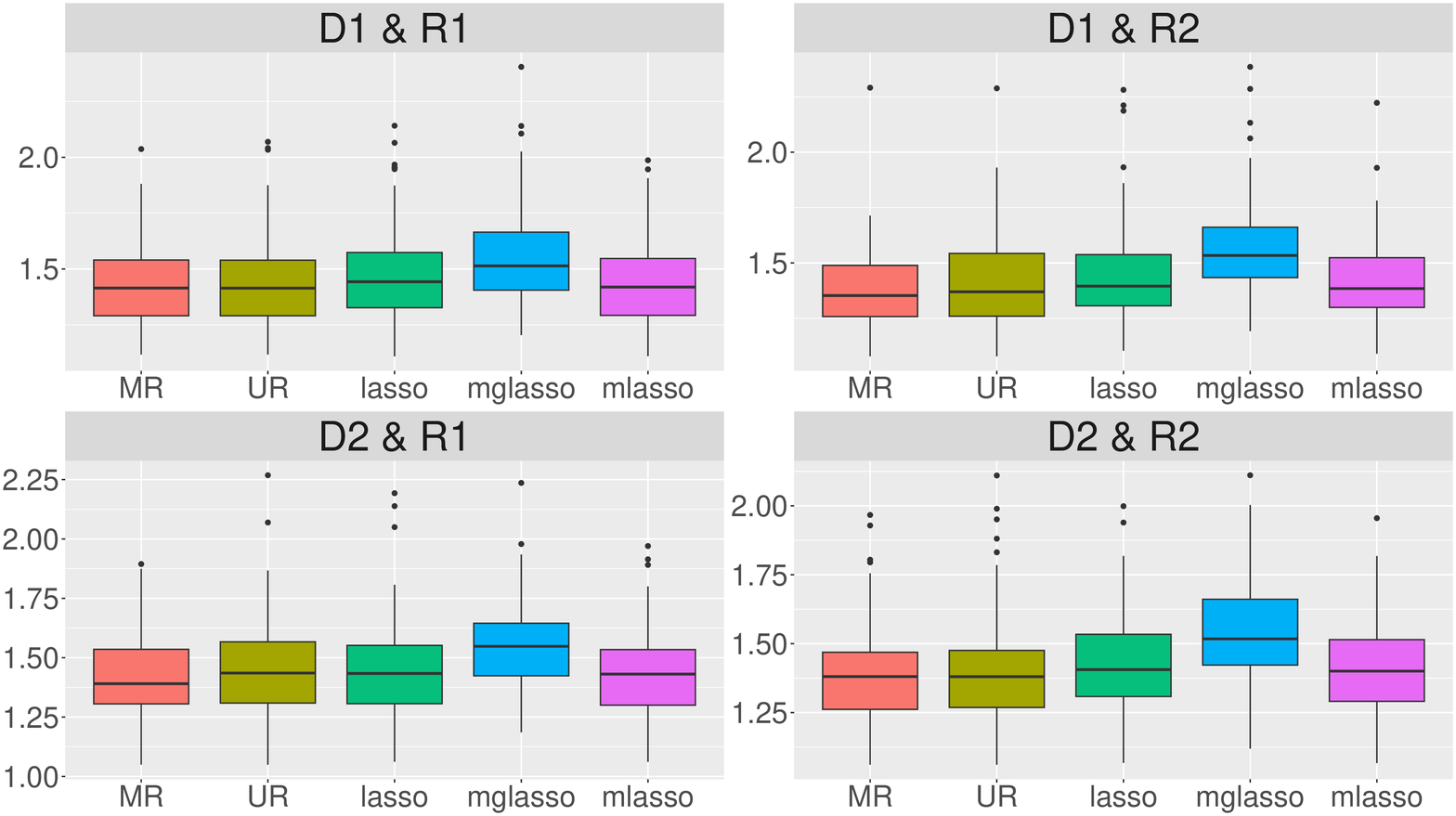}
\vspace{-3.5mm}
\subcaption{$s=5, \rho_x=0.1, \rho_y=0.1$}
%\label{Poi_SPCR}
\vspace{2.5mm}
\end{minipage}
\begin{minipage}[b]{0.5\linewidth}
\centering
\includegraphics[width=8cm,height=4.6cm]{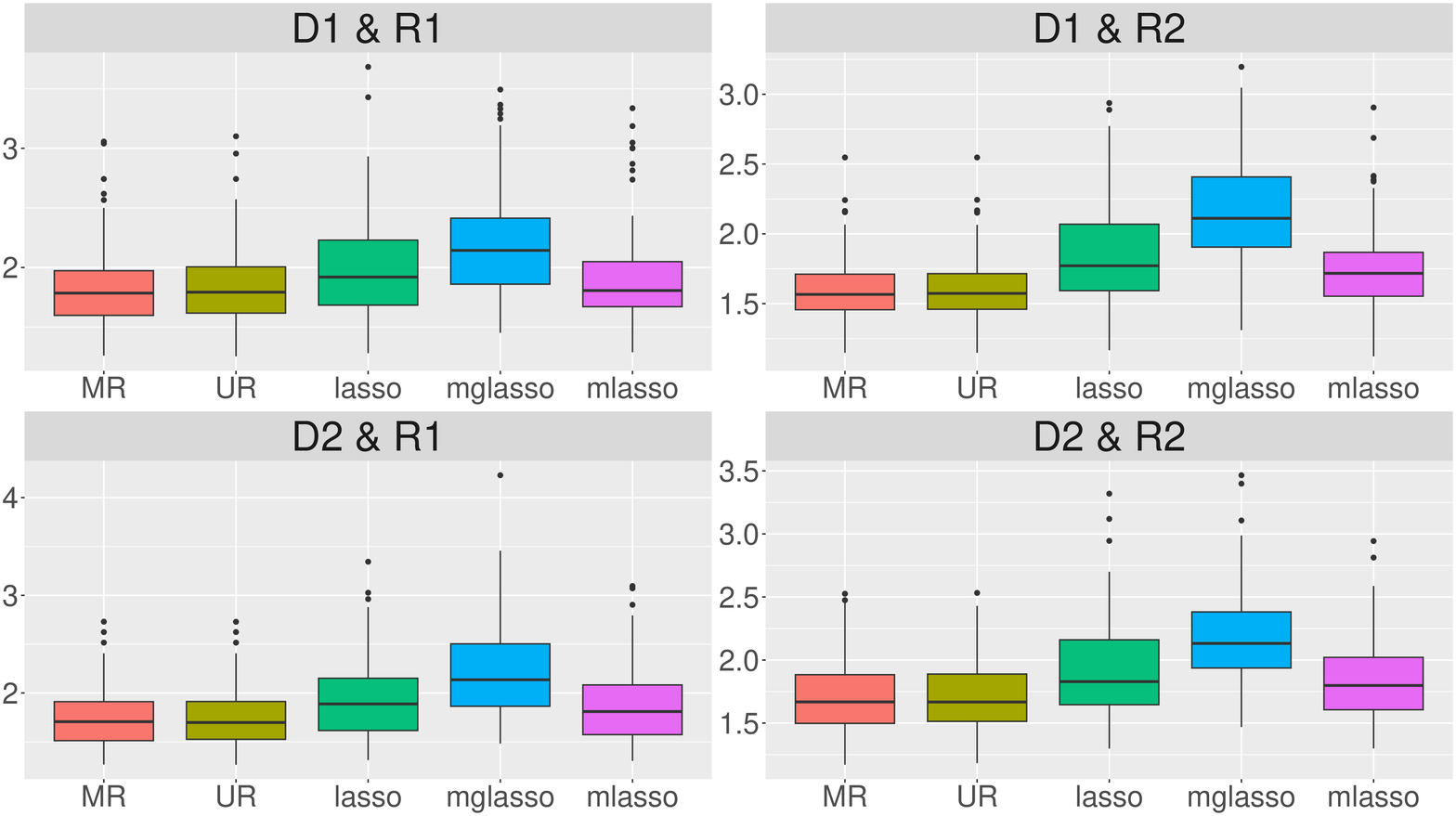} %\\ [-0.5cm]
\vspace{-3.5mm}
\subcaption{$s=50, \rho_x=0.1, \rho_y=0.1$}
%\label{Poi_PCA}
\vspace{2.5mm}
\end{minipage}
\begin{minipage}[b]{0.5\linewidth}
\centering
\includegraphics[width=8cm,height=4.6cm]{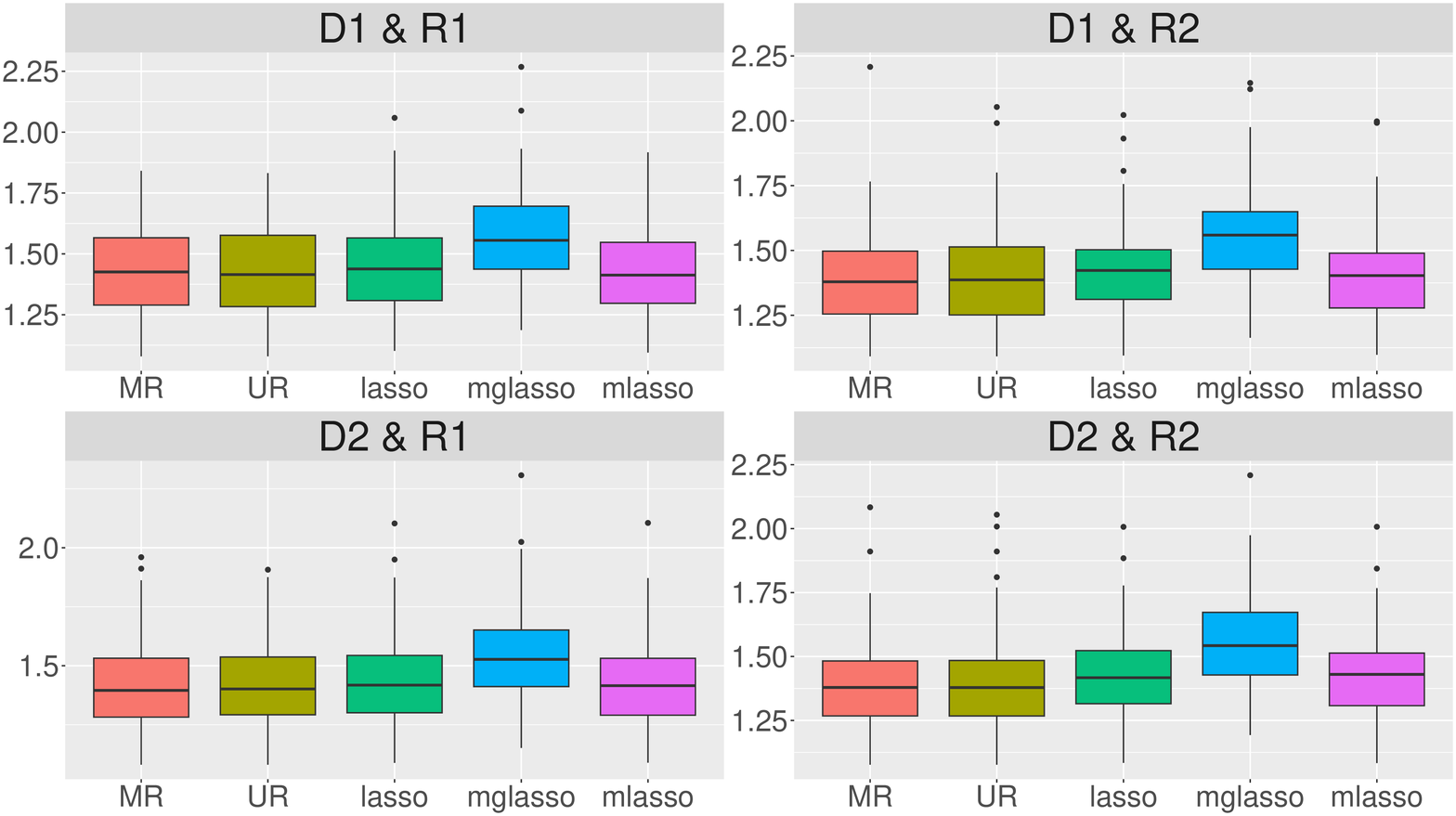} %\\ [-0.5cm] 
\vspace{-3.5mm}
\subcaption{$s=5, \rho_x=0.1, \rho_y=0.9$}
%\label{Poi_PLS-GLR}
\vspace{2.5mm}
\end{minipage}
\begin{minipage}[b]{0.5\linewidth}
\centering
\includegraphics[width=8cm,height=4.6cm]{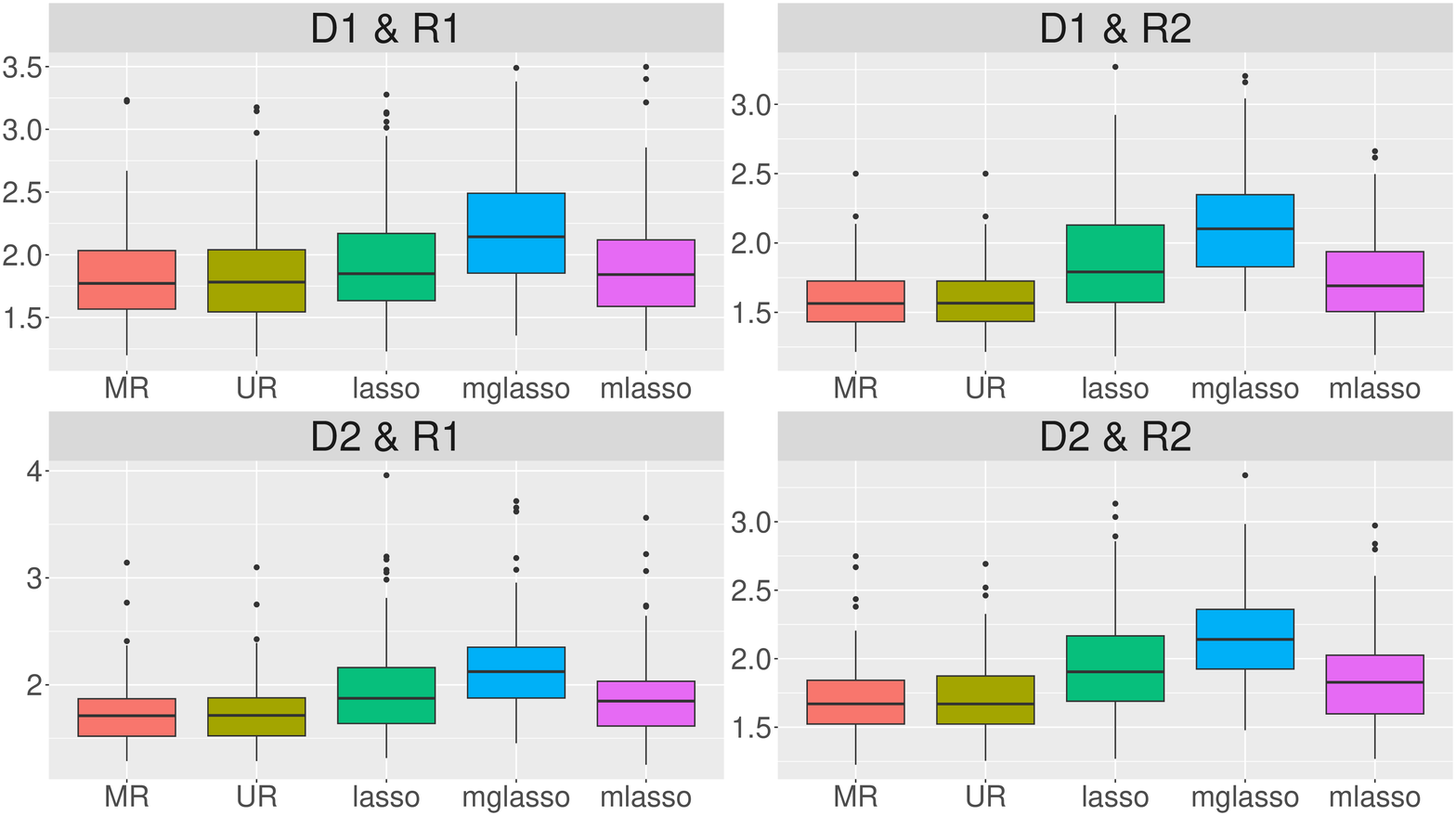}
\vspace{-3.5mm}
\subcaption{$s=50, \rho_x=0.1, \rho_y=0.9$}
%\label{Poi_SPCA}
\vspace{2.5mm}
\end{minipage}
\begin{minipage}[b]{0.5\linewidth}
\centering
\includegraphics[width=8cm,height=4.6cm]{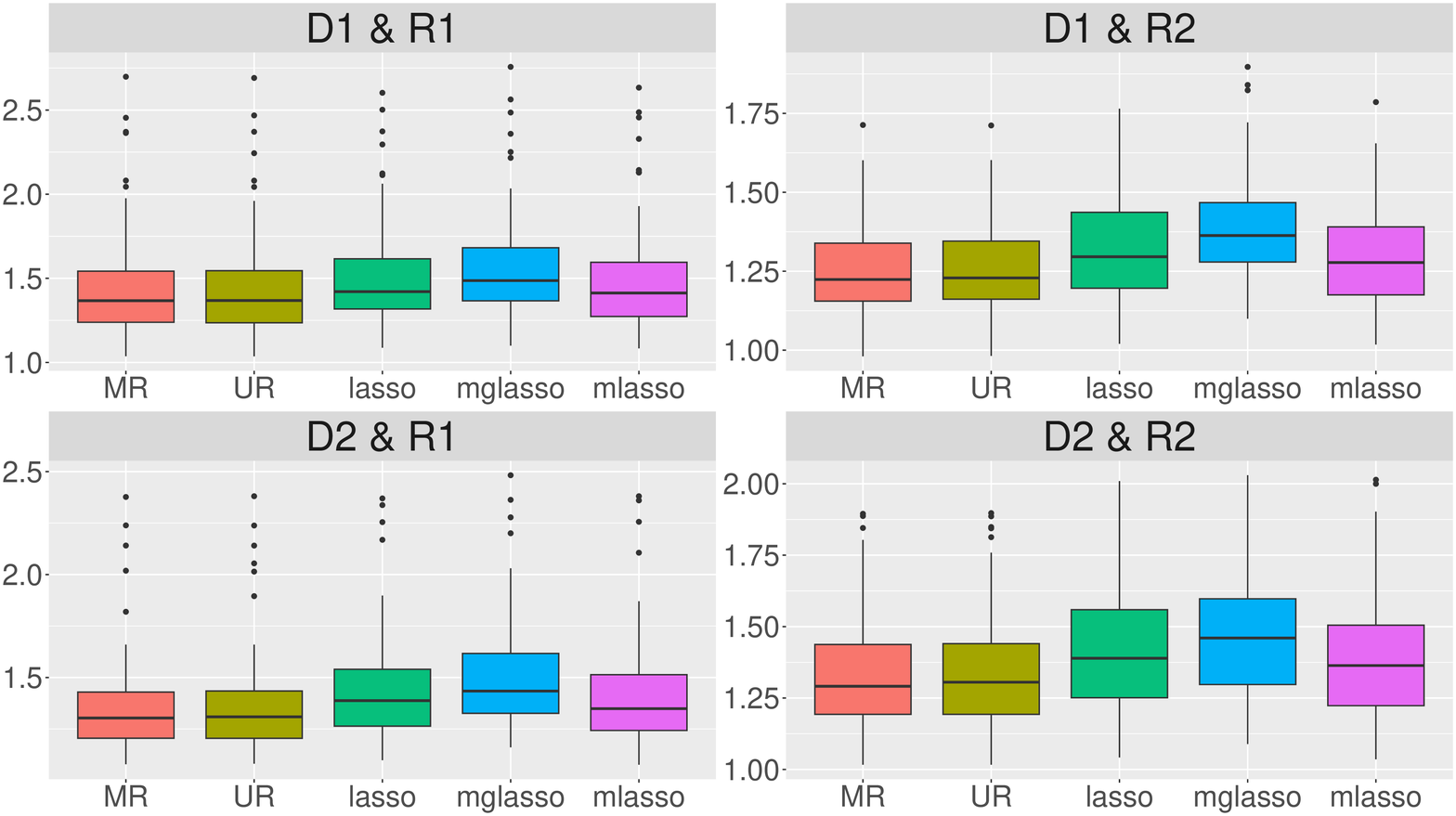}
\vspace{-3.5mm}
\subcaption{$s=5, \rho_x=0.9, \rho_y=0.1$}
%\label{Poi_DSPCA}
\vspace{2.5mm}
\end{minipage}
\begin{minipage}[b]{0.5\linewidth}
\centering
\includegraphics[width=8cm,height=4.6cm]{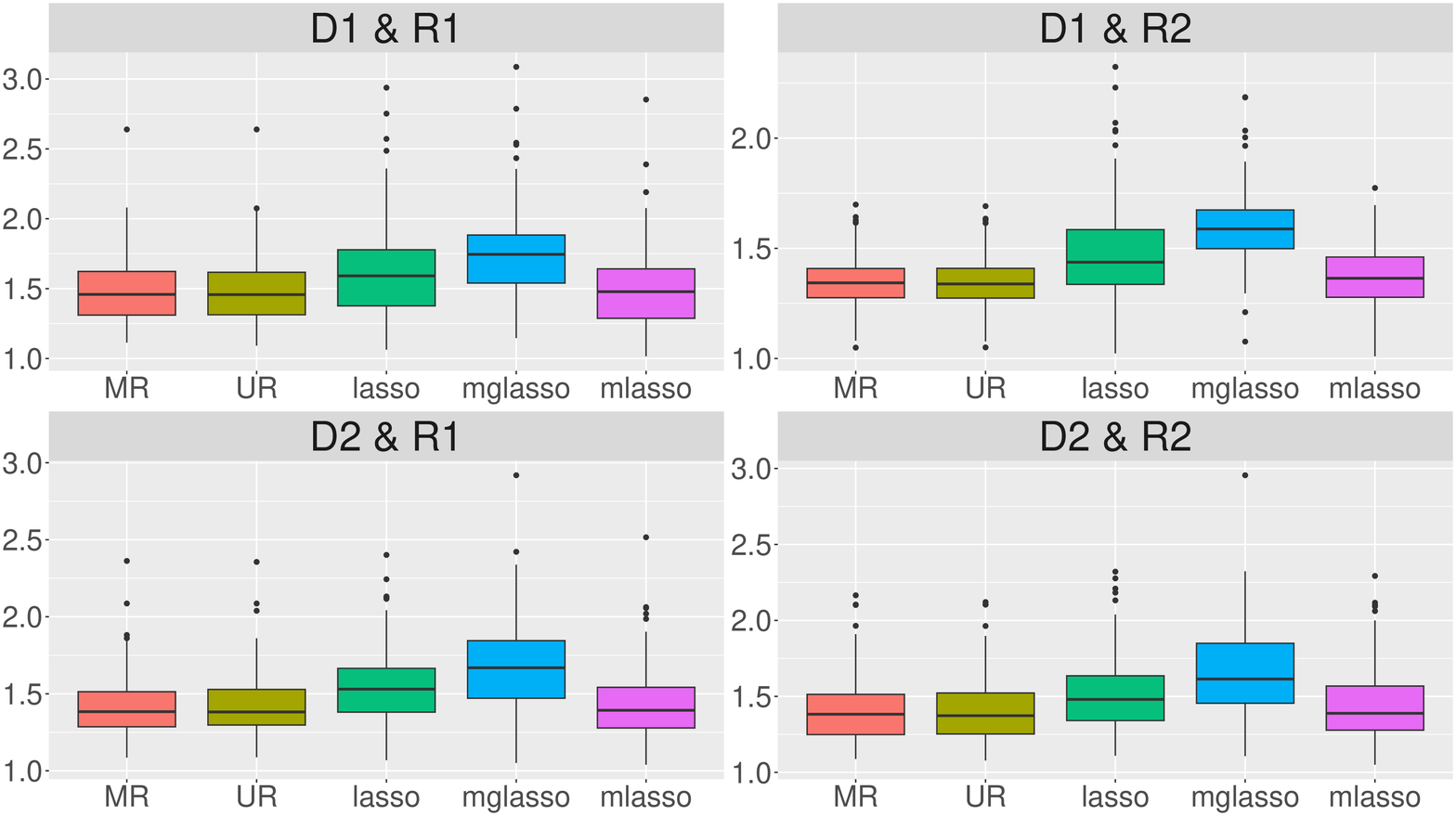}
\vspace{-3.5mm}
\subcaption{$s=50, \rho_x=0.9, \rho_y=0.1$}
%\label{Poi_FPS}
\vspace{2.5mm}
\end{minipage}
\begin{minipage}[b]{0.5\linewidth}
\centering
\includegraphics[width=8cm,height=4.6cm]{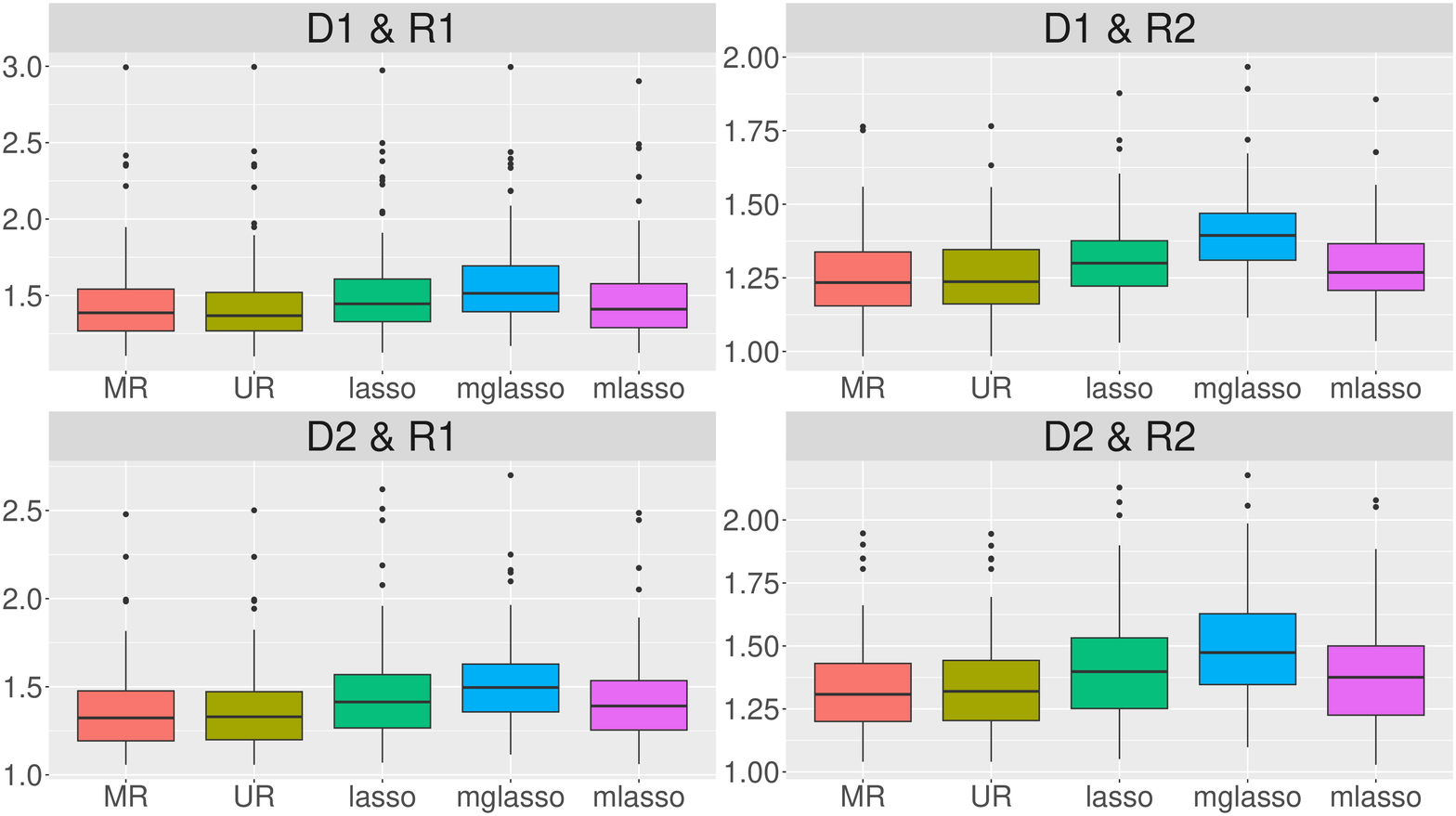}
\vspace{-3.5mm}
\subcaption{$s=5, \rho_x=0.9, \rho_y=0.9$}
%\label{Poi_SPC}
\vspace{2.5mm}
\end{minipage}
\begin{minipage}[b]{0.5\linewidth}
\centering
\includegraphics[width=8cm,height=4.6cm]{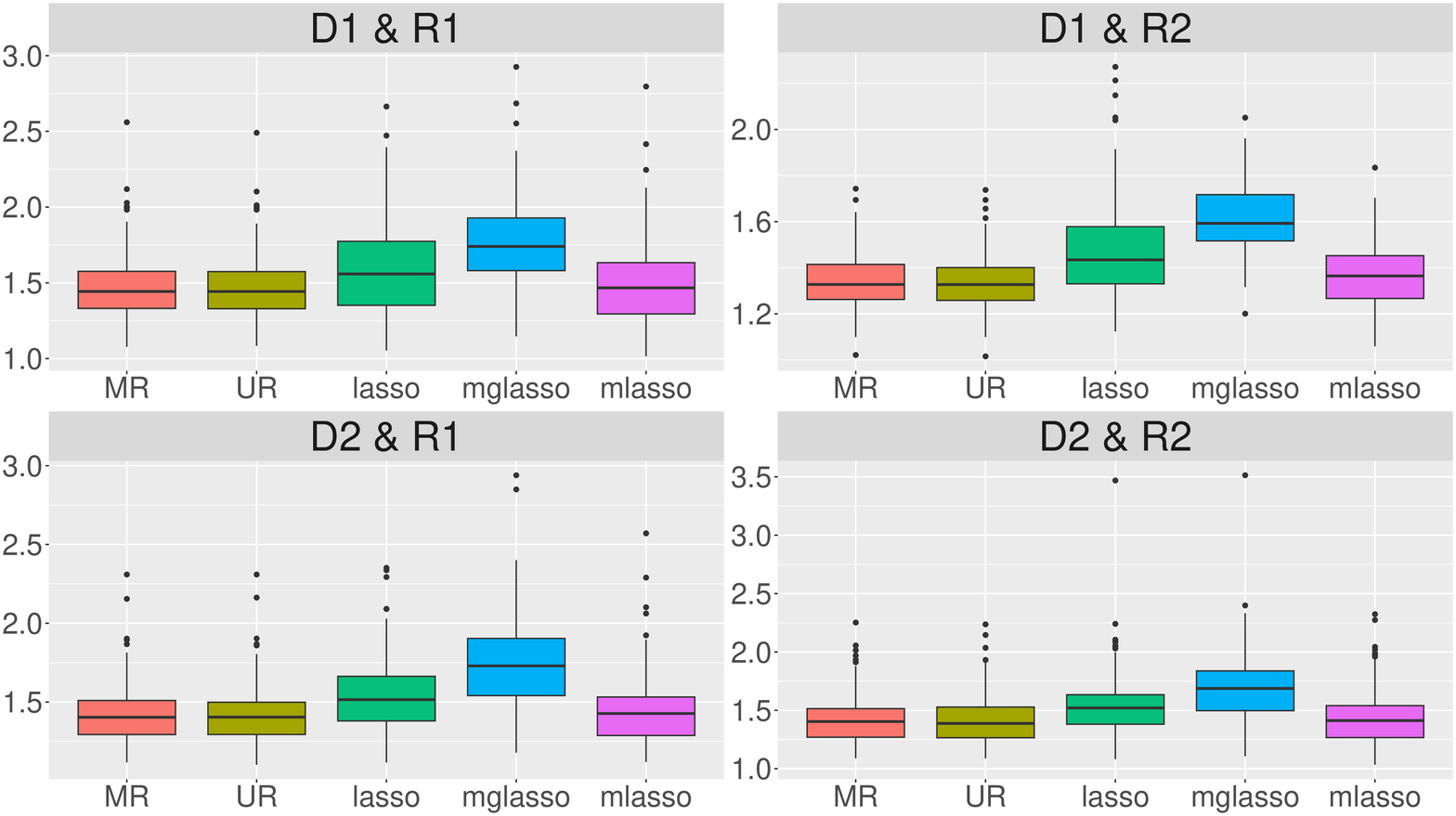}
\vspace{-3.5mm}
\subcaption{$s=50, \rho_x=0.9, \rho_y=0.9$}
%\label{Poi_SPC}
\vspace{2.5mm}
\end{minipage}
\caption{Boxplots of MSE for $n=75$ when the case $M=2$.
%The bold values correspond to the smallest means among SPCRsvd-LADMM, SPCRsvd-ADMM, and SPCR.  
%Scatter plots of principal and PLS components for the doctor visits data. 
%(a) True structure of the principal components. 
%(b) PCA. 
%(c) PLS-GLR. 
%(d) SPCR-glm. }
}
\label{fig:SimuM2n75}
\end{figure}

%%%%%%%%%%%%%%%%%%%%%%%%%%%%%%%%%%%%%%%%%%%%%%%%%%%%%%%
%%%%%%%%%%%%%%%%%%%%%%%%%%%%%%%%%%%%%%%%%%%%%%%%%%%%%%%
\begin{figure}[htbp]
\begin{minipage}[b]{0.5\linewidth}
\centering
\includegraphics[width=8cm,height=4.6cm]{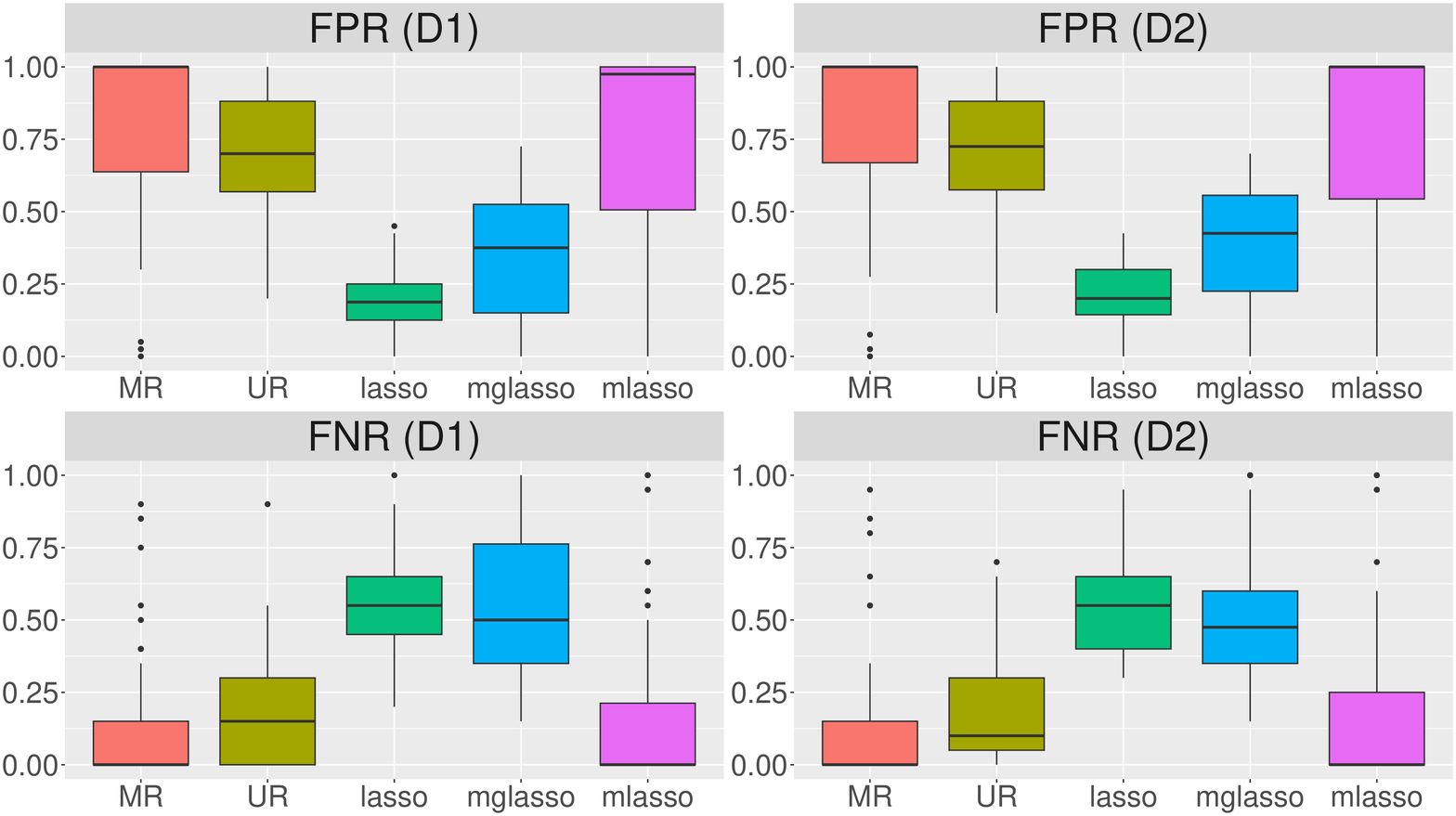}
\vspace{-3.5mm}
\subcaption{$s=5, \rho_x=0.1, \rho_y=0.1$}
%\label{Poi_SPCR}
\vspace{2.5mm}
\end{minipage}
\begin{minipage}[b]{0.5\linewidth}
\centering
\includegraphics[width=8cm,height=4.6cm]{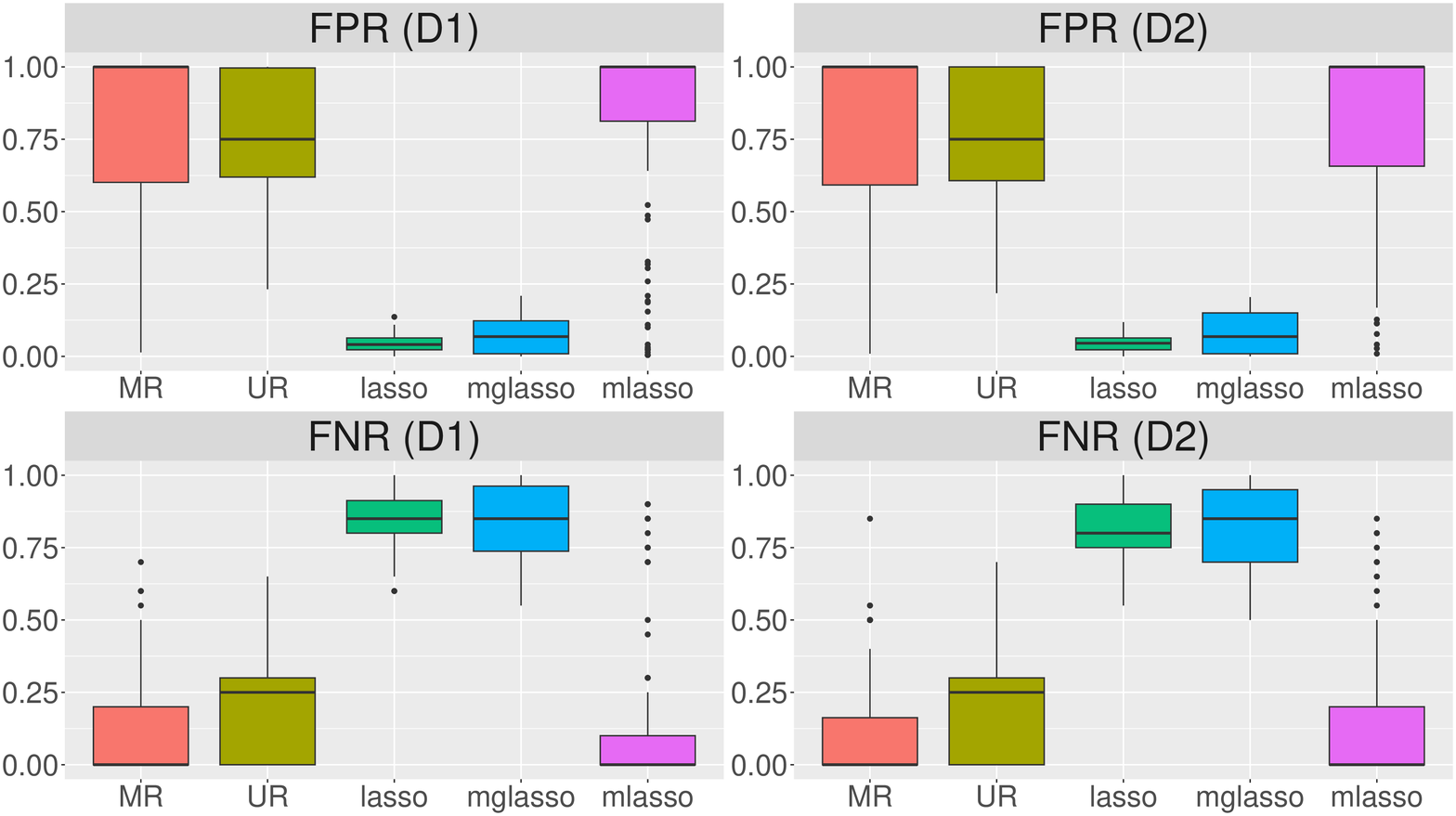} %\\ [-0.5cm]
\vspace{-3.5mm}
\subcaption{$s=50, \rho_x=0.1, \rho_y=0.1$}
%\label{Poi_PCA}
\vspace{2.5mm}
\end{minipage}
\begin{minipage}[b]{0.5\linewidth}
\centering
\includegraphics[width=8cm,height=4.6cm]{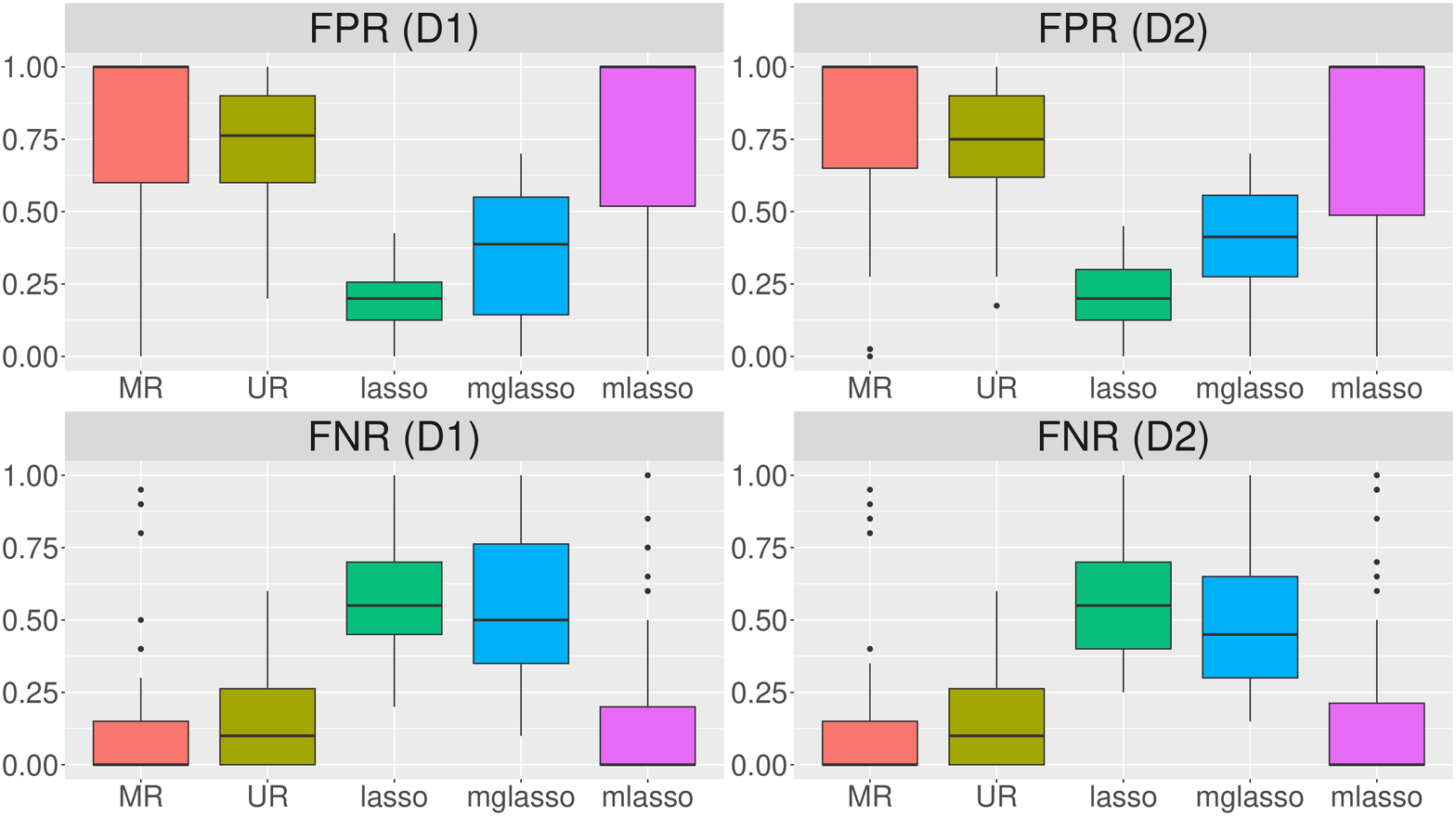} %\\ [-0.5cm] 
\vspace{-3.5mm}
\subcaption{$s=5, \rho_x=0.1, \rho_y=0.9$}
%\label{Poi_PLS-GLR}
\vspace{2.5mm}
\end{minipage}
\begin{minipage}[b]{0.5\linewidth}
\centering
\includegraphics[width=8cm,height=4.6cm]{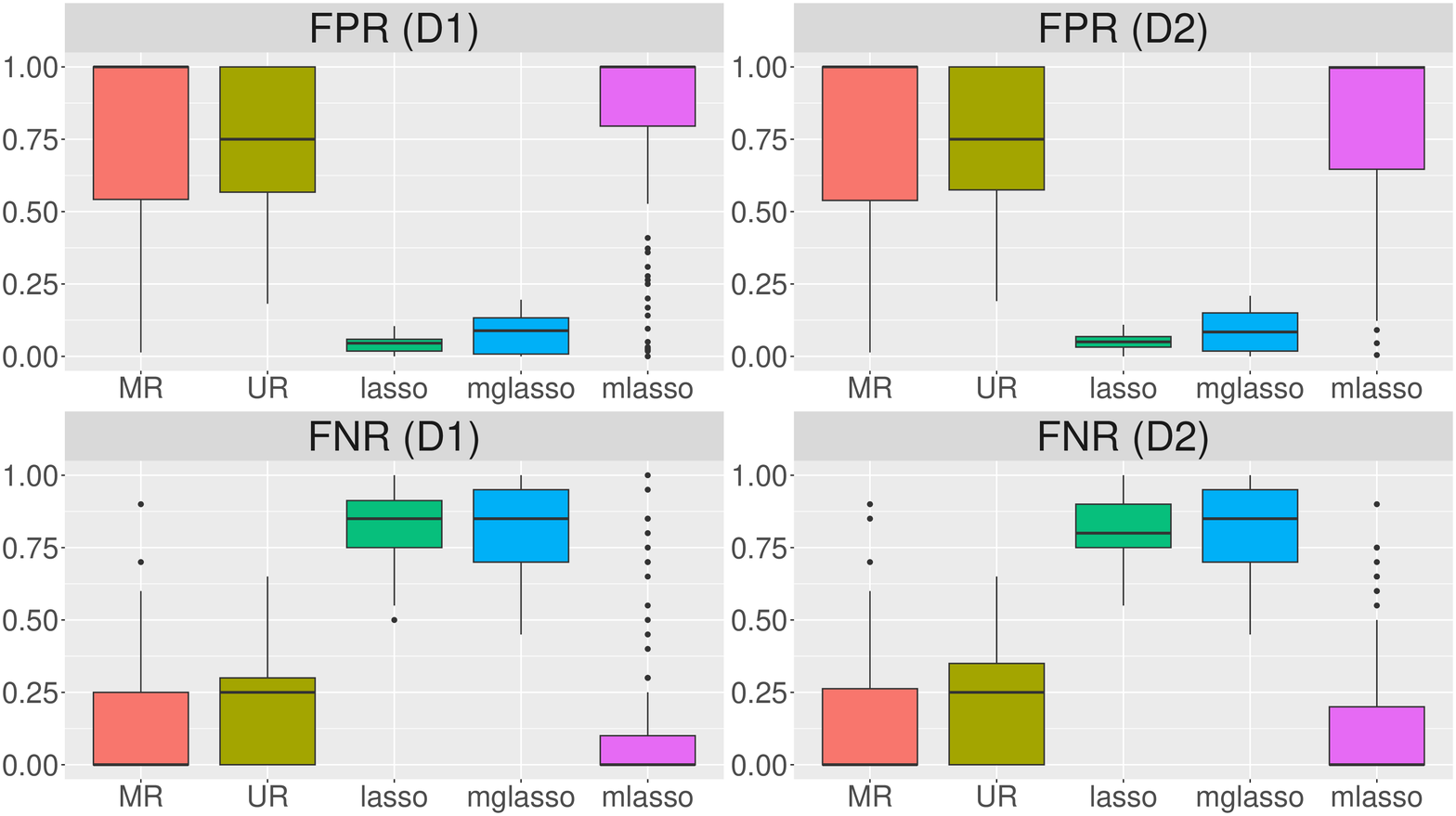}
\vspace{-3.5mm}
\subcaption{$s=50, \rho_x=0.1, \rho_y=0.9$}
%\label{Poi_SPCA}
\vspace{2.5mm}
\end{minipage}
\begin{minipage}[b]{0.5\linewidth}
\centering
\includegraphics[width=8cm,height=4.6cm]{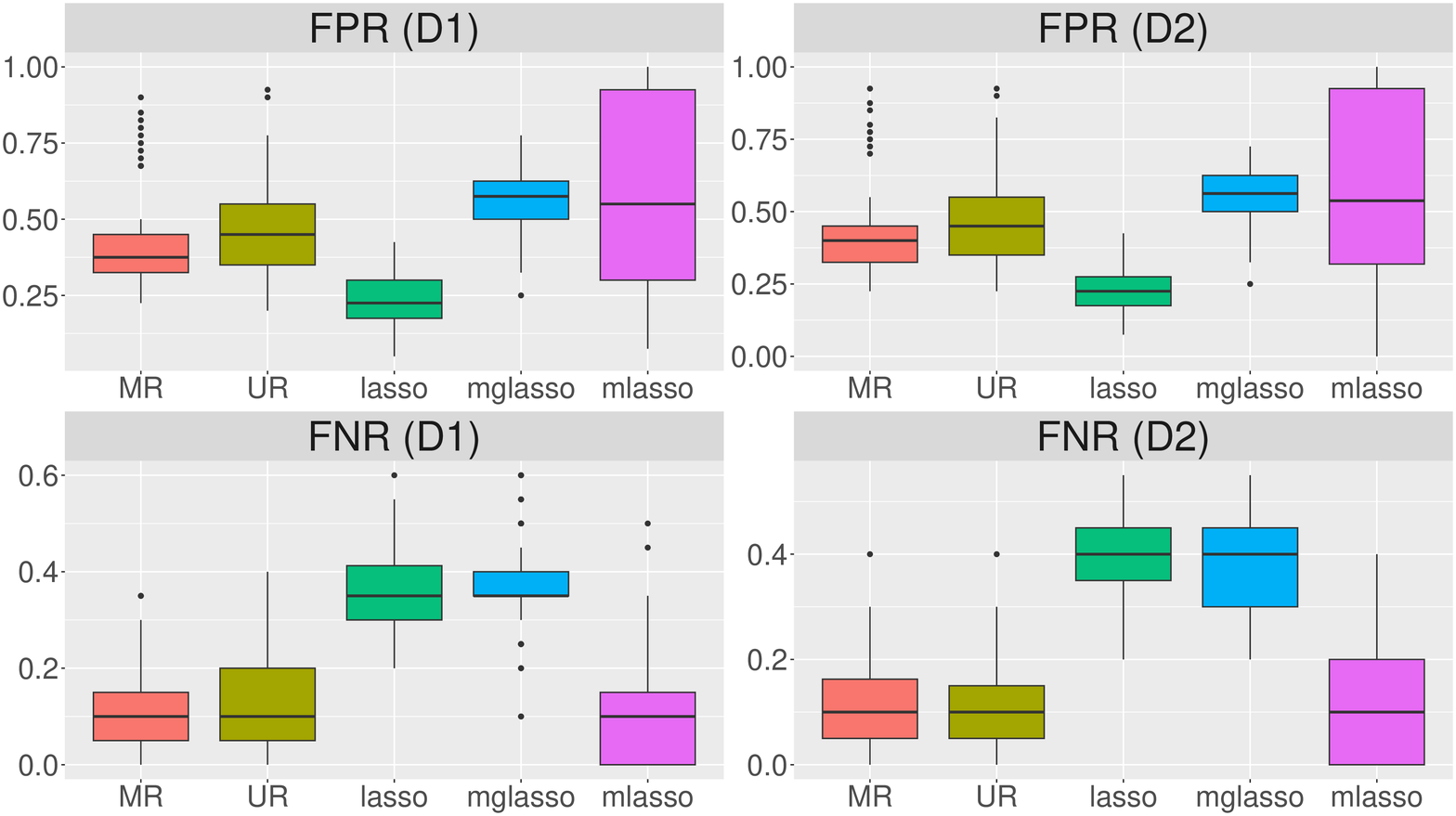}
\vspace{-3.5mm}
\subcaption{$s=5, \rho_x=0.9, \rho_y=0.1$}
%\label{Poi_DSPCA}
\vspace{2.5mm}
\end{minipage}
\begin{minipage}[b]{0.5\linewidth}
\centering
\includegraphics[width=8cm,height=4.6cm]{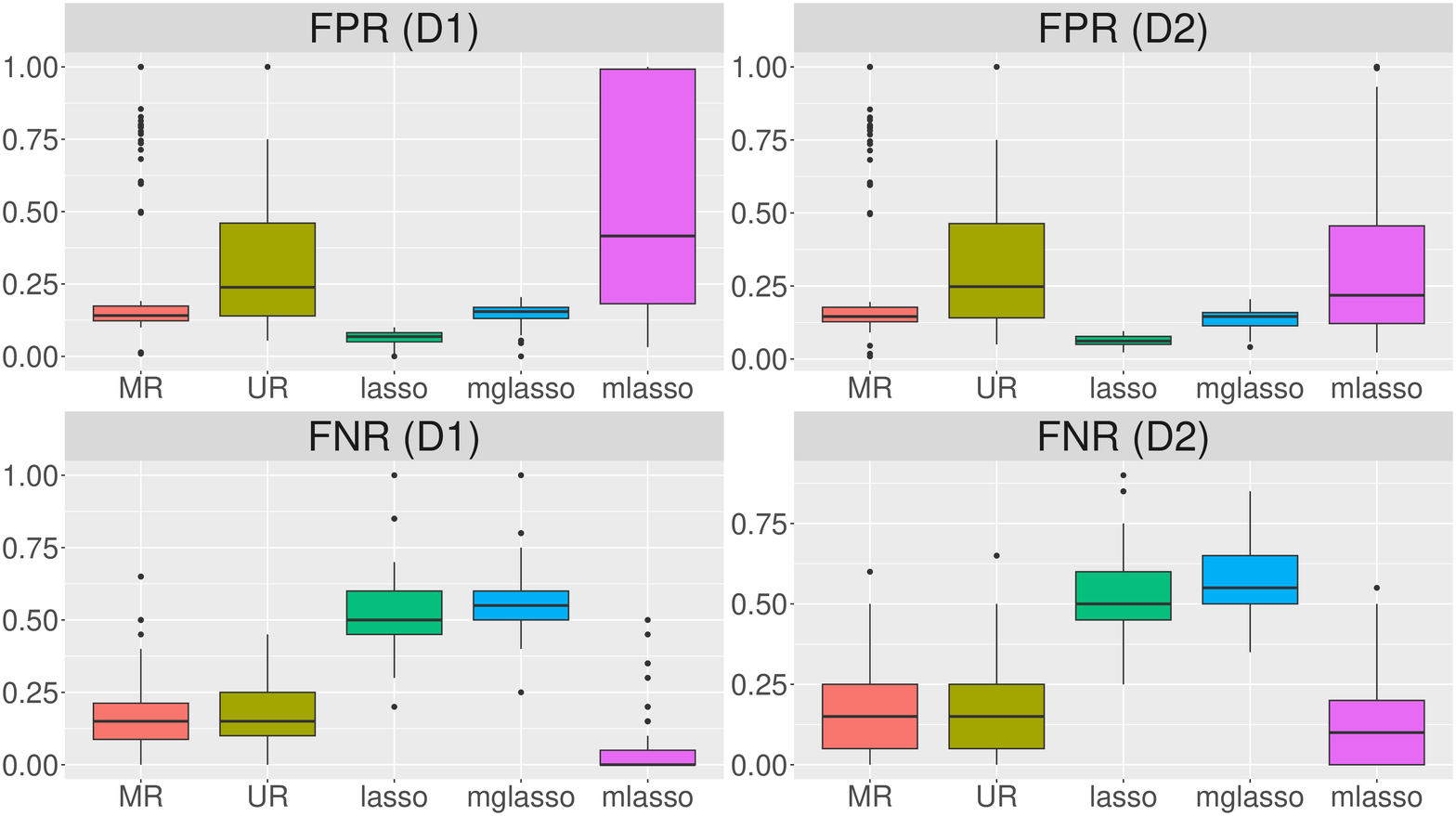}
\vspace{-3.5mm}
\subcaption{$s=50, \rho_x=0.9, \rho_y=0.1$}
%\label{Poi_FPS}
\vspace{2.5mm}
\end{minipage}
\begin{minipage}[b]{0.5\linewidth}
\centering
\includegraphics[width=8cm,height=4.6cm]{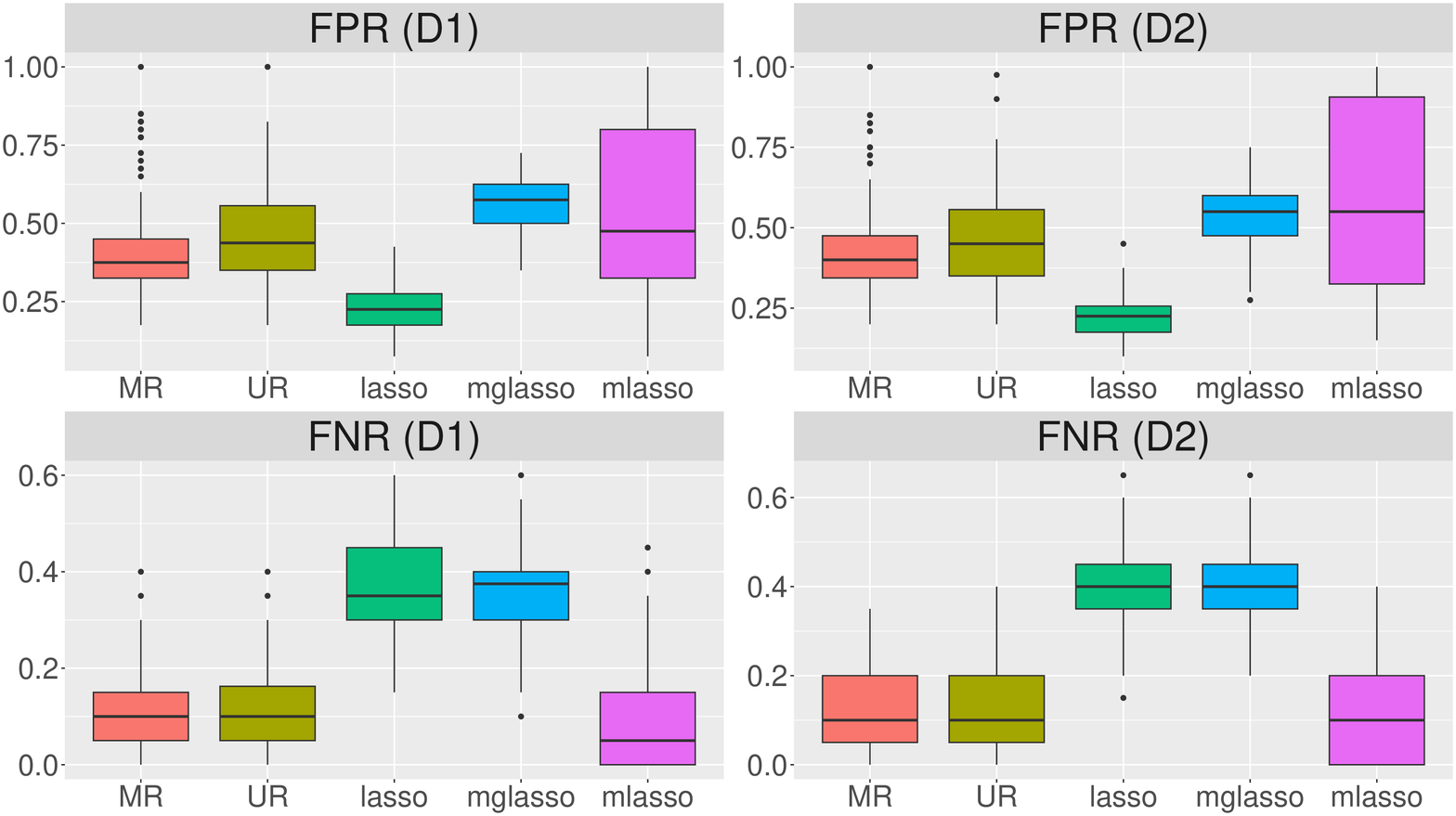}
\vspace{-3.5mm}
\subcaption{$s=5, \rho_x=0.9, \rho_y=0.9$}
%\label{Poi_SPC}
\vspace{2.5mm}
\end{minipage}
\begin{minipage}[b]{0.5\linewidth}
\centering
\includegraphics[width=8cm,height=4.6cm]{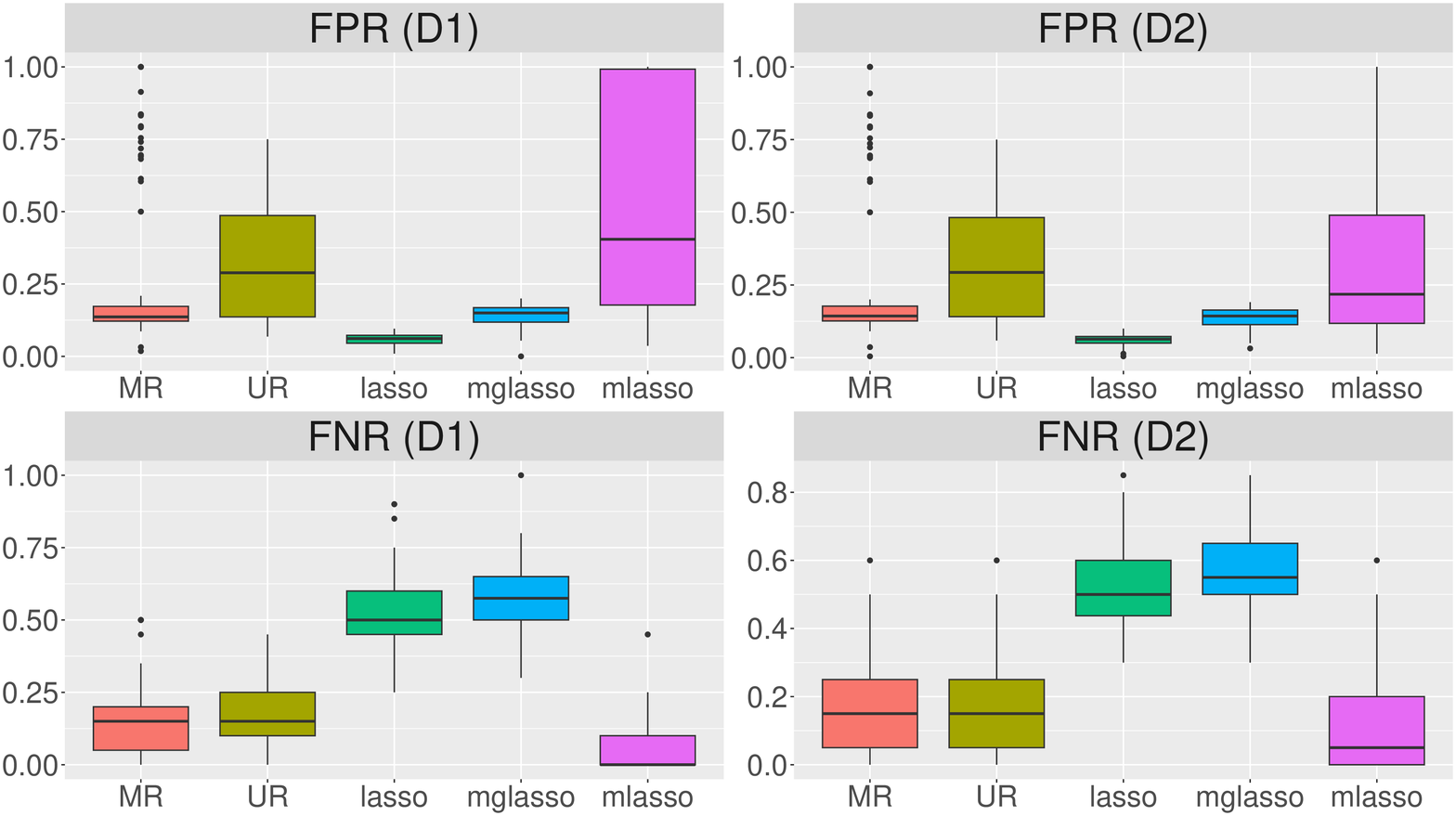}
\vspace{-3.5mm}
\subcaption{$s=50, \rho_x=0.9, \rho_y=0.9$}
%\label{Poi_SPC}
\vspace{2.5mm}
\end{minipage}
\caption{Boxplots of FPR and FNR for $n=15$ when the case $M=2$.
%The bold values correspond to the smallest means among SPCRsvd-LADMM, SPCRsvd-ADMM, and SPCR.  
%Scatter plots of principal and PLS components for the doctor visits data. 
%(a) True structure of the principal components. 
%(b) PCA. 
%(c) PLS-GLR. 
%(d) SPCR-glm. }
}
\label{fig:SimuM2n15_FPRFNR}
\end{figure}

%%%%%%%%%%%%%%%%%%%%%%%%%%%%%%%%%%%%%%%%%%%%%%%%%%%%%%%
%%%%%%%%%%%%%%%%%%%%%%%%%%%%%%%%%%%%%%%%%%%%%%%%%%%%%%%
\begin{figure}[htbp]
\begin{minipage}[b]{0.5\linewidth}
\centering
\includegraphics[width=8cm,height=4.6cm]{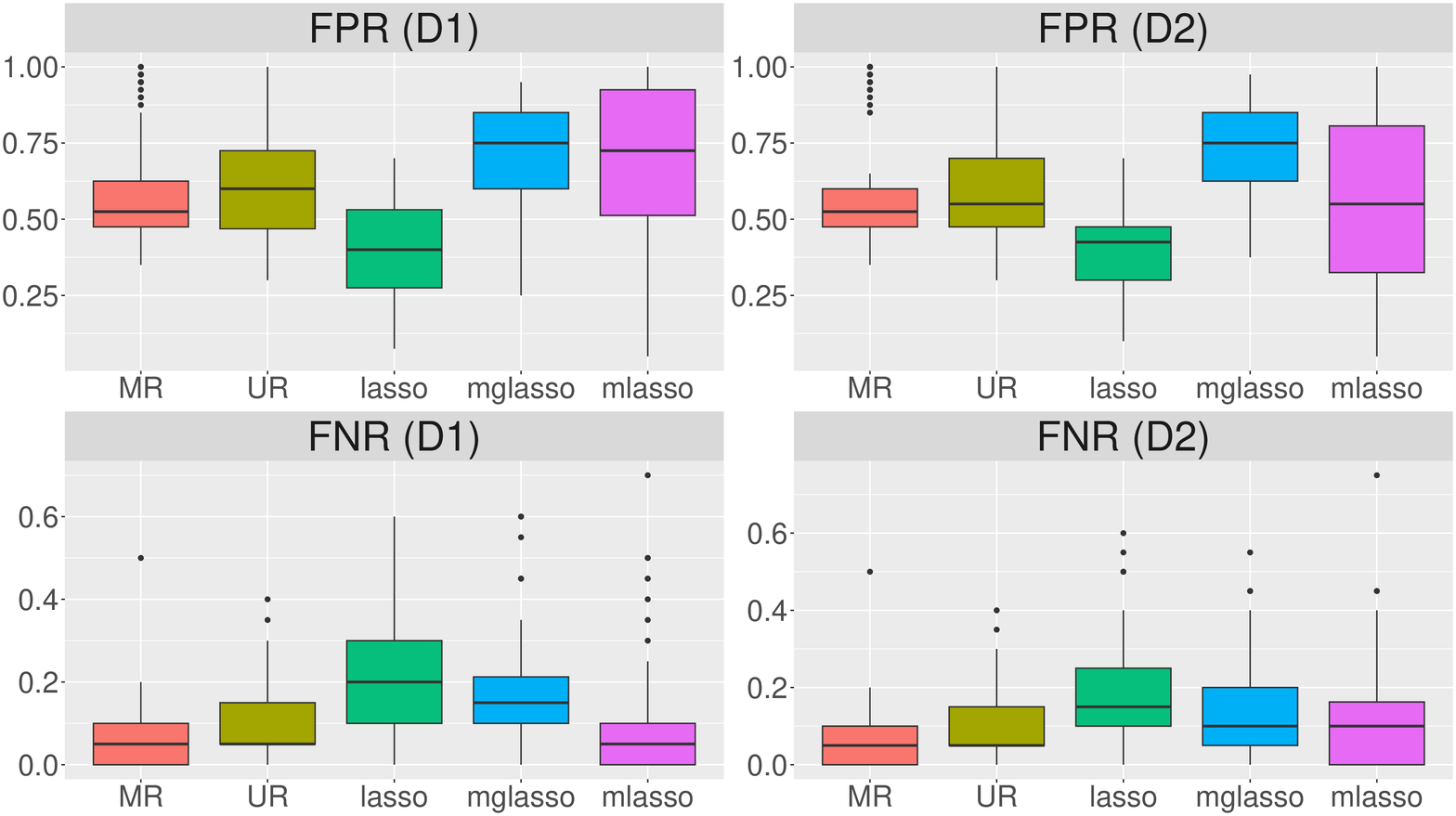}
\vspace{-3.5mm}
\subcaption{$s=5, \rho_x=0.1, \rho_y=0.1$}
%\label{Poi_SPCR}
\vspace{2.5mm}
\end{minipage}
\begin{minipage}[b]{0.5\linewidth}
\centering
\includegraphics[width=8cm,height=4.6cm]{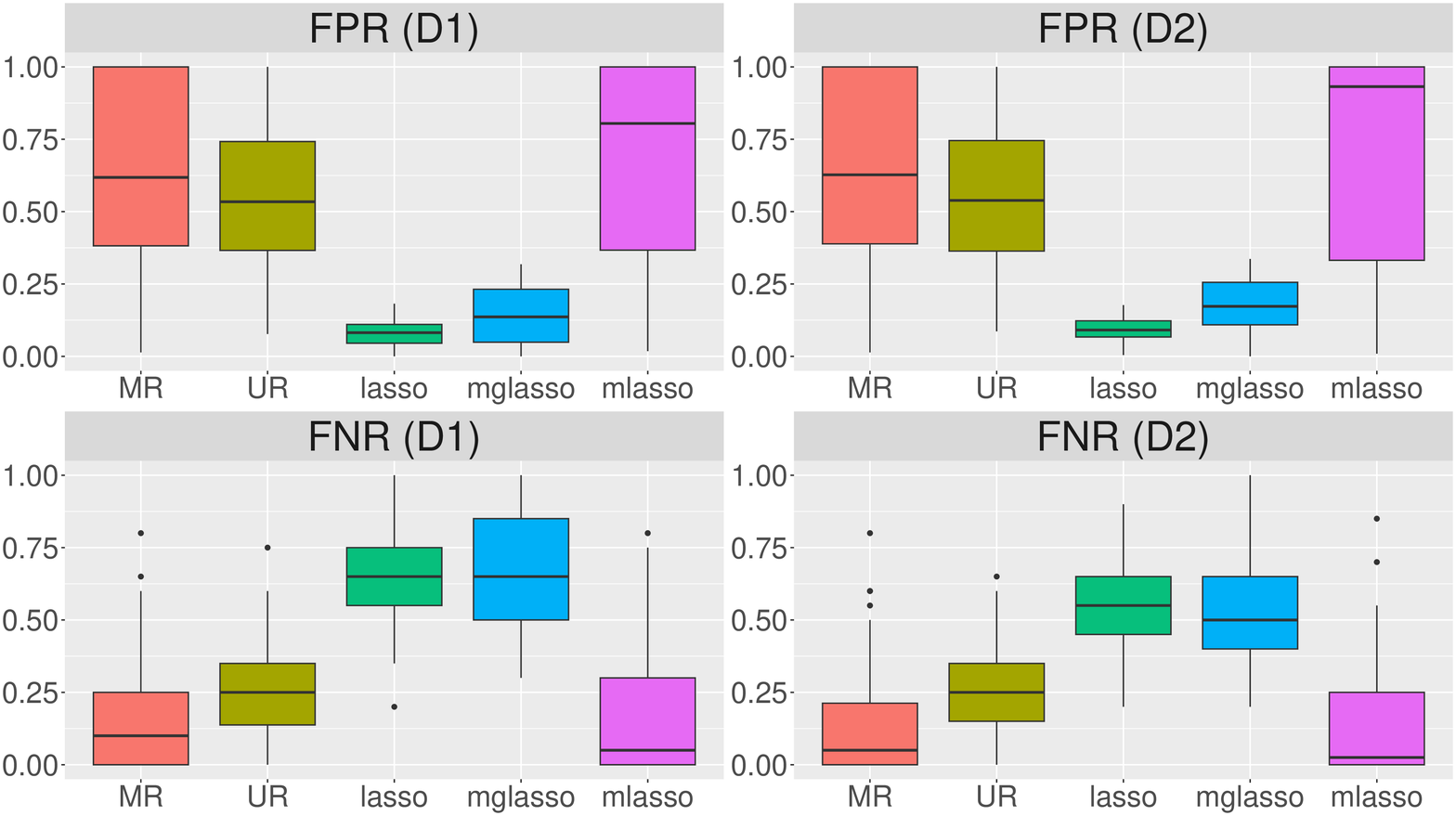} %\\ [-0.5cm]
\vspace{-3.5mm}
\subcaption{$s=50, \rho_x=0.1, \rho_y=0.1$}
%\label{Poi_PCA}
\vspace{2.5mm}
\end{minipage}
\begin{minipage}[b]{0.5\linewidth}
\centering
\includegraphics[width=8cm,height=4.6cm]{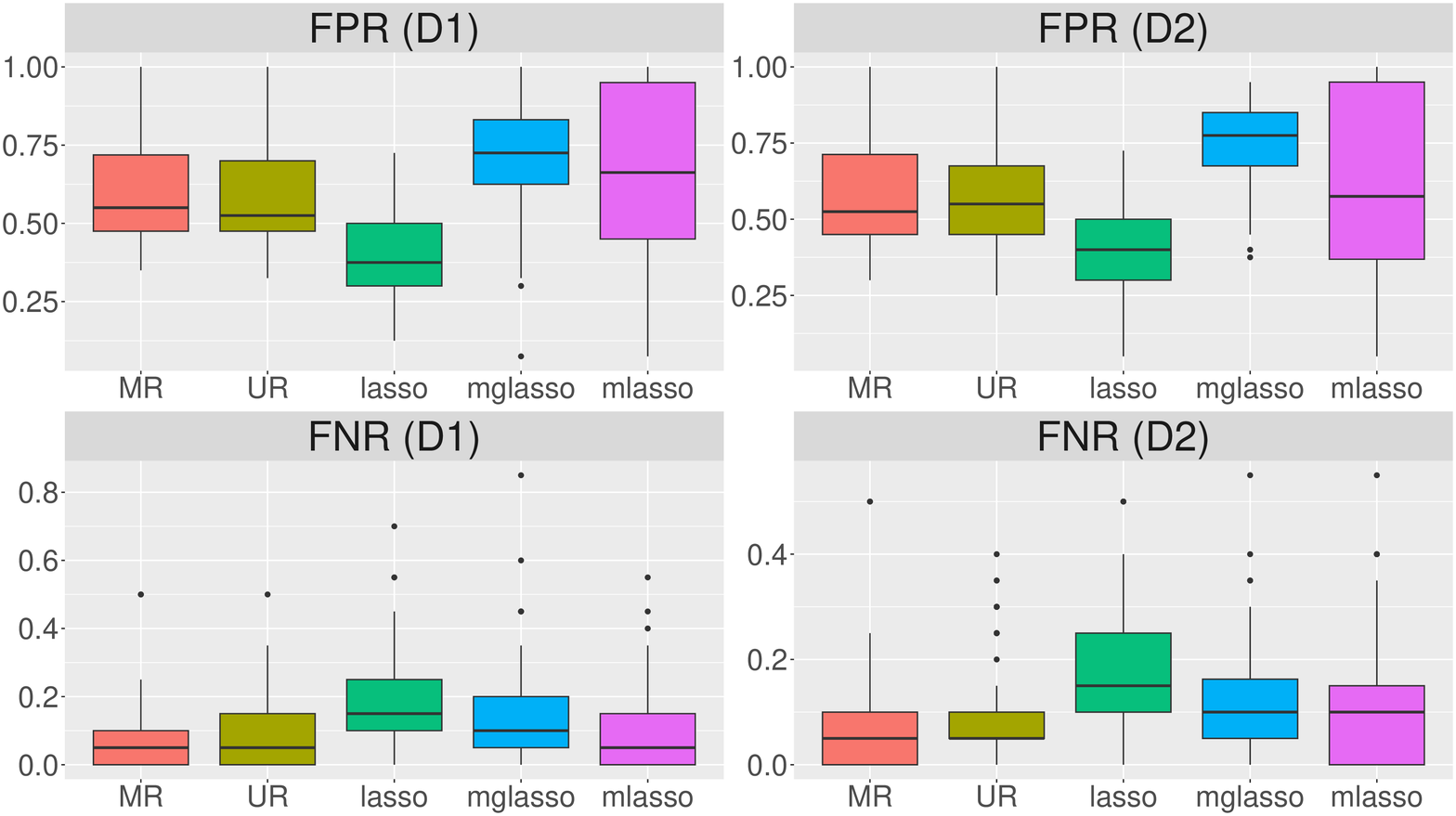} %\\ [-0.5cm] 
\vspace{-3.5mm}
\subcaption{$s=5, \rho_x=0.1, \rho_y=0.9$}
%\label{Poi_PLS-GLR}
\vspace{2.5mm}
\end{minipage}
\begin{minipage}[b]{0.5\linewidth}
\centering
\includegraphics[width=8cm,height=4.6cm]{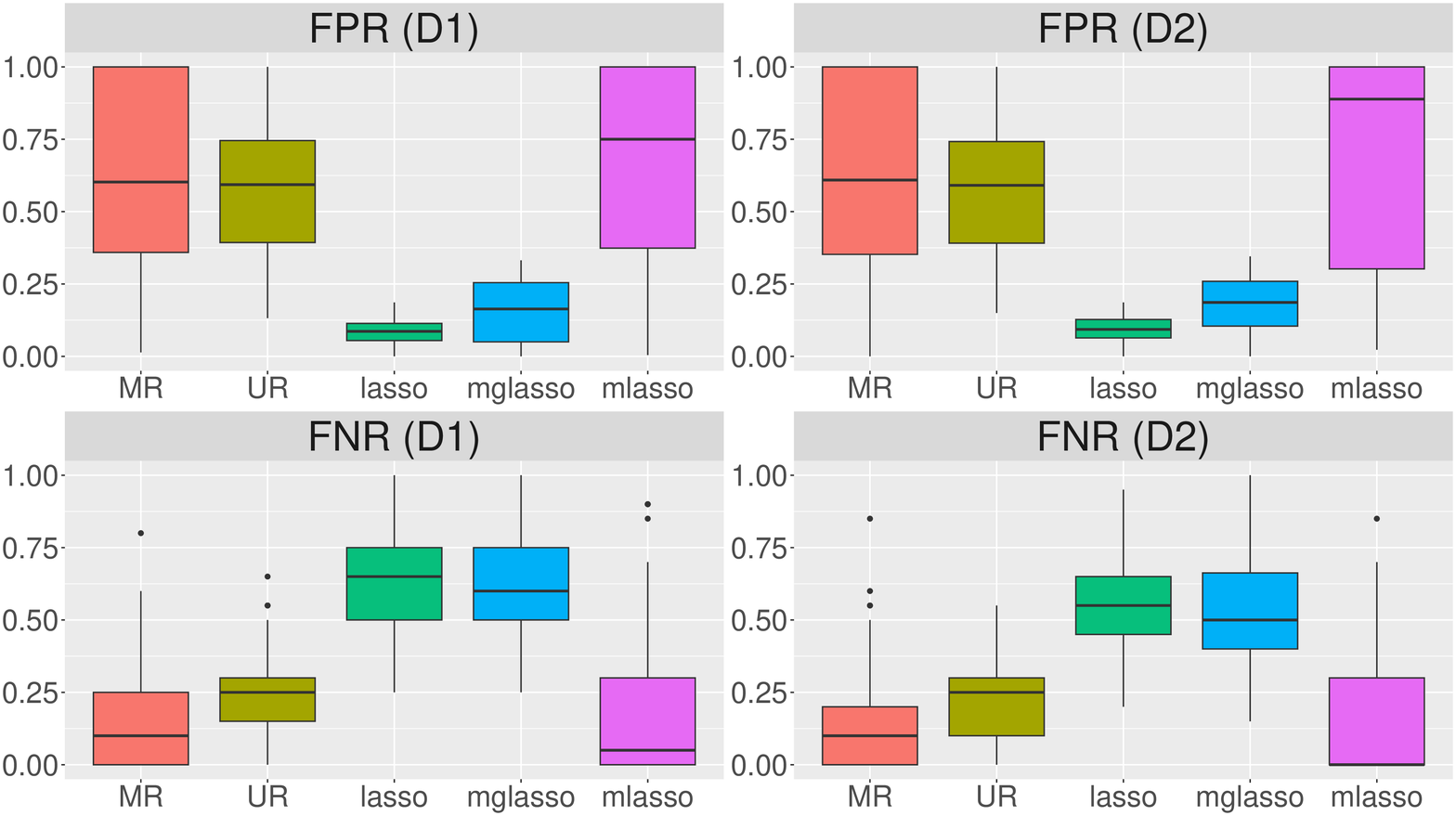}
\vspace{-3.5mm}
\subcaption{$s=50, \rho_x=0.1, \rho_y=0.9$}
%\label{Poi_SPCA}
\vspace{2.5mm}
\end{minipage}
\begin{minipage}[b]{0.5\linewidth}
\centering
\includegraphics[width=8cm,height=4.6cm]{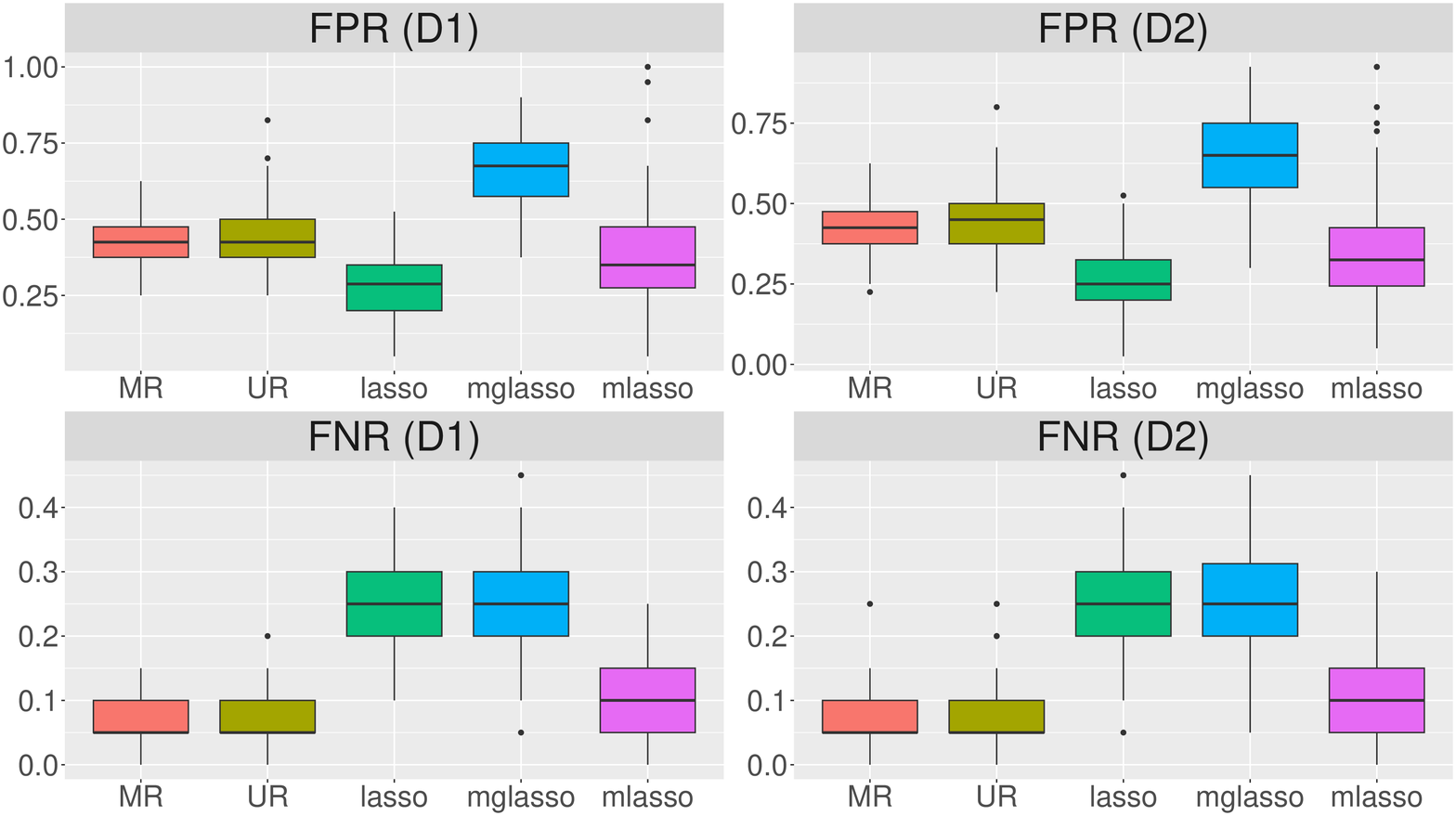}
\vspace{-3.5mm}
\subcaption{$s=5, \rho_x=0.9, \rho_y=0.1$}
%\label{Poi_DSPCA}
\vspace{2.5mm}
\end{minipage}
\begin{minipage}[b]{0.5\linewidth}
\centering
\includegraphics[width=8cm,height=4.6cm]{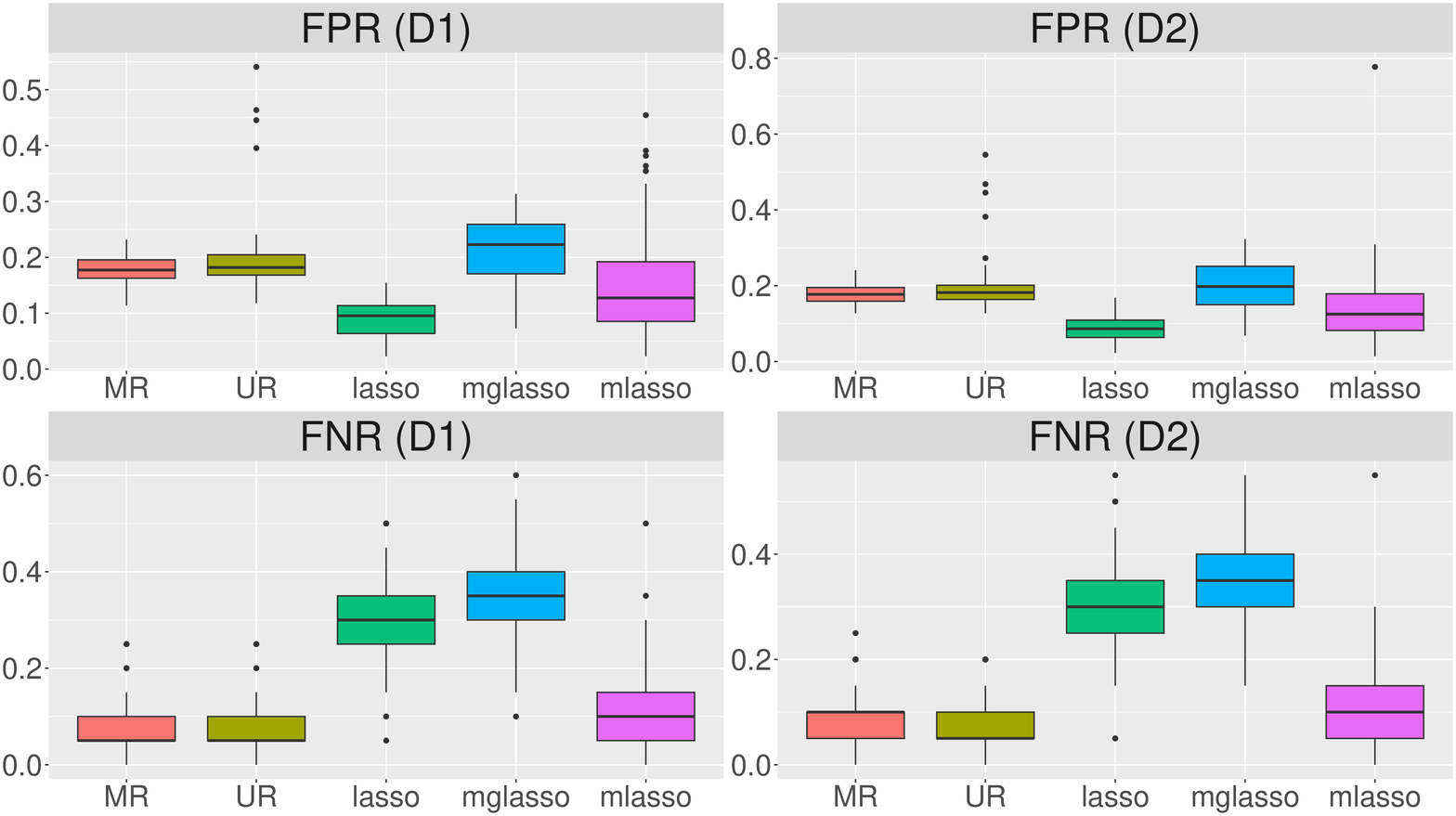}
\vspace{-3.5mm}
\subcaption{$s=50, \rho_x=0.9, \rho_y=0.1$}
%\label{Poi_FPS}
\vspace{2.5mm}
\end{minipage}
\begin{minipage}[b]{0.5\linewidth}
\centering
\includegraphics[width=8cm,height=4.6cm]{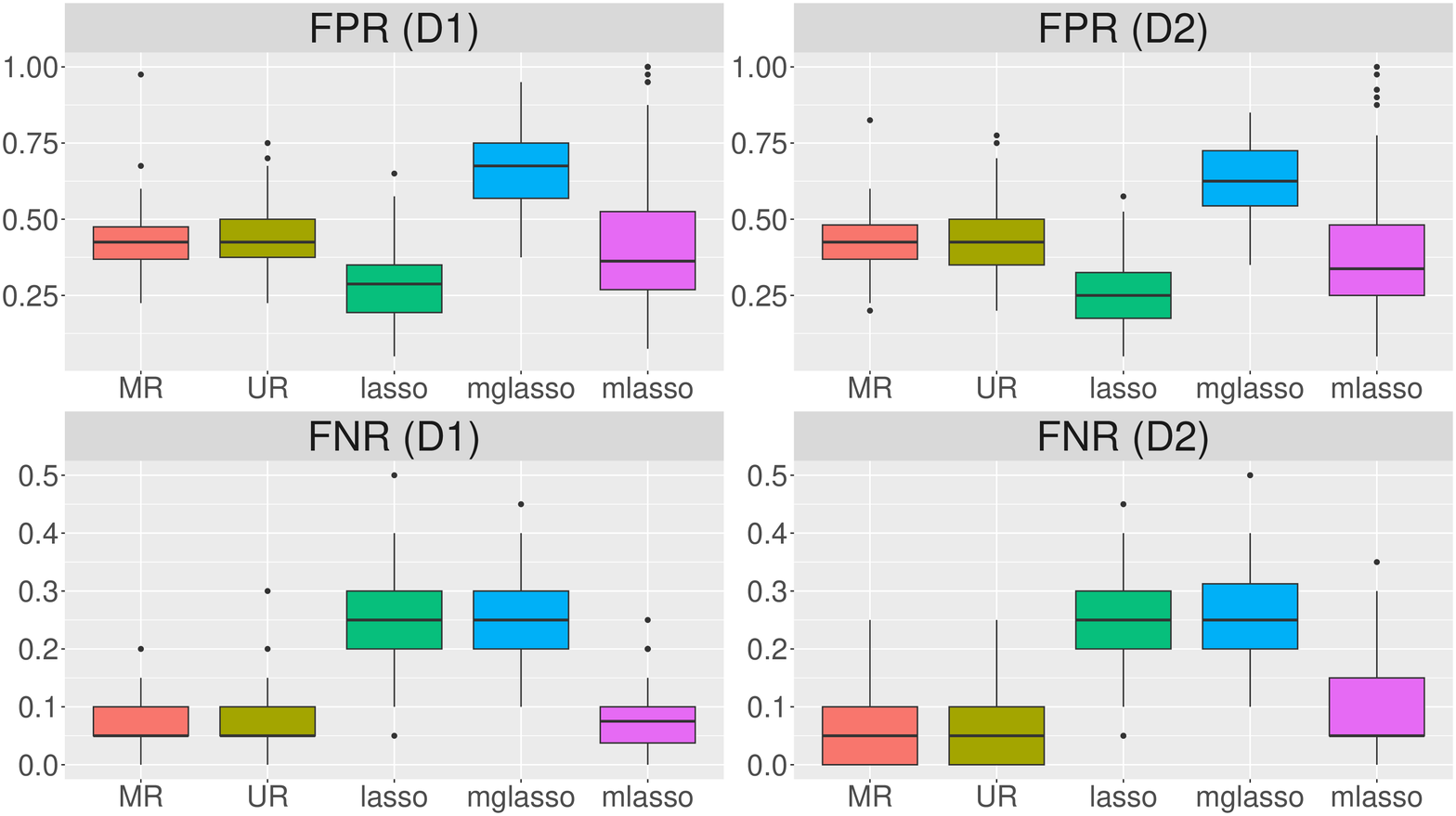}
\vspace{-3.5mm}
\subcaption{$s=5, \rho_x=0.9, \rho_y=0.9$}
%\label{Poi_SPC}
\vspace{2.5mm}
\end{minipage}
\begin{minipage}[b]{0.5\linewidth}
\centering
\includegraphics[width=8cm,height=4.6cm]{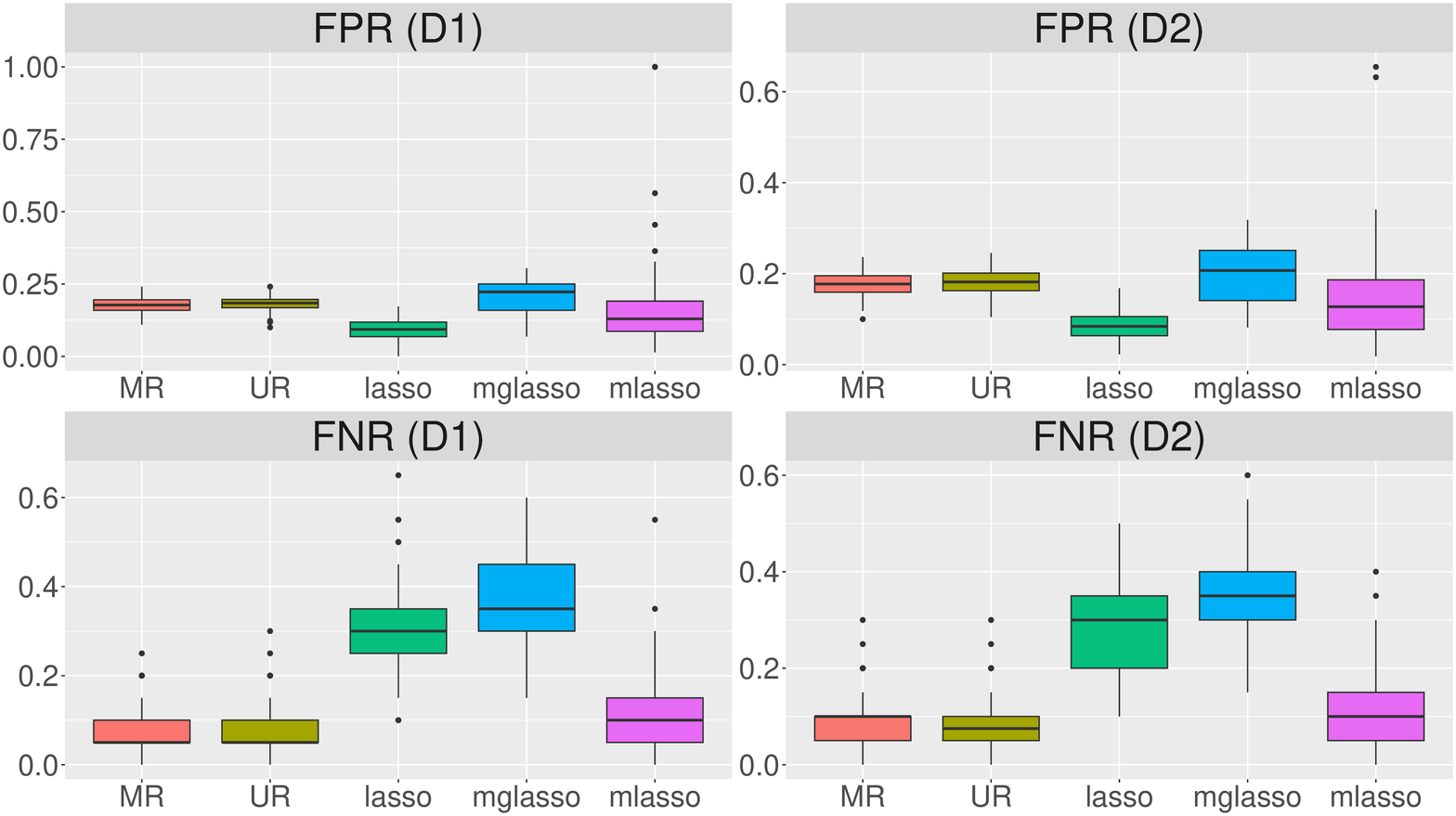}
\vspace{-3.5mm}
\subcaption{$s=50, \rho_x=0.9, \rho_y=0.9$}
%\label{Poi_SPC}
\vspace{2.5mm}
\end{minipage}
\caption{Boxplots of FPR and FNR for $n=25$ when the case $M=2$.
%The bold values correspond to the smallest means among SPCRsvd-LADMM, SPCRsvd-ADMM, and SPCR.  
%Scatter plots of principal and PLS components for the doctor visits data. 
%(a) True structure of the principal components. 
%(b) PCA. 
%(c) PLS-GLR. 
%(d) SPCR-glm. }
}
\label{fig:SimuM2n25_FPRFNR}
\end{figure}

%%%%%%%%%%%%%%%%%%%%%%%%%%%%%%%%%%%%%%%%%%%%%%%%%%%%%%%
%%%%%%%%%%%%%%%%%%%%%%%%%%%%%%%%%%%%%%%%%%%%%%%%%%%%%%%
\begin{figure}[htbp]
\begin{minipage}[b]{0.5\linewidth}
\centering
\includegraphics[width=8cm,height=4.6cm]{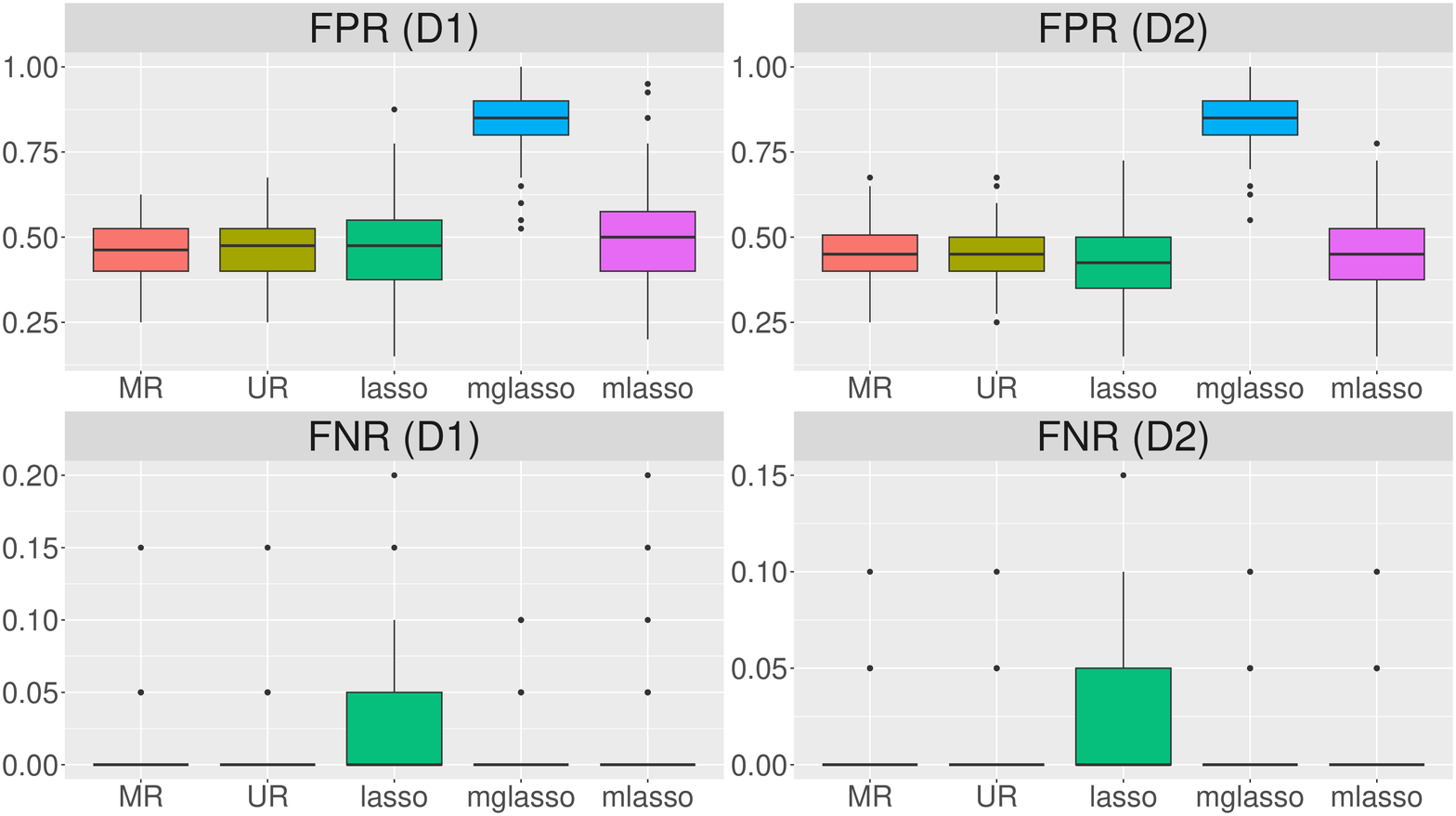}
\vspace{-3.5mm}
\subcaption{$s=5, \rho_x=0.1, \rho_y=0.1$}
%\label{Poi_SPCR}
\vspace{2.5mm}
\end{minipage}
\begin{minipage}[b]{0.5\linewidth}
\centering
\includegraphics[width=8cm,height=4.6cm]{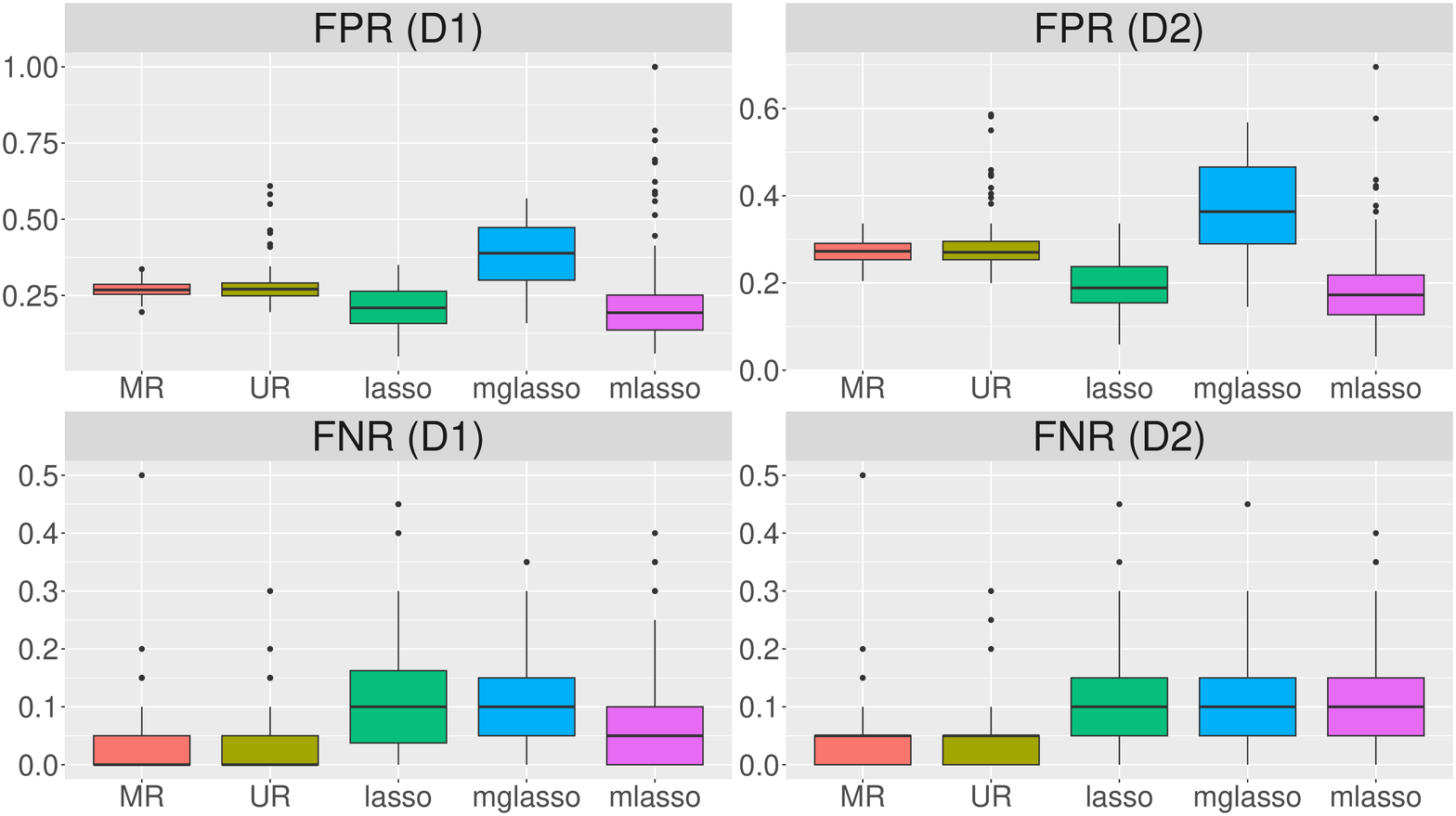} %\\ [-0.5cm]
\vspace{-3.5mm}
\subcaption{$s=50, \rho_x=0.1, \rho_y=0.1$}
%\label{Poi_PCA}
\vspace{2.5mm}
\end{minipage}
\begin{minipage}[b]{0.5\linewidth}
\centering
\includegraphics[width=8cm,height=4.6cm]{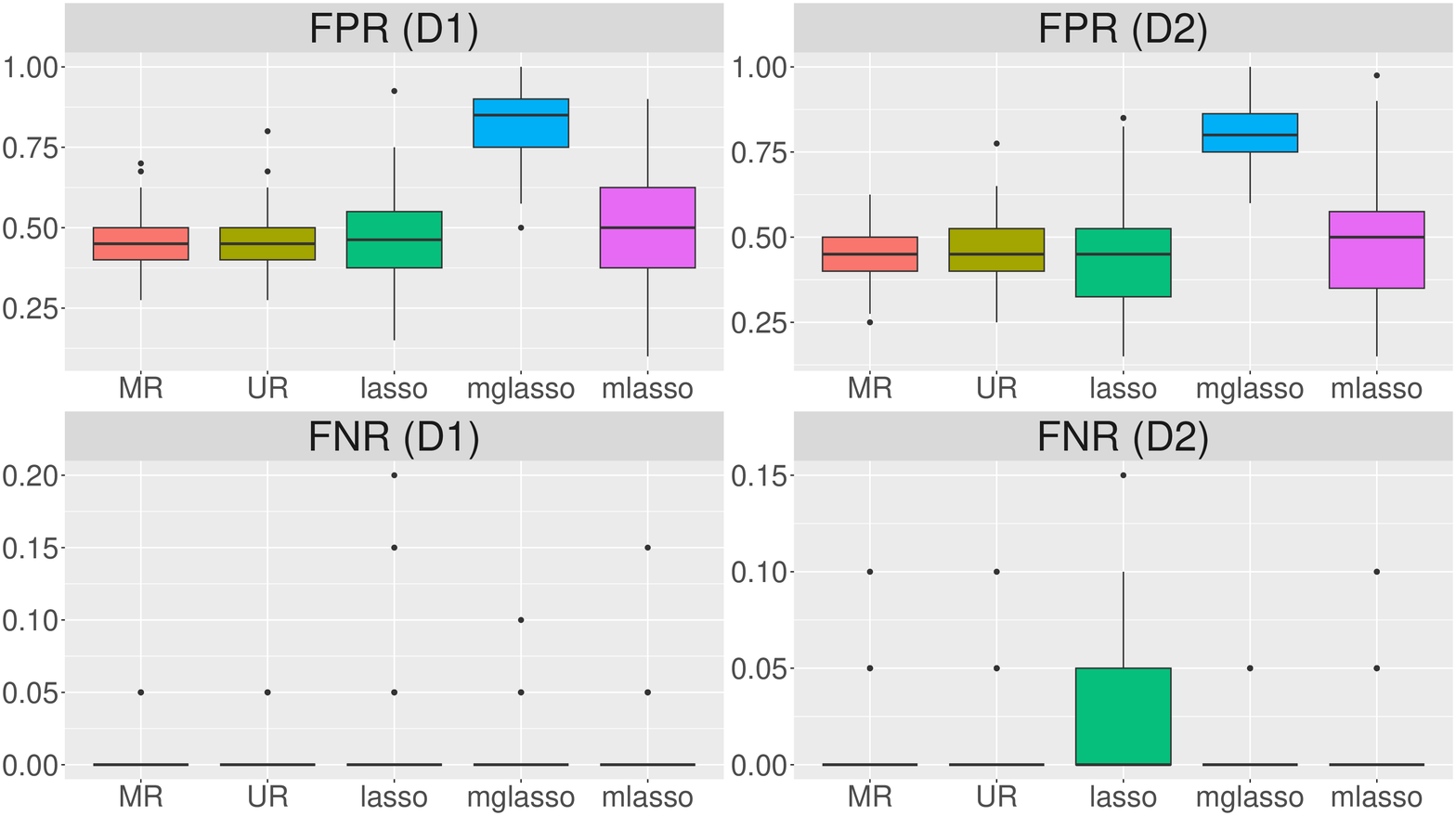} %\\ [-0.5cm] 
\vspace{-3.5mm}
\subcaption{$s=5, \rho_x=0.1, \rho_y=0.9$}
%\label{Poi_PLS-GLR}
\vspace{2.5mm}
\end{minipage}
\begin{minipage}[b]{0.5\linewidth}
\centering
\includegraphics[width=8cm,height=4.6cm]{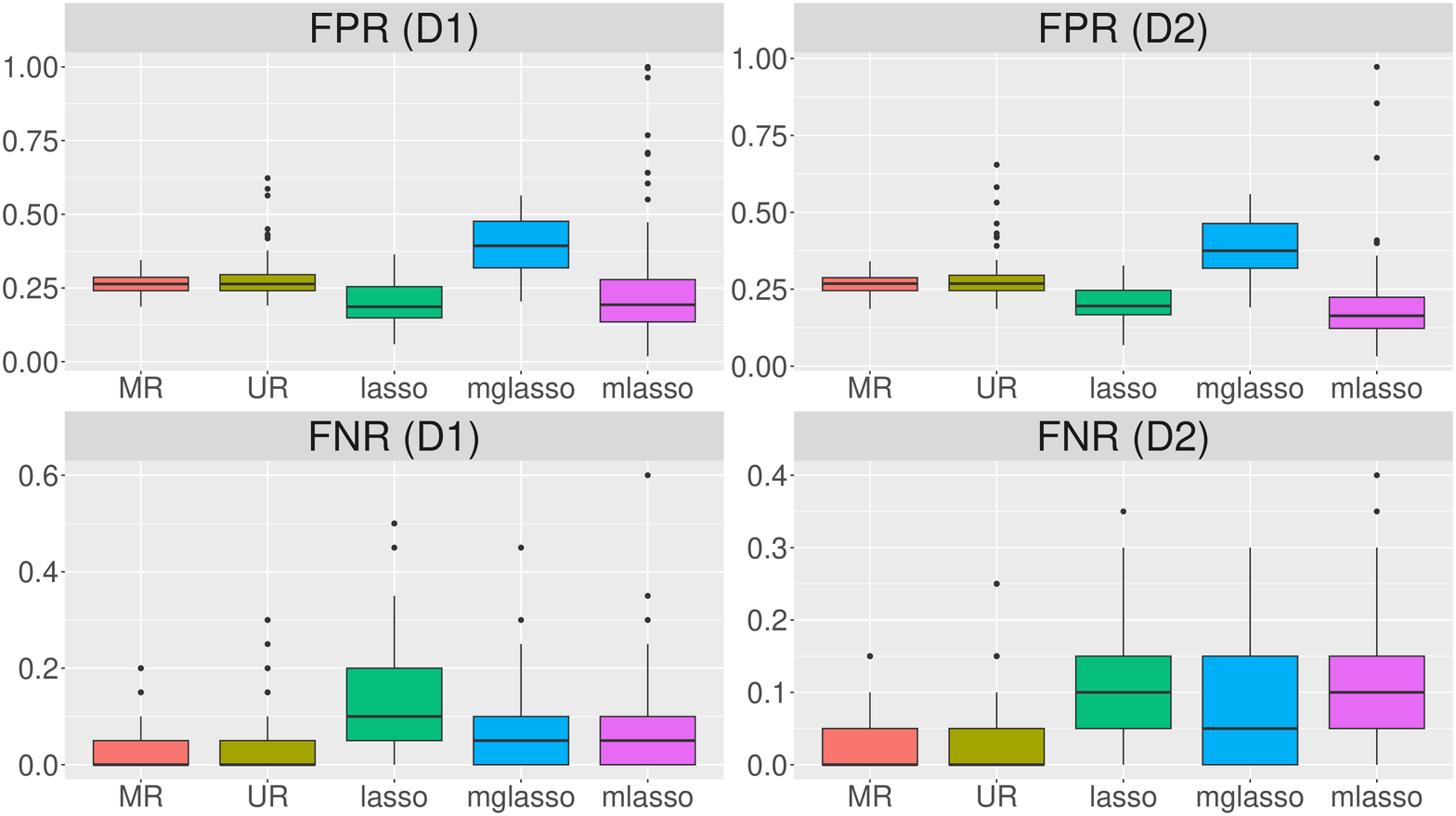}
\vspace{-3.5mm}
\subcaption{$s=50, \rho_x=0.1, \rho_y=0.9$}
%\label{Poi_SPCA}
\vspace{2.5mm}
\end{minipage}
\begin{minipage}[b]{0.5\linewidth}
\centering
\includegraphics[width=8cm,height=4.6cm]{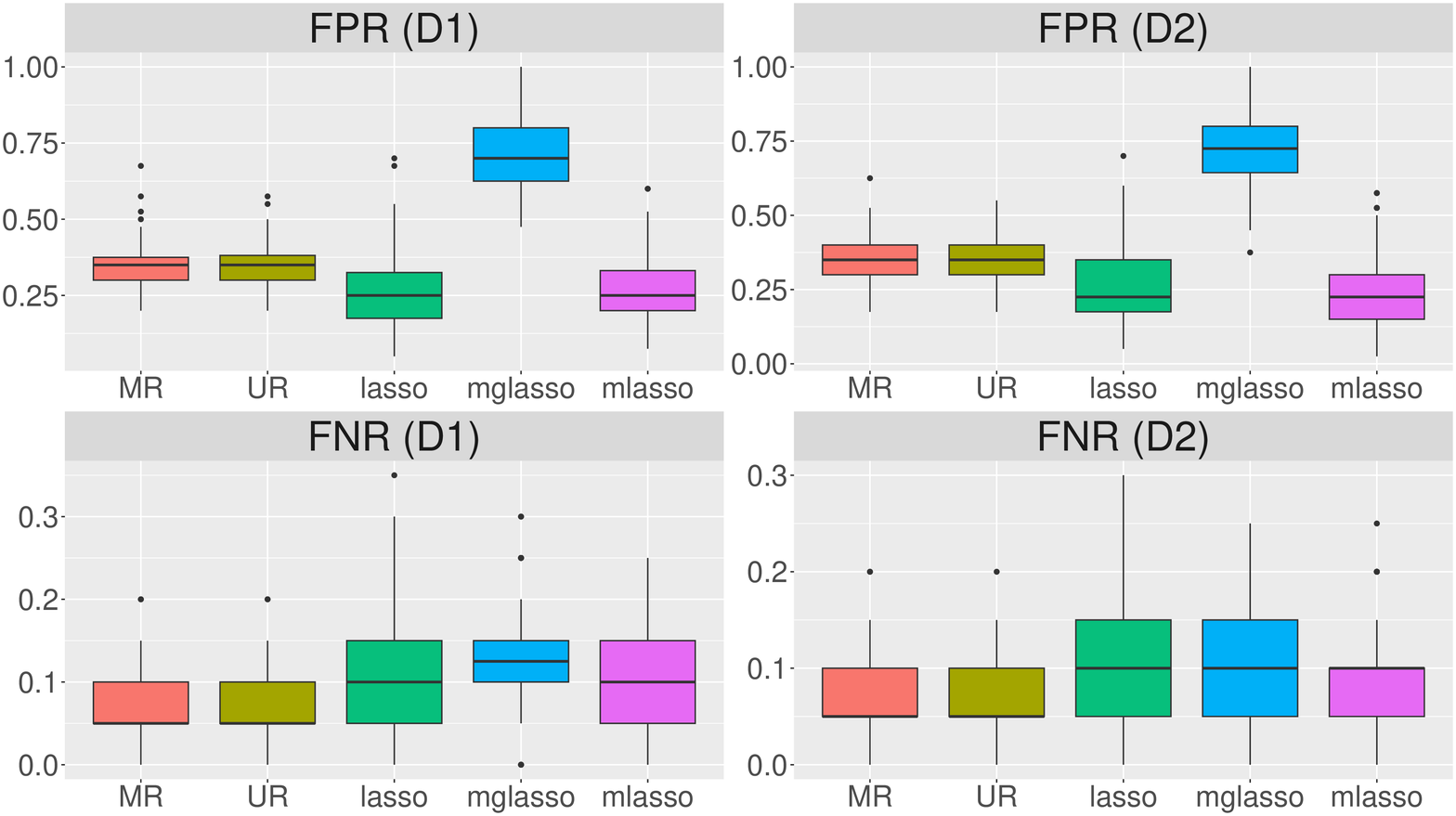}
\vspace{-3.5mm}
\subcaption{$s=5, \rho_x=0.9, \rho_y=0.1$}
%\label{Poi_DSPCA}
\vspace{2.5mm}
\end{minipage}
\begin{minipage}[b]{0.5\linewidth}
\centering
\includegraphics[width=8cm,height=4.6cm]{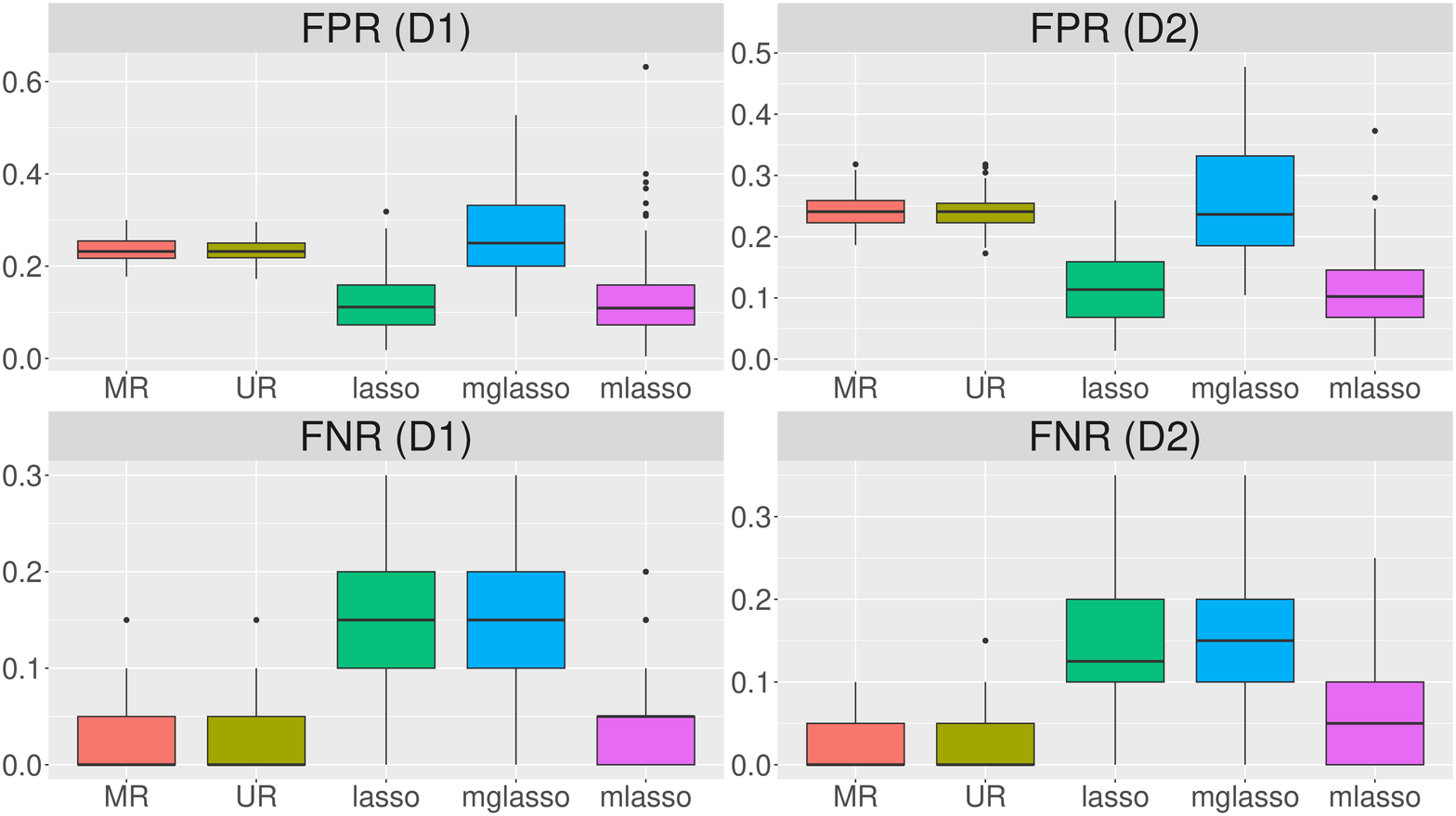}
\vspace{-3.5mm}
\subcaption{$s=50, \rho_x=0.9, \rho_y=0.1$}
%\label{Poi_FPS}
\vspace{2.5mm}
\end{minipage}
\begin{minipage}[b]{0.5\linewidth}
\centering
\includegraphics[width=8cm,height=4.6cm]{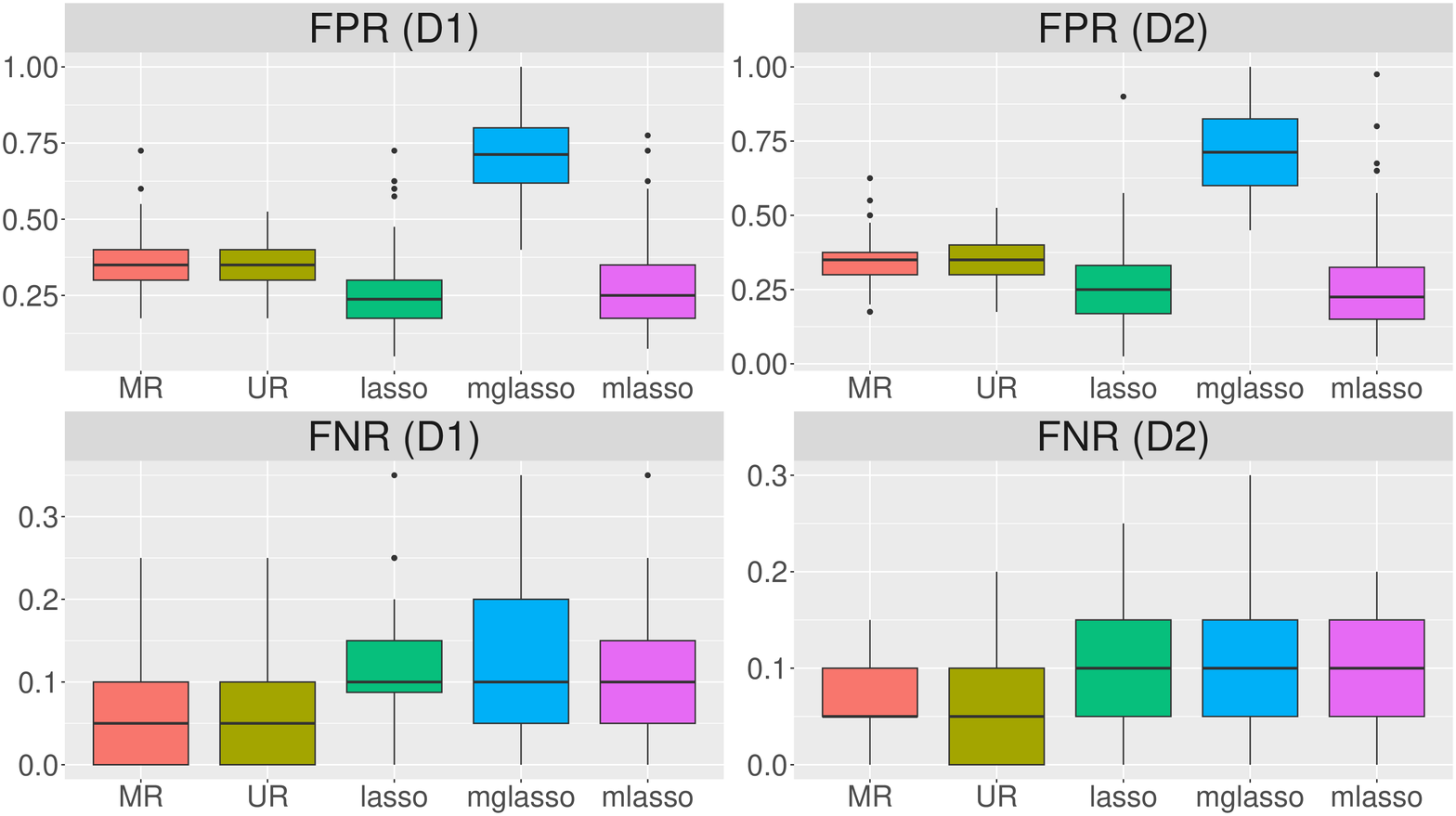}
\vspace{-3.5mm}
\subcaption{$s=5, \rho_x=0.9, \rho_y=0.9$}
%\label{Poi_SPC}
\vspace{2.5mm}
\end{minipage}
\begin{minipage}[b]{0.5\linewidth}
\centering
\includegraphics[width=8cm,height=4.6cm]{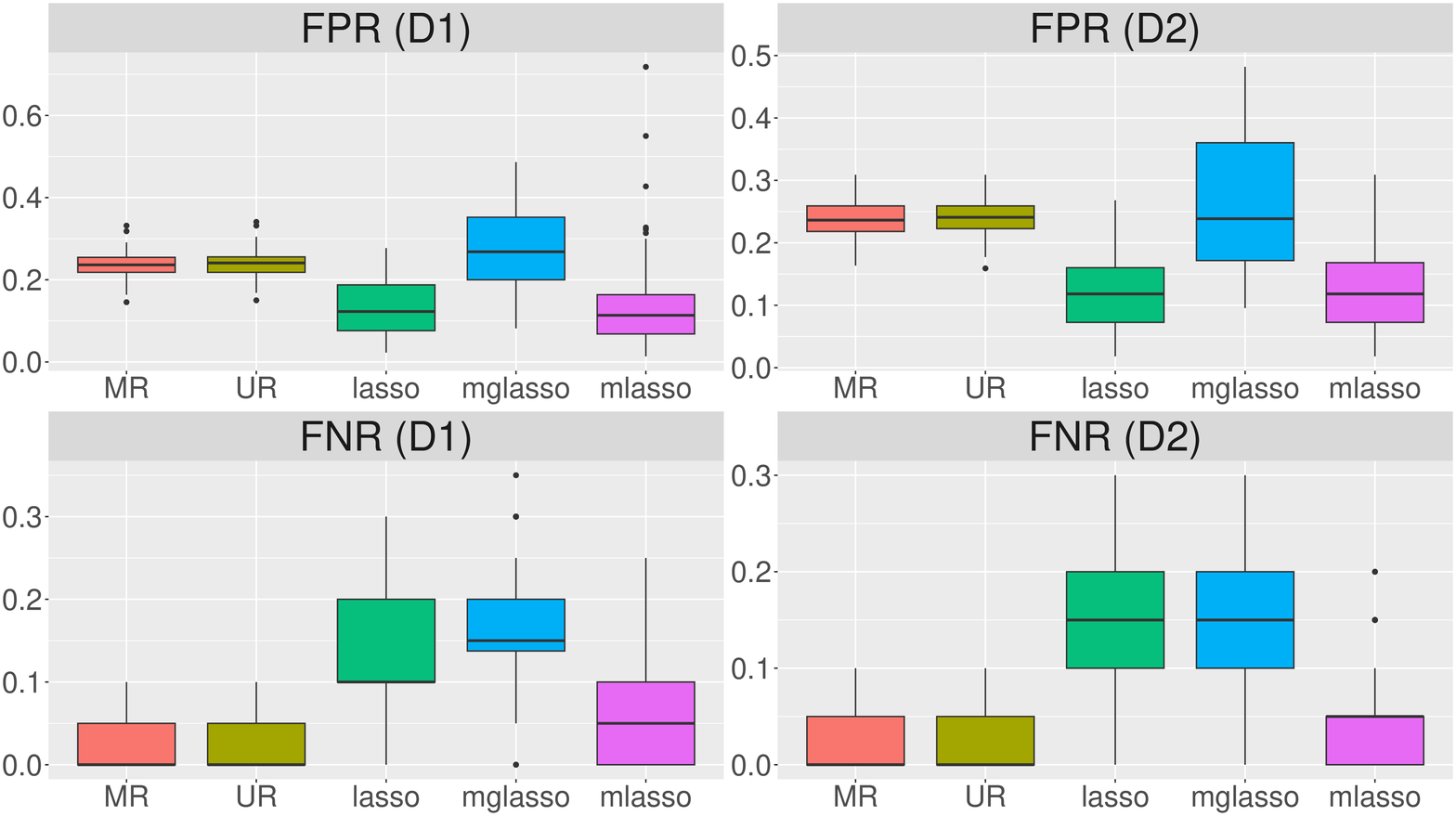}
\vspace{-3.5mm}
\subcaption{$s=50, \rho_x=0.9, \rho_y=0.9$}
%\label{Poi_SPC}
\vspace{2.5mm}
\end{minipage}
\caption{Boxplots of FPR and FNR for $n=50$ when the case $M=2$.
%The bold values correspond to the smallest means among SPCRsvd-LADMM, SPCRsvd-ADMM, and SPCR.  
%Scatter plots of principal and PLS components for the doctor visits data. 
%(a) True structure of the principal components. 
%(b) PCA. 
%(c) PLS-GLR. 
%(d) SPCR-glm. }
}
\label{fig:SimuM2n50_FPRFNR}
\end{figure}

%%%%%%%%%%%%%%%%%%%%%%%%%%%%%%%%%%%%%%%%%%%%%%%%%%%%%%%
%%%%%%%%%%%%%%%%%%%%%%%%%%%%%%%%%%%%%%%%%%%%%%%%%%%%%%%
\begin{figure}[htbp]
\begin{minipage}[b]{0.5\linewidth}
\centering
\includegraphics[width=8cm,height=4.6cm]{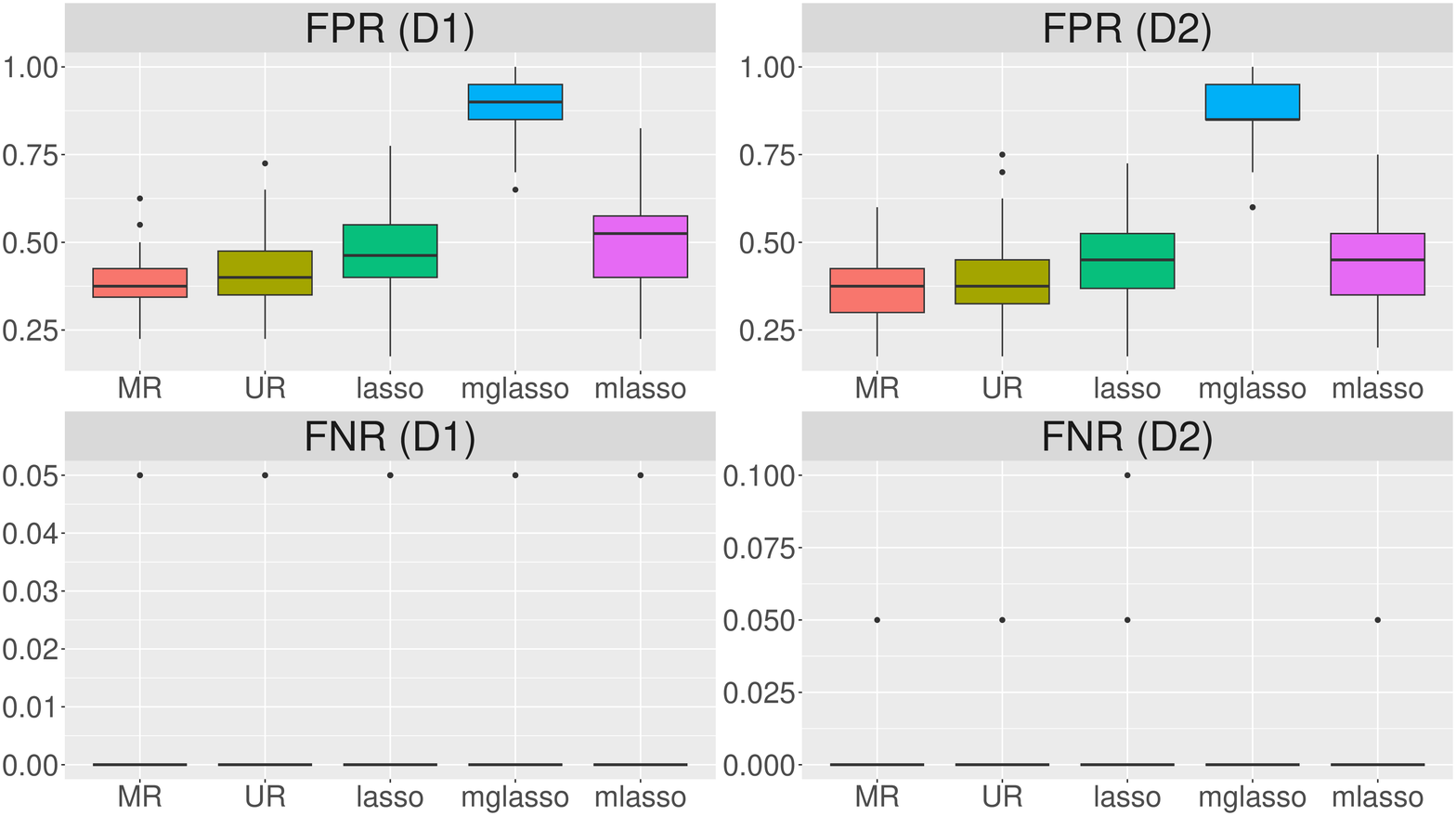}
\vspace{-3.5mm}
\subcaption{$s=5, \rho_x=0.1, \rho_y=0.1$}
%\label{Poi_SPCR}
\vspace{2.5mm}
\end{minipage}
\begin{minipage}[b]{0.5\linewidth}
\centering
\includegraphics[width=8cm,height=4.6cm]{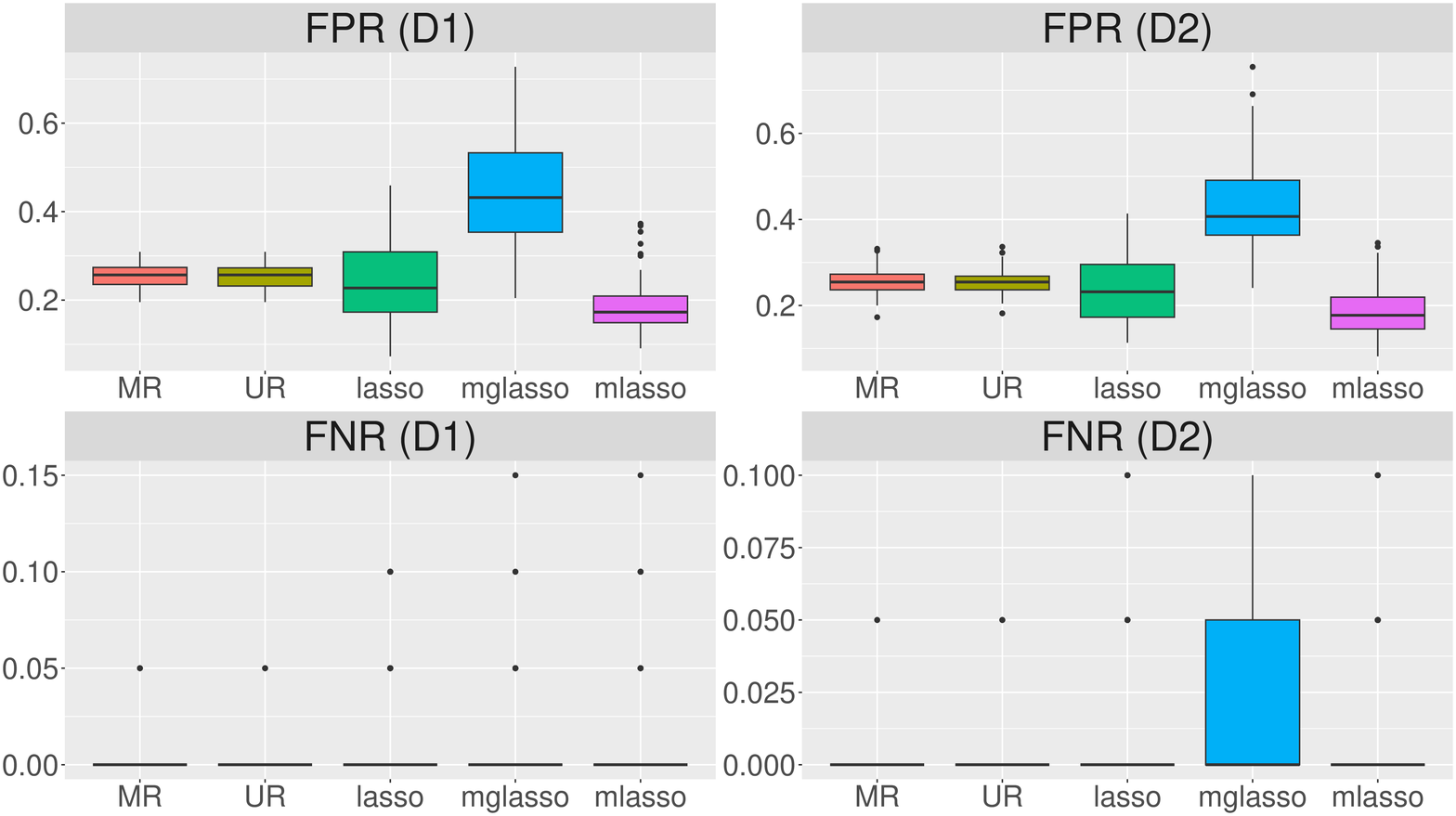} %\\ [-0.5cm]
\vspace{-3.5mm}
\subcaption{$s=50, \rho_x=0.1, \rho_y=0.1$}
%\label{Poi_PCA}
\vspace{2.5mm}
\end{minipage}
\begin{minipage}[b]{0.5\linewidth}
\centering
\includegraphics[width=8cm,height=4.6cm]{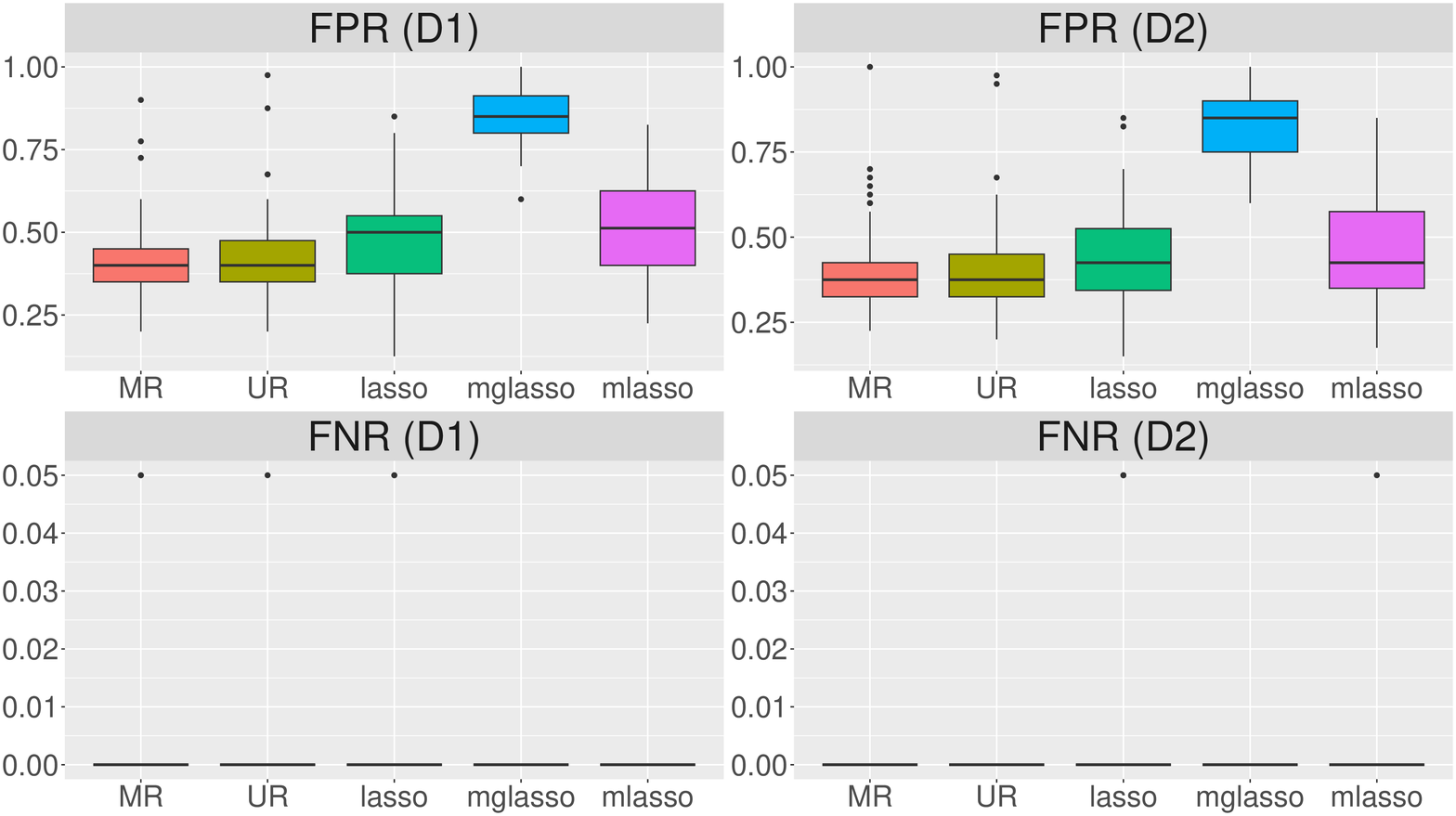} %\\ [-0.5cm] 
\vspace{-3.5mm}
\subcaption{$s=5, \rho_x=0.1, \rho_y=0.9$}
%\label{Poi_PLS-GLR}
\vspace{2.5mm}
\end{minipage}
\begin{minipage}[b]{0.5\linewidth}
\centering
\includegraphics[width=8cm,height=4.6cm]{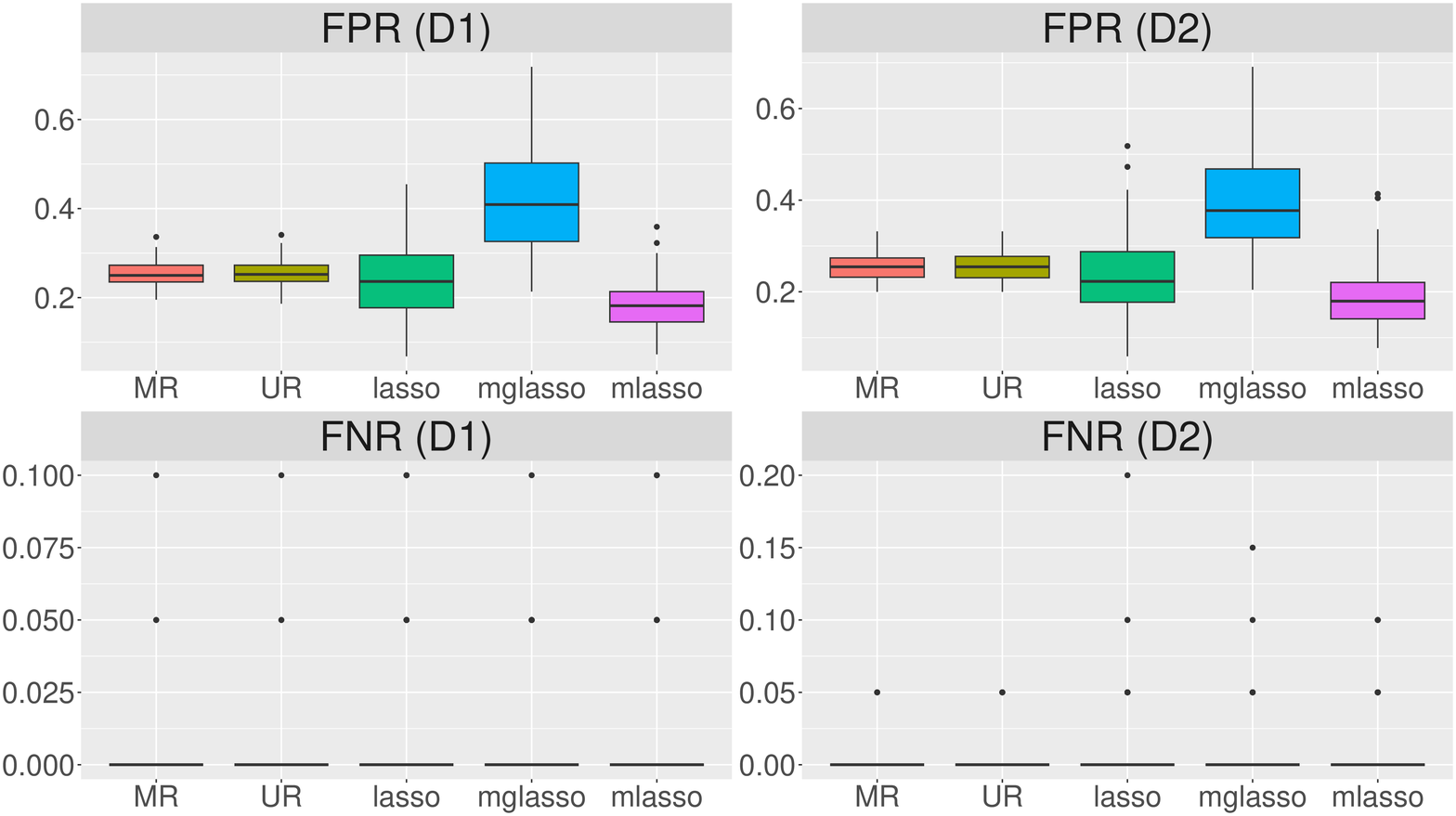}
\vspace{-3.5mm}
\subcaption{$s=50, \rho_x=0.1, \rho_y=0.9$}
%\label{Poi_SPCA}
\vspace{2.5mm}
\end{minipage}
\begin{minipage}[b]{0.5\linewidth}
\centering
\includegraphics[width=8cm,height=4.6cm]{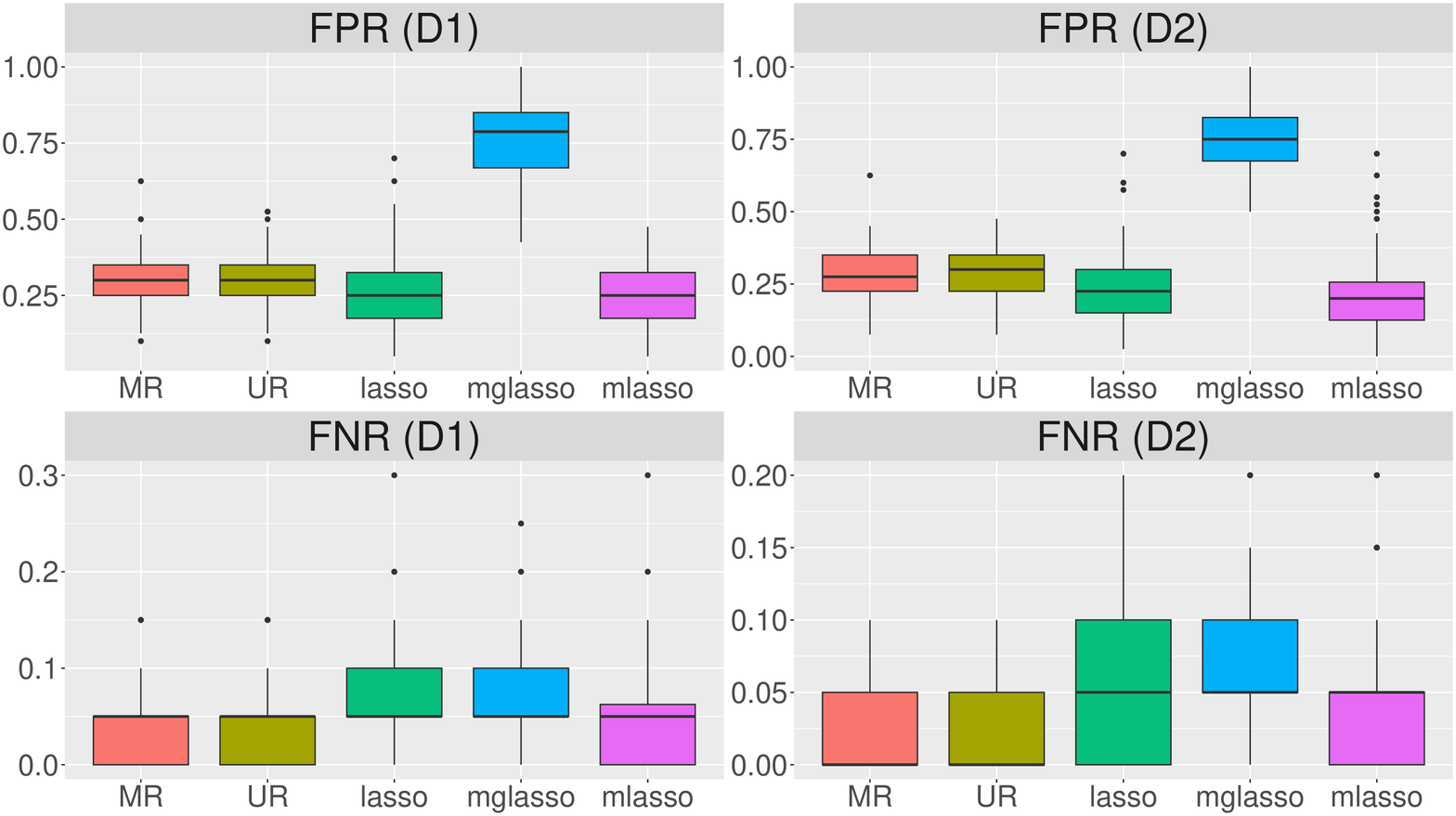}
\vspace{-3.5mm}
\subcaption{$s=5, \rho_x=0.9, \rho_y=0.1$}
%\label{Poi_DSPCA}
\vspace{2.5mm}
\end{minipage}
\begin{minipage}[b]{0.5\linewidth}
\centering
\includegraphics[width=8cm,height=4.6cm]{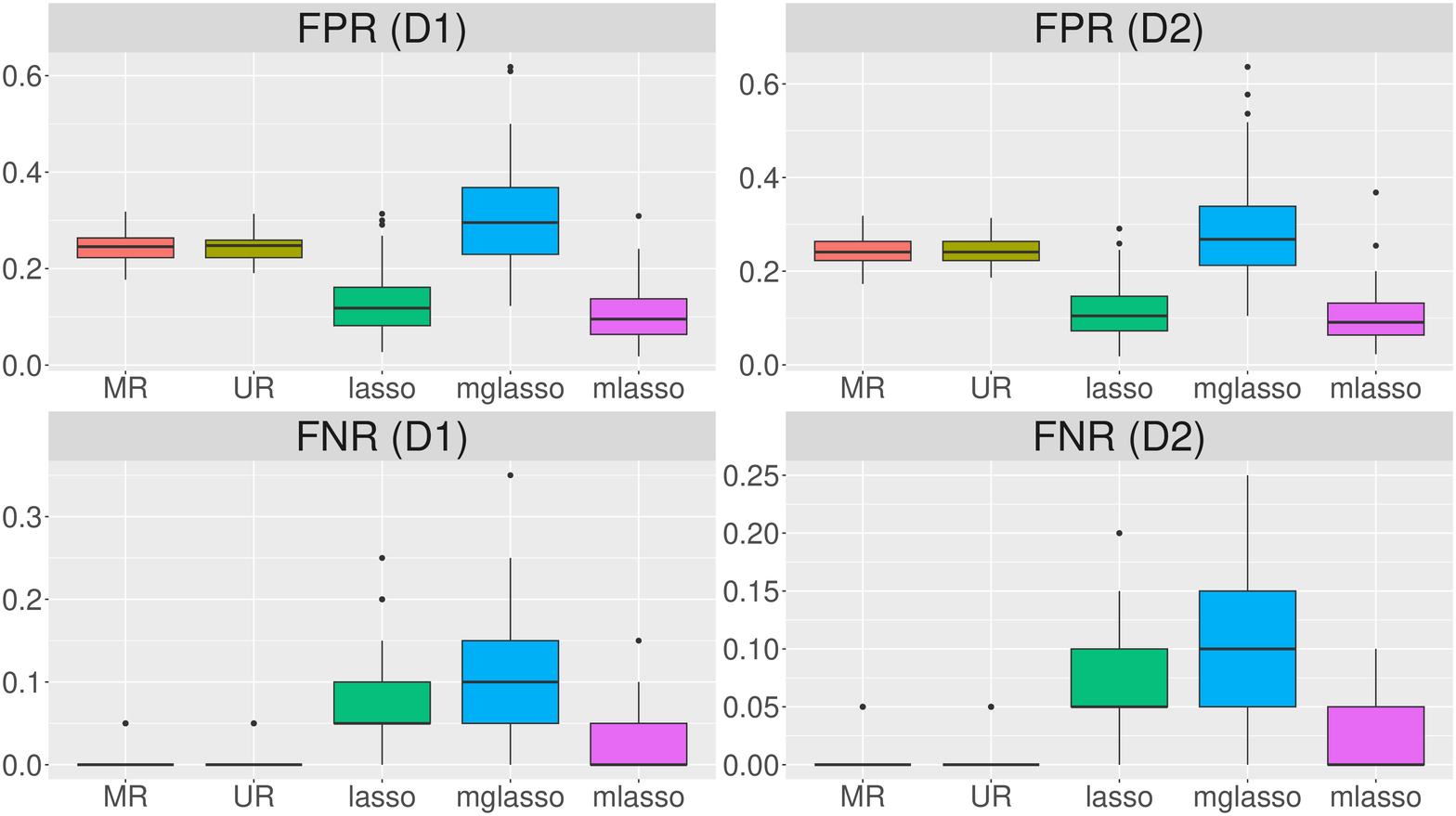}
\vspace{-3.5mm}
\subcaption{$s=50, \rho_x=0.9, \rho_y=0.1$}
%\label{Poi_FPS}
\vspace{2.5mm}
\end{minipage}
\begin{minipage}[b]{0.5\linewidth}
\centering
\includegraphics[width=8cm,height=4.6cm]{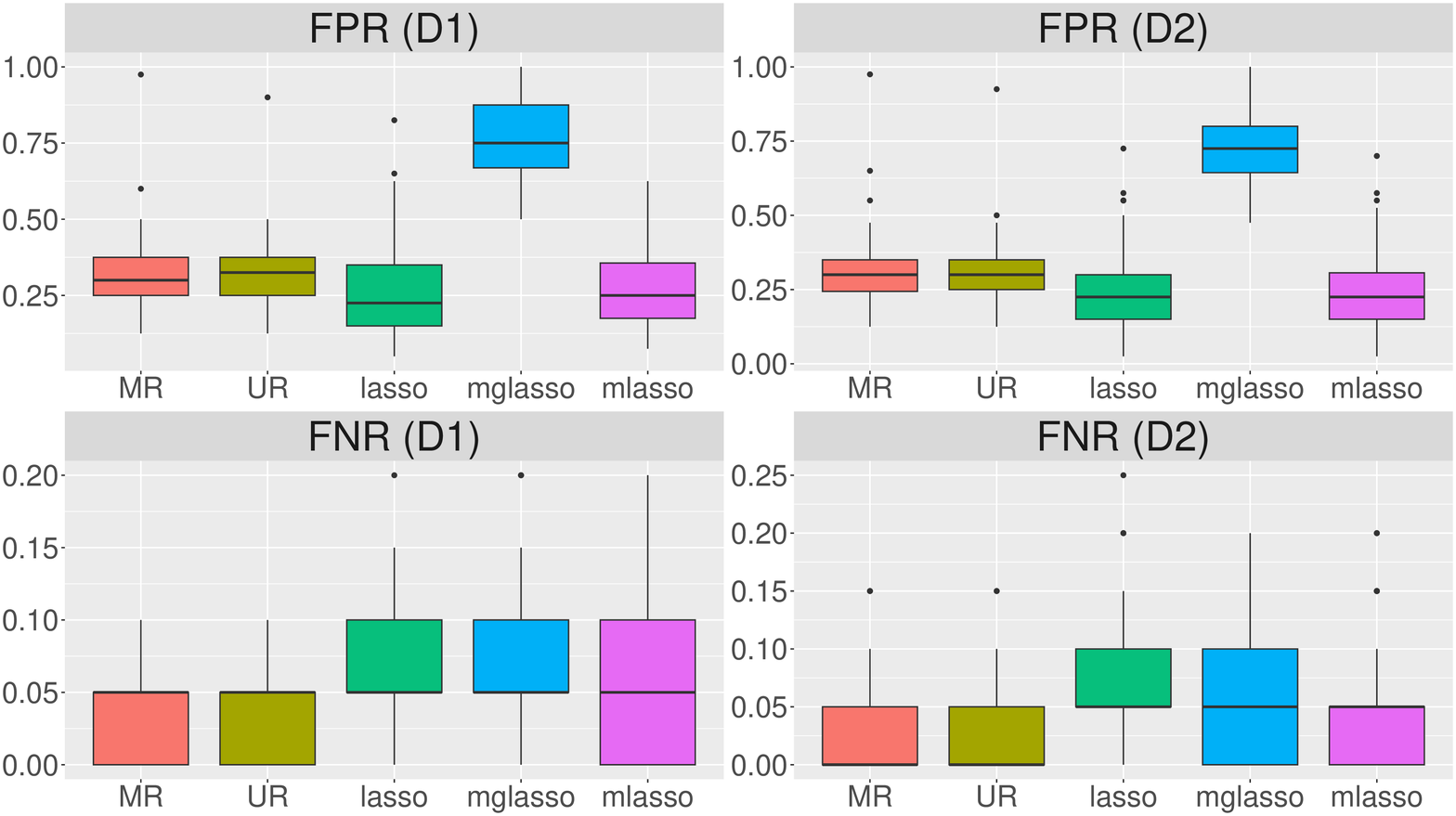}
\vspace{-3.5mm}
\subcaption{$s=5, \rho_x=0.9, \rho_y=0.9$}
%\label{Poi_SPC}
\vspace{2.5mm}
\end{minipage}
\begin{minipage}[b]{0.5\linewidth}
\centering
\includegraphics[width=8cm,height=4.6cm]{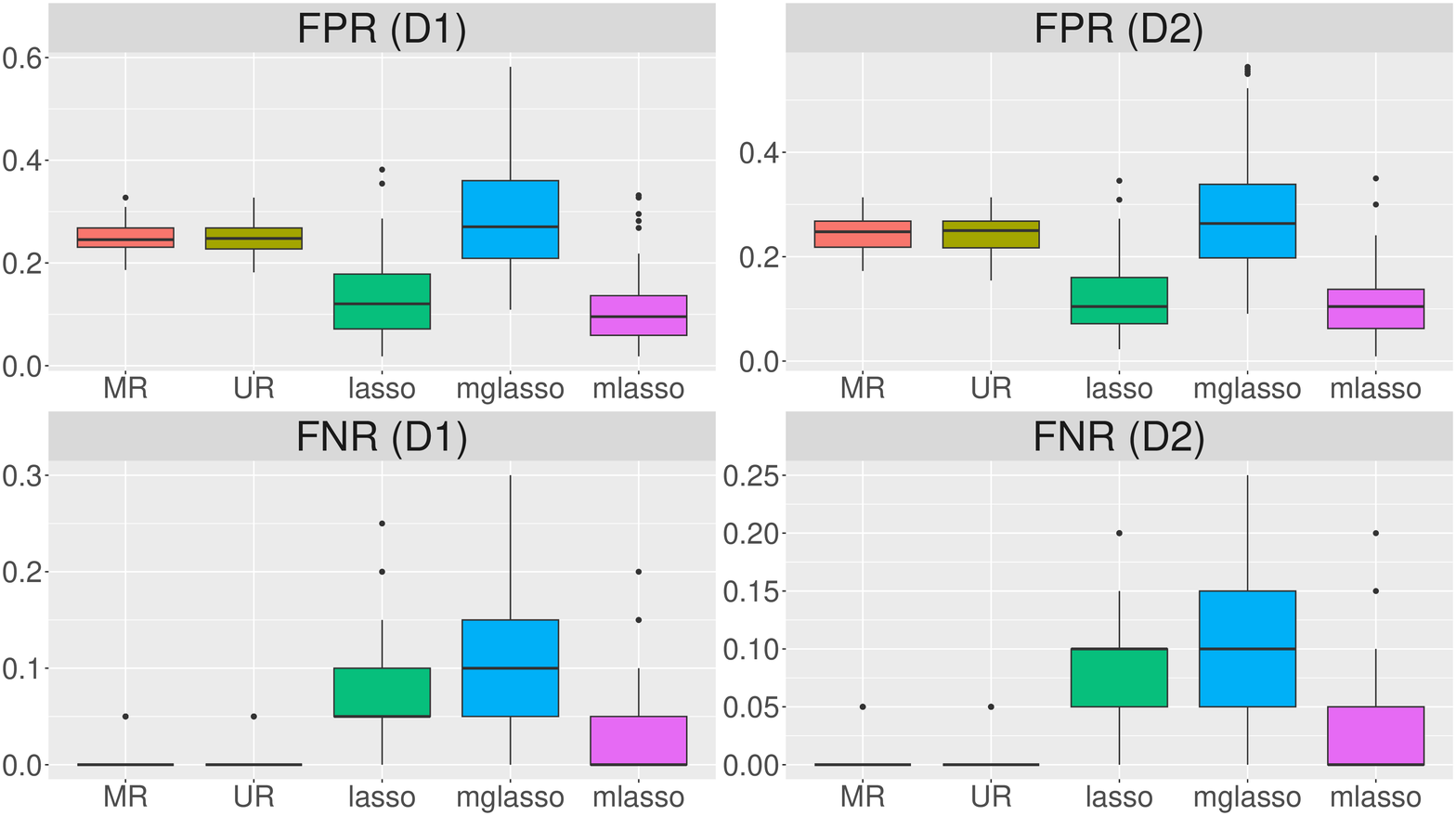}
\vspace{-3.5mm}
\subcaption{$s=50, \rho_x=0.9, \rho_y=0.9$}
%\label{Poi_SPC}
\vspace{2.5mm}
\end{minipage}
\caption{Boxplots of FPR and FNR for $n=75$ when the case $M=2$.
%The bold values correspond to the smallest means among SPCRsvd-LADMM, SPCRsvd-ADMM, and SPCR.  
%Scatter plots of principal and PLS components for the doctor visits data. 
%(a) True structure of the principal components. 
%(b) PCA. 
%(c) PLS-GLR. 
%(d) SPCR-glm. }
}
\label{fig:SimuM2n75_FPRFNR}
\end{figure}

%%%%%%%%%%%%%%%%%%%%%%%%%%%%%%%%%%%%%%%%%%%%%%%%%%%%%%%%%%%%%%%%%%%%%%%%%%%%%%%%%%
\subsection{Application}
\label{sec:application}

\begin{table}[t]
\begin{center}
\caption{Sample size and the numbers of covariates and responses in the wastewater treatment data.}
%\vspace{5mm}
\begin{tabular}{lccc}
\hline
Dataset No.    &  Sample size   &  \# of covariates &  \# of responses \\ \hline
1 & 22  & 3,752 & 2  \\ 
2 & 28 & 3,305 & 2 \\ 
3 & 22 &   3,752 & 4 \\ 
4 & 27 & 7,646 & 4 \\ 
%sample size & 506   & 768 & 517 &  1030 & 1993    \\
%\# of covariates   & 13 & 8 &  10 &  8 & 101 \\
\hline
\end{tabular}
\label{table:Datasets}
\end{center}
\end{table}

We applied our proposed method into wastewater treatment data with microbe measurements. 
The data were obtained to find out the relationship between treatment of synthetic industrial wastewater and microbial species in the wastewater treatment \citep{fukushima2022development}. 
As industrial wastewater treatment, the rate (mg/L/day) of nitrite production (NO2$^-$), thiocyanate degradation (SCN$^-$), phenol degradation (C6H6O), and thiosulfate degradation (S2O3$^{2-}$) were observed. 
The microbe data were obtained based on 16S rRNA gene sequencing using next generation sequencer.  
In this experiment, four datasets were given, which is summarized in Table \ref{table:Datasets}. 
Note that the covariates and responses are corresponding to microbial species and industrial wastewater treatment, respectively. 
The responses in the first and second datasets consist of NO2$^-$ and SCN$^-$, while those in the third and fourth datasets do of NO2$^-$, SCN$^-$, C6H6O, and S2O3$^{2-}$. 
For more details of these datasets, we refer to \cite{fukushima2022development}.

We standardized the covariates for each dataset. 
We compared our proposed method with four methods in Section \ref{sec:MonteCarlo}. 
To perform MR and UR, the datasets were preprocessed as follows. 
For Datasets 1 and 2, we extracted 227 microbial species commonly included among these datasets, 161 and 152 microbial species only included in Datasets 1 and 2, respectively. 
For Datasets 3 and 4, we extracted 215 microbial species commonly included among these datasets, 173 and 286 microbial species only included in Datasets 3 and 4, respectively. 
The value of tuning parameters in MR, UR, mlasso was selected by five-fold cross-validation, while its value of lasso and mglasso was done by leave-one-out cross-validation by \textbf{glmnet}.

\begin{table}[t]
\begin{center}
\caption{Cross-validated $R^2$ values for the wastewater treatment data.
The bold value corresponds to the largest $R^2$ for each case. }
%\vspace{5mm}
\begin{tabular}{llccccc}
\hline
Dataset No. &  Wastewater treatment  &  MR   &  UR &  lasso & mglasso & mlasso \\ \hline
1& NO2$^-$ & 0.753  & 0.645 & 0.501 & 0.199 & \textbf{0.788}  \\ 
& SCN$^-$ & 0.899 &   0.899 & 0.596 &  0.604 & \textbf{0.914} \\  \hline
2& NO2$^-$ & 0.225  & 0.512 & \textbf{0.545} & 0.374 & \textbf{0.545}  \\ 
& SCN$^-$ & \textbf{0.741} &   \textbf{0.741} & 0.403 &  0.444 & 0.733 \\ \hline
3& NO2$^-$ & \textbf{0.792}  & 0.576 & 0.285 & 0.339 & 0.674  \\ 
& SCN$^-$ & \textbf{0.782} &   0.665 & 0.524 &  0.586 & 0.667 \\
& C6H6O & \textbf{0.792}  & 0.741 & 0.489 & 0.607 & 0.681  \\ 
& S2O3$^{2-}$ & \textbf{0.794} &   0.773 & 0.539 &  0.605 & 0.686 \\ \hline
4& NO2$^-$ & \textbf{0.574}  & 0.508 & 0.191 & 0.035 & 0.342  \\ 
& SCN$^-$ & \textbf{0.776} &   0.651 & 0.634 & 0.466  & 0.683 \\
& C6H6O & 0.240  & \textbf{0.449} & $-0.039$ & $-0.014$ & 0.235  \\ 
& S2O3$^{2-}$ & \textbf{0.725} &   0.692 & 0.130 & 0.412 & 0.627 \\ \hline 
\end{tabular}
\label{table:resultApplication}
\end{center}
\end{table}

We computed the leave-one-out cross-validated $R^2$ value for each method. 
Table \ref{table:resultApplication} summarizes the $R^2$ values. 
The mglasso method does not provide the largest value of $R^2$ for all cases. 
Although the lasso method provides the largest value once, it sometimes gives the smallest one for other cases. 
The mlasso has better results in Dataset 1. 
The UR method does not perform well compared to Monte Carlo simulations in Section \ref{sec:MonteCarlo}. 
In many cases, the MR method is better than existing methods.

%Table \ref{table:RealDataAnal_k1} indicates the means and standard deviations of MSEs. 
%PLS and PCR were competitive and did not provide the smallest MSEs for all datasets. 
%SPCR was slightly better than PLS and PCR. 
%SPCRsvd-LADMM and SPCRsvd-ADMM provided smaller MSEs than other methods in many cases. 
%Although SPLS provided smaller MSEs than other methods, SPLS had the worst MSEs in some cases. 
%From the result, we may conclude that SPCRsvd-LADMM and SPCRsvd-ADMM give smaller and more stable MSEs than other methods, which is consistent with the result in Section \ref{sec:MonteCarlo}. 

%%%%%%%%%%%%%%%%%%%%%%%%%%%%%%%%%%%%%%%%%%%%%%%%%%%%%%%%%%%%%%%%%%%%%%%%%%%%%%%%%%%%%%%%%%%%%%%%%%%%%%%%%%%%%%%%%%%%%%%%%%%%%%%%%%%%%%%%%%%%%%%%%%%%%%%%%%%%%%%%
\section{Conclusion and Discussion}
\label{sec:Conclusions}

We have presented a novel integrative analysis method in the framework of multivariate regression models. 
The integration has been achieved by group regularization. 
We have introduced a computational algorithm to obtain estimates of the parameters via ADMM. 
We have also provided the convergence property of the algorithm. 
Simulation results have showed that our proposed method is competitive or better than competing approaches. 
In the analysis of wastewater treatment datasets, we have found that our proposed method often offers larger $R^2$ values than existing methods.

We note that there are some limitations for our proposed method. 
The squared loss function in the first term in \eqref{eq:mini1_lasso} is simply extended as follows:
\begin{equation*}
\mathrm{tr} \left[ \left\{ Y^m - \bm 1_{n_m} (\bm \alpha^m)^\top -  X^m B^m - Z^m C^m \right\} 
\Sigma 
\left\{ (Y^m - \bm 1_{n_m} (\bm \alpha^m)^\top -  X^m B^m - Z^m C^m \right\}^\top \right].
\end{equation*}
This loss function explicitly includes correlation among responses. 
Thus, using this loss function may be expected to improve accuracy. 
In this article, we assume the homogeneity model. 
Recently, the heterogeneity model, which is defined such that $I(\beta_{jk}^1=0)=\cdots=I(\beta_{jk}^M=0)$ under some $(j,k)$ holds while does not hold under others, has been intensively studied in integrative analysis \citep{huang2017promotingJASA,deng2021integrative,chang2022integrative}.  
It is of interest that our proposed method is extended into heterogeneity models by using sparse group regularization \citep{huang2012selective,simon2013sparse}. 
In Section \ref{sec:application}, we do not interpret the estimated coefficients. 
At present, many coefficient values are estimated as nonzero. 
By using non-convex penalties, e.g., SCAD \citep{fan2001variable} and MCP \citep{zhang2010nearly}, the number of coefficients that are estimated by nonzero needs to be reduced a bit more in order to be interpretable.
We leave them as a future research.

\section*{Acknowledgements}
%The authors thank the reviewers for their helpful comments and constructive suggestions. 
%This work was supported by JSPS KAKENHI Grant Number JP19K11854 and MEXT KAKENHI Grant Numbers JP16H06429, JP16K21723, and JP16H06430. 
S. K. was supported by JSPS KAKENHI Grant Number JP19K11854. 
%The computational resource was provided by the Super Computer System, Human Genome Center, Institute of Medical Science, The University of Tokyo.

%\bibliography{KawanoEtAl}
%\bibliographystyle{abbrv}

%\bibliographystyle{imsart-number} % Style BST file (imsart-number.bst or imsart-nameyear.bst)
%\bibliography{KawanoEtAl}       % Bibliography file (usually '*.bib')

\bibliography{KawanoEtAl}

\clearpage
\renewcommand{\thefigure}{S.\arabic{figure}}
\renewcommand{\thetable}{S.\arabic{table}}
\setcounter{figure}{0}
%\renewcommand{\thesection}{S\arabic{section}}
%\onecolumngrid
\appendix
%\section*{Supplementary Material}

%\parskip 10pt
%\newcounter{num}
%\pagestyle{empty}
%\newcommand{\bm}[1]{\mbox{\boldmath{$#1 $}}
%\newcommand{\s}[2]{\mbox{$\displaystyle{\sum_{#1}^{#2}}$}} 

%%%%%%    TEXT START    %%%%%%

%\begin{document}
%\baselineskip = 6mm

\begin{center}
\textbf{\LARGE Supplementary Material for ``Multivariate regression modeling in integrative analysis via sparse regularization"} %\\[1mm]
\end{center}
%Predictive model selection criteria for Bayesian lasso
%Model selection criteria for Bayesian lasso from predictive viewpoints
\begin{center}
{\large by Shuichi Kawano,  Toshikazu Fukushima, 
Junichi Nakagawa,  Mamoru Oshiki}
\end{center}

%%%%%%%%%%%%%%%%%%%%%%%%%%%%%%%%%%%%%%%%%%%%%%%%%%%%%%%%%%%%%%%%%%%%%%%%
%%%%%%%%%%%%%%%%%%%%%%%%%%%%%%%%%%%%%%%%%%%%%%%%%%%%%%%%%%%%%%%%%%%%%%%%
%%%%%%%%%%%%%%%%%%%%%%%%%%%%%%%%%%%%%%%%%%%%%%%%%%%%%%%%%%%%%%%%%%%%%%%%

\renewcommand{\thesection}{S\arabic{section}}

\section{Proof of Theorem 1}

\renewcommand{\theequation}{S.\arabic{equation}}
\setcounter{equation}{0}

%{\textbf{\textit{Proof of Theorem 1}}}

Here we prove Theorem 1. 
This proof is basically according to \cite{ye2011split}. 
In this proof, without loss of generality, we set $\bm \alpha^m=\bm 0$ for $m=1,\ldots,M$.

Let $\mathcal B, \mathcal C, \mathcal A$ be
\begin{equation*}
\mathcal B = 
\begin{pmatrix}
\bm \beta_{11} & \cdots & \bm \beta_{p1} \\
\vdots & \ddots & \vdots \\
\bm \beta_{1q} & \cdots & \bm \beta_{pq} \\
\end{pmatrix}, \quad 
\mathcal C =(C^1, \ldots, C^M), \quad
\mathcal A =
\begin{pmatrix}
\textrm{vec}(\mathcal B) \\
\textrm{vec}(\mathcal C) \\
\end{pmatrix},
\end{equation*}
respectively. 
In addition, we define $V(\mathcal A) = \sum_{m=1}^M \| Y^m - X^m B^m - Z^m C^m \|_F^2/(2n_m)$. 
Then, the first order optimality condition of Algorithm 1 provides
\begin{equation}
\begin{split}
\begin{cases}
\bm h_{jk}^{\ell+1} - \rho ( \bm \eta_{jk}^\ell - \bm \beta_{jk}^{\ell+1} + \bm u_{jk}^\ell ) = \bm 0, \\
\lambda \bm s_{jk}^{\ell+1} + \rho (\bm \eta_{jk}^{\ell+1} - \bm \beta_{jk}^{\ell+1} + \bm u_{jk}^{\ell}) = \bm 0, \\
\bm u_{jk}^{\ell+1} = \bm u_{jk}^{\ell} + \bm \eta_{jk}^{\ell+1} - \bm \beta_{jk}^{\ell+1}, \\
\left( \bm f_{k}^m \right)^{\ell+1} + \rho \left\{ (\bm c_k^m)^{\ell+1} - (\bm d_k^m)^{\ell} + (\bm v_k^m)^\ell \right\} = \bm 0, \\
\gamma (\bm t_k^m)^{\ell+1} - \rho \left\{ (\bm c_k^m)^{\ell+1} - (\bm d_k^m)^{\ell+1} + (\bm v_k^m)^{\ell} \right\} = \bm 0, \\
(\bm v_k^m)^{\ell+1} = (\bm v_k^m)^\ell + (\bm c_k^m)^{\ell+1} - (\bm d_k^m)^{\ell+1},
\end{cases}
\end{split}
\label{appendixeq:foo1}
\end{equation}
for $j=1,\ldots,p, \ k=1,\ldots,q,$ and $m=1,\ldots,M$. 
Here, $\left. \bm h_{jk}^{\ell+1} = \frac{\partial V(\mathcal A)}{\partial \bm \beta_{jk}} \right|_{\mathcal A = \mathcal A^{\ell+1}}$, $\bm s_{jk}^{\ell+1} \in \partial \| \bm \eta_{jk}^{\ell+1} \|_2$ satisfying $\| \bm s_{jk}^{\ell+1} \|_2 \le 1$, $\left. (\bm f_{k}^m)^{\ell+1} = \frac{\partial V(\mathcal A)}{\partial \bm c_{k}^m} \right|_{\mathcal A = \mathcal A^{\ell+1}}$, and $(\bm t_k^m)^{\ell+1} \in \partial \| (\bm d_k^m)^{\ell+1} \|_1$.

When we set $\mathcal A^\ast = ( \mathrm{vec} (\mathcal B^\ast)^\top, \mathrm{vec} (\mathcal C^\ast)^\top )^\top$ as a solution of (1), there exist $\bm h_{jk}^\ast$, $\bm s_{jk}^\ast$, $(\bm f_k^m)^\ast$, $(\bm t_k^m)^\ast$ such that
\begin{equation}
\begin{split}
\begin{cases}
\bm h_{jk}^\ast + \lambda \bm s_{jk}^\ast = \bm 0, \\
(\bm f_k^m)^\ast + \gamma (\bm t_k^m)^\ast = \bm 0,
\end{cases}
\end{split}
\label{appendixeq:foo2}
\end{equation}
for $j=1,\ldots,p, \ k=1,\ldots,q,$ and $m=1,\ldots,M$. 
Here, $\left. \bm h_{jk}^\ast = \frac{\partial V(\mathcal A)}{\partial \bm \beta_{jk}} \right|_{\mathcal A = \mathcal A^\ast}$, $\bm s_{jk}^\ast \in \partial \| \bm \beta_{jk}^\ast \|_2$, $\left. (\bm f_{k}^m)^\ast = \frac{\partial V(\mathcal A)}{\partial \bm c_{k}^m} \right|_{\mathcal A = \mathcal A^\ast}$, $(\bm t_k^m)^\ast \in \partial \| (\bm c_k^m)^\ast \|_1$. 
By introducing variables $\bm \eta_{jk}^\ast = \bm \beta_{jk}^\ast$, $\bm u_{jk}^\ast=-\lambda \bm s_{jk}^\ast/\rho$, $(\bm d_k^m)^\ast=(\bm c_k^m)^\ast$, $(\bm v_k^m)^\ast=\gamma ( \bm t_k^m )^\ast/\rho$, we can rewrite \eqref{appendixeq:foo2} into two optimality conditions
\begin{equation}
\begin{split}
\begin{cases}
\bm h_{jk}^\ast - \rho (\bm \eta_{jk}^\ast - \bm \beta_{jk}^\ast + \bm u_{jk}^\ast) = \bm 0, \\
\lambda \bm s_{jk}^\ast + \rho (\bm \eta_{jk}^\ast - \bm \beta_{jk}^\ast + \bm u_{jk}^\ast) = \bm 0, \\
\bm u_{jk}^\ast = \bm u_{jk}^\ast + \bm \eta_{jk}^\ast - \bm \beta_{jk}^\ast,
\end{cases}
\end{split}
\label{appendixeq:foo3}
\end{equation}
for $j=1,\ldots,p, \ k=1,\ldots,q$ and 
\begin{equation}
\begin{split}
\begin{cases}
(\bm f_k^m)^\ast + \rho \left\{ (\bm c_k^m)^\ast - (\bm d_k^m)^\ast + (\bm v_k^m)^\ast \right\} = \bm 0, \\
\gamma (\bm t_k^m)^\ast - \rho \left\{ (\bm c_k^m)^\ast - (\bm d_k^m)^\ast + (\bm v_k^m)^\ast \right\} = \bm 0, \\
(\bm v_k^m)^\ast = (\bm v_k^m)^\ast + (\bm c_k^m)^\ast - (\bm d_k^m)^\ast,
\end{cases}
\end{split}
\label{appendixeq:foo4}
\end{equation}
for $m=1,\ldots,M, \ k=1,\ldots,q$. 
From \eqref{appendixeq:foo1}, \eqref{appendixeq:foo3}, and \eqref{appendixeq:foo4}, we can find that $\bm \eta_{jk}^\ast$, $\bm \beta_{jk}^\ast$, $\bm u_{jk}^\ast$, $(\bm d_k^m)^\ast$, $(\bm c_k^m)^\ast$, $(\bm v_k^m)^\ast$ are a fixed point of Algorithm 1.

First, we discuss \eqref{appendixeq:foo4}. 
In this paragraph, we omit the index $k, m$. 
We denote the errors by
\begin{equation*}
\bm c_e^\ell = \bm c^\ell - \bm c^\ast, \quad \bm d_e^\ell = \bm d^\ell - \bm d^\ast, \quad \bm v_e^\ell = \bm v^\ell - \bm v^\ast. 
\end{equation*}
By subtracting the fourth equation in \eqref{appendixeq:foo1} by the first equation in \eqref{appendixeq:foo4}, we have
\begin{equation*}
\bm f^{\ell+1} - \bm f^\ast + \rho (\bm c_e^{\ell+1} - \bm d_e^\ell + \bm v_e^\ell) = \bm 0. 
\end{equation*}
Taking the inner product for this equality and $\bm c_e^\ell$, we get
\begin{equation}
( \bm f^{\ell+1} - \bm f^\ast )^\top (\bm c^{\ell+1} - \bm c^\ast) + \rho \| \bm c_e^{\ell+1} \|_2^2 - \rho (\bm d_e^{\ell} )^\top (\bm c_e^{\ell+1}) + \rho (\bm v_e^{\ell})^\top \bm c_e^{\ell+1} = 0. 
\label{appendixeq:inner1}
\end{equation}
Similarly, we obtain
\begin{equation}
\gamma ( \bm t^{\ell+1} - \bm t^\ast )^\top (\bm d^{\ell+1} - \bm d^\ast) - \rho (\bm c_e^{\ell+1})^\top \bm d_e^{\ell+1} + \rho \| \bm d_e^{\ell} \|_2^2 - \rho (\bm v_e^{\ell})^\top \bm d_e^{\ell+1} = 0. 
\label{appendixeq:inner2}
\end{equation}
To add the equations \eqref{appendixeq:inner1} and \eqref{appendixeq:inner2} leads to
\begin{align}
\begin{split}
& ( \bm f^{\ell+1} - \bm f^\ast )^\top (\bm c^{\ell+1} - \bm c^\ast) + \gamma ( \bm t^{\ell+1} - \bm t^\ast )^\top (\bm d^{\ell+1} - \bm d^\ast) \\
& + \rho \left\{ \| \bm c_e^{\ell+1} \|_2^2 + \| \bm d_e^{\ell+1} \|_2^2 - ( \bm d_e^\ell + \bm d_e^{\ell+1} )^\top \bm c_e^{\ell+1} + (\bm v_e^\ell)^\top (\bm c_e^{\ell+1} - \bm d_e^{\ell+1}) \right\} = 0.
\label{appendixeq:summing1}
\end{split}
\end{align}
By subtracting the sixth equation in \eqref{appendixeq:foo1} by the third equation in \eqref{appendixeq:foo4}, we have $\bm v_e^{\ell+1} = \bm v_e^{\ell}+\bm c_e^{\ell+1} - \bm d_e^{\ell+1}$. 
Taking square of both sides of this equation leads to
\begin{equation}
(\bm v_e^{\ell})^\top (\bm c_e^{\ell+1} - \bm d_e^{\ell+1} ) = \frac{1}{2} \left( \| \bm v_e^{\ell+1} \|_2^2 - \| \bm v_e^{\ell} \|_2^2 \right) - \frac{1}{2} \| \bm c_e^{\ell+1} - \bm d_e^{\ell+1} \|_2^2.
\label{appendixeq:square1}
\end{equation}
By substituting \eqref{appendixeq:square1} for \eqref{appendixeq:summing1}, we obtain
\begin{align}
\begin{split}
& ( \bm f^{\ell+1} - \bm f^\ast )^\top (\bm c^{\ell+1} - \bm c^\ast) + \gamma ( \bm t^{\ell+1} - \bm t^\ast )^\top (\bm d^{\ell+1} - \bm d^\ast) \\
& + \rho \Big\{ \| \bm c_e^{\ell+1} \|_2^2 + \| \bm d_e^{\ell+1} \|_2^2 - ( \bm d_e^\ell + \bm d_e^{\ell+1} )^\top \bm c_e^{\ell+1} \\
& +  \frac{1}{2} \left( \| \bm v_e^{\ell+1} \|_2^2 - \| \bm v_e^{\ell} \|_2^2 \right) - \frac{1}{2} \| \bm c_e^{\ell+1} - \bm d_e^{\ell+1} \|_2^2 \Big\} = 0.
\label{appendixeq:square2}
\end{split}
\end{align}
Because the equation
\begin{equation*}
\| \bm x \|_2^2 \pm \bm x^\top (\bm y + \bm z) + \|\bm z \|_2^2 = \frac{1}{2} \| \bm x \pm \bm y \|_2^2 + \frac{1}{2} \| \bm x \pm \bm z \|_2^2 + \frac{1}{2} (\|\bm y\|_2^2 - \|\bm z\|_2^2)
\end{equation*}
does hold for any $\bm x, \bm y, \bm z \in \mathbb R^p$, we obtain the following equation from the above equation and \eqref{appendixeq:square2}: 
\begin{align}
\begin{split}
& \frac{\rho}{2} ( \| \bm v_e^\ell \|_2^2 - \| \bm v_e^{\ell+1} \|_2^2 ) + \frac{\rho}{2} ( \| \bm d_e^\ell \|_2^2 - \| \bm d_e^{\ell+1} \|_2^2 ) \\
&= ( \bm f^{\ell+1} - \bm f^\ast )^\top (\bm c^{\ell+1} - \bm c^\ast) + \gamma ( \bm t^{\ell+1} - \bm t^\ast )^\top (\bm d^{\ell+1} - \bm d^\ast) + \frac{\rho}{2} \| \bm c_e^{\ell+1} - \bm d_e^\ell \|_2^2. 
\label{appendixeq:square3}
\end{split}
\end{align}
Since this equation holds for any $m,k$, we have
\begin{align}
\begin{split}
& \frac{\rho}{2} \sum_{m,k} ( \| (\bm v_k^m)_e^\ell \|_2^2 - \| (\bm v_k^m)_e^{\ell+1} \|_2^2 ) + \frac{\rho}{2} \sum_{m,k} ( \| (\bm d_k^m)_e^\ell \|_2^2 - \| (\bm d_k^m)_e^{\ell+1} \|_2^2 ) \\
&= \sum_{m,k} \left\{ (\bm f_k^m)^{\ell+1} - (\bm f_k^m)^\ast \right\}^\top \left\{(\bm c_k^m)^{\ell+1} - (\bm c_k^m)^\ast \right\} \\
&+ \gamma \sum_{m,k} \left\{ (\bm t_k^m)^{\ell+1} - (\bm t_k^m)^\ast \right\}^\top \left\{ (\bm d_k^m)^{\ell+1} - (\bm d_k^m)^\ast \right\}
+ \frac{\rho}{2} \sum_{m,k} \| (\bm c_k^m)_e^{\ell+1} - (\bm d_k^m)_e^\ell \|_2^2. 
\label{appendixeq:square4}
\end{split}
\end{align}

By calculating for \eqref{appendixeq:foo3} in the same way, we can obtain 
\begin{align*}
%\begin{split}
& \frac{\rho}{2} \sum_{m,k} ( \| (\bm v_k^m)_e^\ell \|_2^2 - \| (\bm v_k^m)_e^{\ell+1} \|_2^2 ) + \frac{\rho}{2} \sum_{m,k} ( \| (\bm d_k^m)_e^\ell \|_2^2 - \| (\bm d_k^m)_e^{\ell+1} \|_2^2 ) \\
&+ \frac{\rho}{2} \sum_{j,k} ( \| (\bm u_{jk})_e^\ell \|_2^2 - \| (\bm u_{jk})_e^{\ell+1} \|_2^2 ) + \frac{\rho}{2} \sum_{j,k} ( \| (\bm \eta_{jk})_e^\ell \|_2^2 - \| (\bm \eta_{jk})_e^{\ell+1} \|_2^2 ) \\
&= \sum_{m,k} \left\{ (\bm f_k^m)^{\ell+1} - (\bm f_k^m)^\ast \right\}^\top \left\{(\bm c_k^m)^{\ell+1} - (\bm c_k^m)^\ast \right\} \\
&+ \gamma \sum_{m,k} \left\{ (\bm t_k^m)^{\ell+1} - (\bm t_k^m)^\ast \right\}^\top \left\{ (\bm d_k^m)^{\ell+1} - (\bm d_k^m)^\ast \right\}
+ \frac{\rho}{2} \sum_{m,k} \| (\bm c_k^m)_e^{\ell+1} - (\bm d_k^m)_e^\ell \|_2^2 \\
&+ \sum_{j,k} ( \bm h_{jk}^{\ell+1} - \bm h_{jk}^\ast )^\top ( \bm \beta_{jk}^{\ell+1} - \bm \beta_{jk}^\ast ) \\
&+ \lambda \sum_{j,k} ( \bm s_{jk}^{\ell+1} - \bm s_{jk}^\ast )^\top (\bm \eta_{jk}^{\ell+1} - \bm \eta_{jk}^\ast)
+ \frac{\rho}{2} \sum_{j,k} \| (\bm \beta_{jk})_e^{\ell+1} - (\bm \eta_{jk})_e^\ell \|_2^2.
%\label{appendixeq:combine}
%\end{split}
\end{align*}
Summing the above equation from $\ell=0$ to $\ell=L$ brings in
\begin{align}
\begin{split}
& \frac{\rho}{2} \sum_{m,k} ( \| (\bm v_k^m)_e^0 \|_2^2 - \| (\bm v_k^m)_e^{L+1} \|_2^2 ) + \frac{\rho}{2} \sum_{m,k} ( \| (\bm d_k^m)_e^0 \|_2^2 - \| (\bm d_k^m)_e^{L+1} \|_2^2 ) \\
&+ \frac{\rho}{2} \sum_{j,k} ( \| (\bm u_{jk})_e^0 \|_2^2 - \| (\bm u_{jk})_e^{L+1} \|_2^2 ) + \frac{\rho}{2} \sum_{j,k} ( \| (\bm \eta_{jk})_e^0 \|_2^2 - \| (\bm \eta_{jk})_e^{L+1} \|_2^2 ) \\
&= \sum_{\ell=1}^L \sum_{m,k} \left\{ (\bm f_k^m)^{\ell+1} - (\bm f_k^m)^\ast \right\}^\top \left\{(\bm c_k^m)^{\ell+1} - (\bm c_k^m)^\ast \right\} \\
&+ \gamma \sum_{\ell=1}^L \sum_{m,k} \left\{ (\bm t_k^m)^{\ell+1} - (\bm t_k^m)^\ast \right\}^\top \left\{ (\bm d_k^m)^{\ell+1} - (\bm d_k^m)^\ast \right\}
+ \frac{\rho}{2} \sum_{\ell=1}^L \sum_{m,k} \| (\bm c_k^m)_e^{\ell+1} - (\bm d_k^m)_e^\ell \|_2^2 \\
&+ \sum_{\ell=1}^L \sum_{j,k} ( \bm h_{jk}^{\ell+1} - \bm h_{jk}^\ast )^\top ( \bm \beta_{jk}^{\ell+1} - \bm \beta_{jk}^\ast ) \\
&+ \lambda \sum_{\ell=1}^L \sum_{j,k} ( \bm s_{jk}^{\ell+1} - \bm s_{jk}^\ast )^\top (\bm \eta_{jk}^{\ell+1} - \bm \eta_{jk}^\ast)
+ \frac{\rho}{2} \sum_{\ell=1}^L \sum_{j,k} \| (\bm \beta_{jk})_e^{\ell+1} - (\bm \eta_{jk})_e^\ell \|_2^2.
\label{appendixeq:combine2}
\end{split}
\end{align}
We note that the second and fifth terms in the right-hand side in \eqref{appendixeq:combine2} satisfy non-negativity, because we have the following inequalities:
\begin{align*}
( \bm s_{jk}^{\ell+1} - \bm s_{jk}^\ast )^\top ( \bm \eta_{jk}^{\ell+1} - \bm \eta_{jk}^\ast )
&=\| \bm \eta_{jk}^{\ell+1} \|_2^2 - \| \bm \eta_{jk}^{\ast} \|_2^2 - (\bm s_{jk}^\ast)^\top (\bm \eta_{jk}^{\ell+1} - \bm \eta_{jk}^\ast) \\
&+ \| \bm \eta_{jk}^{\ast} \|_2^2 - \| \bm \eta_{jk}^{\ell+1} \|_2^2  - (\bm s_{jk}^{\ell+1})^\top (\bm \eta_{jk}^{\ast} - \bm \eta_{jk}^{\ell+1}) \\
& \ge 0, \quad (\because \text{definition of sub-gradient.}) \\
\left\{ (\bm t_k^m)^{\ell+1} - (\bm t_k^m)^\ast \right\}^\top \left\{ (\bm d_k^m)^{\ell+1} - (\bm d_k^m)^\ast \right\}
&= \| (\bm d_k^m)^{\ell+1} \|_1 - \| (\bm d_k^m)^{\ast} \|_1 \\
&- \left\{ (\bm t_k^m)^\ast \right\}^\top \left\{ (\bm d_k^m)^{\ell+1} - (\bm d_k^m)^\ast \right\} \\
& + \| (\bm d_k^m)^{\ast} \|_1 - \| (\bm d_k^m)^{\ell+1} \|_1 \\
&- \left\{ (\bm t_k^m)^{\ell+1} \right\}^\top \left\{ (\bm d_k^m)^{\ast} - (\bm d_k^m)^{\ell+1} \right\} \\
& \ge 0. \quad (\because \text{definition of sub-gradient.})
\end{align*}
In addition, when we set $I^\ell$ as a vector whose $i$-th element is defined by $\left. \frac{\partial V(\mathcal A)}{\partial (\mathcal A)_i} \right|_{\mathcal A = \mathcal A^{\ell}}$, the first and third terms also satisfy non-negativity from
\begin{align}
\begin{split}
& \sum_{j,k} ( \bm h_{jk}^{\ell+1} - \bm h_{jk}^\ast )^\top ( \bm \beta_{jk}^{\ell+1} - \bm \beta_{jk}^\ast )
+ \sum_{m,k} \left\{ (\bm f_k^m)^{\ell+1} - (\bm f_k^m)^\ast \right\}^\top \left\{(\bm c_k^m)^{\ell+1} - (\bm c_k^m)^\ast \right\} \\
&= - (I^\ast)^\top (\mathcal A^{\ell+1} - \mathcal A^\ast) + (I^{\ell+1})^\top (\mathcal A^{\ell+1} - \mathcal A^\ast) \\
&= V(\mathcal A^{\ell+1}) - V(\mathcal A^{\ast}) - (I^\ast)^\top (\mathcal A^{\ell+1} - \mathcal A^\ast)
 + V (\mathcal A^{\ast}) - V(\mathcal A^{\ell+1}) - (I^{\ell+1})^\top (\mathcal A^{\ast} - \mathcal A^{\ell+1}) \\
& \ge 0. \quad (\because \text{definition of gradient and convexity.})
\label{appendixeq:combine3}
\end{split}
\end{align}
These facts conclude that all terms in \eqref{appendixeq:combine2} are nonnegative. 
Thus, we have 
\begin{align*}
&\sum_{\ell=0}^\infty \left[ \sum_{j,k} ( \bm h_{jk}^{\ell+1} - \bm h_{jk}^\ast )^\top ( \bm \beta_{jk}^{\ell+1} - \bm \beta_{jk}^\ast )
+ \sum_{m,k} \left\{ (\bm f_k^m)^{\ell+1} - (\bm f_k^m)^\ast \right\}^\top \left\{(\bm c_k^m)^{\ell+1} - (\bm c_k^m)^\ast \right\} \right] \\
& \le \frac{\rho}{2} \sum_{j,k}  \| (\bm u_{jk})_e^0 \|_2^2  + \frac{\rho}{2} \sum_{j,k}  \| (\bm \eta_{jk})_e^0 \|_2^2 
+ \frac{\rho}{2} \sum_{m,k}  \| (\bm v_k^m)_e^0 \|_2^2  + \frac{\rho}{2} \sum_{m,k}  \| (\bm d_k^m)_e^0 \|_2^2.
\end{align*}
This leads to
\begin{equation*}
\lim_{\ell \to \infty} \left[ \sum_{j,k} ( \bm h_{jk}^{\ell} - \bm h_{jk}^\ast )^\top ( \bm \beta_{jk}^{\ell} - \bm \beta_{jk}^\ast )
+ \sum_{m,k} \left\{ (\bm f_k^m)^{\ell} - (\bm f_k^m)^\ast \right\}^\top \left\{(\bm c_k^m)^{\ell} - (\bm c_k^m)^\ast \right\} \right] = 0.
\end{equation*}
From this convergence and \eqref{appendixeq:combine3}, we can prove
\begin{equation}
\lim_{\ell \to \infty} \left[ V(\mathcal A^{\ell}) - V(\mathcal A^{\ast}) - (I^\ast)^\top (\mathcal A^{\ell} - \mathcal A^\ast) \right] = 0.
\label{appendixeq:convergence1}
\end{equation}
Similarly, we have
\begin{align*}
&\lim_{\ell \to \infty} \lambda \sum_{j,k} \left\{ \| \bm \eta_{jk}^\ell \|_2 - \| \bm \eta_{jk}^\ast \|_2  - (\bm s_{jk}^\ast)^\top ( \bm \eta_{jk}^\ell - \bm \eta_{jk}^\ast ) \right\} = 0, \\
&\lim_{\ell \to \infty} \| \bm \beta_{jk}^\ell - \bm \eta_{jk}^\ell \|_2 = 0, \\
&\lim_{\ell \to \infty} \gamma \sum_{m,k} \left[ \| (\bm d_k^m)^\ell \|_1 - \| (\bm d_k^m)^\ast \|_1 - \{ (\bm t_k^m)^\ast \}^\top \{ (\bm d_k^m)^\ell - (\bm d_k^m)^\ast \} \right] = 0, \\
&\lim_{\ell \to \infty} \| (\bm c_k^m)^\ell - (\bm d_k^m)^\ell \|_2 = 0.
\end{align*}
Because the norms $\| \cdot \|_1, \| \cdot \|_2$ and the inner product are continuous and all norms on a finite dimensional space are equivalent, the equations
\begin{align}
&\lim_{\ell \to \infty} \lambda \sum_{j,k} \left\{ \| \bm \beta_{jk}^\ell \|_2 - \| \bm \beta_{jk}^\ast \|_2  - (\bm s_{jk}^\ast)^\top ( \bm \beta_{jk}^\ell - \bm \beta_{jk}^\ast ) \right\} = 0, \label{appendixeq:convergence2} \\
&\lim_{\ell \to \infty} \gamma \sum_{m,k} \left[ \| (\bm c_k^m)^\ell \|_1 - \| (\bm c_k^m)^\ast \|_1 - \{ (\bm t_k^m)^\ast \}^\top \{ (\bm c_k^m)^\ell - (\bm c_k^m)^\ast \} \right] = 0 \label{appendixeq:convergence3}
\end{align}
hold. 
From the equations \eqref{appendixeq:convergence1}, \eqref{appendixeq:convergence2}, \eqref{appendixeq:convergence3}, we obtain
\begin{align*}
& \lim_{\ell \to \infty}  V (\mathcal A^\ell) - V (\mathcal A^\ast) - (I^\ast)^\top (\mathcal A^\ell - \mathcal A^\ast) \\
& \hspace{5pt} + \lambda \sum_{j,k} \left\{ \| \bm \beta_{jk}^\ell \|_2 - \| \bm \beta_{jk}^\ast \|_2 - (\bm s_{jk}^\ast)^\top (\bm \beta_{jk}^\ell - \bm \beta_{jk}^\ast)  \right\} \\
& \hspace{5pt} + \gamma \sum_{m,k} \left[ \| (\bm c_k^m)^\ell \|_1 - \| (\bm c_k^m)^\ast \|_1 - \{(\bm t_k^m)^\ast \}^\top \{(\bm c_k^m)^\ell - (\bm c_k^m)^\ast) \}  \right] = 0 \\
\iff & \lim_{\ell \to \infty} V (\mathcal A^\ell) + \lambda \sum_{j,k} \| \bm \beta_{jk}^\ell \|_2 + \gamma \sum_{m,k} \| (\bm c_k^m)^\ell \|_1 \\
& \hspace{5pt} - \left\{ V (\mathcal A^\ast) + \lambda \sum_{j,k} \| \bm \beta_{jk}^\ast \|_2 + \gamma \sum_{m,k}  \| (\bm c_k^m)^\ast \|_1 \right\} \\
& \hspace{5pt} - \underbrace{\left[ (I^\ast)^\top (\mathcal A^\ell - \mathcal A^\ast) + \lambda \sum_{j,k} (\bm s_{jk}^\ast)^\top ( \bm \beta_{jk}^\ell - \bm \beta_{jk}^\ast ) + \gamma \sum_{m,k} \{ (\bm t_k^m)^\ast \}^\top \{ (\bm c_k^m)^\ell - (\bm c_k^m)^\ast \} \right]}_{\rm (A)}= 0.
\end{align*}
Formula (A) turns out to be zero because of the first optimality condition of \eqref{appendixeq:foo2}. 
Thus, we can prove
\begin{equation*}
\lim_{\ell \to \infty} \left[ V (\mathcal A^\ell) + \lambda \sum_{j,k} \| \bm \beta_{jk}^\ell \|_2 + \gamma \sum_{m,k} \| (\bm c_k^m)^\ell \|_1 \right]
= V (\mathcal A^\ast) + \lambda \sum_{j,k} \| \bm \beta_{jk}^\ast \|_2 + \gamma \sum_{m,k} \| (\bm c_k^m)^\ast \|_1.
\end{equation*}

Next, we prove that $\lim_{\ell \to \infty} \| \bm \theta^\ell - \bm \theta^\ast \|_2=0$ holds whenever $\bm \theta^\ast$ is a unique solution. 
Since $\mathcal L (\bm \theta)$ is a convex 
%and lower semi-continuous 
function, we can directly apply the proof of \cite{ye2011split} by replacing $\Phi(\bm \beta)$ in \cite{ye2011split} with $\mathcal L (\bm \theta)$. 
This completes the proof of Theorem 1. \qed

%%%%%%%%%%%%%%%%%%%%%%%%%%%%%%%%%%%%%%%%%%%%%%%%%%%%%%%%%%%%%%%%%%%%%%%%
%%%%%%%%%%%%%%%%%%%%%%%%%%%%%%%%%%%%%%%%%%%%%%%%%%%%%%%%%%%%%%%%%%%%%%%%
%%%%%%%%%%%%%%%%%%%%%%%%%%%%%%%%%%%%%%%%%%%%%%%%%%%%%%%%%%%%%%%%%%%%%%%%
%\section*{Additional figures for TPR and TNR in the Monte Carlo simulations}
\section{Additional figures in the Monte Carlo simulations}

%%%%%%%%%%%%%%%%%%%%%%%%%%%%%%%%%%%%%%%%%%%%%%%%%%%%%%%
%%%%%%%%%%%%%%%%%%%%%%%%%%%%%%%%%%%%%%%%%%%%%%%%%%%%%%%
\begin{landscape}
\begin{figure}[htbp]
\begin{minipage}[b]{0.33\linewidth}
\centering
\includegraphics[width=7cm,height=4.6cm]{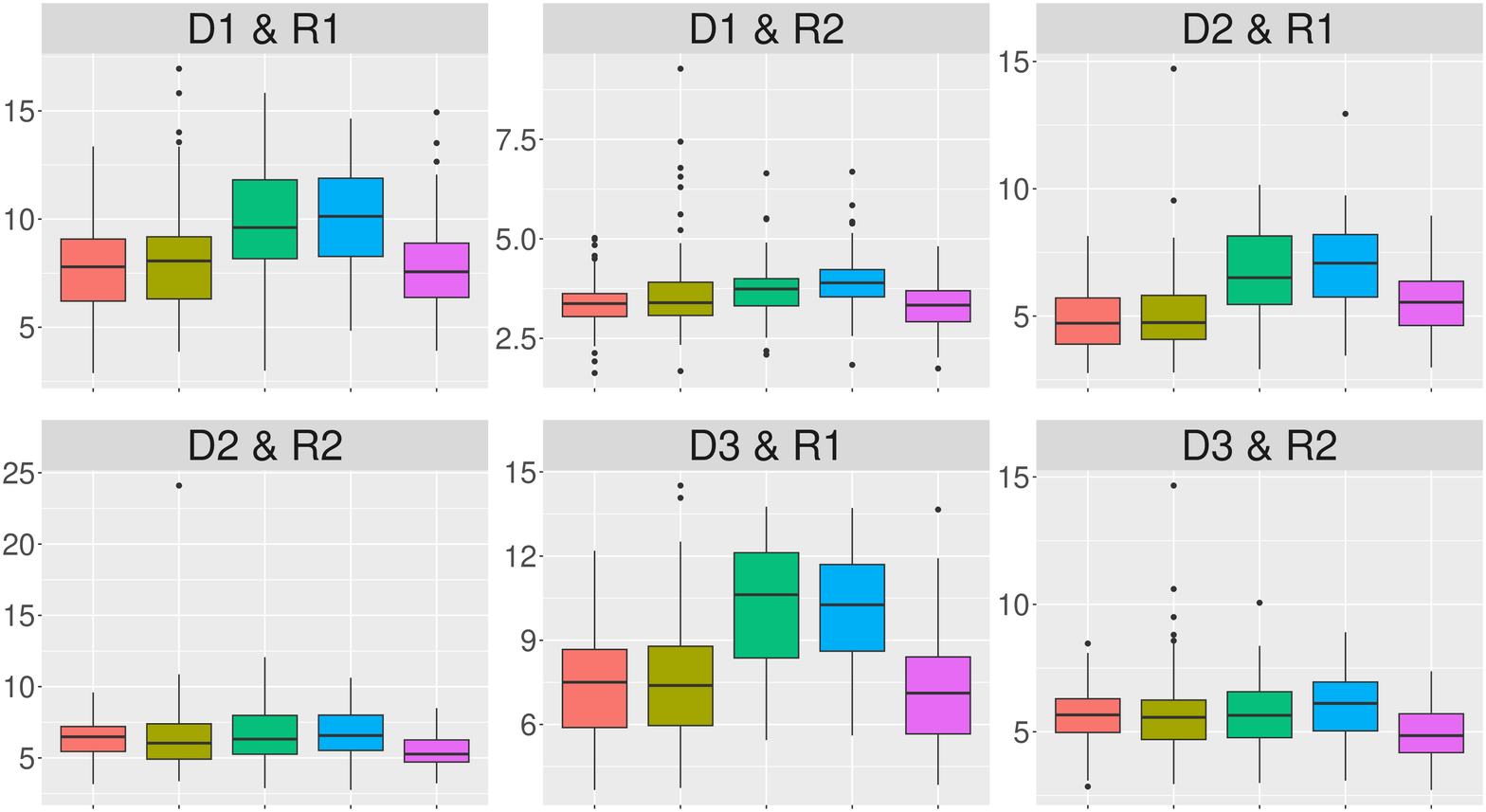}
\vspace{-2.5mm}
\subcaption{$s=5, \rho_x=0.1, \rho_y=0.1$}
%\label{Poi_SPCR}
\end{minipage}
\begin{minipage}[b]{0.33\linewidth}
\centering
\includegraphics[width=7cm,height=4.6cm]{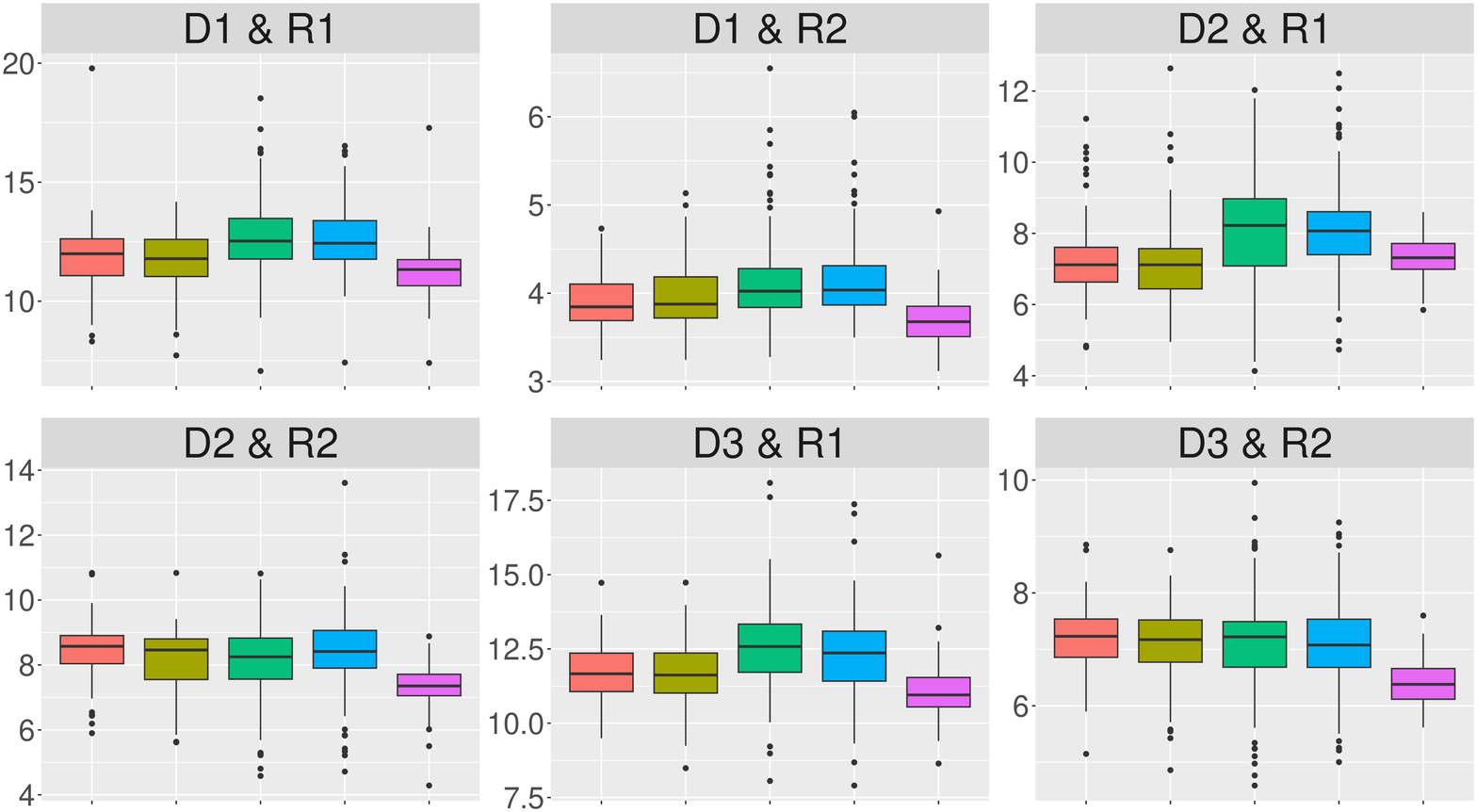} %\\ [-0.5cm]
\vspace{-2.5mm}
\subcaption{$s=50, \rho_x=0.1, \rho_y=0.1$}
%\label{Poi_PCA}
\end{minipage}
\begin{minipage}[b]{0.33\linewidth}
\centering
\includegraphics[width=7cm,height=4.6cm]{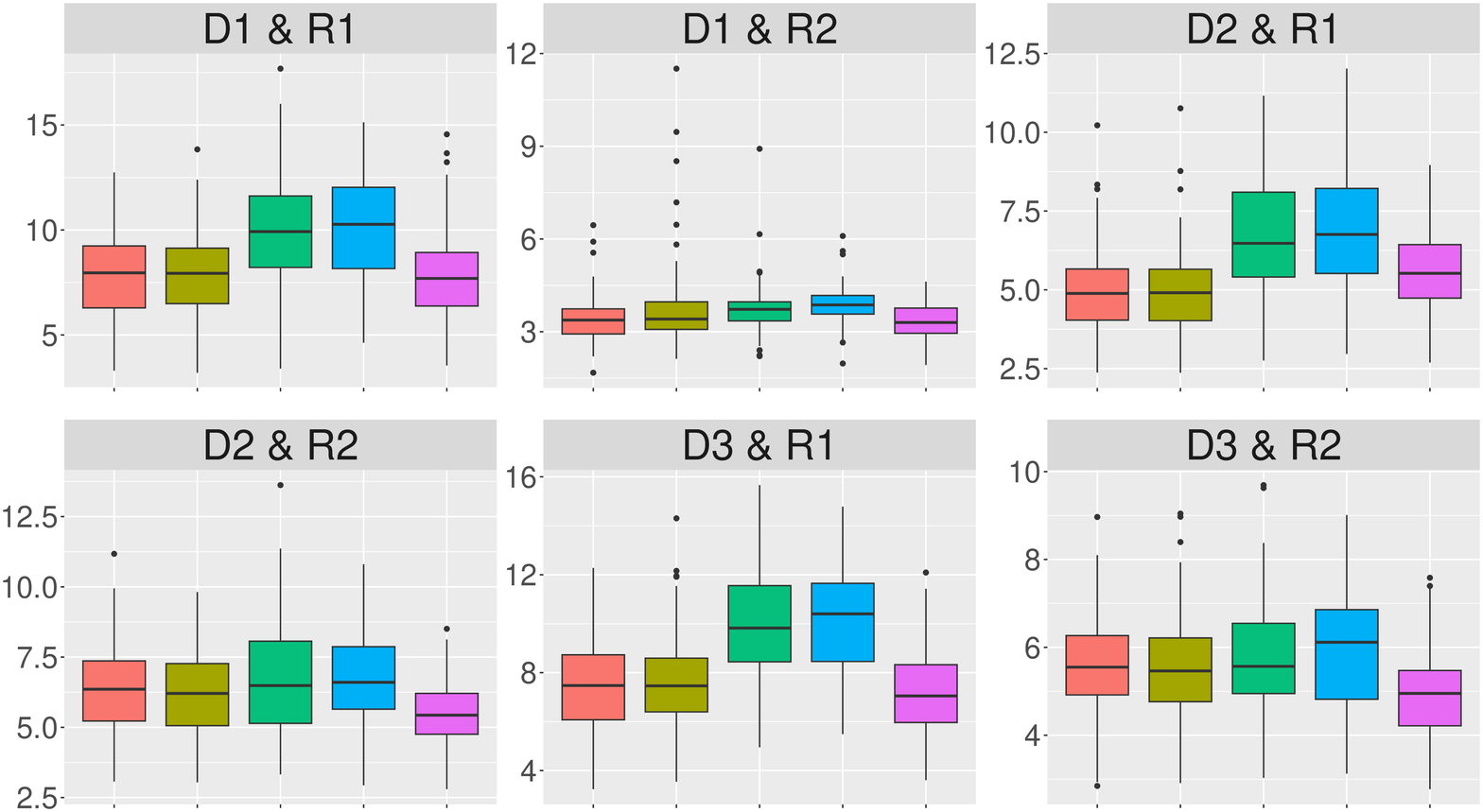} %\\ [-0.5cm] 
\vspace{-2.5mm}
\subcaption{$s=5, \rho_x=0.1, \rho_y=0.9$}
%\label{Poi_PLS-GLR}
\end{minipage}
\begin{minipage}[b]{0.33\linewidth}
\centering
\includegraphics[width=7cm,height=4.6cm]{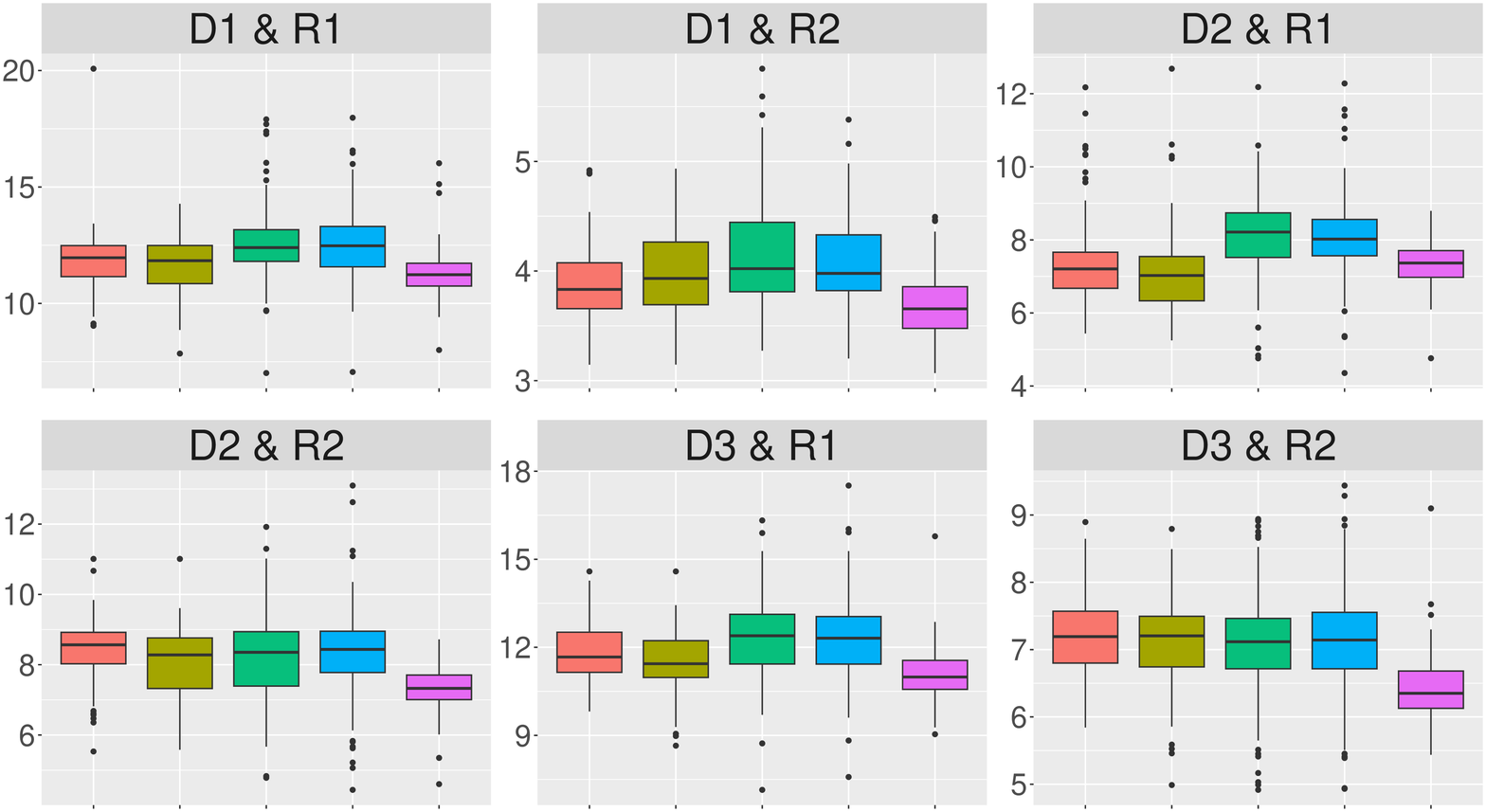}
\vspace{-2.5mm}
\subcaption{$s=50, \rho_x=0.1, \rho_y=0.9$}
%\label{Poi_SPCA}
\end{minipage}
\begin{minipage}[b]{0.33\linewidth}
\centering
\includegraphics[width=7cm,height=4.6cm]{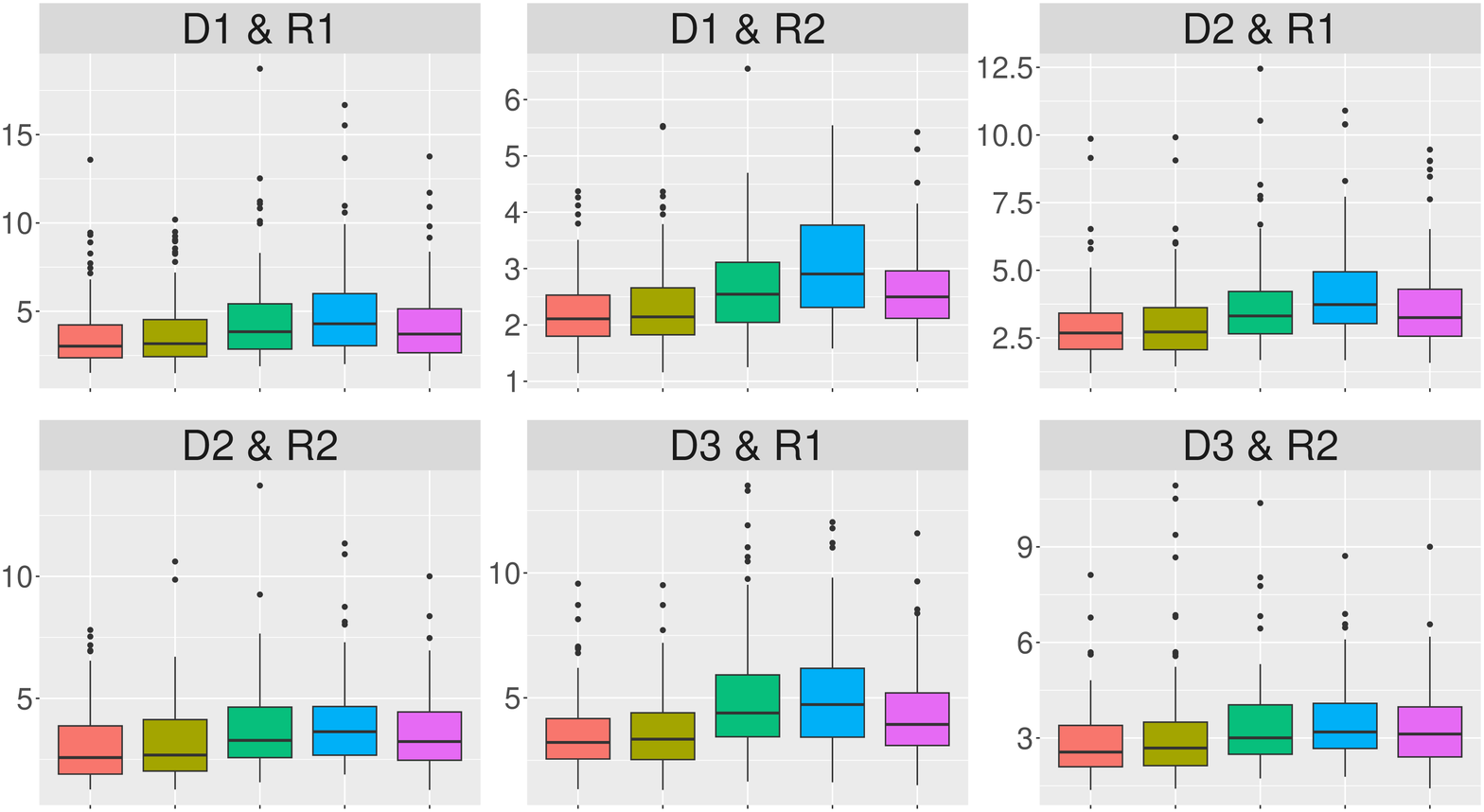}
\vspace{-2.5mm}
\subcaption{$s=5, \rho_x=0.9, \rho_y=0.1$}
%\label{Poi_DSPCA}
\end{minipage}
\begin{minipage}[b]{0.33\linewidth}
\centering
\includegraphics[width=7cm,height=4.6cm]{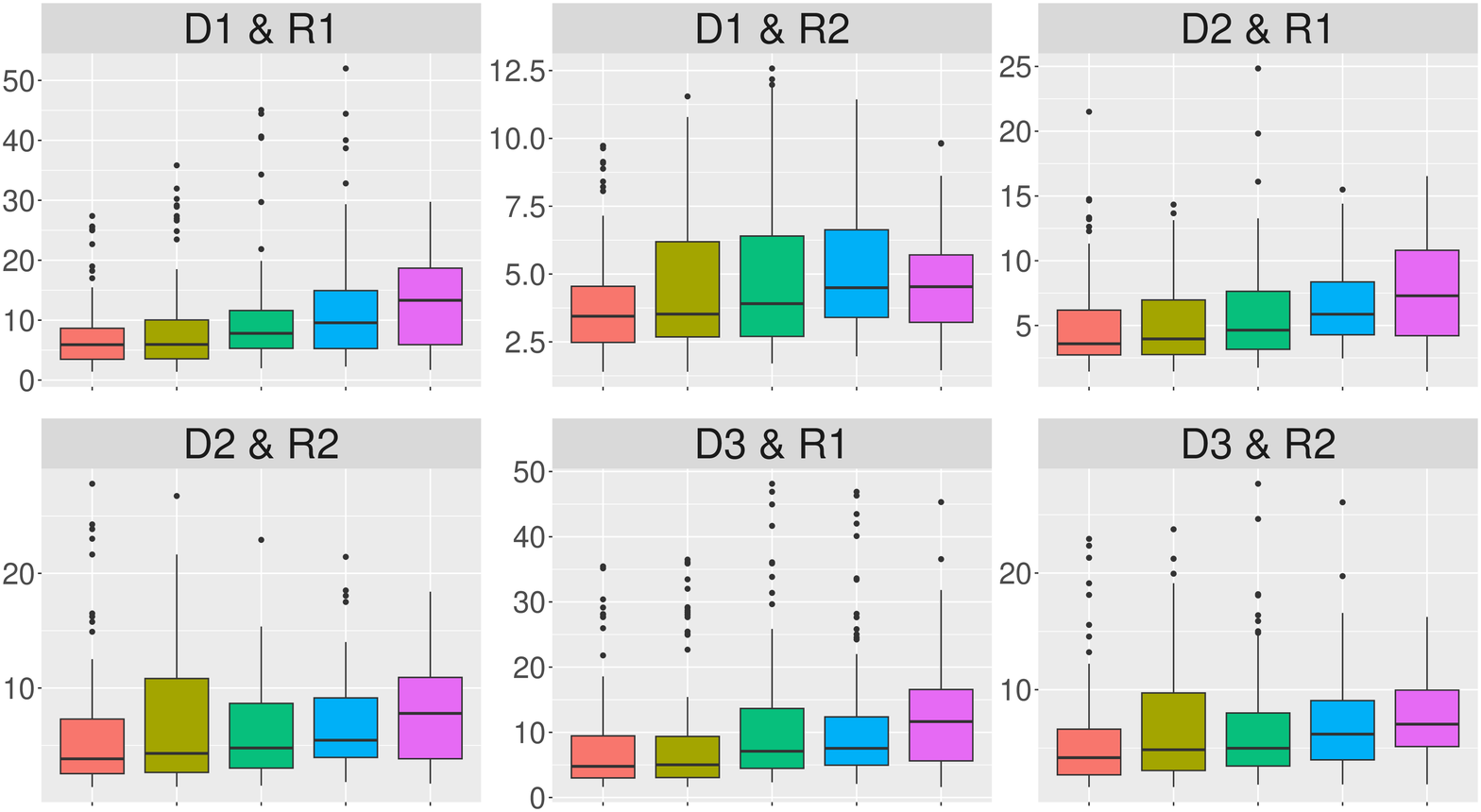}
\vspace{-2.5mm}
\subcaption{$s=50, \rho_x=0.9, \rho_y=0.1$}
%\label{Poi_FPS}
\end{minipage}
\begin{minipage}[b]{0.33\linewidth}
\centering
\includegraphics[width=7cm,height=4.6cm]{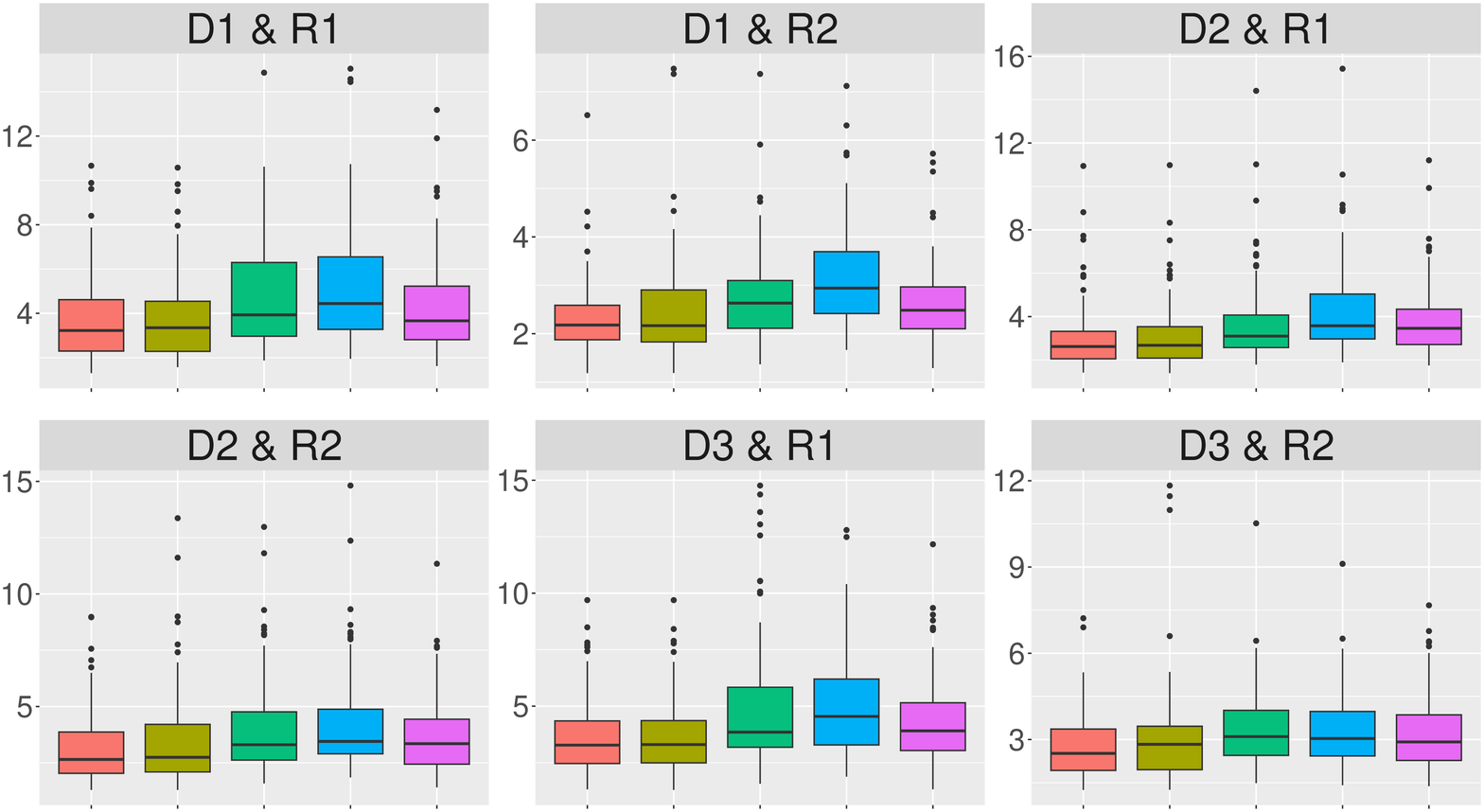}
\vspace{-2.5mm}
\subcaption{$s=5, \rho_x=0.9, \rho_y=0.9$}
%\label{Poi_SPC}
\end{minipage}
\begin{minipage}[b]{0.33\linewidth}
\centering
\includegraphics[width=7cm,height=4.6cm]{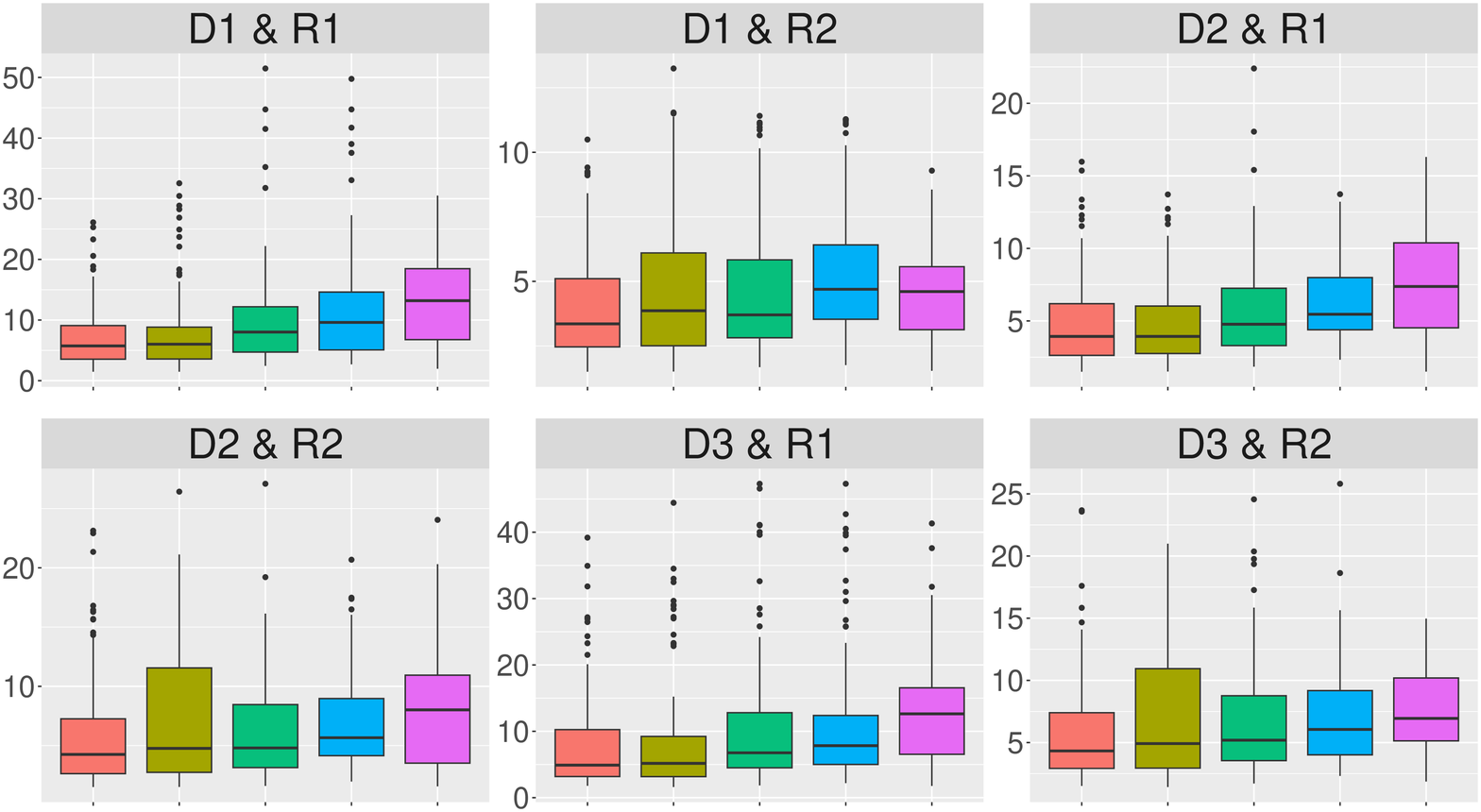}
\vspace{-2.5mm}
\subcaption{$s=50, \rho_x=0.9, \rho_y=0.9$}
%\label{Poi_SPC}
\end{minipage}
\caption{Boxplots of MSE for $n=15$ when the case $M=3$.
%The red, dark yellow, green, blue, and magenta boxplots shows, respectively, MR, UR, lasso, mglasso, and mlasso. 
The red boxplot indicates MR, dark yellow UR, green lasso, blue mglasso, and magenta mlasso. 
%The bold values correspond to the smallest means among SPCRsvd-LADMM, SPCRsvd-ADMM, and SPCR.  
%Scatter plots of principal and PLS components for the doctor visits data. 
%(a) True structure of the principal components. 
%(b) PCA. 
%(c) PLS-GLR. 
%(d) SPCR-glm. }
}
\label{fig:SimuM3n15}
\end{figure}
\end{landscape}

%%%%%%%%%%%%%%%%%%%%%%%%%%%%%%%%%%%%%%%%%%%%%%%%%%%%%%%
%%%%%%%%%%%%%%%%%%%%%%%%%%%%%%%%%%%%%%%%%%%%%%%%%%%%%%%
\begin{landscape}
\begin{figure}[htbp]
\begin{minipage}[b]{0.33\linewidth}
\centering
\includegraphics[width=7cm,height=4.6cm]{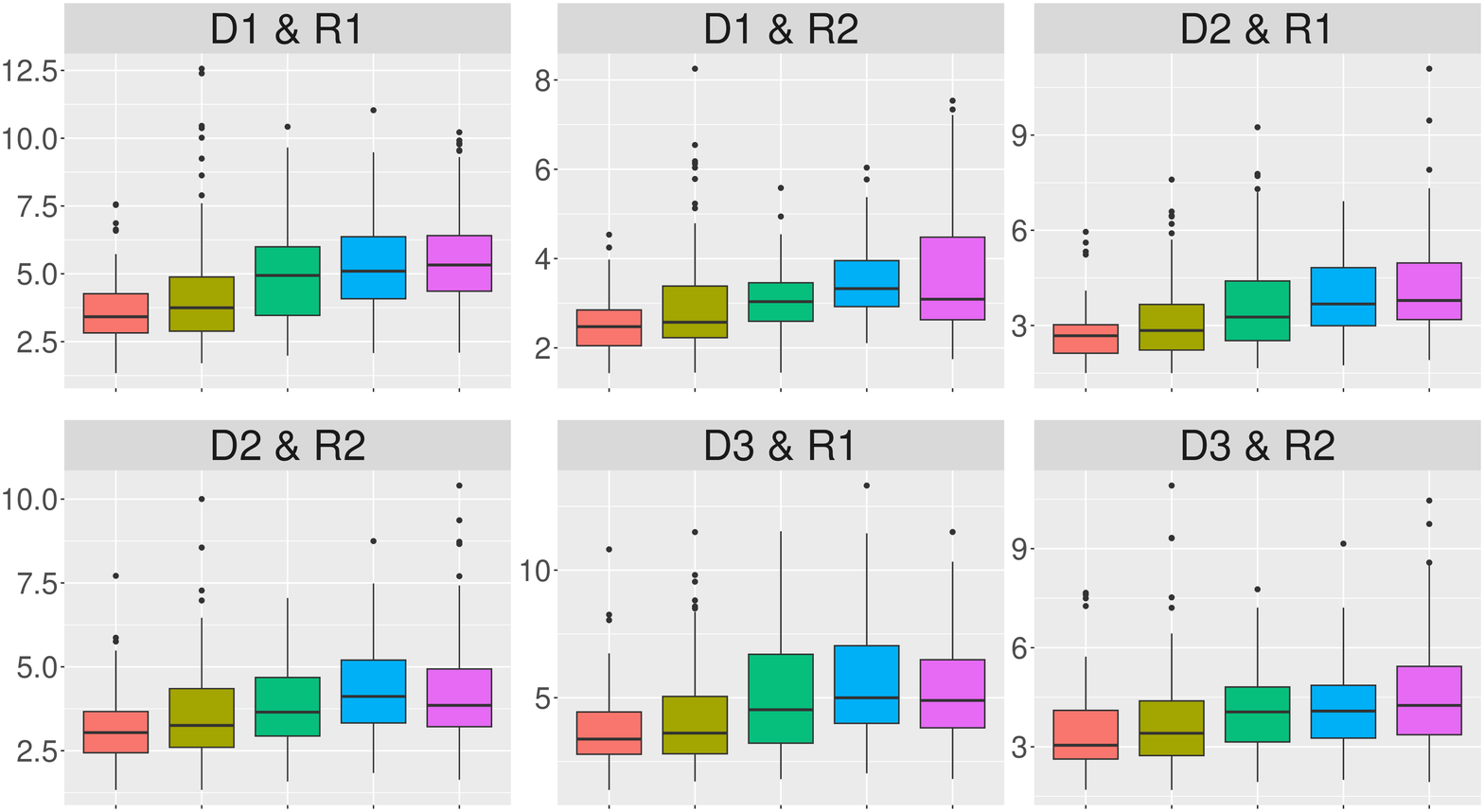}
\vspace{-2.5mm}
\subcaption{$s=5, \rho_x=0.1, \rho_y=0.1$}
%\label{Poi_SPCR}
\end{minipage}
\begin{minipage}[b]{0.33\linewidth}
\centering
\includegraphics[width=7cm,height=4.6cm]{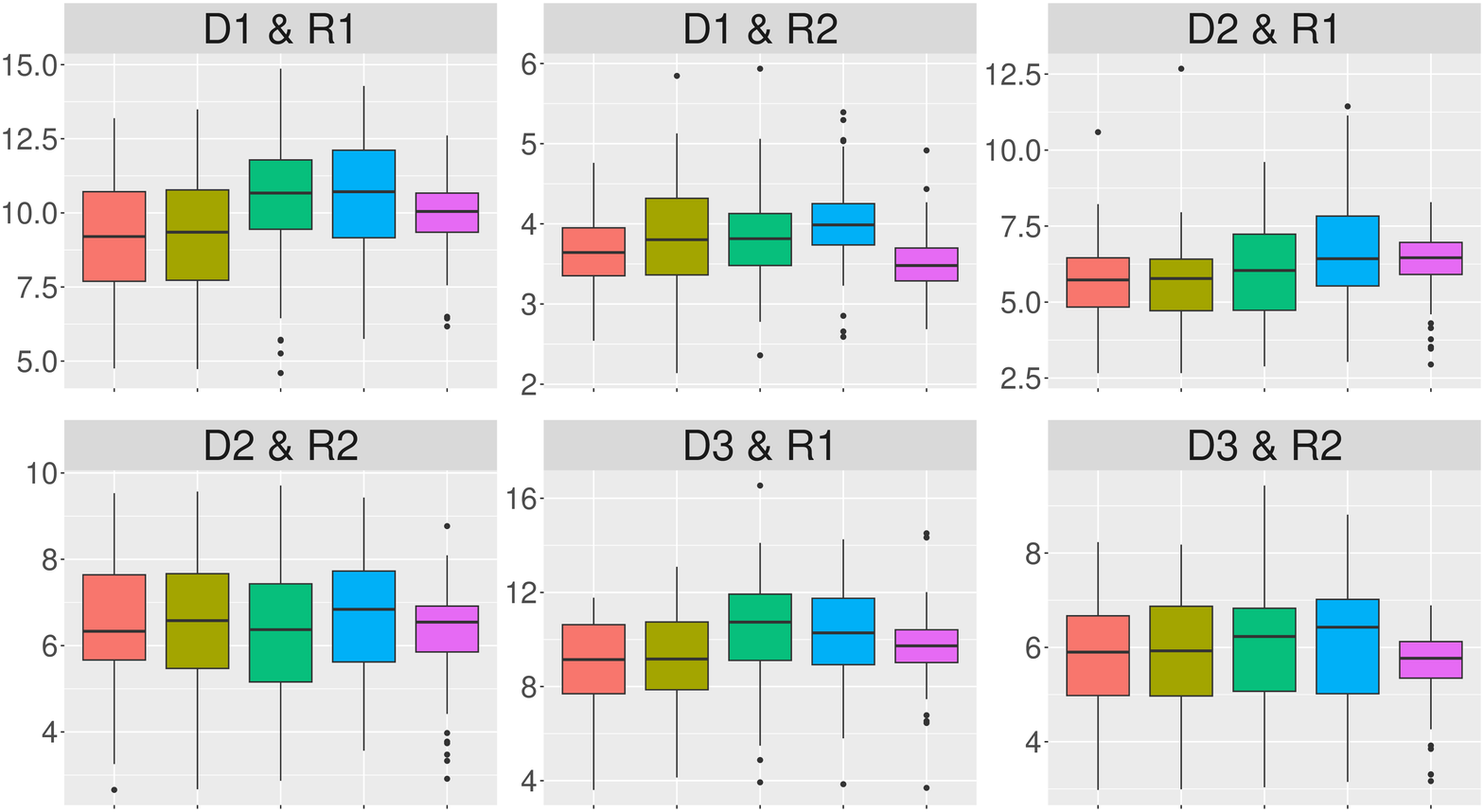} %\\ [-0.5cm]
\vspace{-2.5mm}
\subcaption{$s=50, \rho_x=0.1, \rho_y=0.1$}
%\label{Poi_PCA}
\end{minipage}
\begin{minipage}[b]{0.33\linewidth}
\centering
\includegraphics[width=7cm,height=4.6cm]{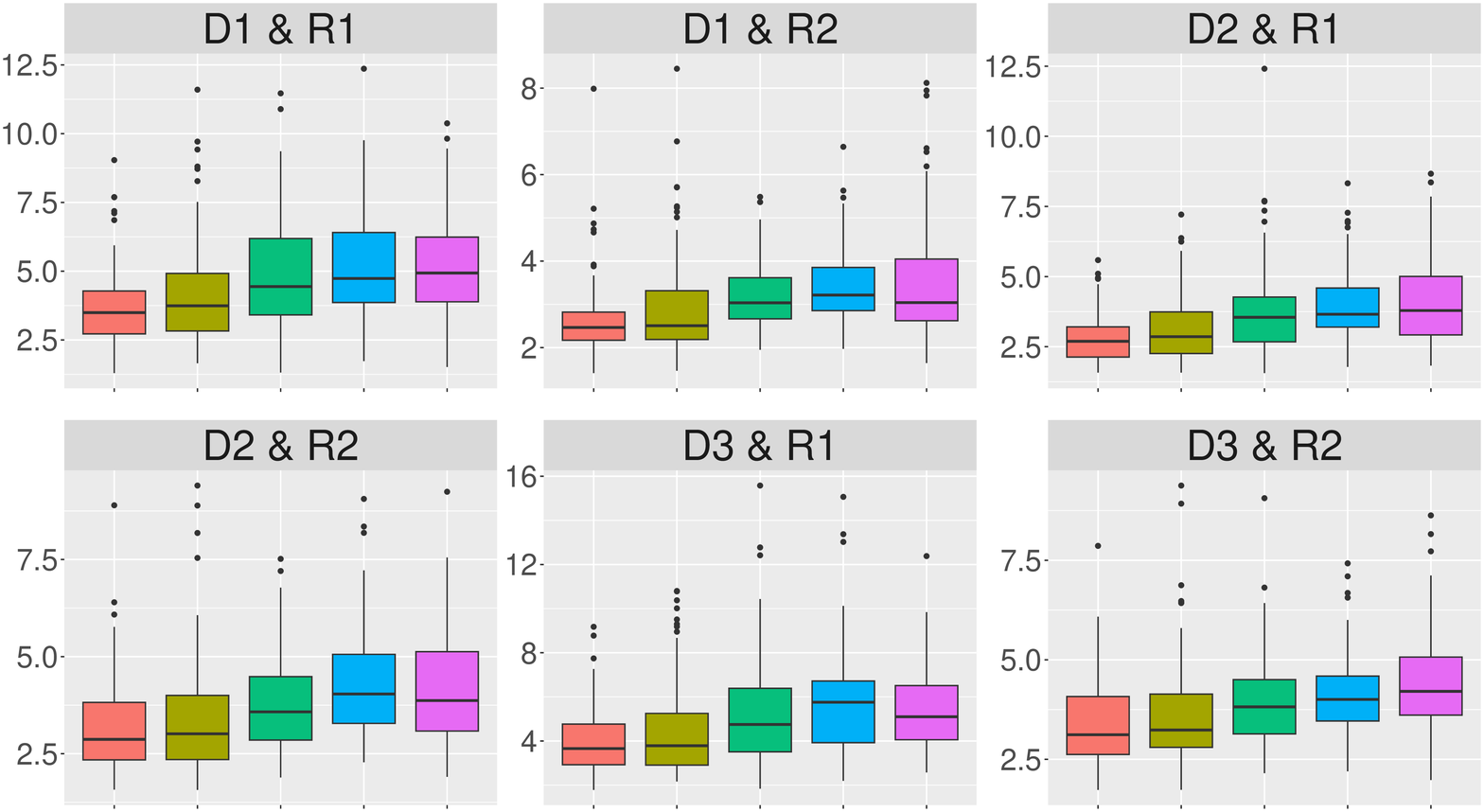} %\\ [-0.5cm] 
\vspace{-2.5mm}
\subcaption{$s=5, \rho_x=0.1, \rho_y=0.9$}
%\label{Poi_PLS-GLR}
\end{minipage}
\begin{minipage}[b]{0.33\linewidth}
\centering
\includegraphics[width=7cm,height=4.6cm]{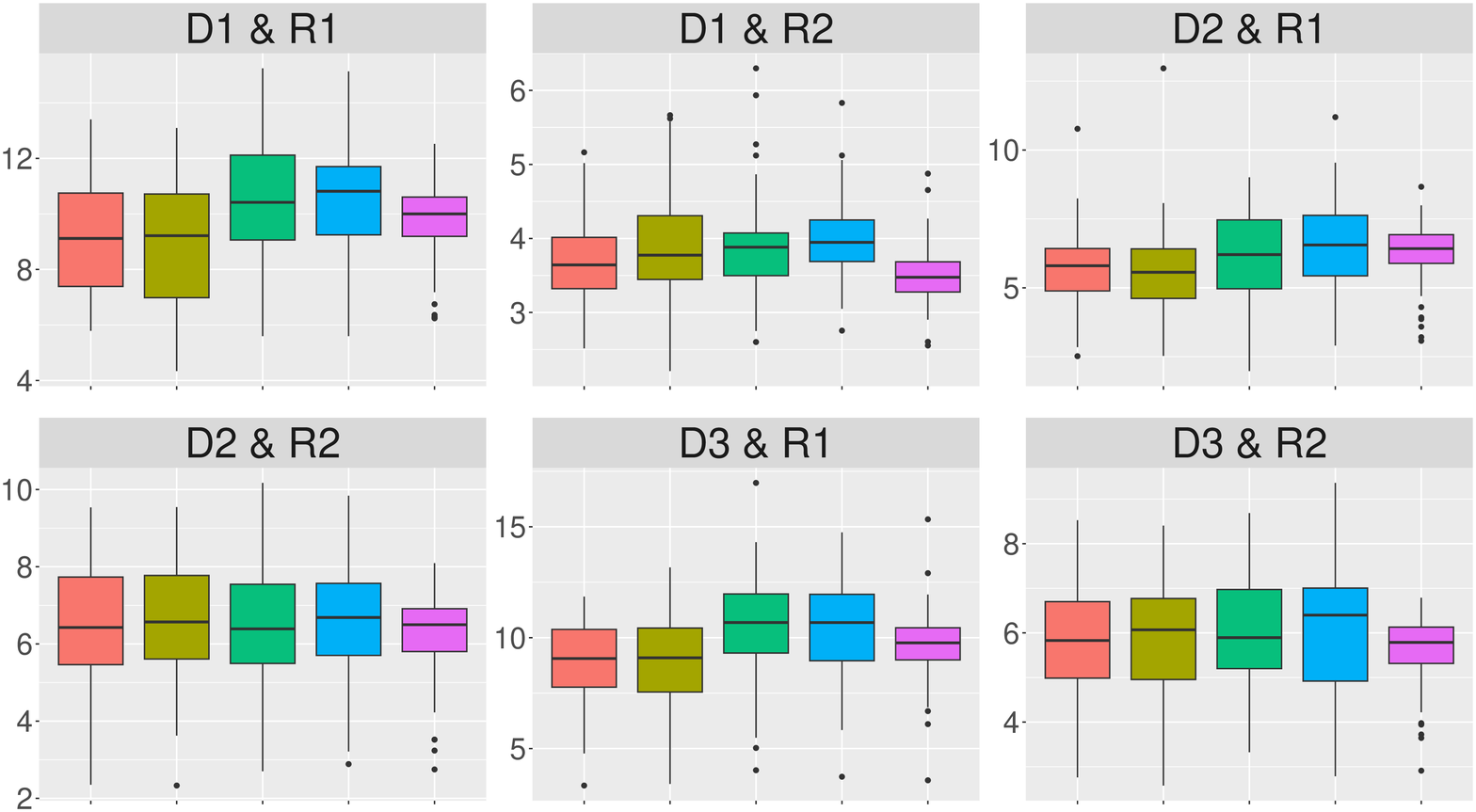}
\vspace{-2.5mm}
\subcaption{$s=50, \rho_x=0.1, \rho_y=0.9$}
%\label{Poi_SPCA}
\end{minipage}
\begin{minipage}[b]{0.33\linewidth}
\centering
\includegraphics[width=7cm,height=4.6cm]{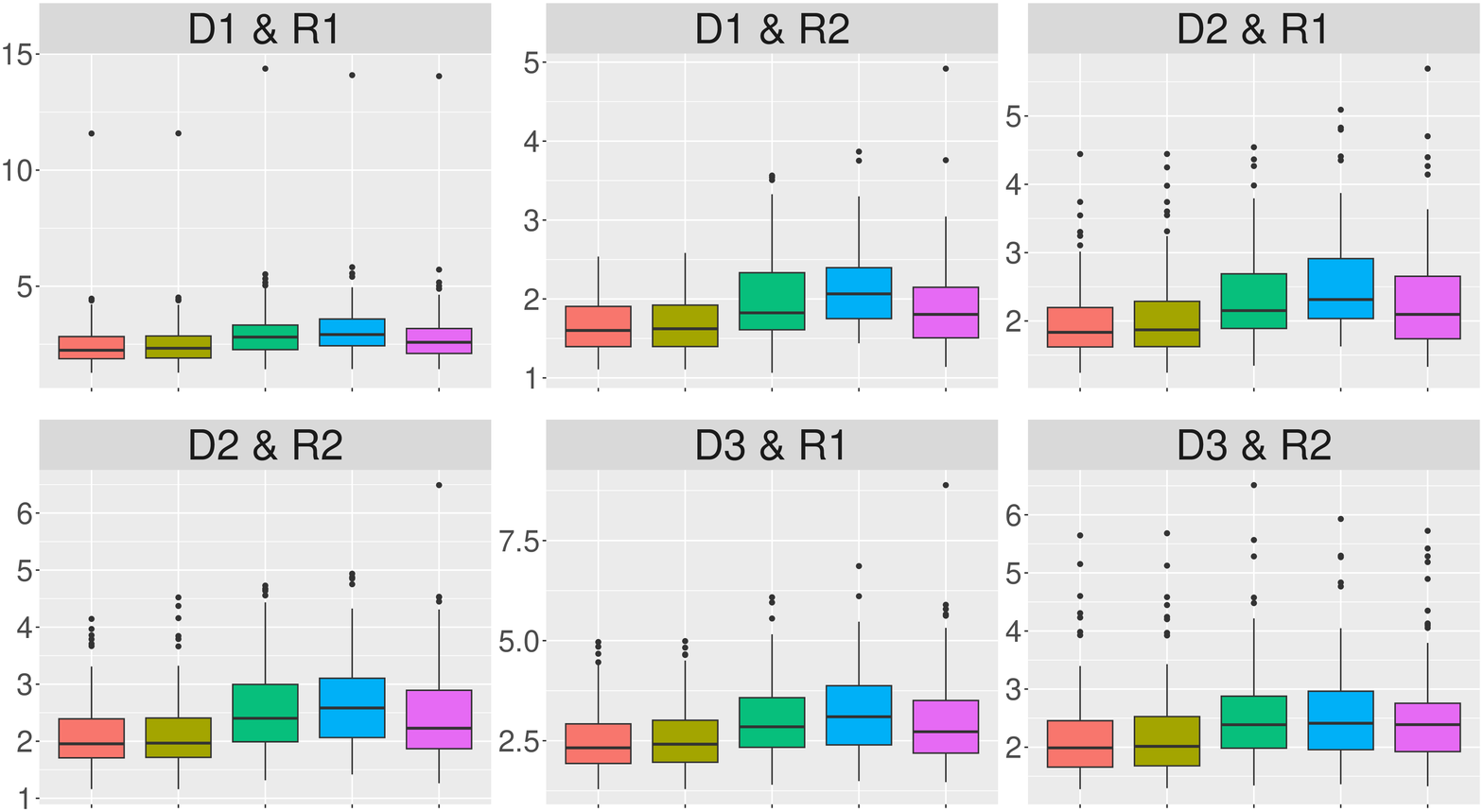}
\vspace{-2.5mm}
\subcaption{$s=5, \rho_x=0.9, \rho_y=0.1$}
%\label{Poi_DSPCA}
\end{minipage}
\begin{minipage}[b]{0.33\linewidth}
\centering
\includegraphics[width=7cm,height=4.6cm]{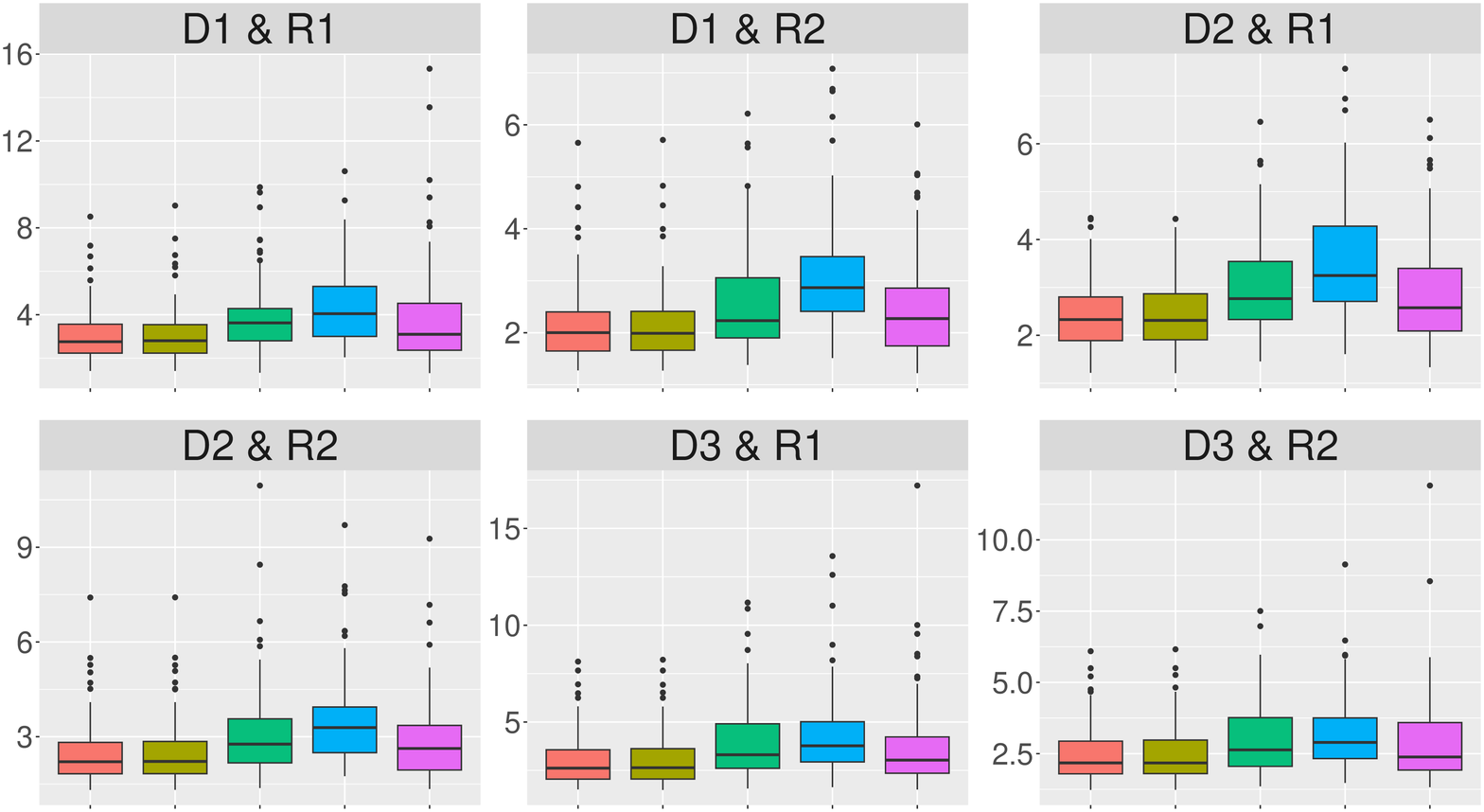}
\vspace{-2.5mm}
\subcaption{$s=50, \rho_x=0.9, \rho_y=0.1$}
%\label{Poi_FPS}
\end{minipage}
\begin{minipage}[b]{0.33\linewidth}
\centering
\includegraphics[width=7cm,height=4.6cm]{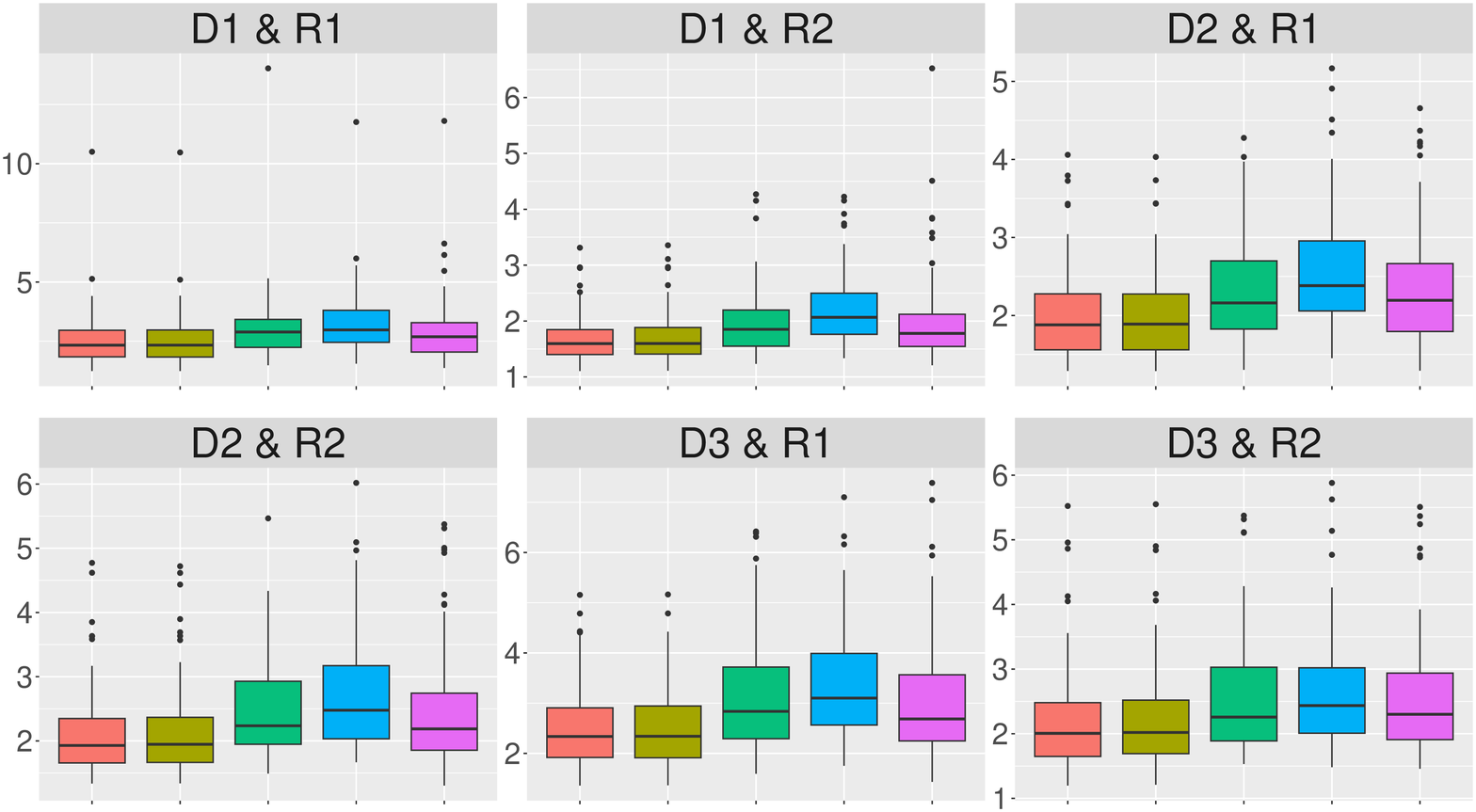}
\vspace{-2.5mm}
\subcaption{$s=5, \rho_x=0.9, \rho_y=0.9$}
%\label{Poi_SPC}
\end{minipage}
\begin{minipage}[b]{0.33\linewidth}
\centering
\includegraphics[width=7cm,height=4.6cm]{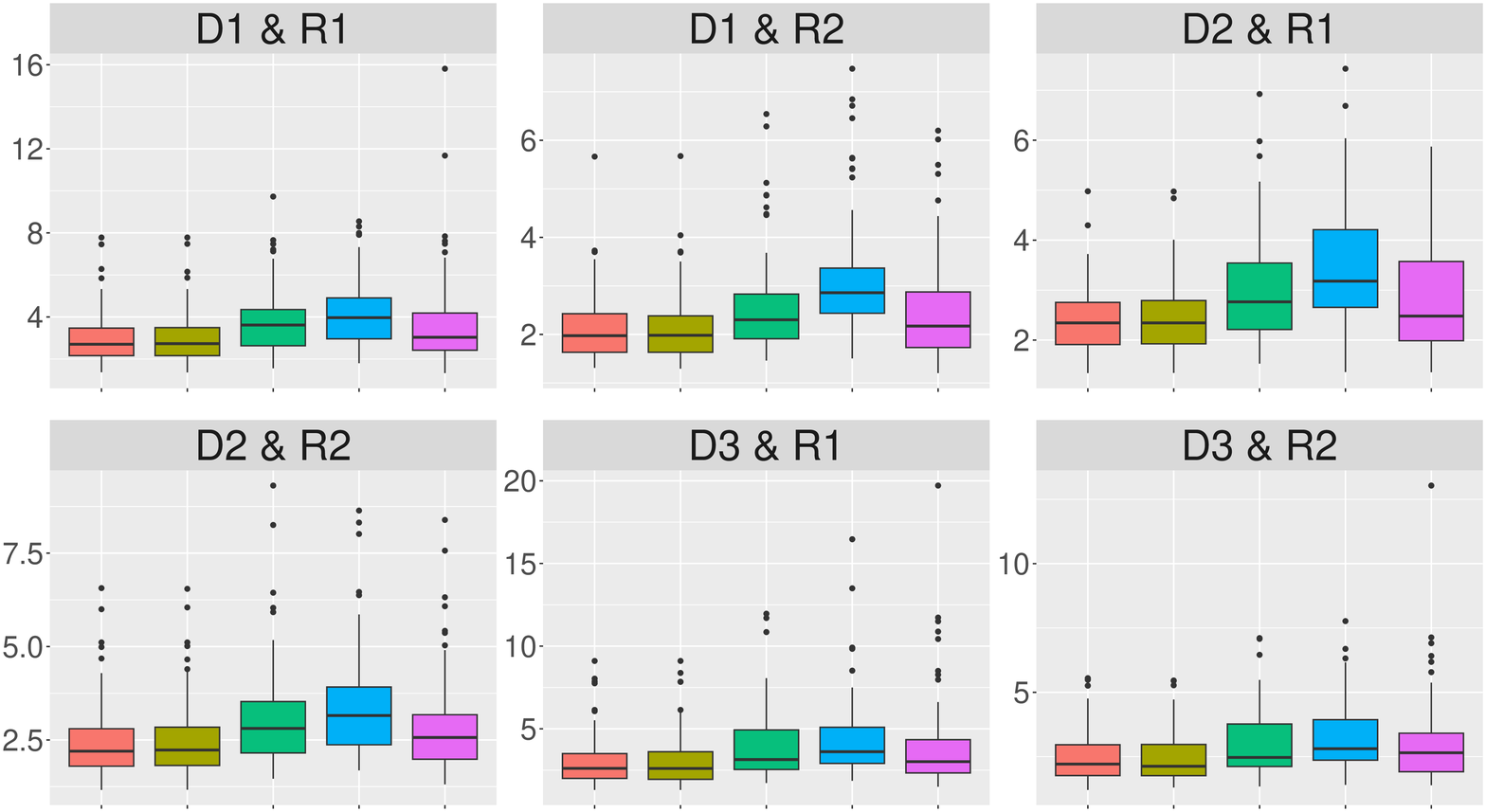}
\vspace{-2.5mm}
\subcaption{$s=50, \rho_x=0.9, \rho_y=0.9$}
%\label{Poi_SPC}
\end{minipage}
\caption{Boxplots of MSE for $n=25$ when the case $M=3$.
%The red, dark yellow, green, blue, and magenta boxplots shows, respectively, MR, UR, lasso, mglasso, and mlasso. 
The red boxplot indicates MR, dark yellow UR, green lasso, blue mglasso, and magenta mlasso. 
%The bold values correspond to the smallest means among SPCRsvd-LADMM, SPCRsvd-ADMM, and SPCR.  
%Scatter plots of principal and PLS components for the doctor visits data. 
%(a) True structure of the principal components. 
%(b) PCA. 
%(c) PLS-GLR. 
%(d) SPCR-glm. }
}
\label{fig:SimuM3n25}
\end{figure}
\end{landscape}

%%%%%%%%%%%%%%%%%%%%%%%%%%%%%%%%%%%%%%%%%%%%%%%%%%%%%%%
%%%%%%%%%%%%%%%%%%%%%%%%%%%%%%%%%%%%%%%%%%%%%%%%%%%%%%%
\begin{landscape}
\begin{figure}[htbp]
\begin{minipage}[b]{0.33\linewidth}
\centering
\includegraphics[width=7cm,height=4.6cm]{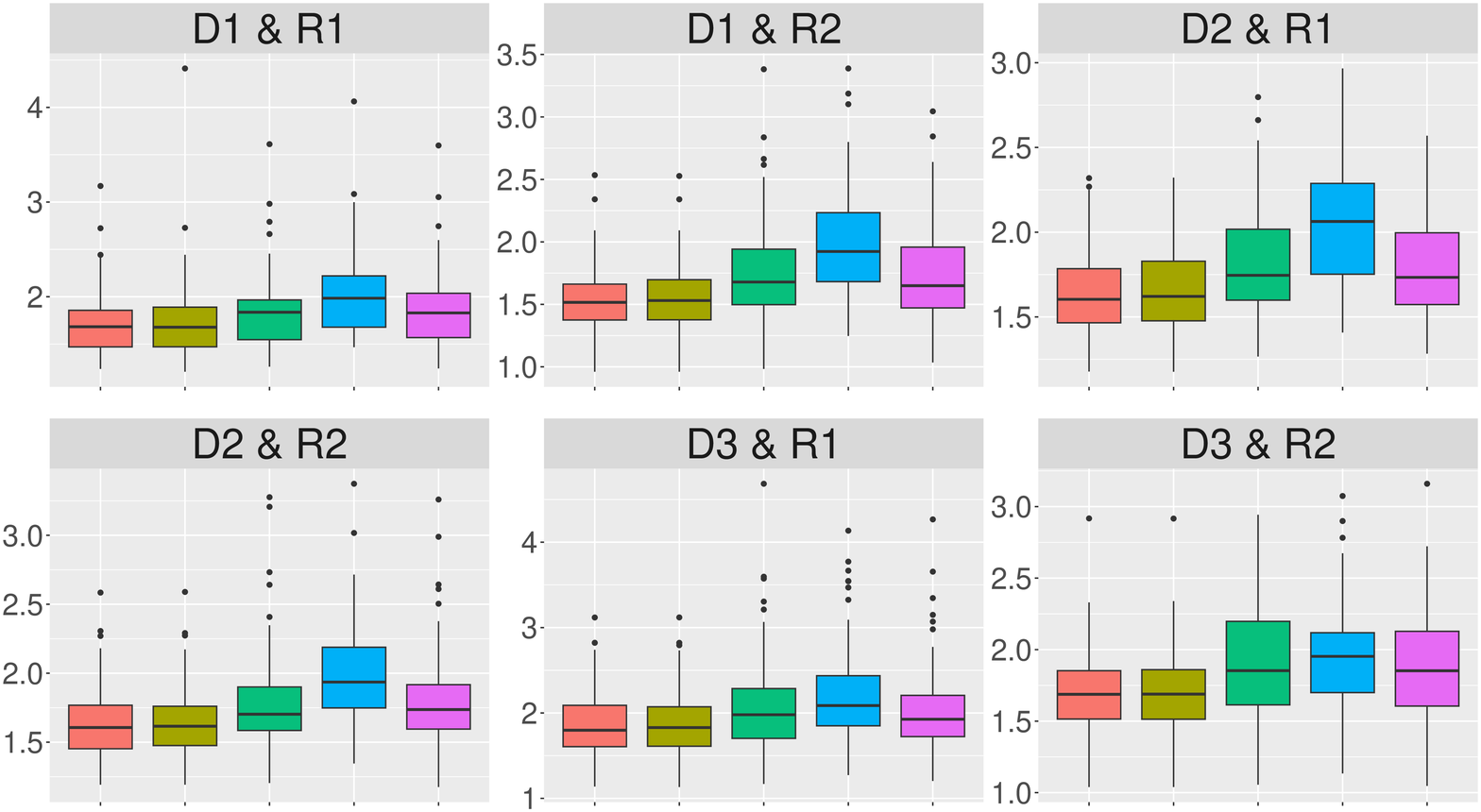}
\vspace{-2.5mm}
\subcaption{$s=5, \rho_x=0.1, \rho_y=0.1$}
%\label{Poi_SPCR}
\end{minipage}
\begin{minipage}[b]{0.33\linewidth}
\centering
\includegraphics[width=7cm,height=4.6cm]{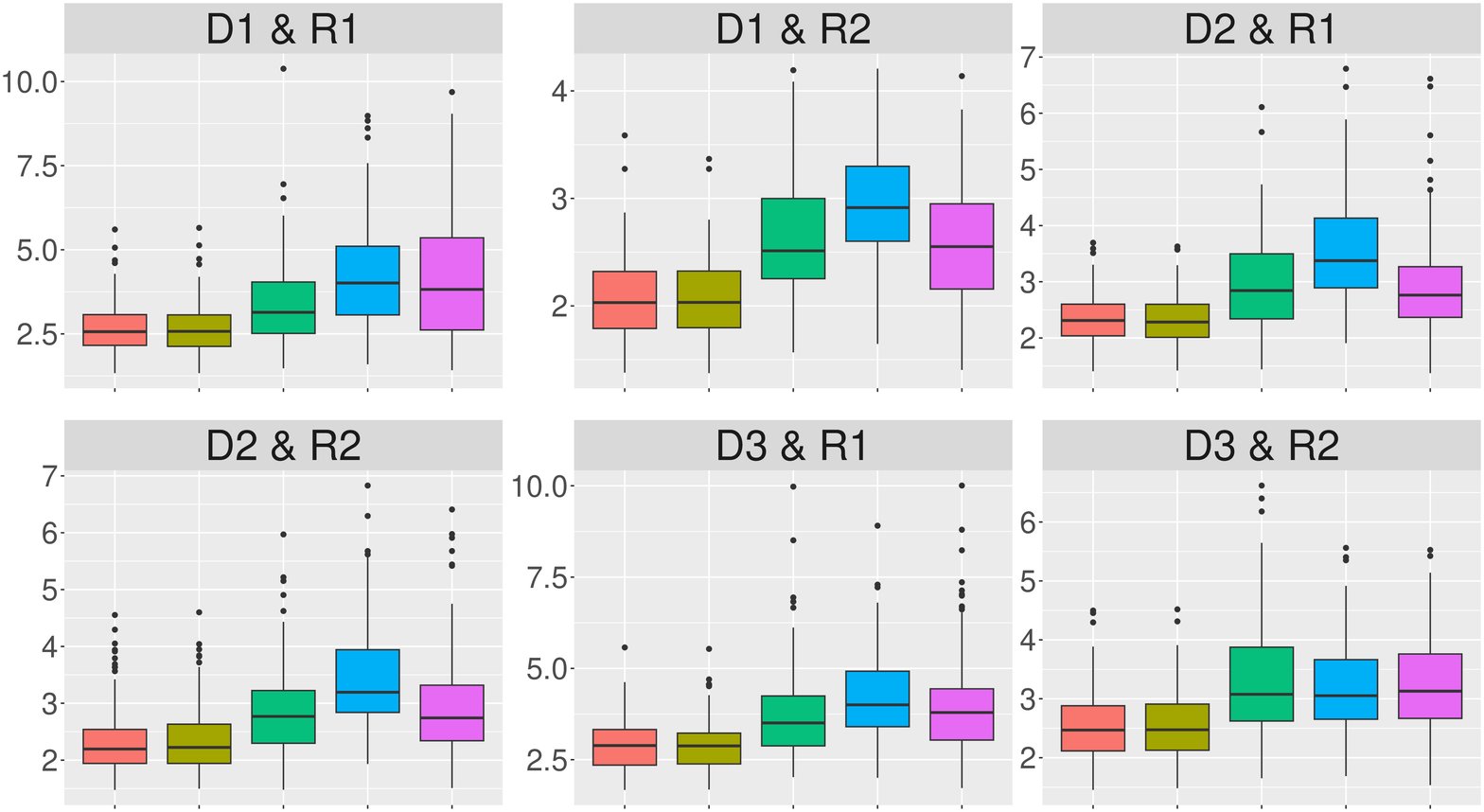} %\\ [-0.5cm]
\vspace{-2.5mm}
\subcaption{$s=50, \rho_x=0.1, \rho_y=0.1$}
%\label{Poi_PCA}
\end{minipage}
\begin{minipage}[b]{0.33\linewidth}
\centering
\includegraphics[width=7cm,height=4.6cm]{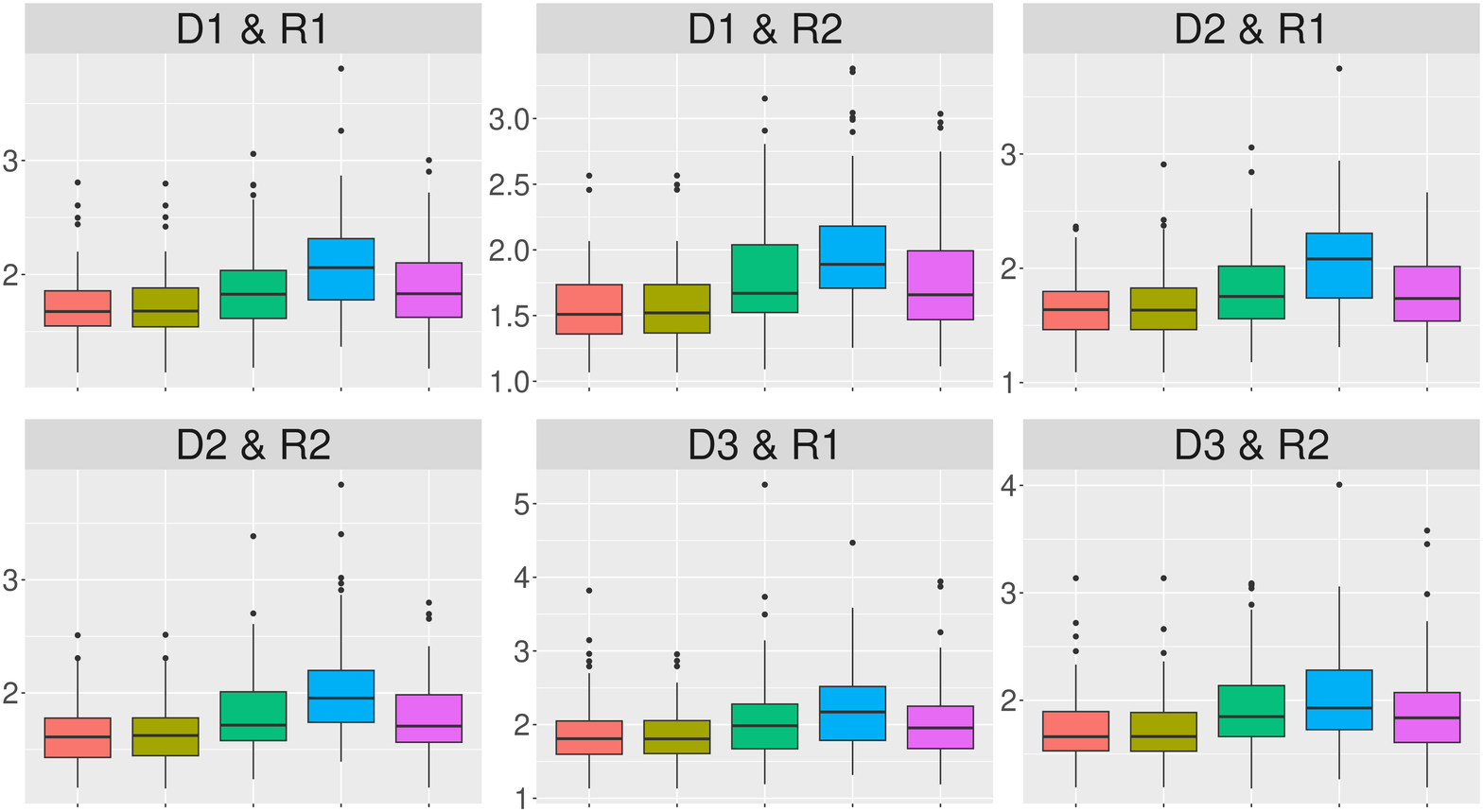} %\\ [-0.5cm] 
\vspace{-2.5mm}
\subcaption{$s=5, \rho_x=0.1, \rho_y=0.9$}
%\label{Poi_PLS-GLR}
\end{minipage}
\begin{minipage}[b]{0.33\linewidth}
\centering
\includegraphics[width=7cm,height=4.6cm]{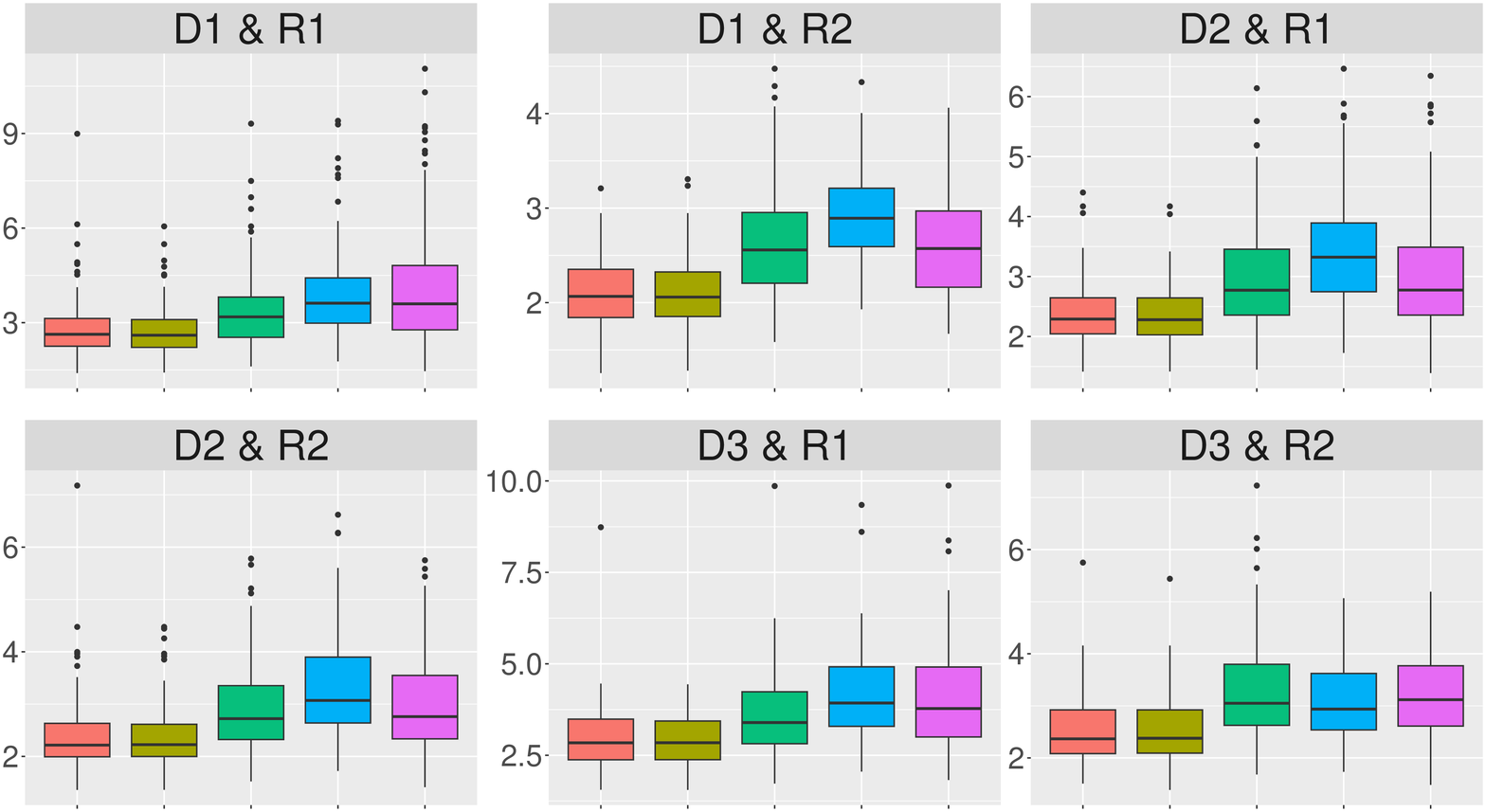}
\vspace{-2.5mm}
\subcaption{$s=50, \rho_x=0.1, \rho_y=0.9$}
%\label{Poi_SPCA}
\end{minipage}
\begin{minipage}[b]{0.33\linewidth}
\centering
\includegraphics[width=7cm,height=4.6cm]{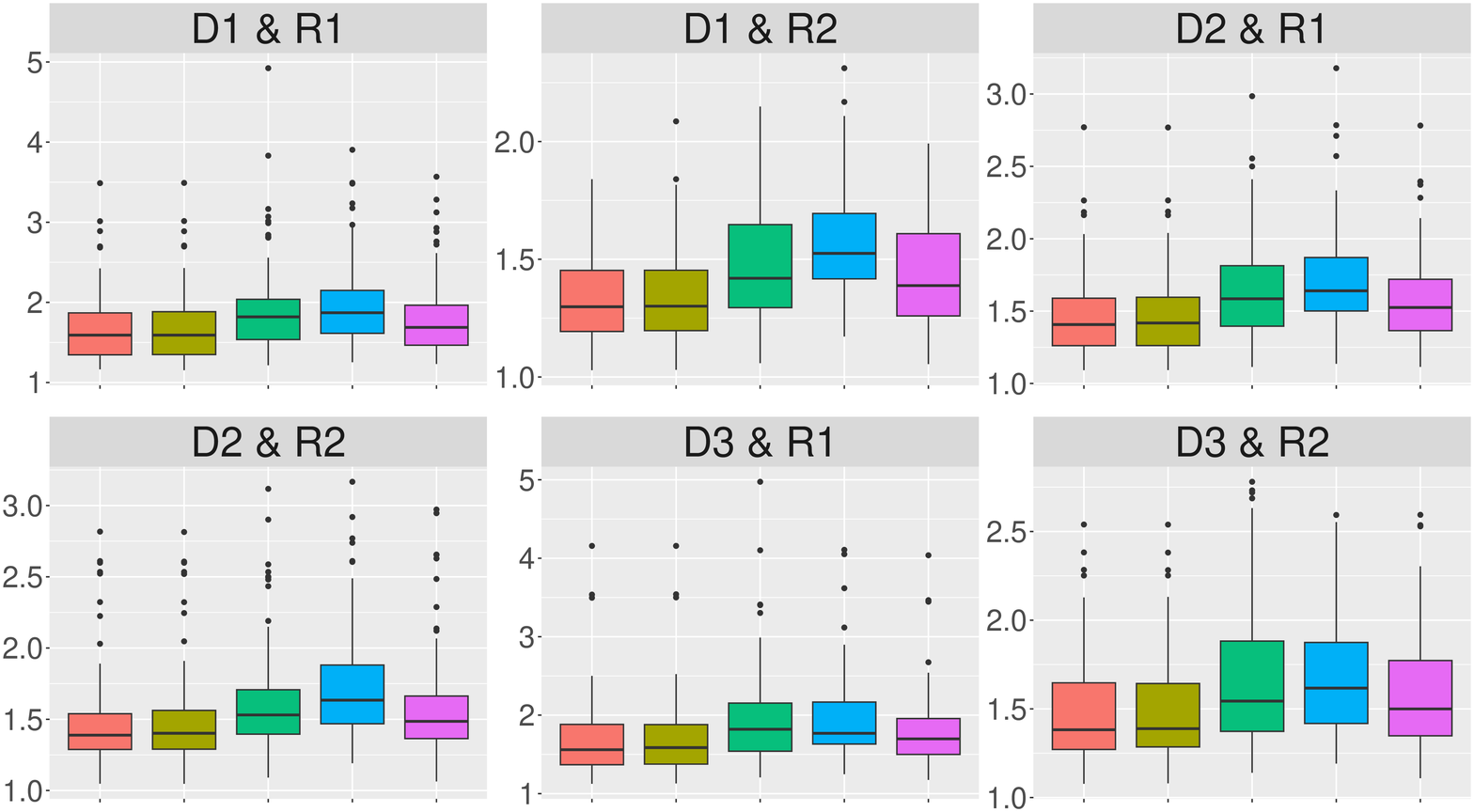}
\vspace{-2.5mm}
\subcaption{$s=5, \rho_x=0.9, \rho_y=0.1$}
%\label{Poi_DSPCA}
\end{minipage}
\begin{minipage}[b]{0.33\linewidth}
\centering
\includegraphics[width=7cm,height=4.6cm]{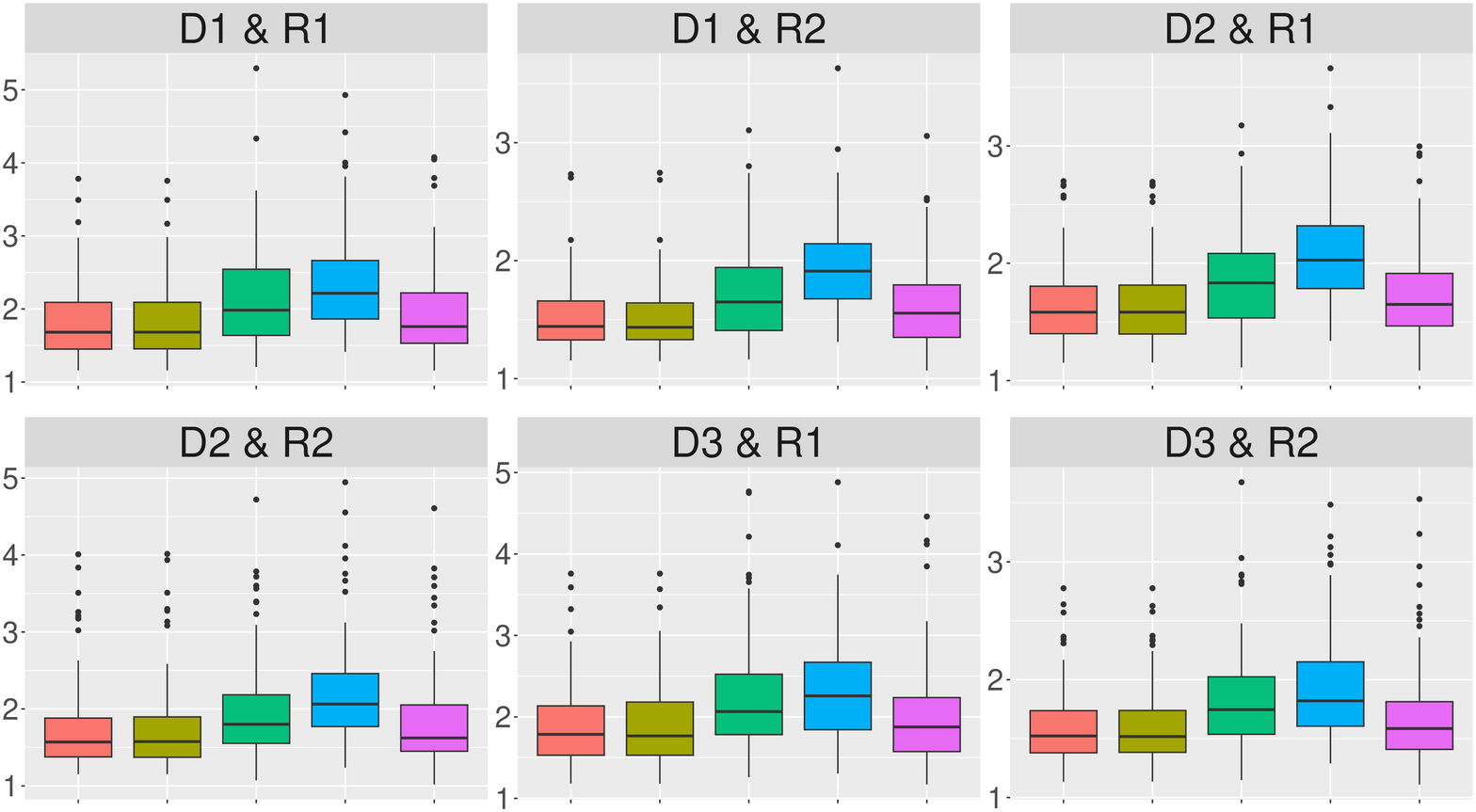}
\vspace{-2.5mm}
\subcaption{$s=50, \rho_x=0.9, \rho_y=0.1$}
%\label{Poi_FPS}
\end{minipage}
\begin{minipage}[b]{0.33\linewidth}
\centering
\includegraphics[width=7cm,height=4.6cm]{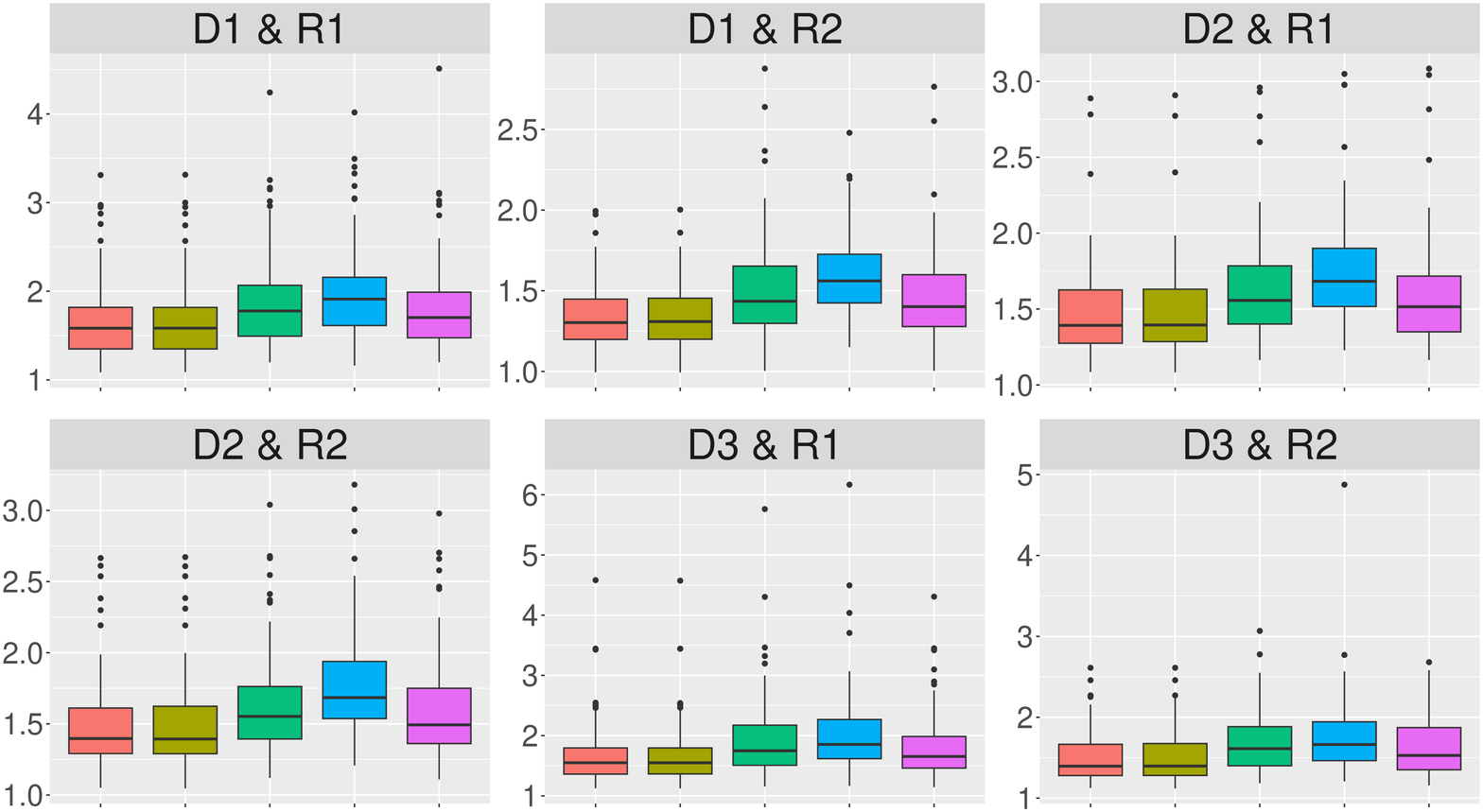}
\vspace{-2.5mm}
\subcaption{$s=5, \rho_x=0.9, \rho_y=0.9$}
%\label{Poi_SPC}
\end{minipage}
\begin{minipage}[b]{0.33\linewidth}
\centering
\includegraphics[width=7cm,height=4.6cm]{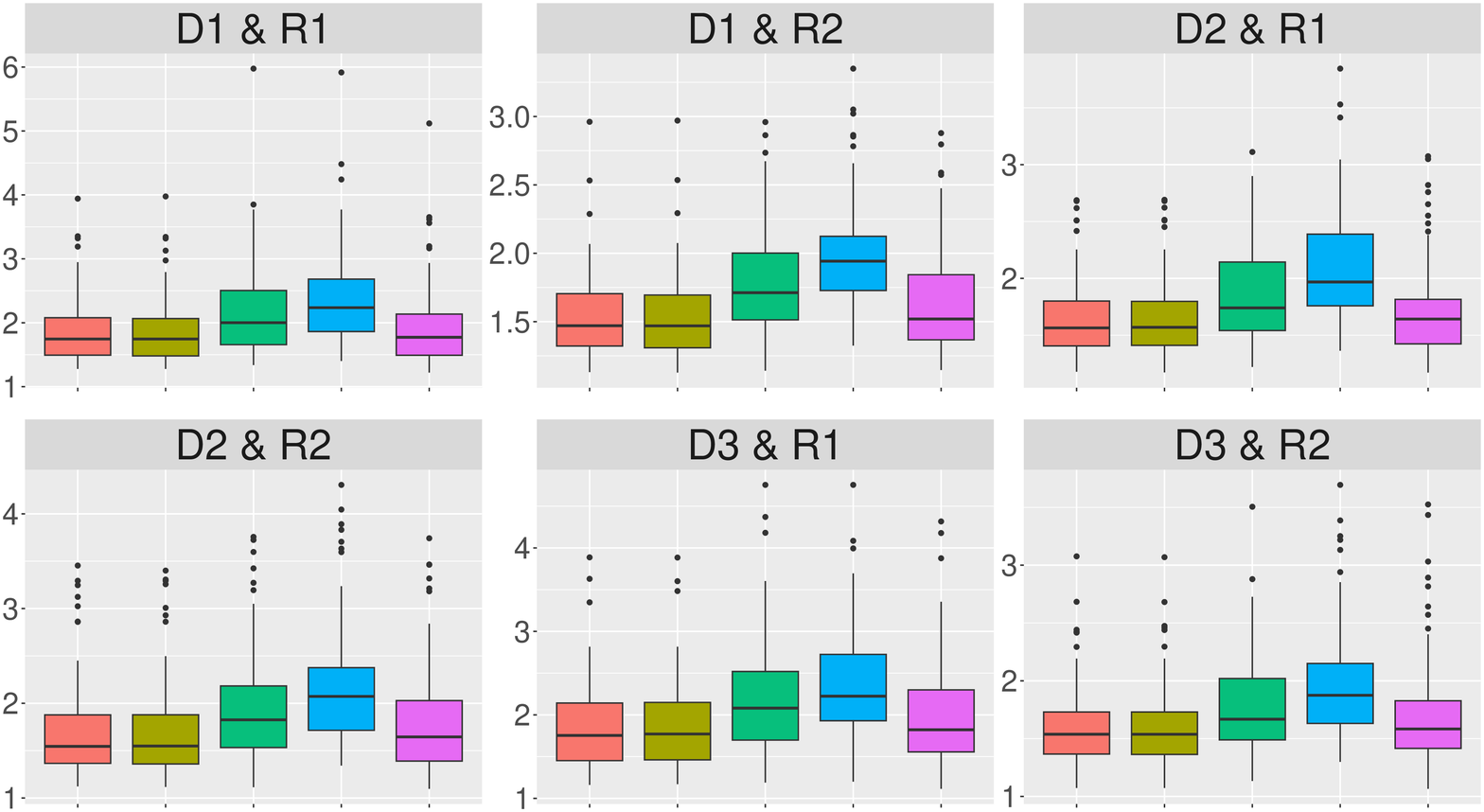}
\vspace{-2.5mm}
\subcaption{$s=50, \rho_x=0.9, \rho_y=0.9$}
%\label{Poi_SPC}
\end{minipage}
\caption{Boxplots of MSE for $n=50$ when the case $M=3$.
%The red, dark yellow, green, blue, and magenta boxplots shows, respectively, MR, UR, lasso, mglasso, and mlasso. 
The red boxplot indicates MR, dark yellow UR, green lasso, blue mglasso, and magenta mlasso. 
%The bold values correspond to the smallest means among SPCRsvd-LADMM, SPCRsvd-ADMM, and SPCR.  
%Scatter plots of principal and PLS components for the doctor visits data. 
%(a) True structure of the principal components. 
%(b) PCA. 
%(c) PLS-GLR. 
%(d) SPCR-glm. }
}
\label{fig:SimuM3n50}
\end{figure}
\end{landscape}

%%%%%%%%%%%%%%%%%%%%%%%%%%%%%%%%%%%%%%%%%%%%%%%%%%%%%%%
%%%%%%%%%%%%%%%%%%%%%%%%%%%%%%%%%%%%%%%%%%%%%%%%%%%%%%%
\begin{landscape}
\begin{figure}[htbp]
\begin{minipage}[b]{0.33\linewidth}
\centering
\includegraphics[width=7cm,height=4.6cm]{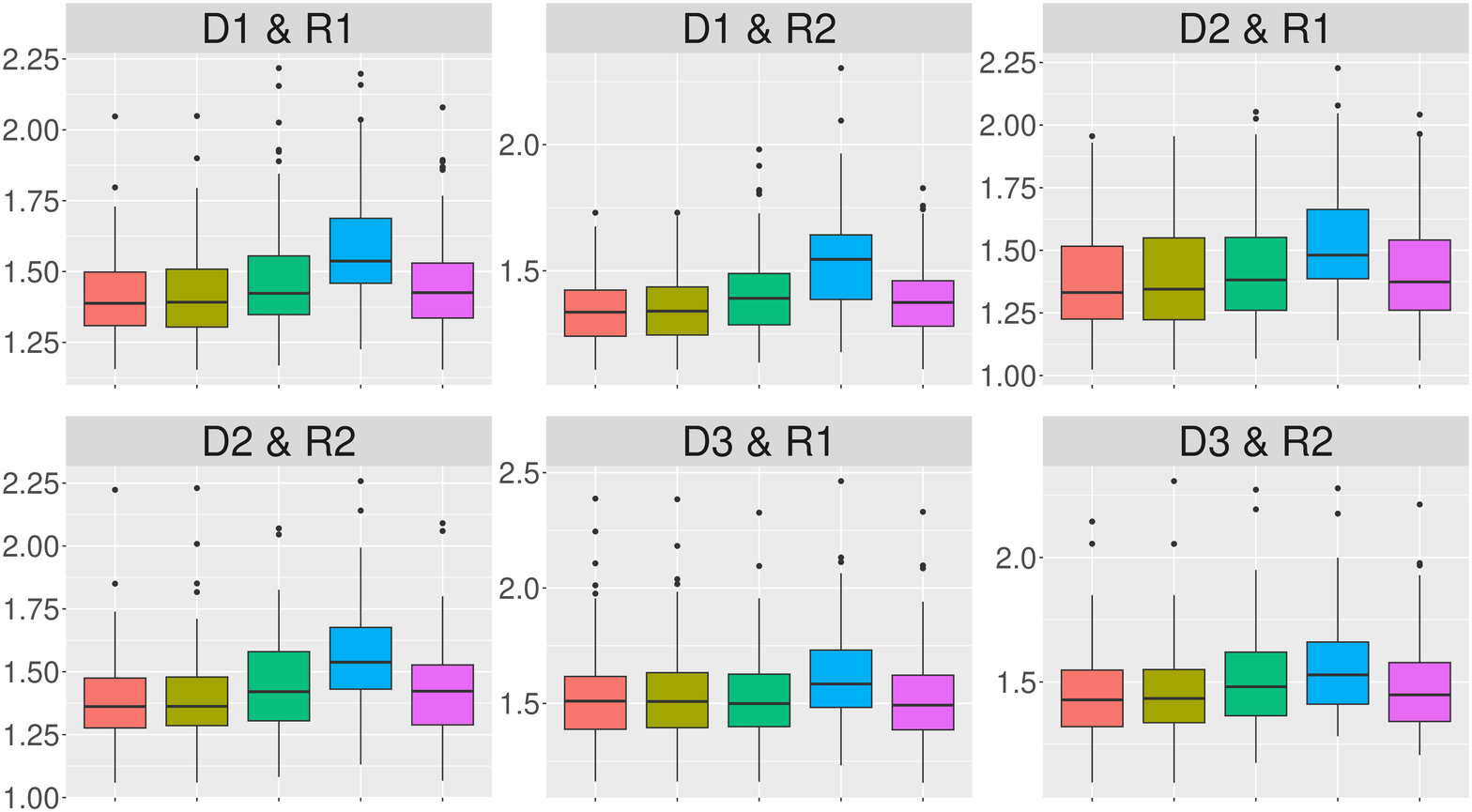}
\vspace{-2.5mm}
\subcaption{$s=5, \rho_x=0.1, \rho_y=0.1$}
%\label{Poi_SPCR}
\end{minipage}
\begin{minipage}[b]{0.33\linewidth}
\centering
\includegraphics[width=7cm,height=4.6cm]{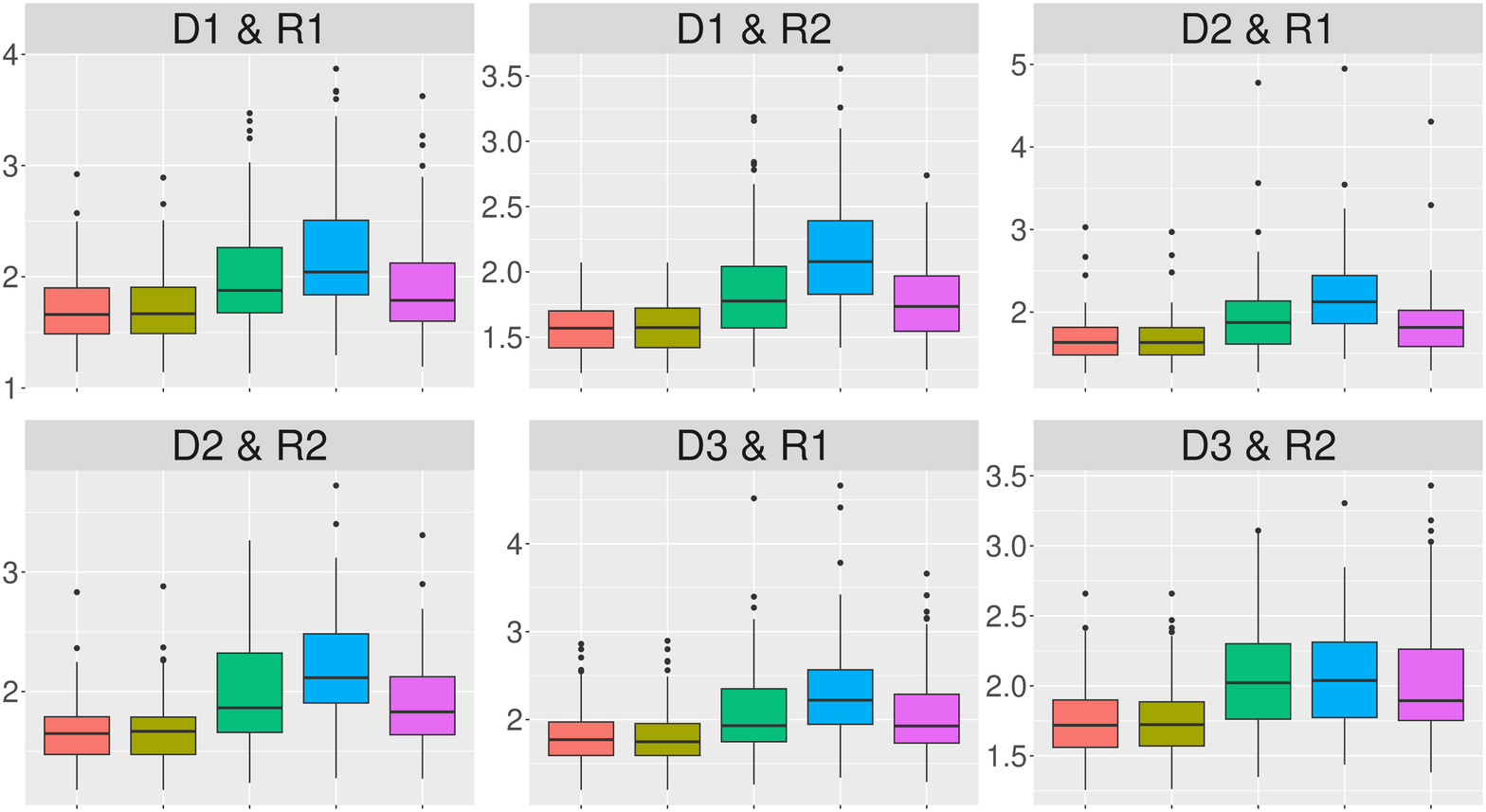} %\\ [-0.5cm]
\vspace{-2.5mm}
\subcaption{$s=50, \rho_x=0.1, \rho_y=0.1$}
%\label{Poi_PCA}
\end{minipage}
\begin{minipage}[b]{0.33\linewidth}
\centering
\includegraphics[width=7cm,height=4.6cm]{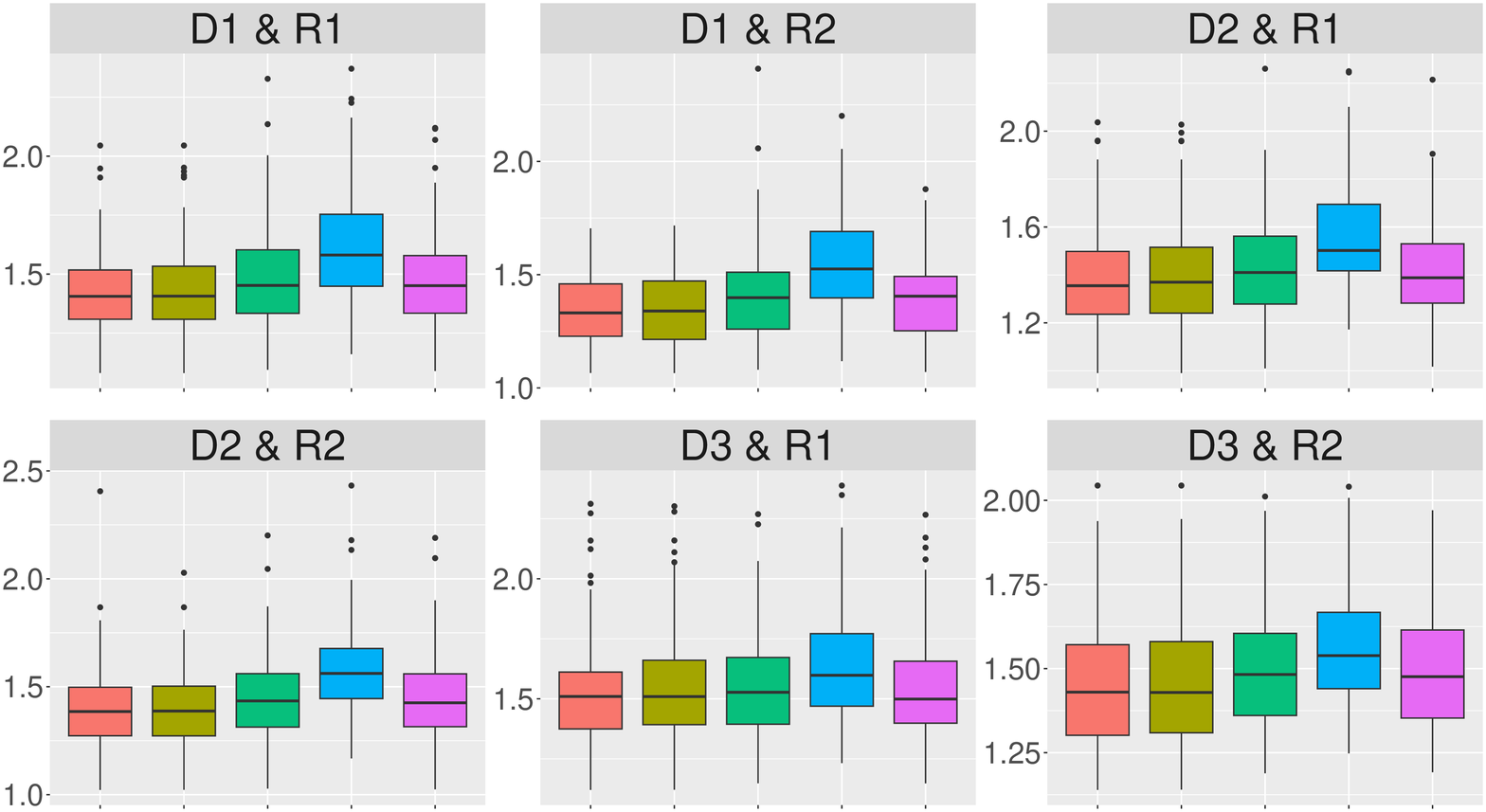} %\\ [-0.5cm] 
\vspace{-2.5mm}
\subcaption{$s=5, \rho_x=0.1, \rho_y=0.9$}
%\label{Poi_PLS-GLR}
\end{minipage}
\begin{minipage}[b]{0.33\linewidth}
\centering
\includegraphics[width=7cm,height=4.6cm]{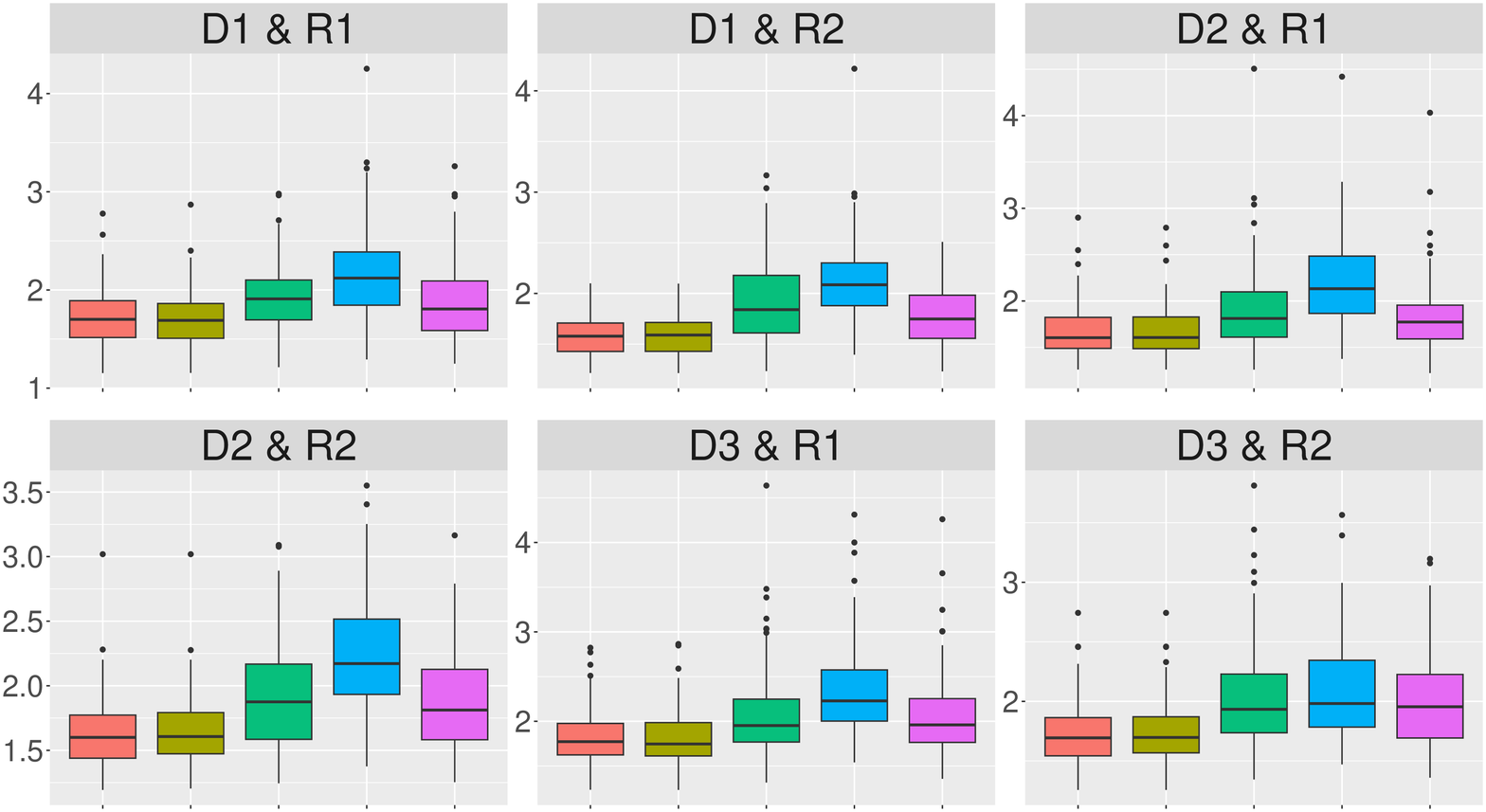}
\vspace{-2.5mm}
\subcaption{$s=50, \rho_x=0.1, \rho_y=0.9$}
%\label{Poi_SPCA}
\end{minipage}
\begin{minipage}[b]{0.33\linewidth}
\centering
\includegraphics[width=7cm,height=4.6cm]{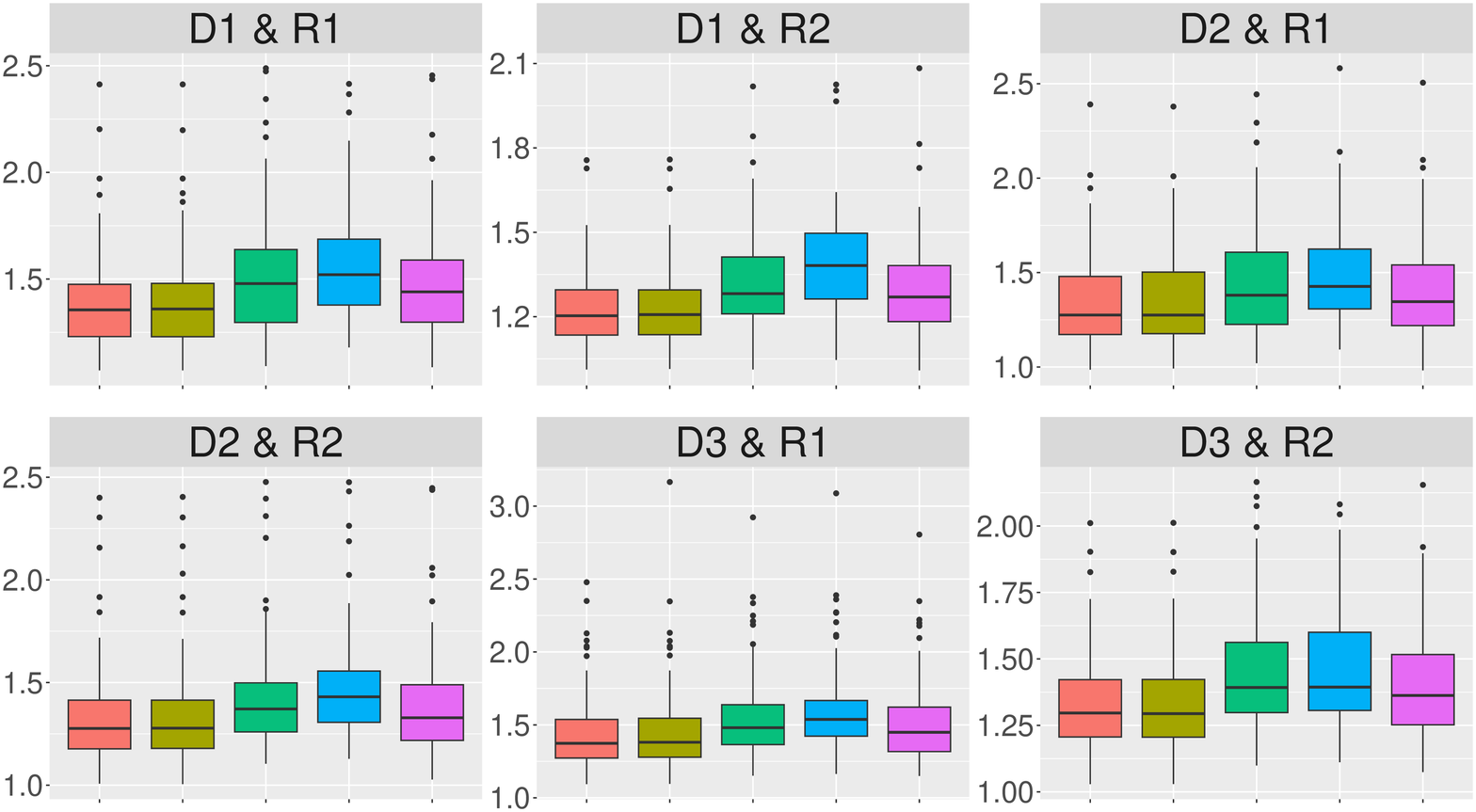}
\vspace{-2.5mm}
\subcaption{$s=5, \rho_x=0.9, \rho_y=0.1$}
%\label{Poi_DSPCA}
\end{minipage}
\begin{minipage}[b]{0.33\linewidth}
\centering
\includegraphics[width=7cm,height=4.6cm]{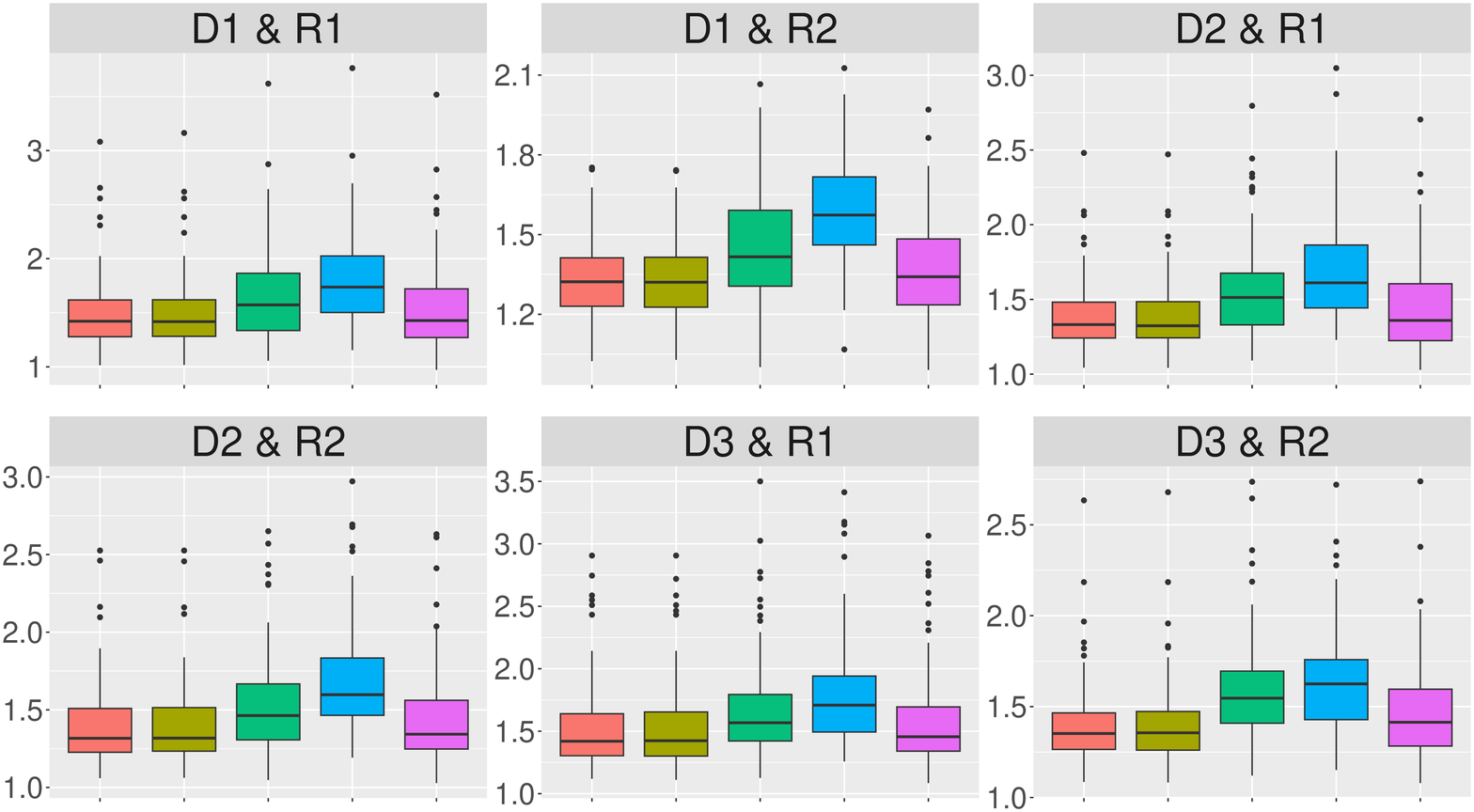}
\vspace{-2.5mm}
\subcaption{$s=50, \rho_x=0.9, \rho_y=0.1$}
%\label{Poi_FPS}
\end{minipage}
\begin{minipage}[b]{0.33\linewidth}
\centering
\includegraphics[width=7cm,height=4.6cm]{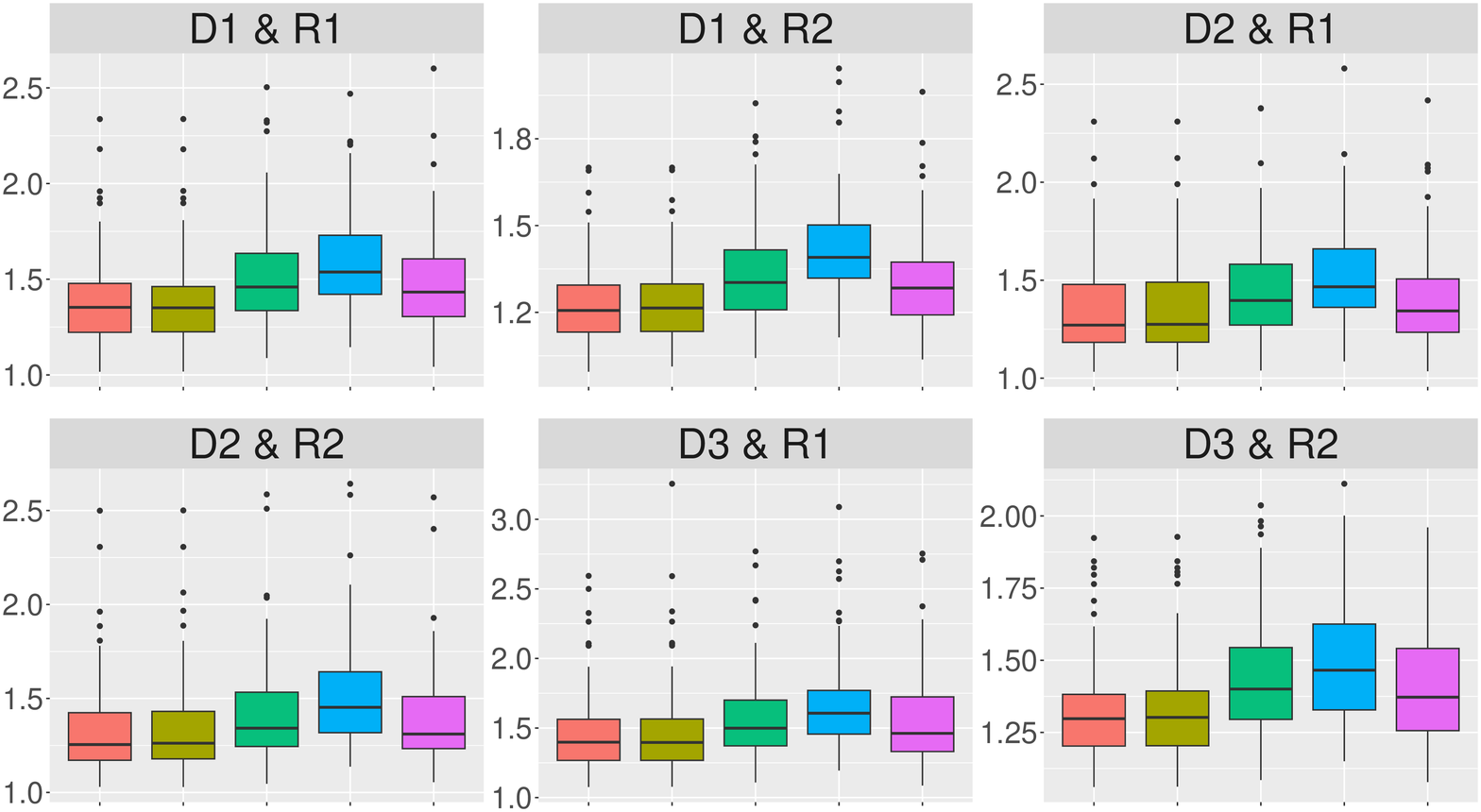}
\vspace{-2.5mm}
\subcaption{$s=5, \rho_x=0.9, \rho_y=0.9$}
%\label{Poi_SPC}
\end{minipage}
\begin{minipage}[b]{0.33\linewidth}
\centering
\includegraphics[width=7cm,height=4.6cm]{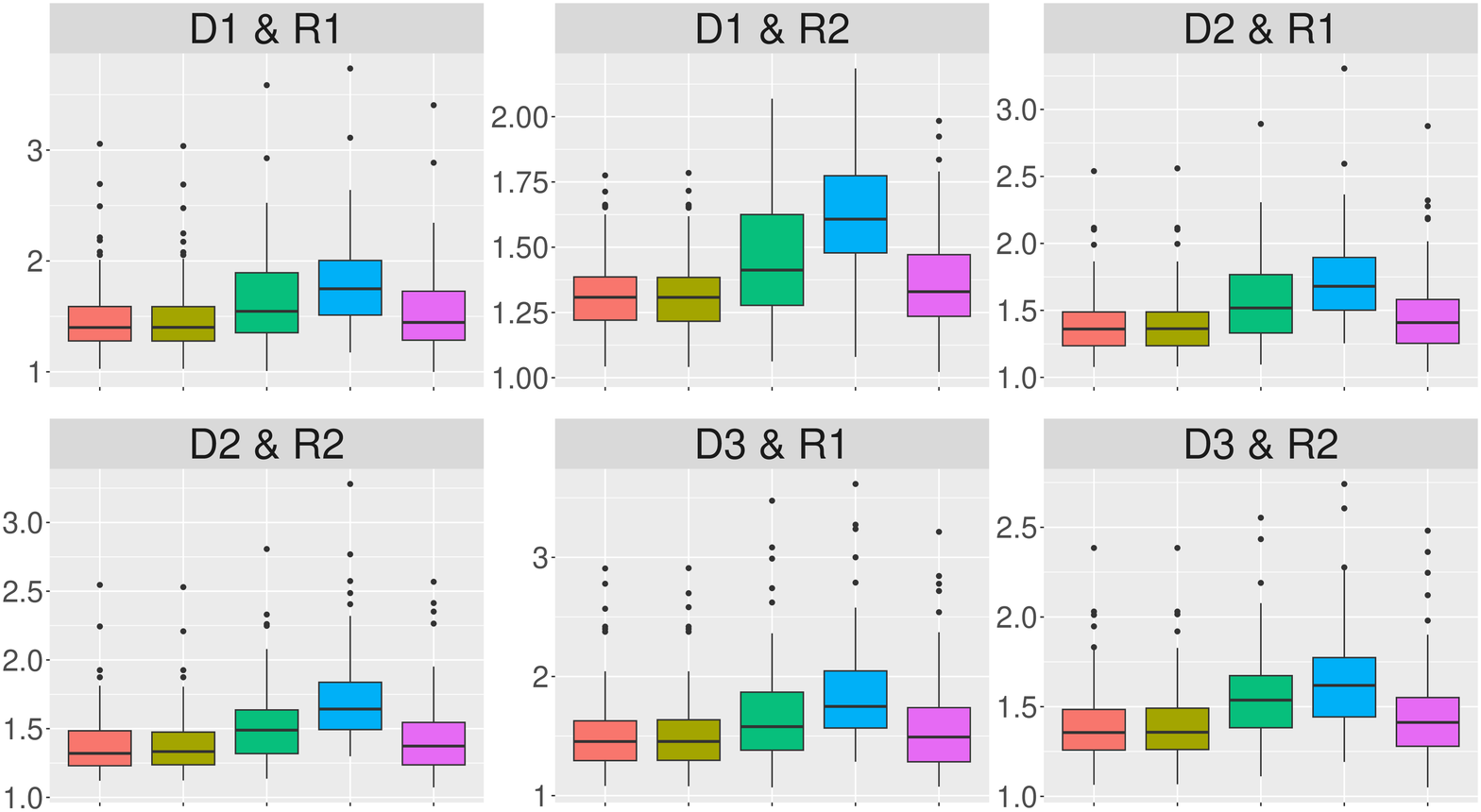}
\vspace{-2.5mm}
\subcaption{$s=50, \rho_x=0.9, \rho_y=0.9$}
%\label{Poi_SPC}
\end{minipage}
\caption{Boxplots of MSE for $n=75$ when the case $M=3$.
%The red, dark yellow, green, blue, and magenta boxplots shows, respectively, MR, UR, lasso, mglasso, and mlasso. 
The red boxplot indicates MR, dark yellow UR, green lasso, blue mglasso, and magenta mlasso. 
%The bold values correspond to the smallest means among SPCRsvd-LADMM, SPCRsvd-ADMM, and SPCR.  
%Scatter plots of principal and PLS components for the doctor visits data. 
%(a) True structure of the principal components. 
%(b) PCA. 
%(c) PLS-GLR. 
%(d) SPCR-glm. }
}
\label{fig:SimuM3n75}
\end{figure}
\end{landscape}

%%%%%%%%%%%%%%%%%%%%%%%%%%%%%%%%%%%%%%%%%%%%%%%%%%%%%%%
%%%%%%%%%%%%%%%%%%%%%%%%%%%%%%%%%%%%%%%%%%%%%%%%%%%%%%%
\begin{landscape}
\begin{figure}[htbp]
\begin{minipage}[b]{0.33\linewidth}
\centering
\includegraphics[width=7cm,height=4.6cm]{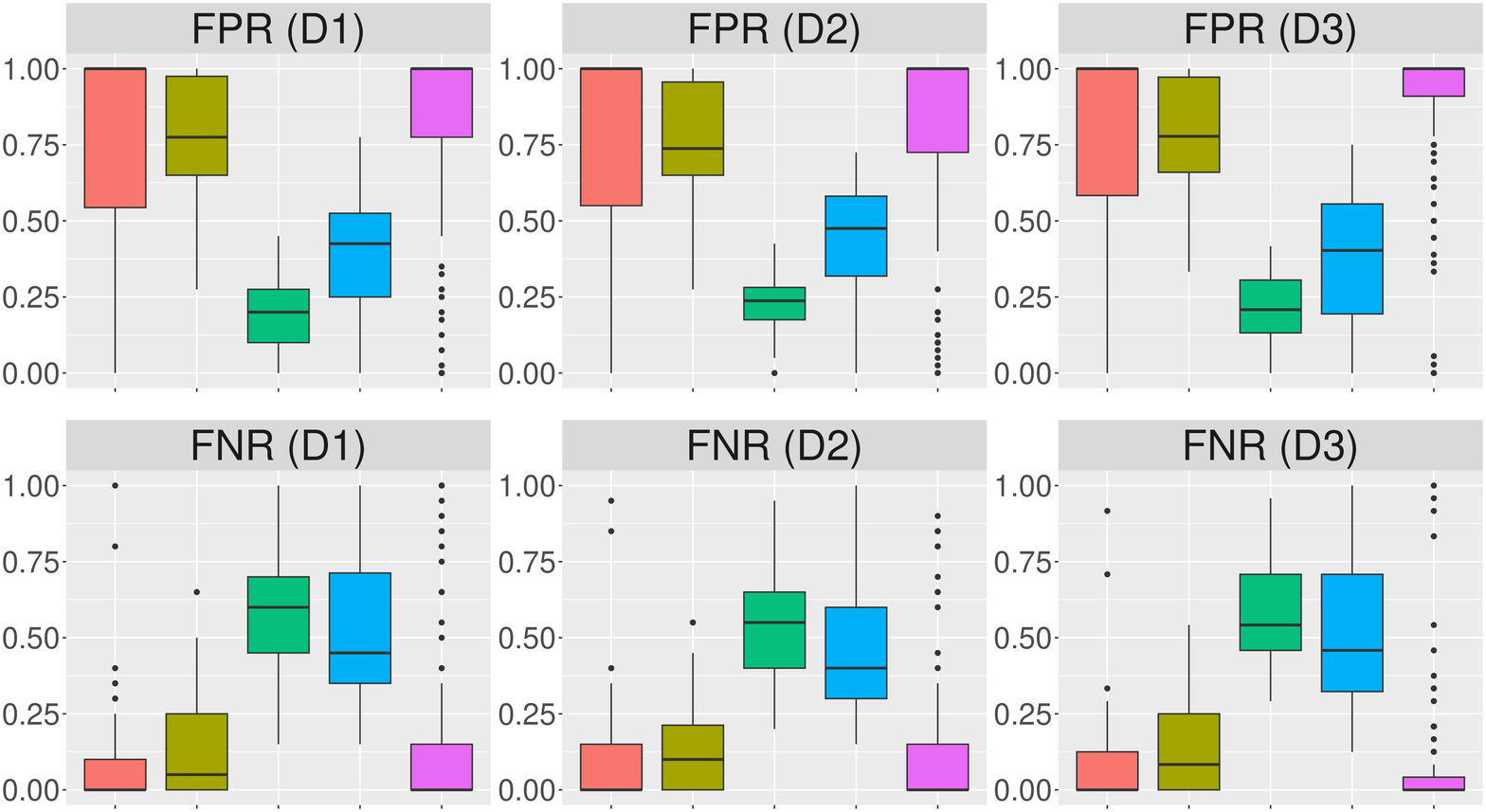}
\vspace{-2.5mm}
\subcaption{$s=5, \rho_x=0.1, \rho_y=0.1$}
%\label{Poi_SPCR}
\end{minipage}
\begin{minipage}[b]{0.33\linewidth}
\centering
\includegraphics[width=7cm,height=4.6cm]{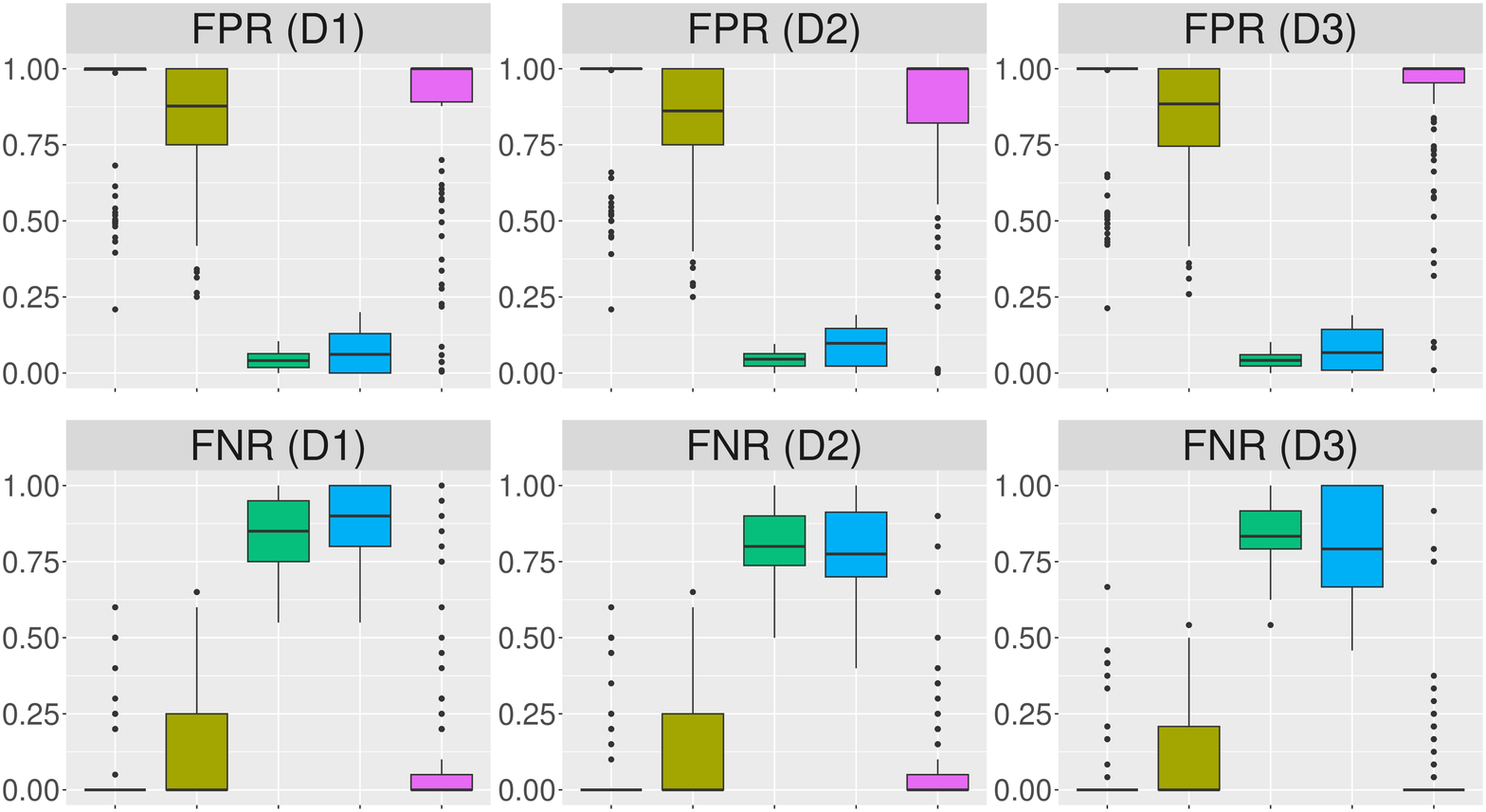} %\\ [-0.5cm]
\vspace{-2.5mm}
\subcaption{$s=50, \rho_x=0.1, \rho_y=0.1$}
%\label{Poi_PCA}
\end{minipage}
\begin{minipage}[b]{0.33\linewidth}
\centering
\includegraphics[width=7cm,height=4.6cm]{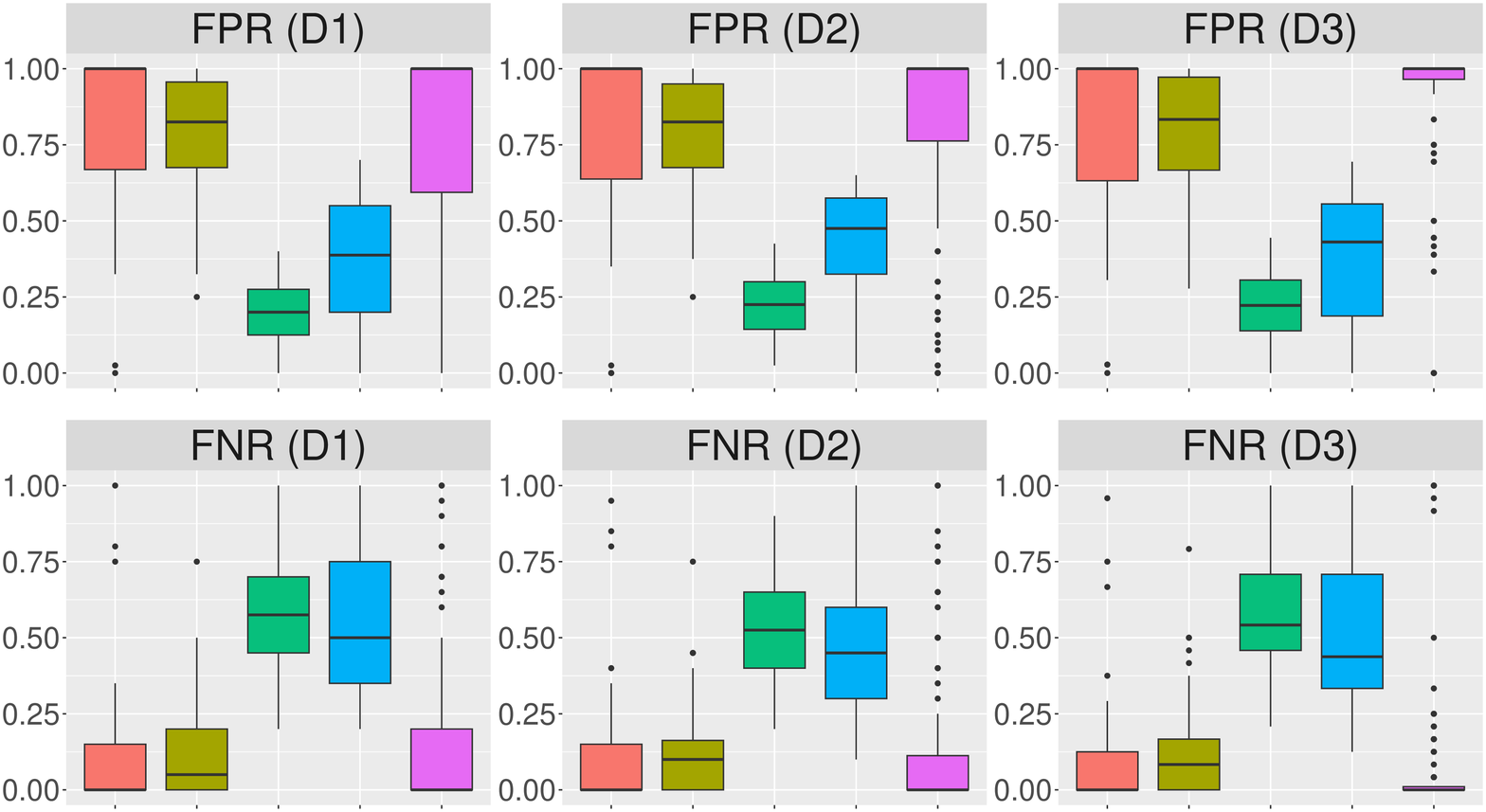} %\\ [-0.5cm] 
\vspace{-2.5mm}
\subcaption{$s=5, \rho_x=0.1, \rho_y=0.9$}
%\label{Poi_PLS-GLR}
\end{minipage}
\begin{minipage}[b]{0.33\linewidth}
\centering
\includegraphics[width=7cm,height=4.6cm]{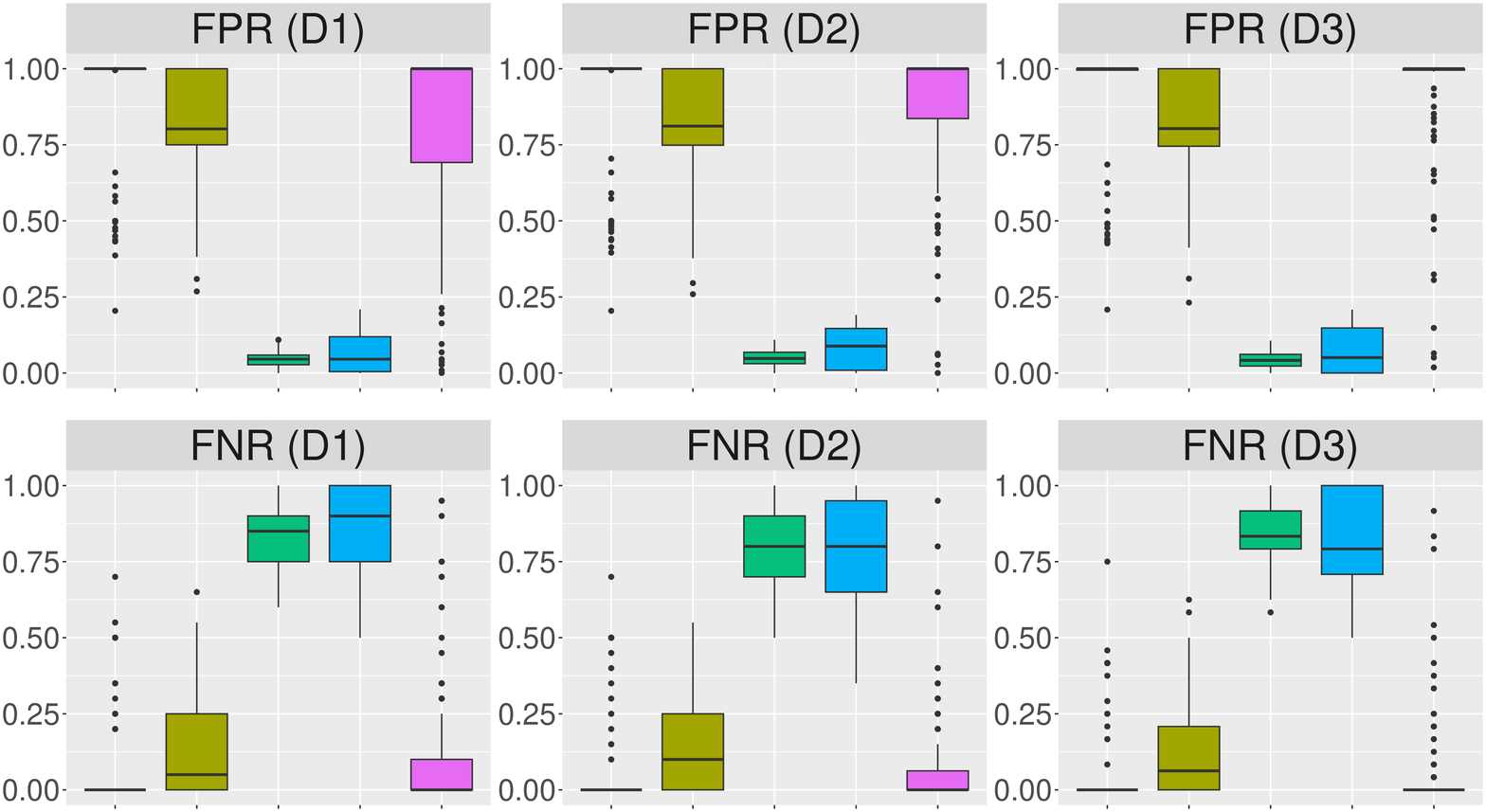}
\vspace{-2.5mm}
\subcaption{$s=50, \rho_x=0.1, \rho_y=0.9$}
%\label{Poi_SPCA}
\end{minipage}
\begin{minipage}[b]{0.33\linewidth}
\centering
\includegraphics[width=7cm,height=4.6cm]{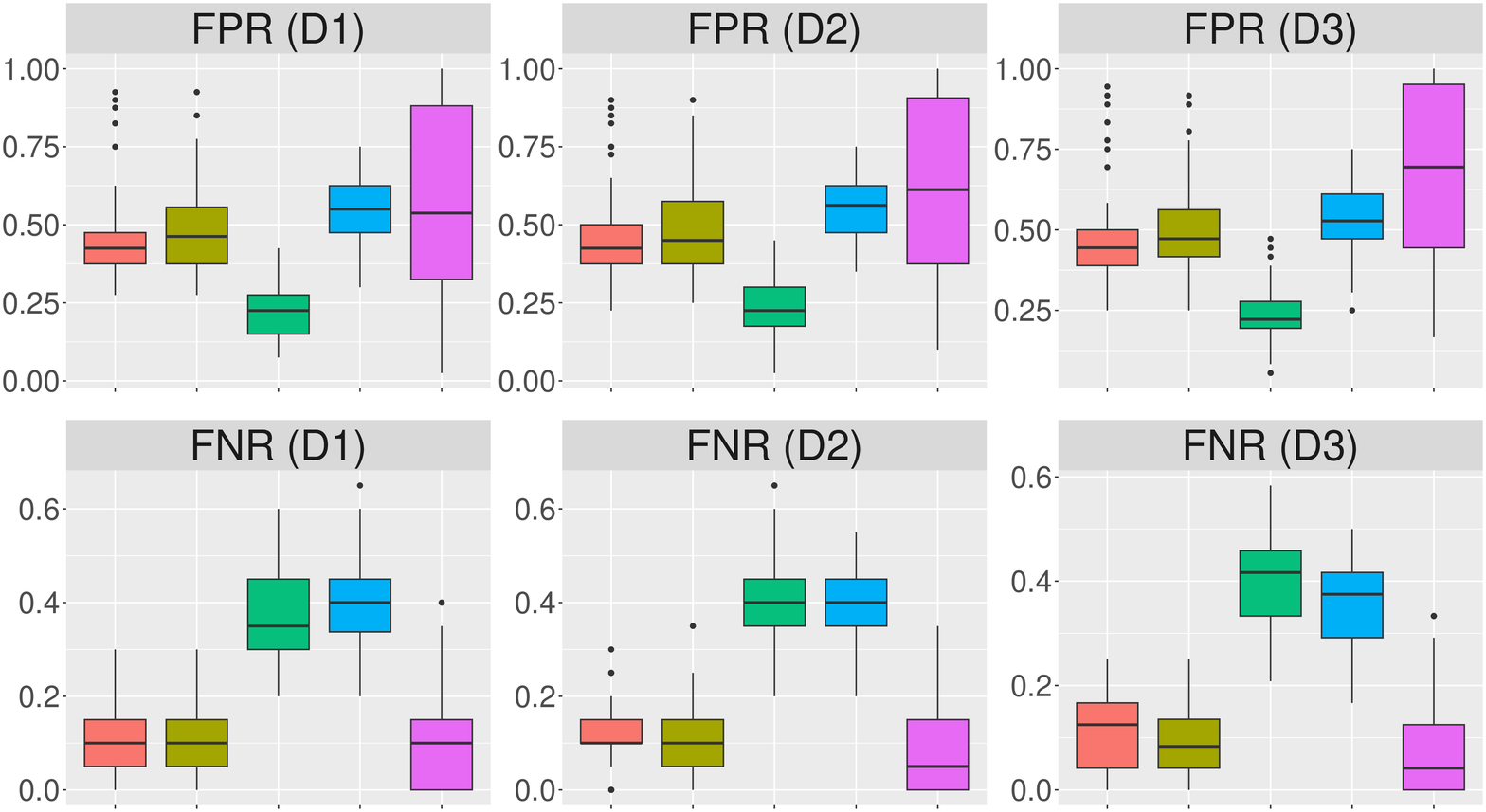}
\vspace{-2.5mm}
\subcaption{$s=5, \rho_x=0.9, \rho_y=0.1$}
%\label{Poi_DSPCA}
\end{minipage}
\begin{minipage}[b]{0.33\linewidth}
\centering
\includegraphics[width=7cm,height=4.6cm]{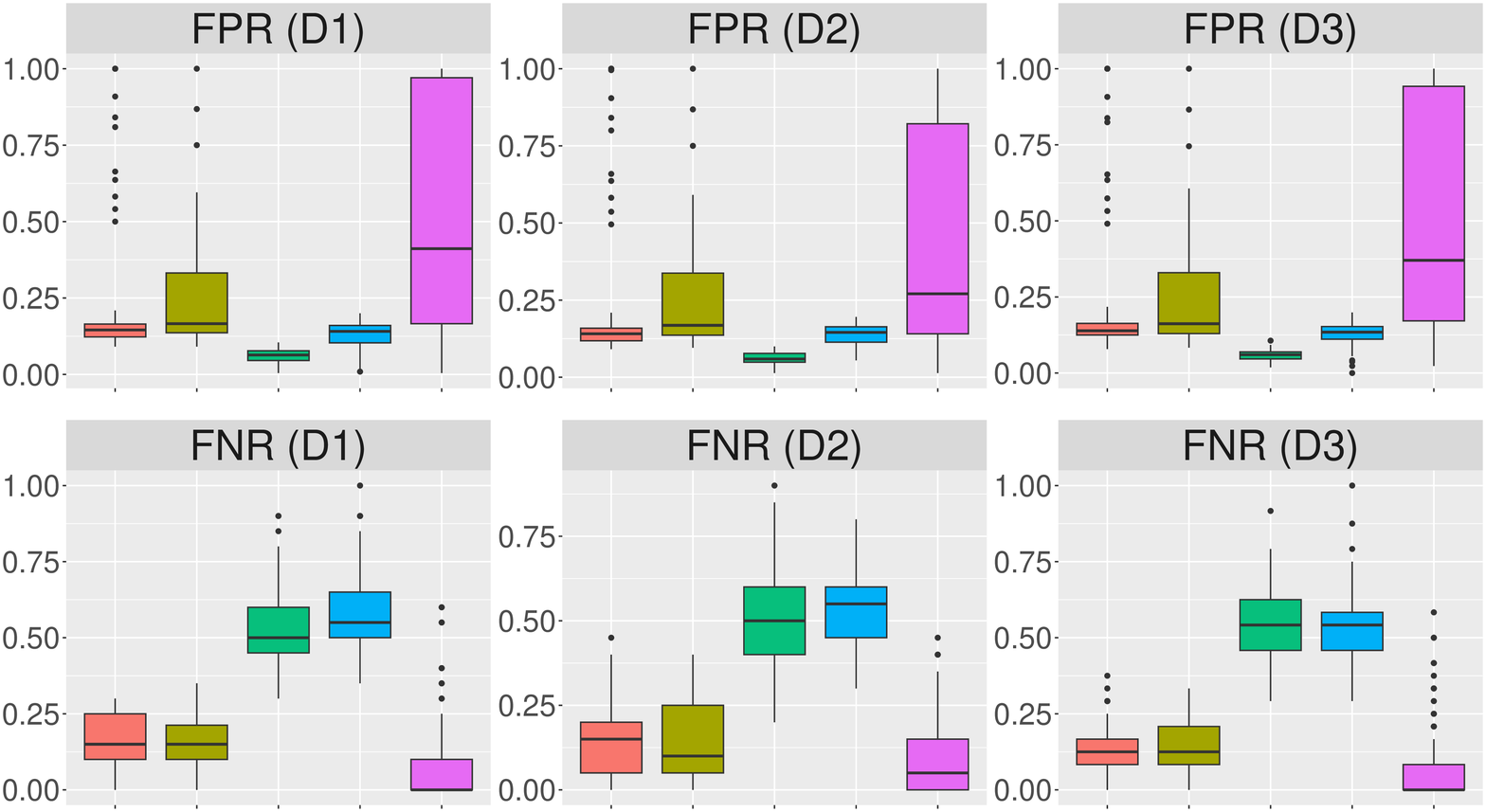}
\vspace{-2.5mm}
\subcaption{$s=50, \rho_x=0.9, \rho_y=0.1$}
%\label{Poi_FPS}
\end{minipage}
\begin{minipage}[b]{0.33\linewidth}
\centering
\includegraphics[width=7cm,height=4.6cm]{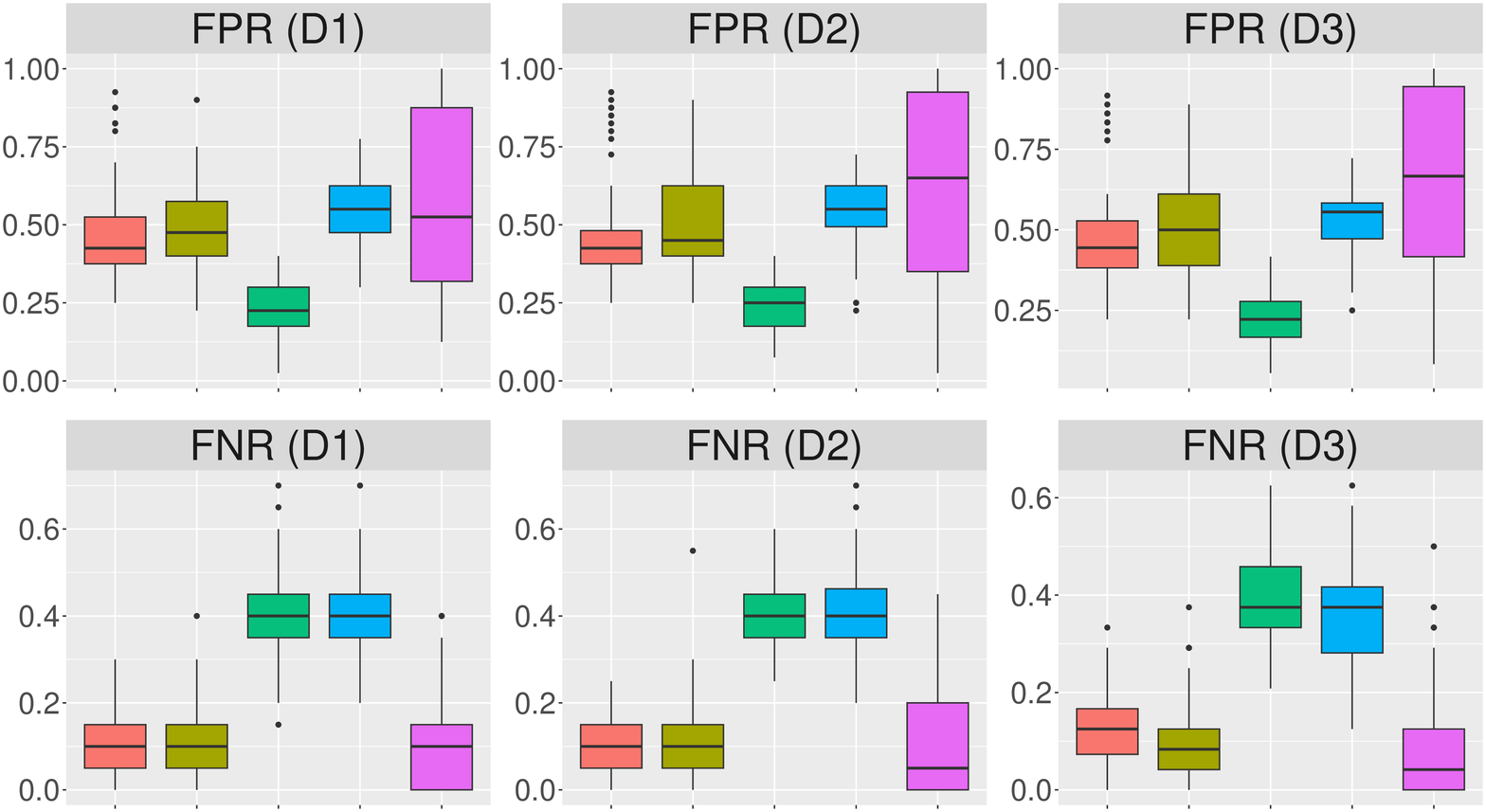}
\vspace{-2.5mm}
\subcaption{$s=5, \rho_x=0.9, \rho_y=0.9$}
%\label{Poi_SPC}
\end{minipage}
\begin{minipage}[b]{0.33\linewidth}
\centering
\includegraphics[width=7cm,height=4.6cm]{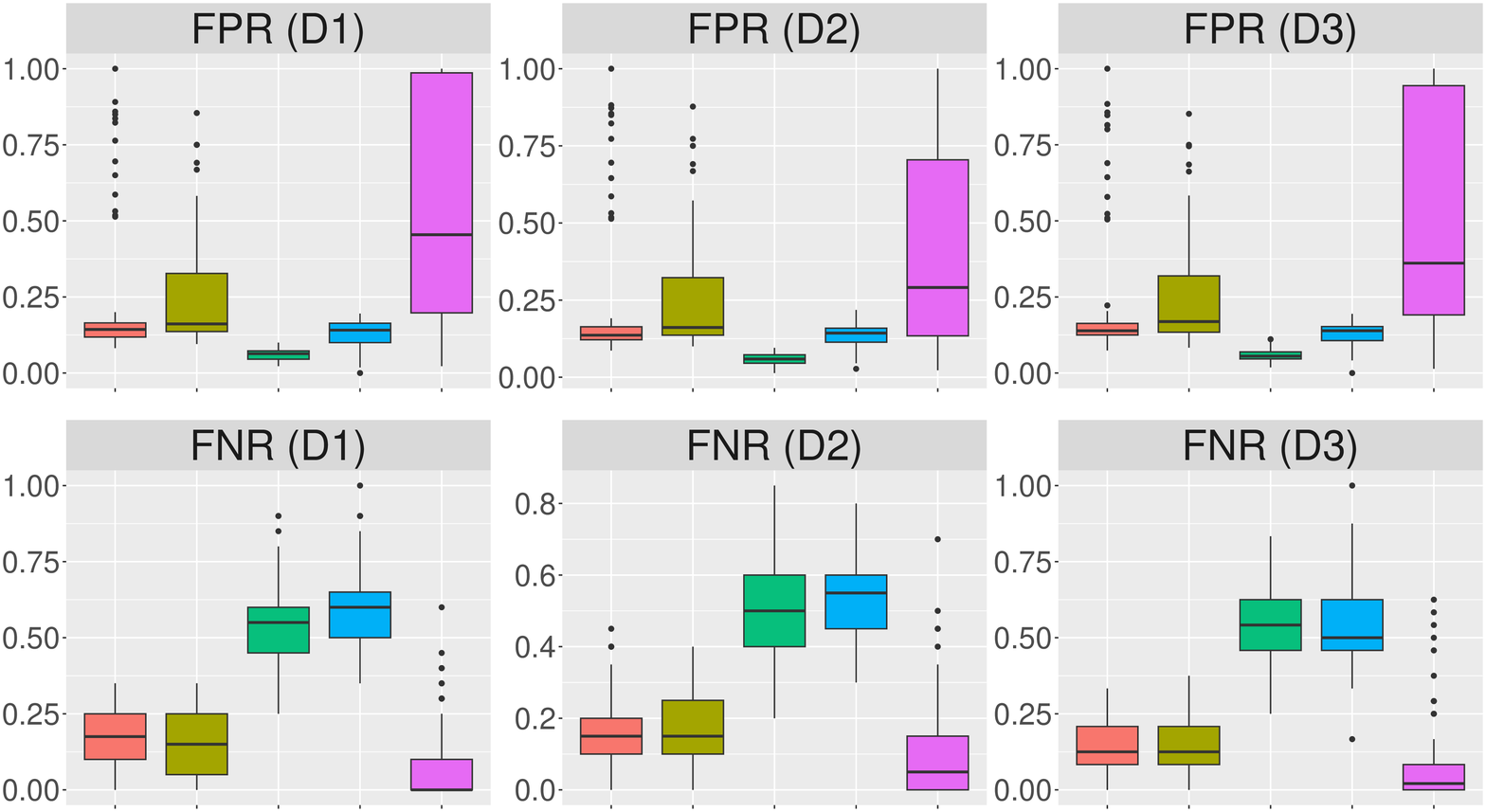}
\vspace{-2.5mm}
\subcaption{$s=50, \rho_x=0.9, \rho_y=0.9$}
%\label{Poi_SPC}
\end{minipage}
\caption{Boxplots of FPR and FNR for $n=15$ when the case $M=3$.
%The red, dark yellow, green, blue, and magenta boxplots shows, respectively, MR, UR, lasso, mglasso, and mlasso. 
The red boxplot indicates MR, dark yellow UR, green lasso, blue mglasso, and magenta mlasso. 
%The bold values correspond to the smallest means among SPCRsvd-LADMM, SPCRsvd-ADMM, and SPCR.  
%Scatter plots of principal and PLS components for the doctor visits data. 
%(a) True structure of the principal components. 
%(b) PCA. 
%(c) PLS-GLR. 
%(d) SPCR-glm. }
}
\label{fig:SimuFPRFNR_M3n15_FPRFNR}
\end{figure}
\end{landscape}

%%%%%%%%%%%%%%%%%%%%%%%%%%%%%%%%%%%%%%%%%%%%%%%%%%%%%%%
%%%%%%%%%%%%%%%%%%%%%%%%%%%%%%%%%%%%%%%%%%%%%%%%%%%%%%%
\begin{landscape}
\begin{figure}[htbp]
\begin{minipage}[b]{0.33\linewidth}
\centering
\includegraphics[width=7cm,height=4.6cm]{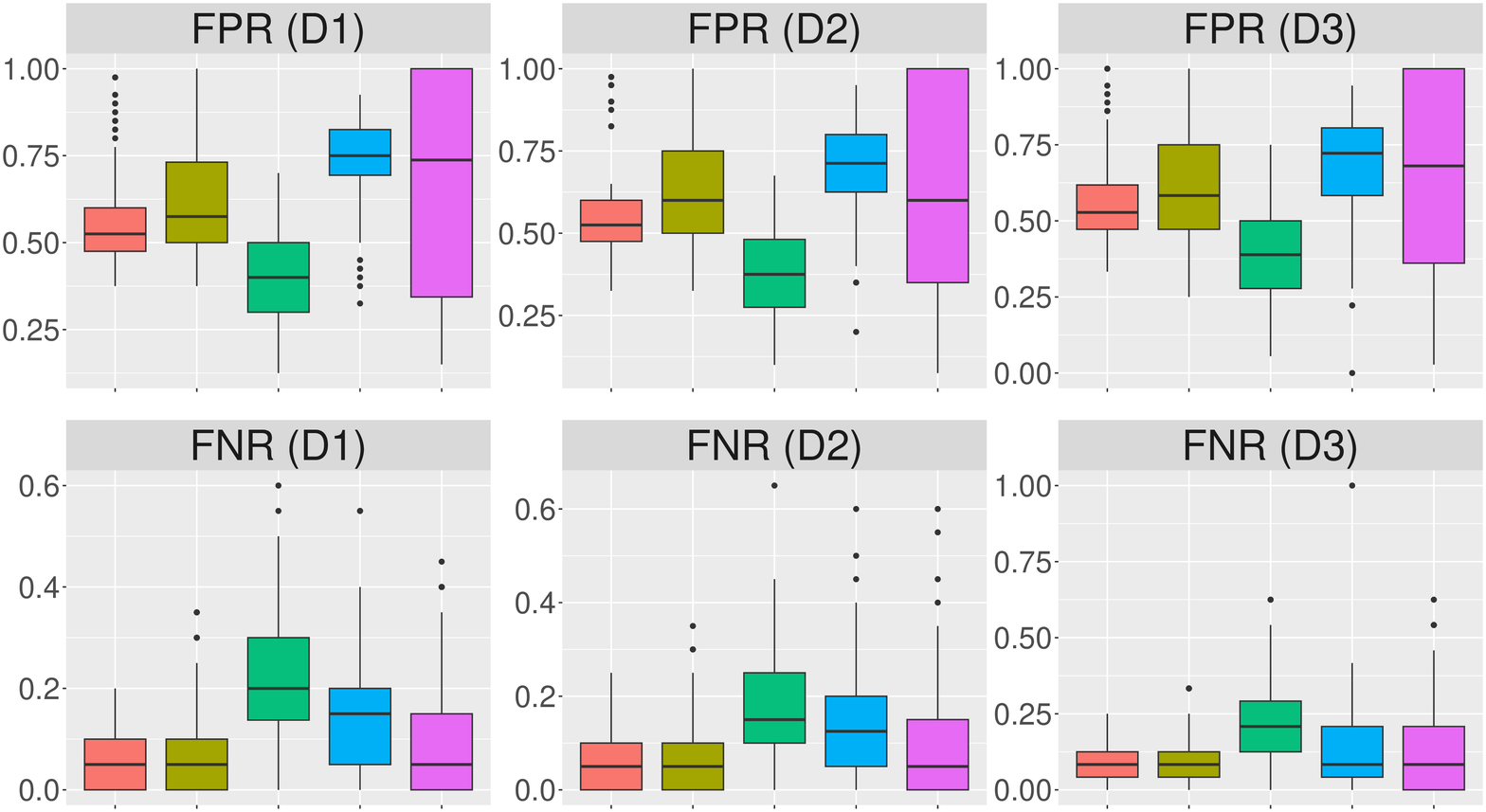}
\vspace{-2.5mm}
\subcaption{$s=5, \rho_x=0.1, \rho_y=0.1$}
%\label{Poi_SPCR}
\end{minipage}
\begin{minipage}[b]{0.33\linewidth}
\centering
\includegraphics[width=7cm,height=4.6cm]{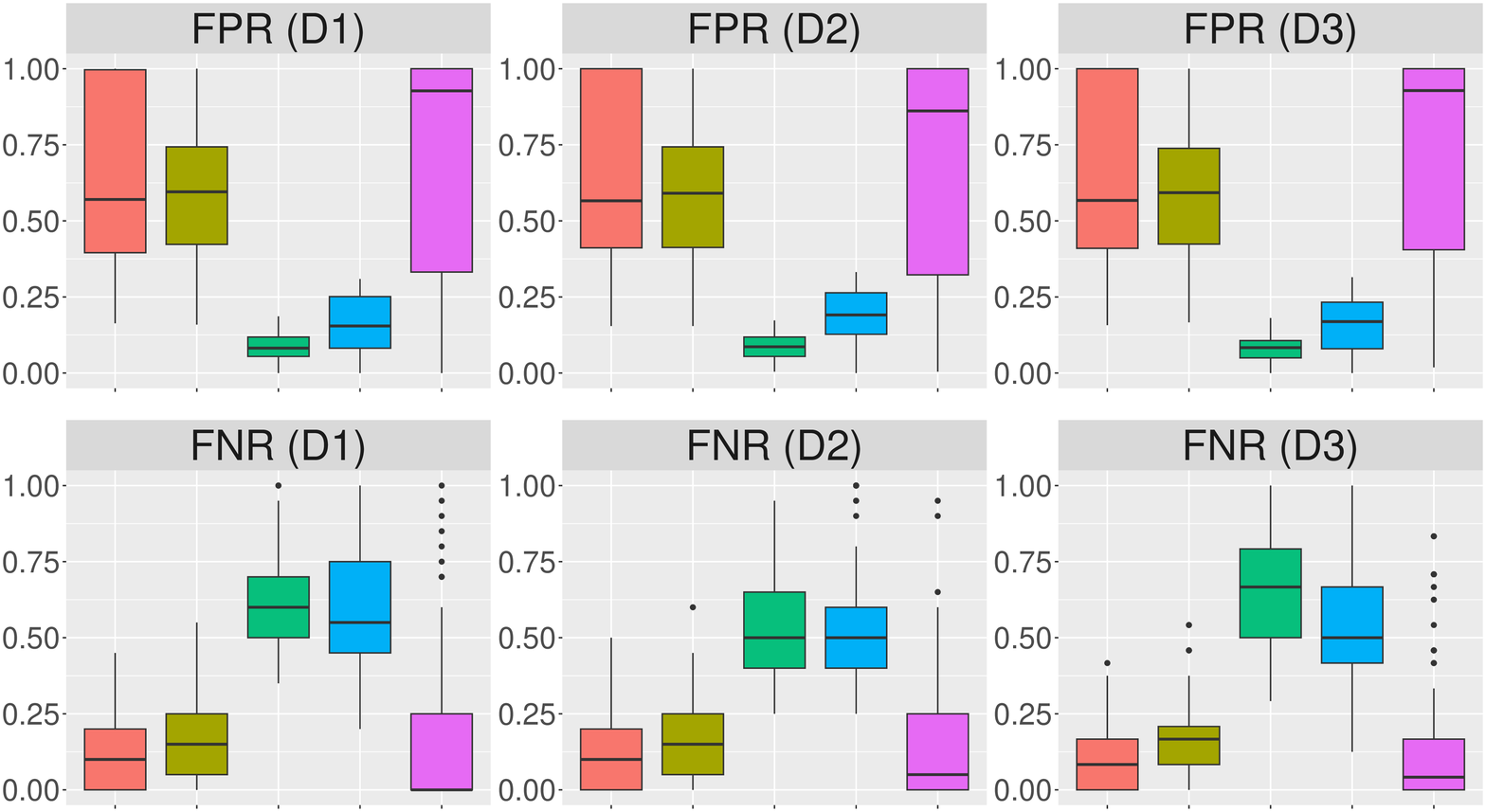} %\\ [-0.5cm]
\vspace{-2.5mm}
\subcaption{$s=50, \rho_x=0.1, \rho_y=0.1$}
%\label{Poi_PCA}
\end{minipage}
\begin{minipage}[b]{0.33\linewidth}
\centering
\includegraphics[width=7cm,height=4.6cm]{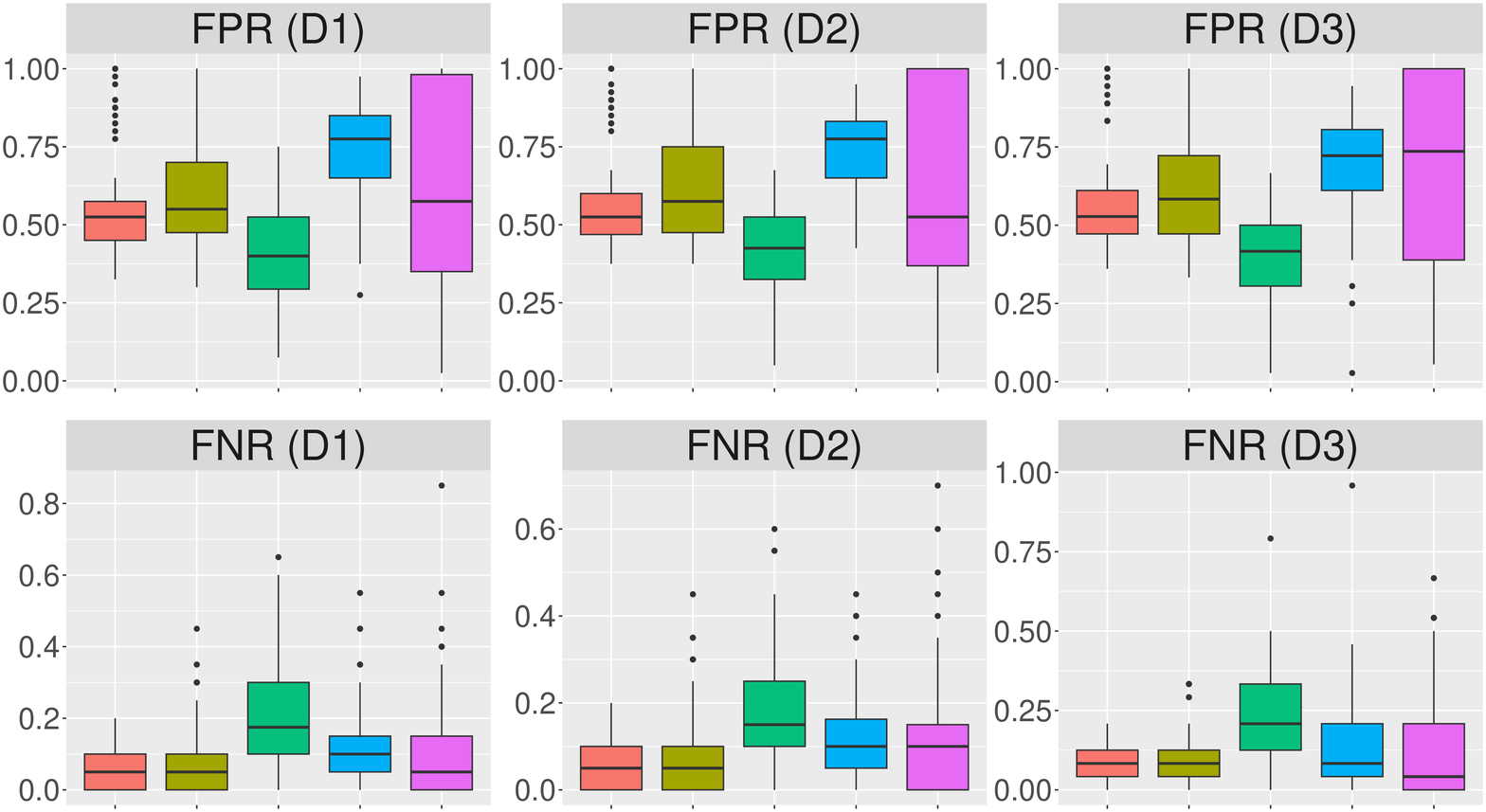} %\\ [-0.5cm] 
\vspace{-2.5mm}
\subcaption{$s=5, \rho_x=0.1, \rho_y=0.9$}
%\label{Poi_PLS-GLR}
\end{minipage}
\begin{minipage}[b]{0.33\linewidth}
\centering
\includegraphics[width=7cm,height=4.6cm]{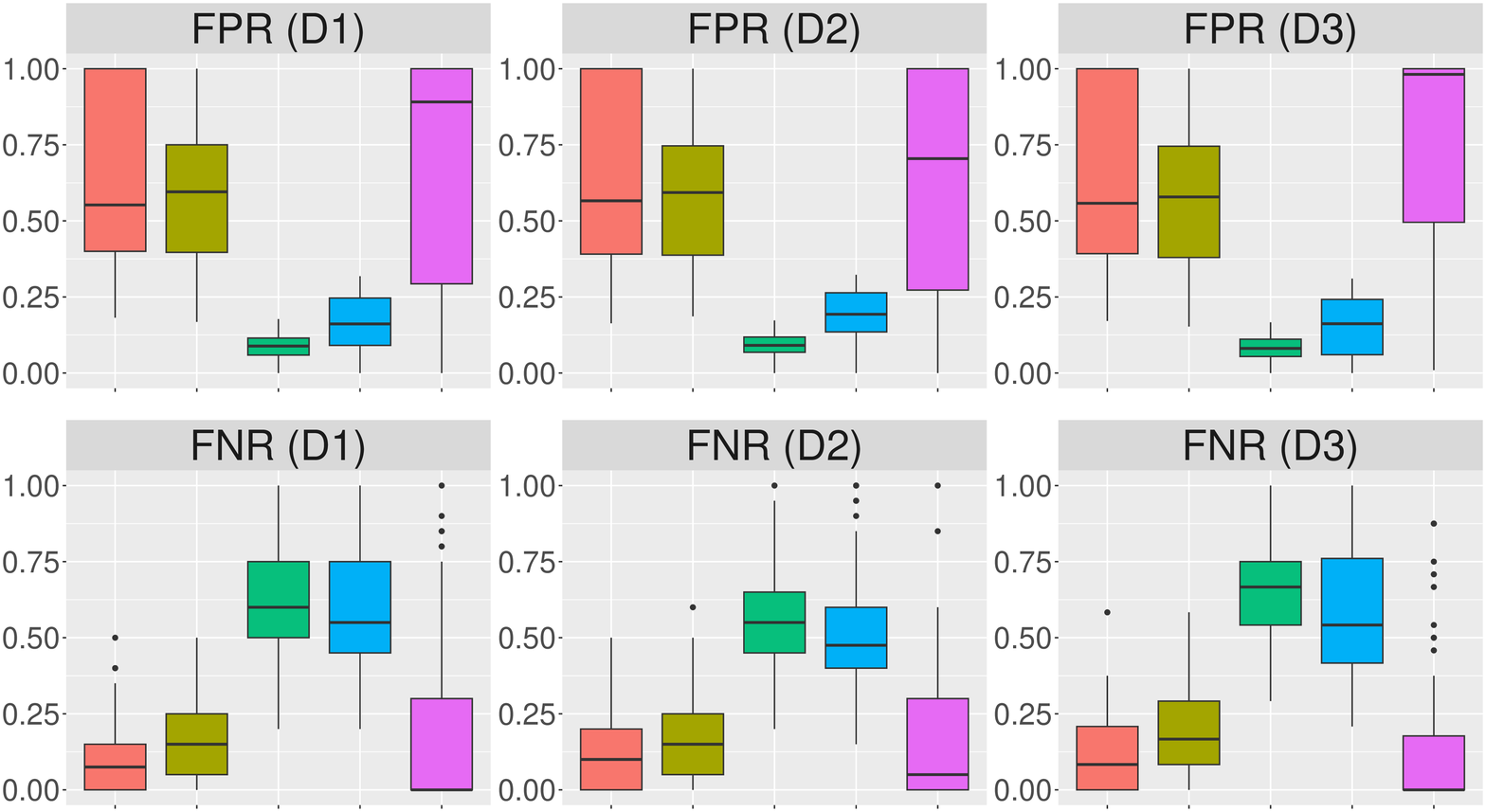}
\vspace{-2.5mm}
\subcaption{$s=50, \rho_x=0.1, \rho_y=0.9$}
%\label{Poi_SPCA}
\end{minipage}
\begin{minipage}[b]{0.33\linewidth}
\centering
\includegraphics[width=7cm,height=4.6cm]{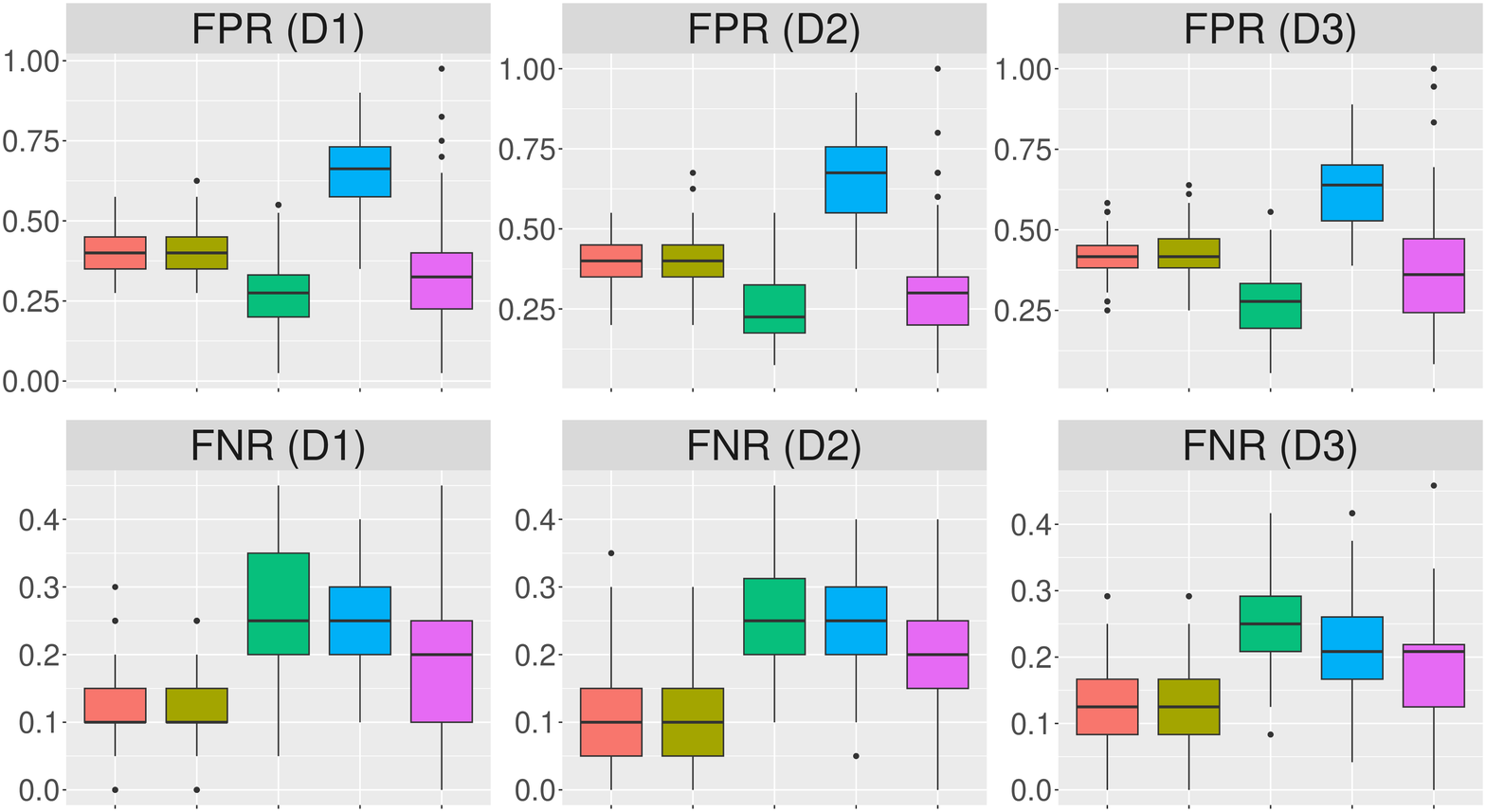}
\vspace{-2.5mm}
\subcaption{$s=5, \rho_x=0.9, \rho_y=0.1$}
%\label{Poi_DSPCA}
\end{minipage}
\begin{minipage}[b]{0.33\linewidth}
\centering
\includegraphics[width=7cm,height=4.6cm]{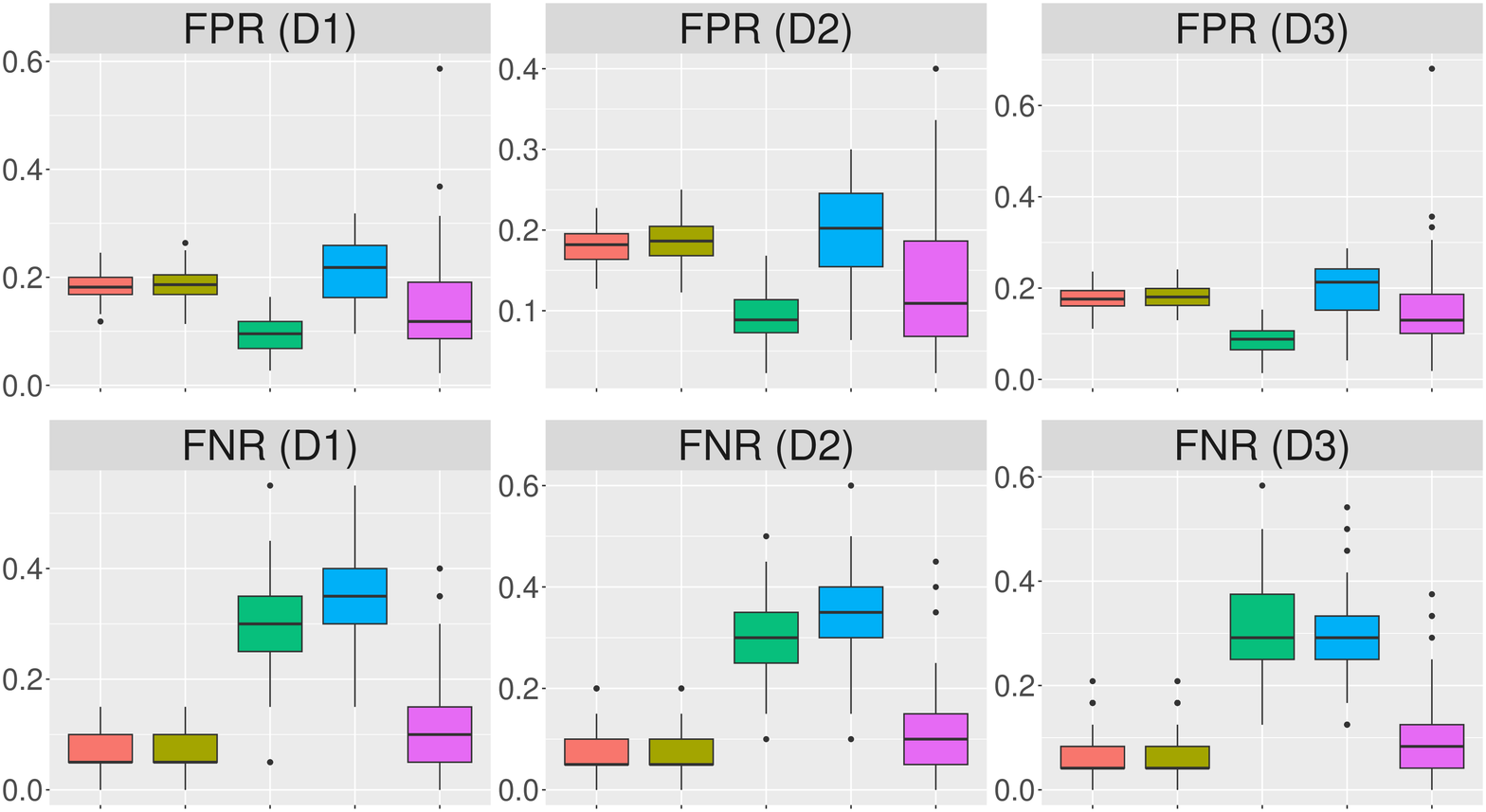}
\vspace{-2.5mm}
\subcaption{$s=50, \rho_x=0.9, \rho_y=0.1$}
%\label{Poi_FPS}
\end{minipage}
\begin{minipage}[b]{0.33\linewidth}
\centering
\includegraphics[width=7cm,height=4.6cm]{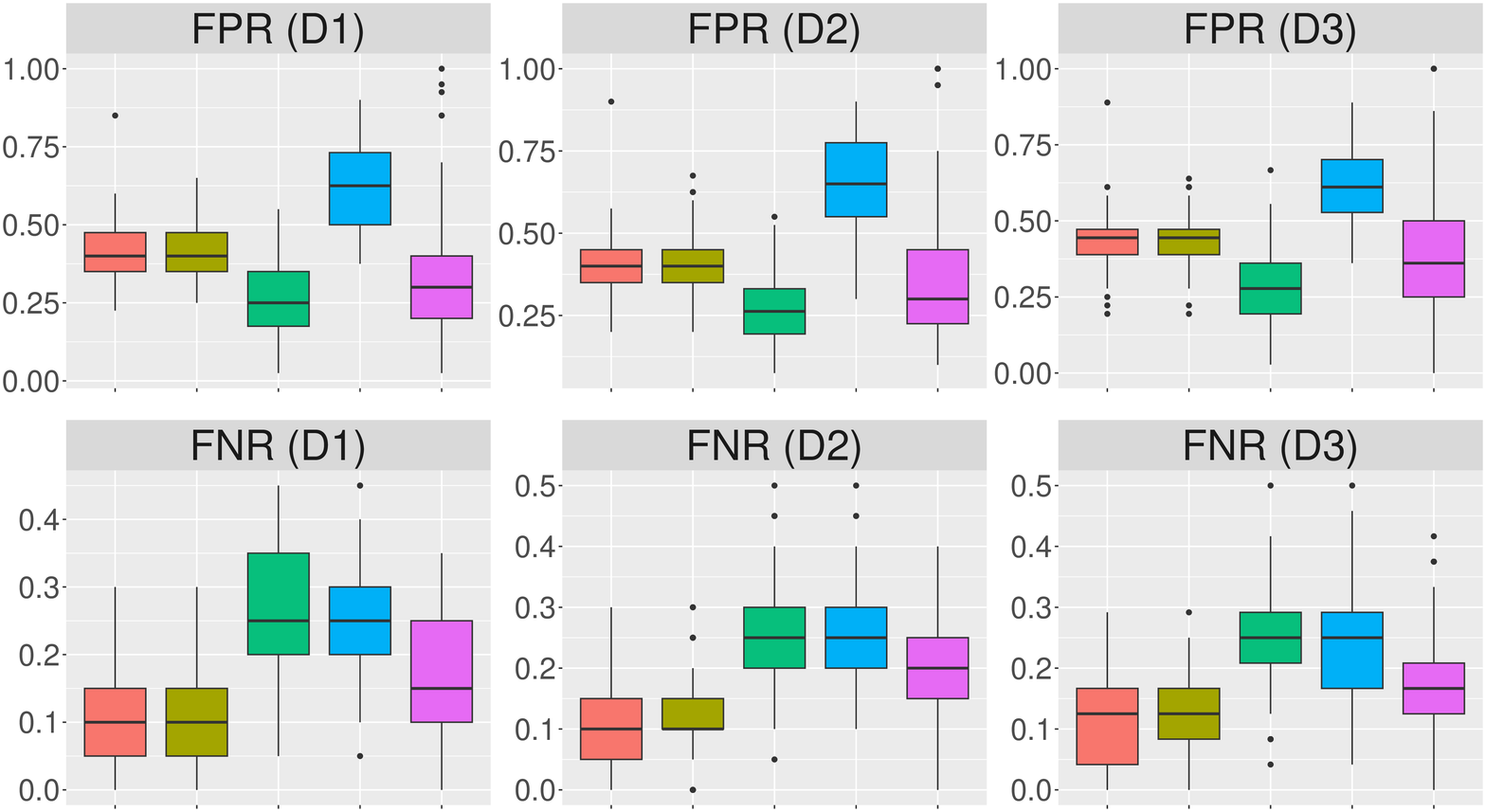}
\vspace{-2.5mm}
\subcaption{$s=5, \rho_x=0.9, \rho_y=0.9$}
%\label{Poi_SPC}
\end{minipage}
\begin{minipage}[b]{0.33\linewidth}
\centering
\includegraphics[width=7cm,height=4.6cm]{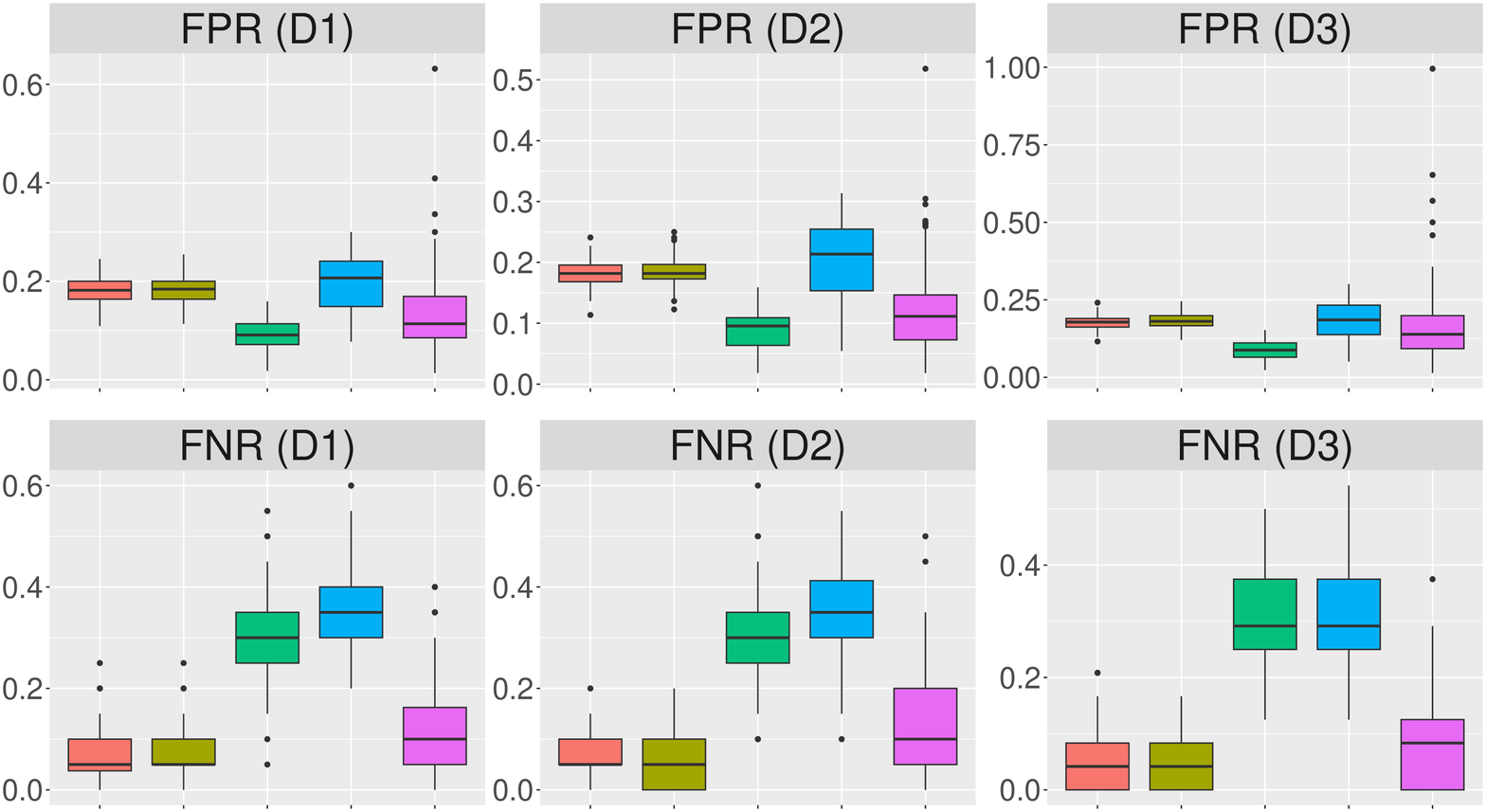}
\vspace{-2.5mm}
\subcaption{$s=50, \rho_x=0.9, \rho_y=0.9$}
%\label{Poi_SPC}
\end{minipage}
\caption{Boxplots of FPR and FNR for $n=25$ when the case $M=3$.
%The red, dark yellow, green, blue, and magenta boxplots shows, respectively, MR, UR, lasso, mglasso, and mlasso. 
The red boxplot indicates MR, dark yellow UR, green lasso, blue mglasso, and magenta mlasso. 
%The bold values correspond to the smallest means among SPCRsvd-LADMM, SPCRsvd-ADMM, and SPCR.  
%Scatter plots of principal and PLS components for the doctor visits data. 
%(a) True structure of the principal components. 
%(b) PCA. 
%(c) PLS-GLR. 
%(d) SPCR-glm. }
}
\label{fig:SimuFPRFNR_M3n25_FPRFNR}
\end{figure}
\end{landscape}

%%%%%%%%%%%%%%%%%%%%%%%%%%%%%%%%%%%%%%%%%%%%%%%%%%%%%%%
%%%%%%%%%%%%%%%%%%%%%%%%%%%%%%%%%%%%%%%%%%%%%%%%%%%%%%%
\begin{landscape}
\begin{figure}[htbp]
\begin{minipage}[b]{0.33\linewidth}
\centering
\includegraphics[width=7cm,height=4.6cm]{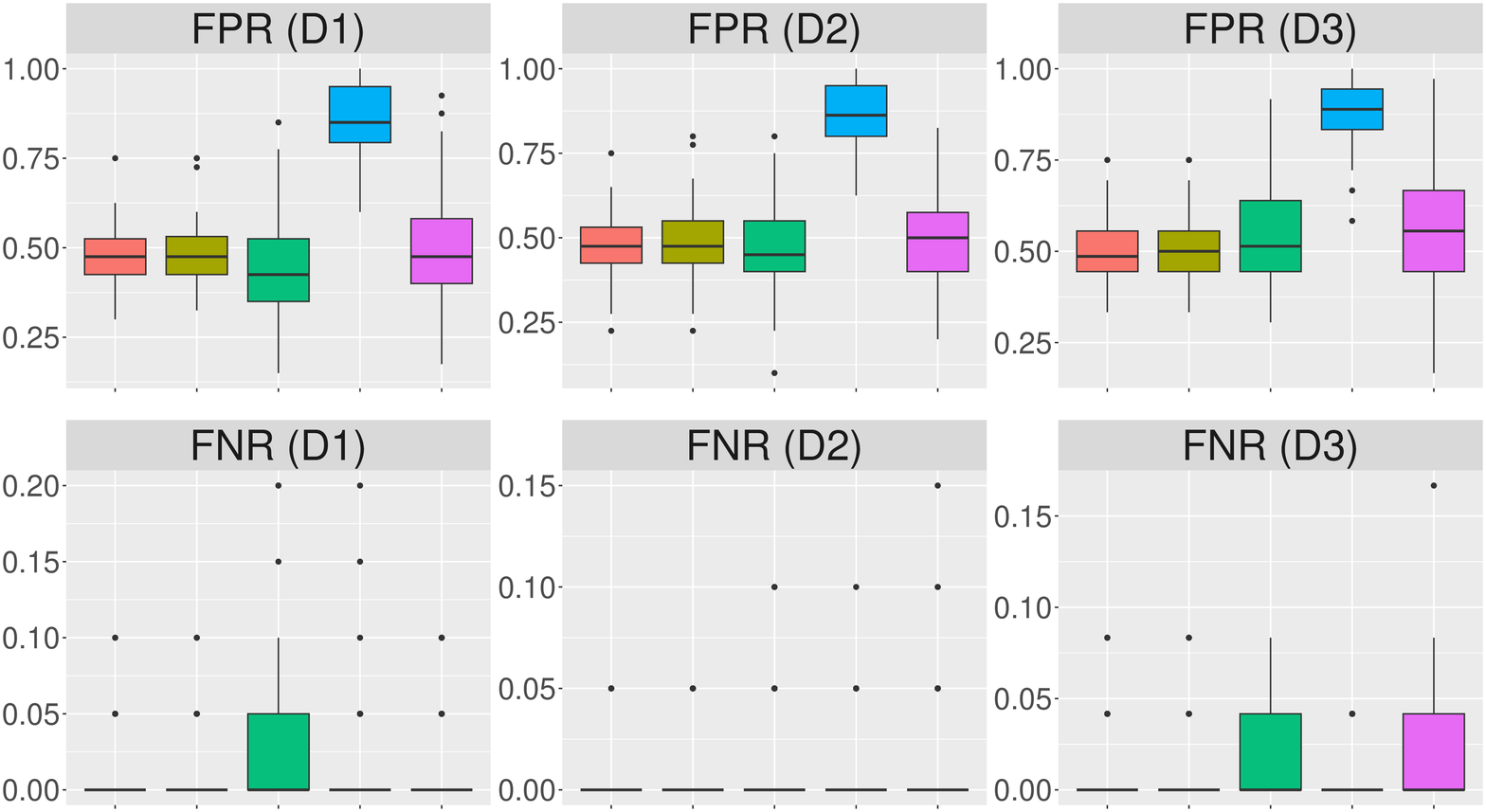}
\vspace{-2.5mm}
\subcaption{$s=5, \rho_x=0.1, \rho_y=0.1$}
%\label{Poi_SPCR}
\end{minipage}
\begin{minipage}[b]{0.33\linewidth}
\centering
\includegraphics[width=7cm,height=4.6cm]{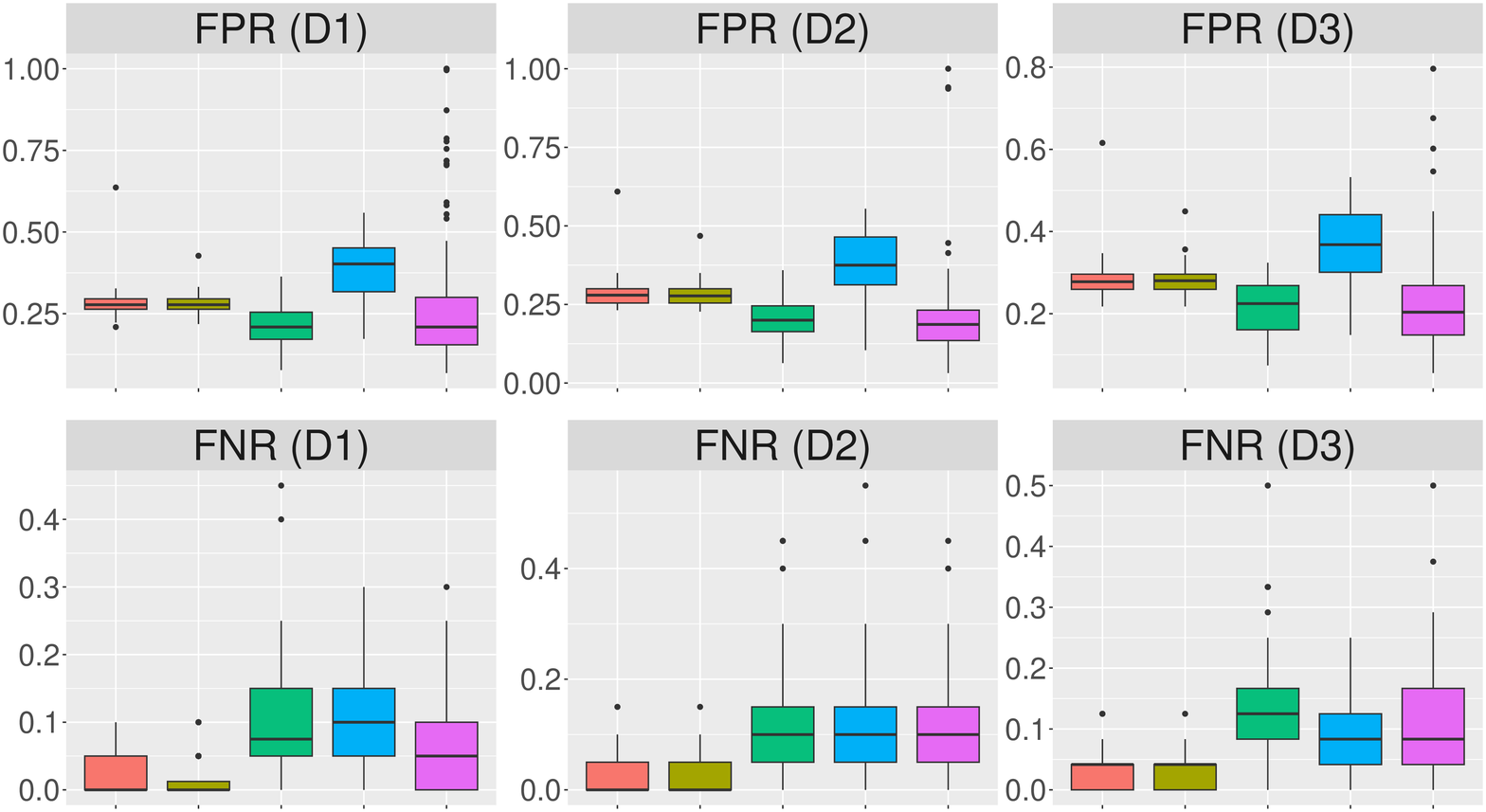} %\\ [-0.5cm]
\vspace{-2.5mm}
\subcaption{$s=50, \rho_x=0.1, \rho_y=0.1$}
%\label{Poi_PCA}
\end{minipage}
\begin{minipage}[b]{0.33\linewidth}
\centering
\includegraphics[width=7cm,height=4.6cm]{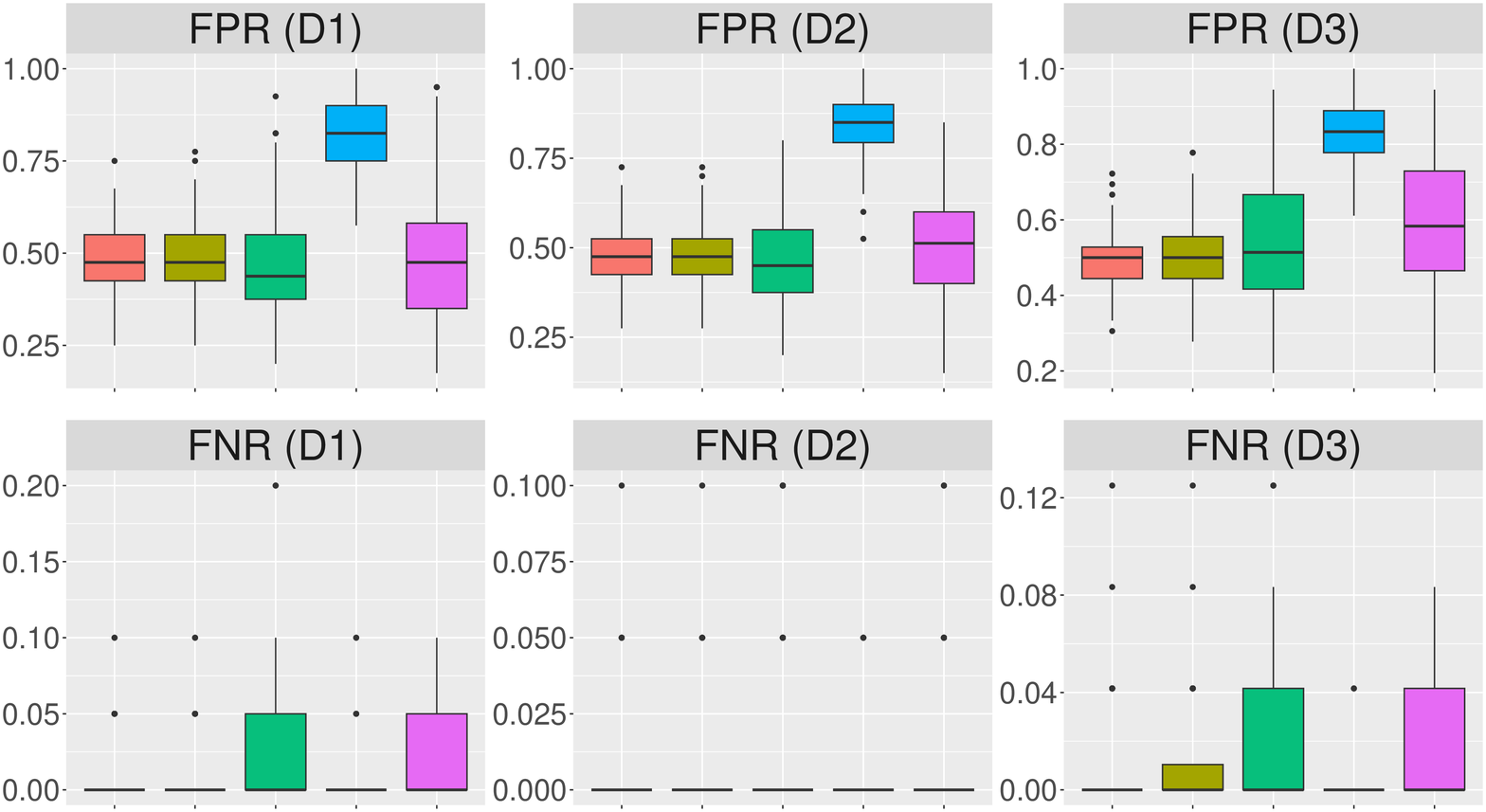} %\\ [-0.5cm] 
\vspace{-2.5mm}
\subcaption{$s=5, \rho_x=0.1, \rho_y=0.9$}
%\label{Poi_PLS-GLR}
\end{minipage}
\begin{minipage}[b]{0.33\linewidth}
\centering
\includegraphics[width=7cm,height=4.6cm]{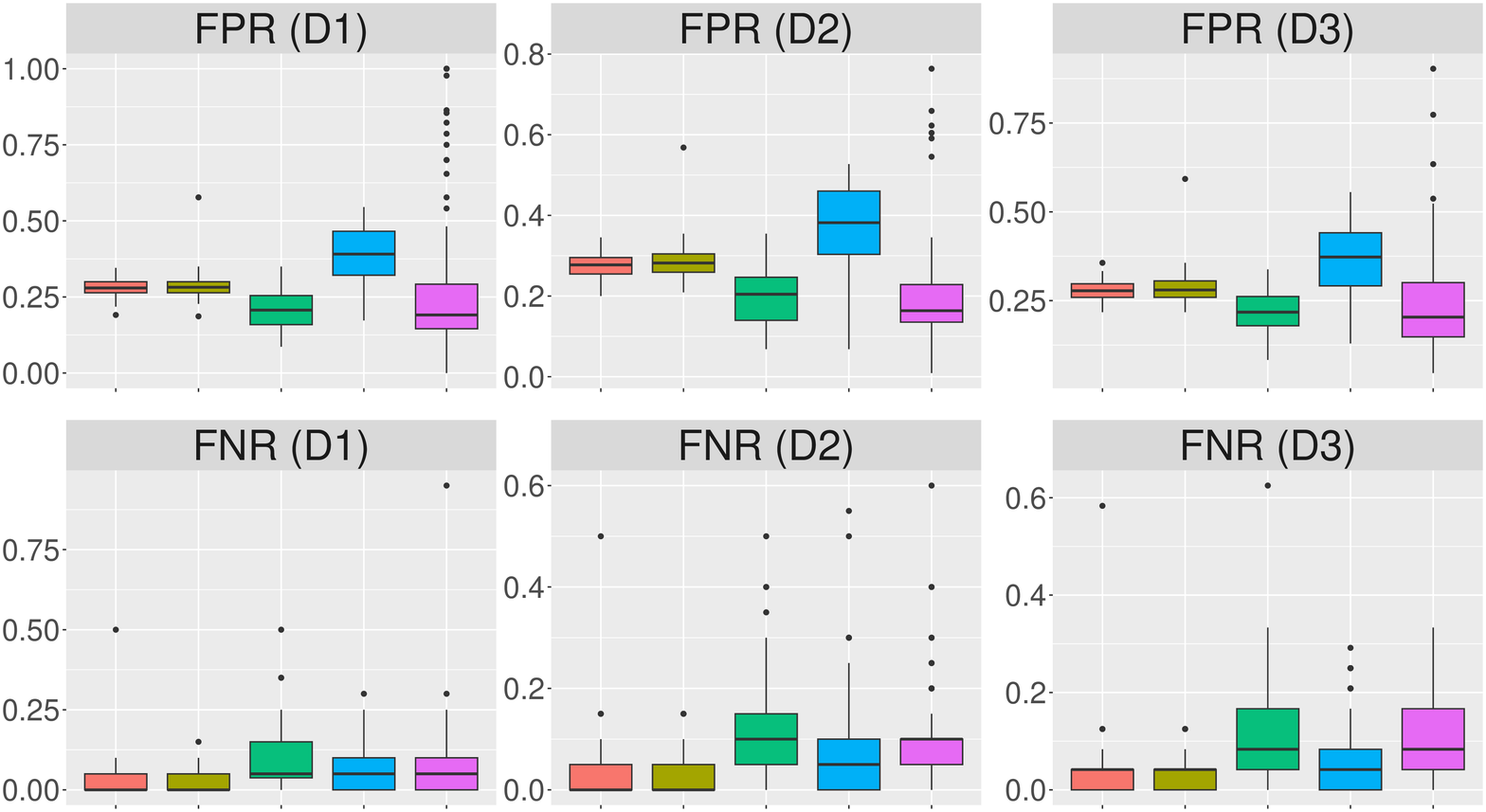}
\vspace{-2.5mm}
\subcaption{$s=50, \rho_x=0.1, \rho_y=0.9$}
%\label{Poi_SPCA}
\end{minipage}
\begin{minipage}[b]{0.33\linewidth}
\centering
\includegraphics[width=7cm,height=4.6cm]{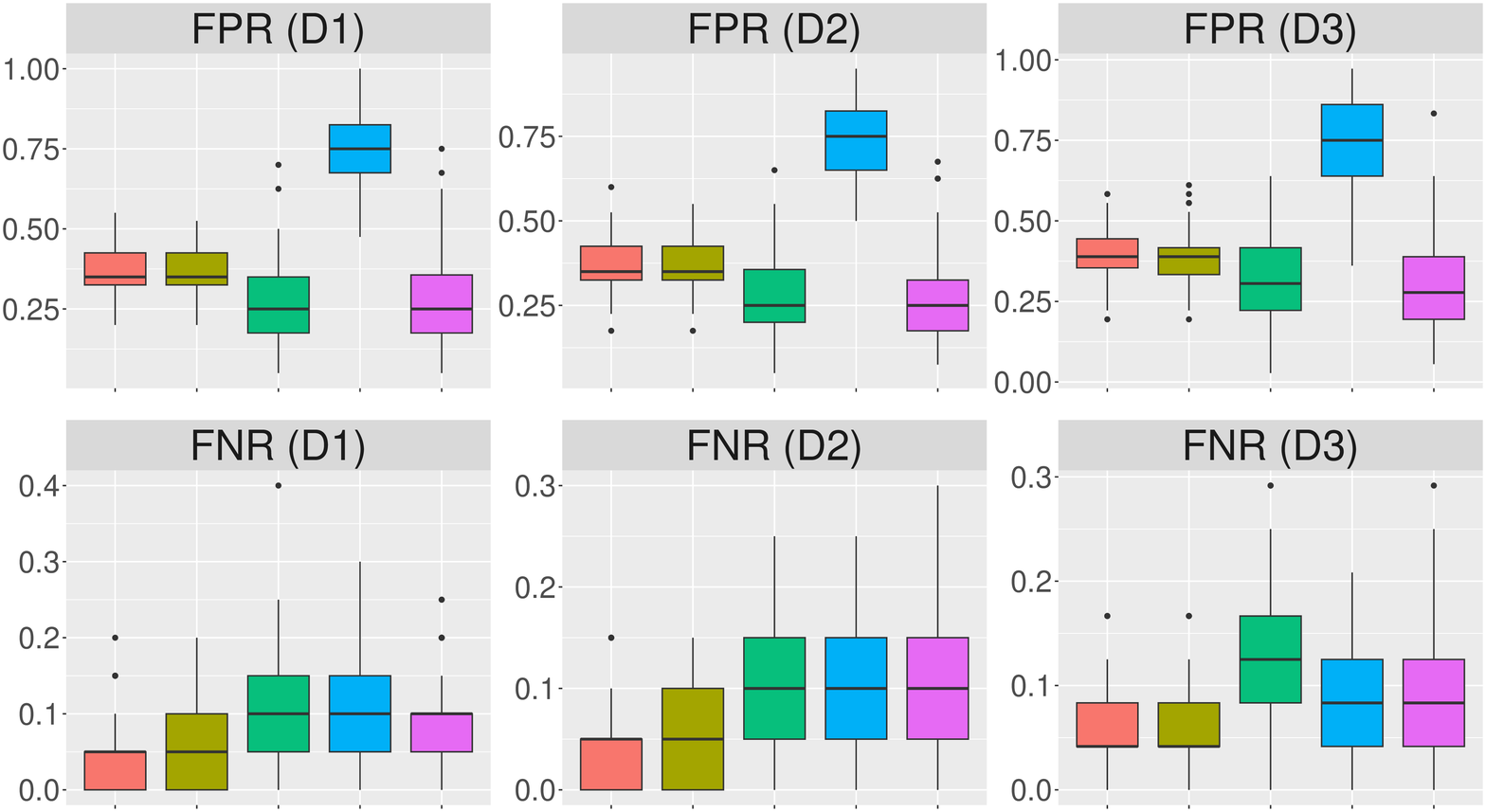}
\vspace{-2.5mm}
\subcaption{$s=5, \rho_x=0.9, \rho_y=0.1$}
%\label{Poi_DSPCA}
\end{minipage}
\begin{minipage}[b]{0.33\linewidth}
\centering
\includegraphics[width=7cm,height=4.6cm]{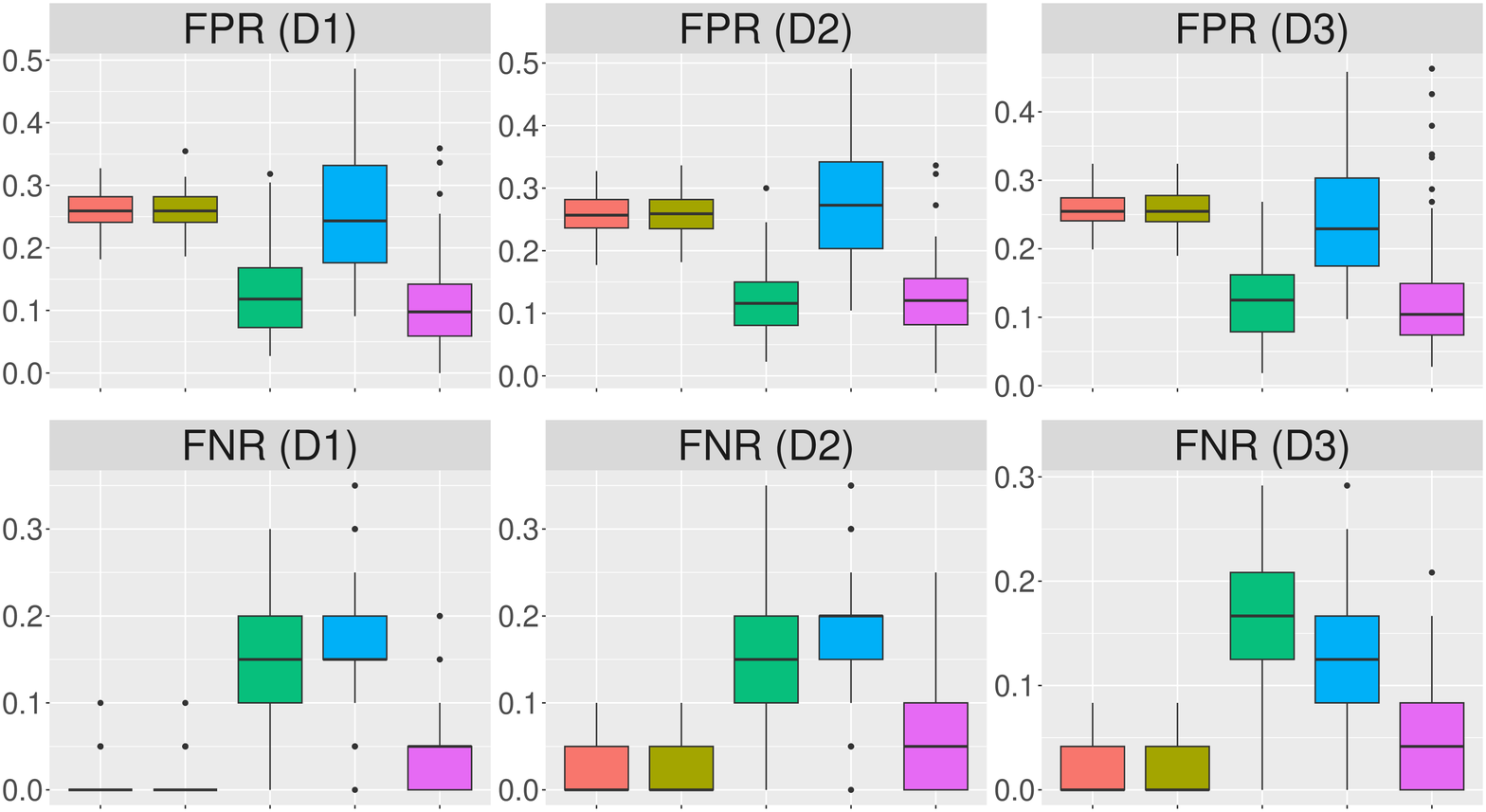}
\vspace{-2.5mm}
\subcaption{$s=50, \rho_x=0.9, \rho_y=0.1$}
%\label{Poi_FPS}
\end{minipage}
\begin{minipage}[b]{0.33\linewidth}
\centering
\includegraphics[width=7cm,height=4.6cm]{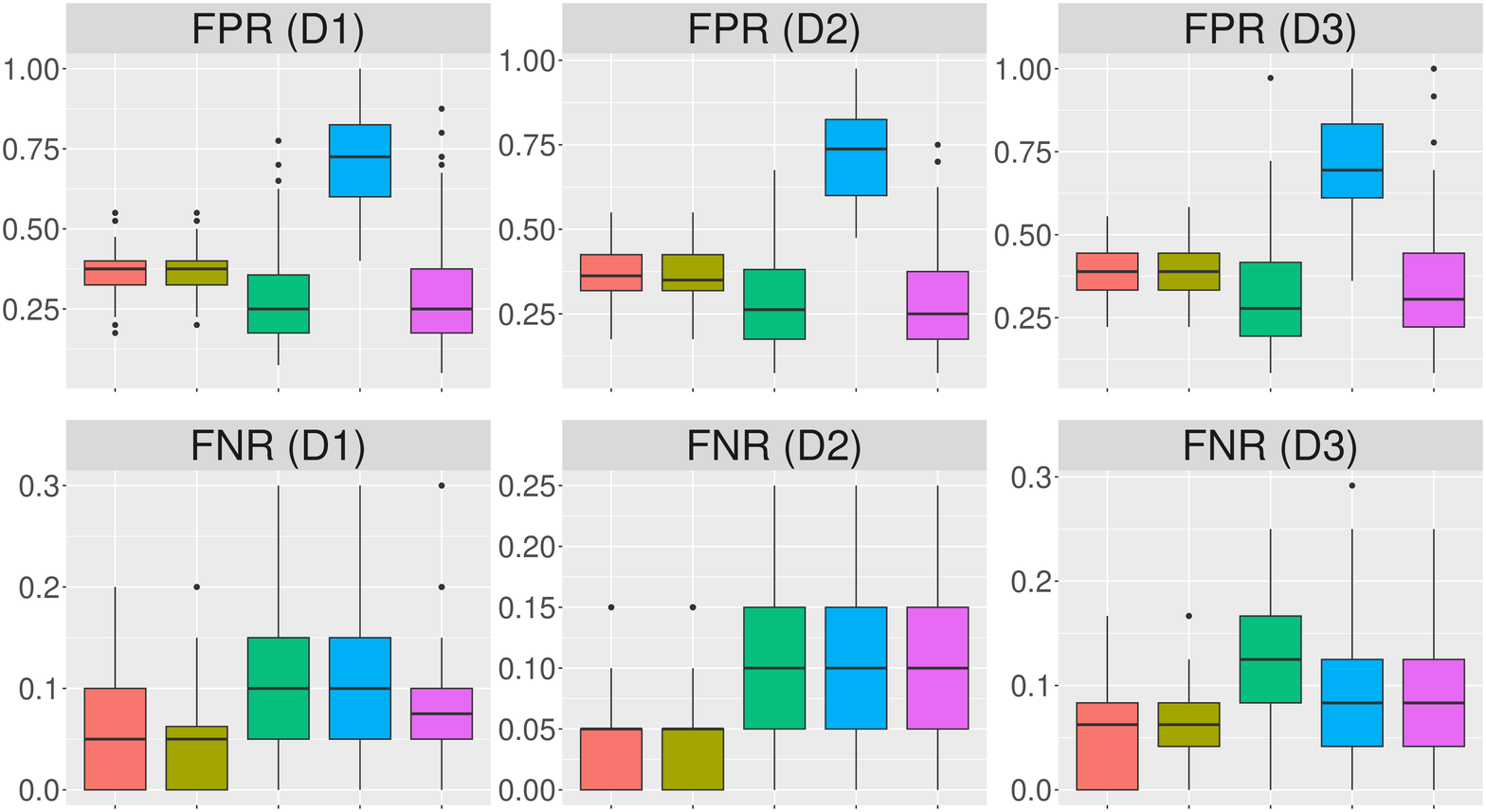}
\vspace{-2.5mm}
\subcaption{$s=5, \rho_x=0.9, \rho_y=0.9$}
%\label{Poi_SPC}
\end{minipage}
\begin{minipage}[b]{0.33\linewidth}
\centering
\includegraphics[width=7cm,height=4.6cm]{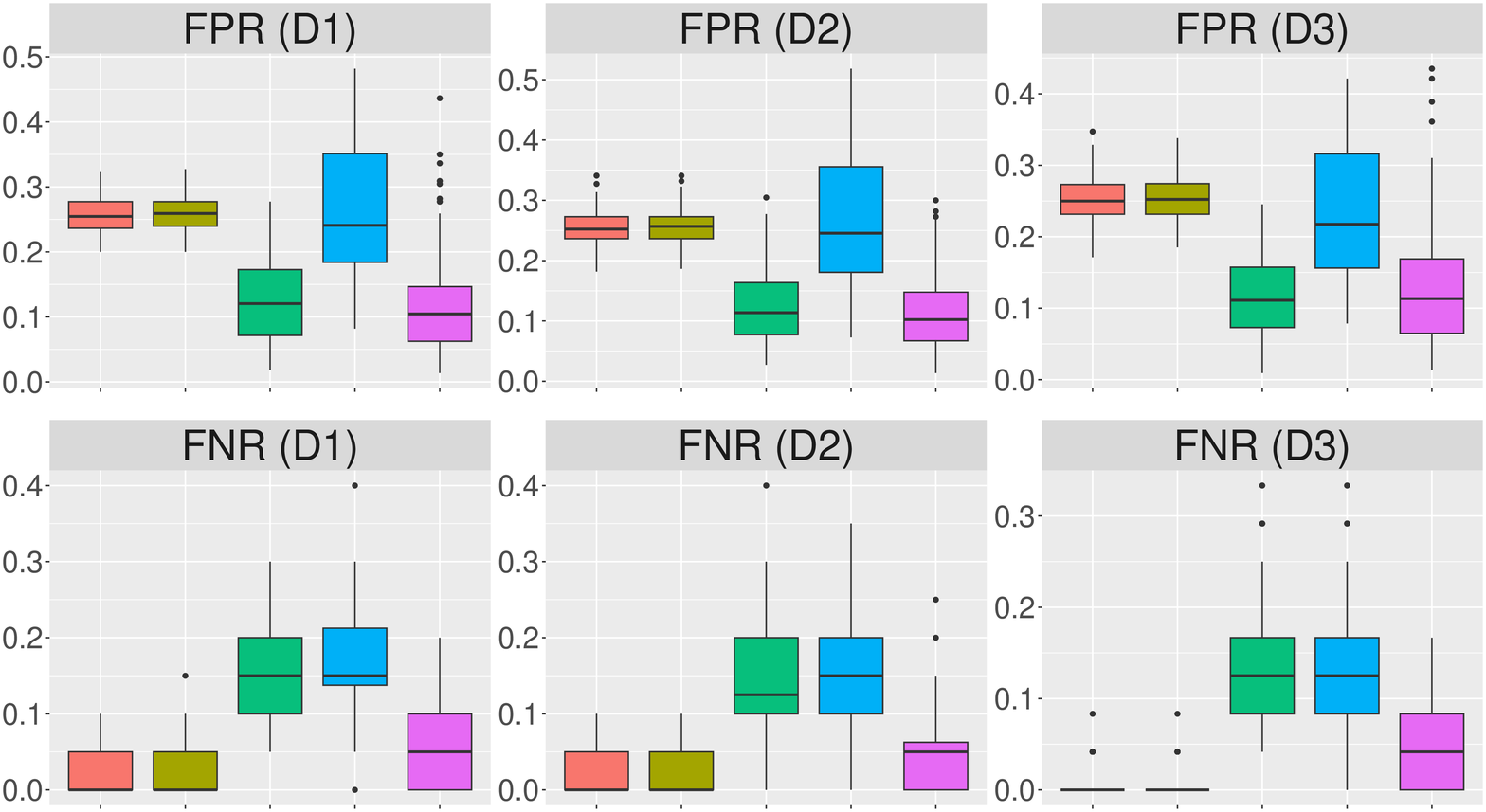}
\vspace{-2.5mm}
\subcaption{$s=50, \rho_x=0.9, \rho_y=0.9$}
%\label{Poi_SPC}
\end{minipage}
\caption{Boxplots of FPR and FNR for $n=50$ when the case $M=3$.
%The red, dark yellow, green, blue, and magenta boxplots shows, respectively, MR, UR, lasso, mglasso, and mlasso. 
The red boxplot indicates MR, dark yellow UR, green lasso, blue mglasso, and magenta mlasso. 
%The bold values correspond to the smallest means among SPCRsvd-LADMM, SPCRsvd-ADMM, and SPCR.  
%Scatter plots of principal and PLS components for the doctor visits data. 
%(a) True structure of the principal components. 
%(b) PCA. 
%(c) PLS-GLR. 
%(d) SPCR-glm. }
}
\label{fig:SimuFPRFNR_M3n50_FPRFNR}
\end{figure}
\end{landscape}

%%%%%%%%%%%%%%%%%%%%%%%%%%%%%%%%%%%%%%%%%%%%%%%%%%%%%%%
%%%%%%%%%%%%%%%%%%%%%%%%%%%%%%%%%%%%%%%%%%%%%%%%%%%%%%%
\begin{landscape}
\begin{figure}[htbp]
\begin{minipage}[b]{0.33\linewidth}
\centering
\includegraphics[width=7cm,height=4.6cm]{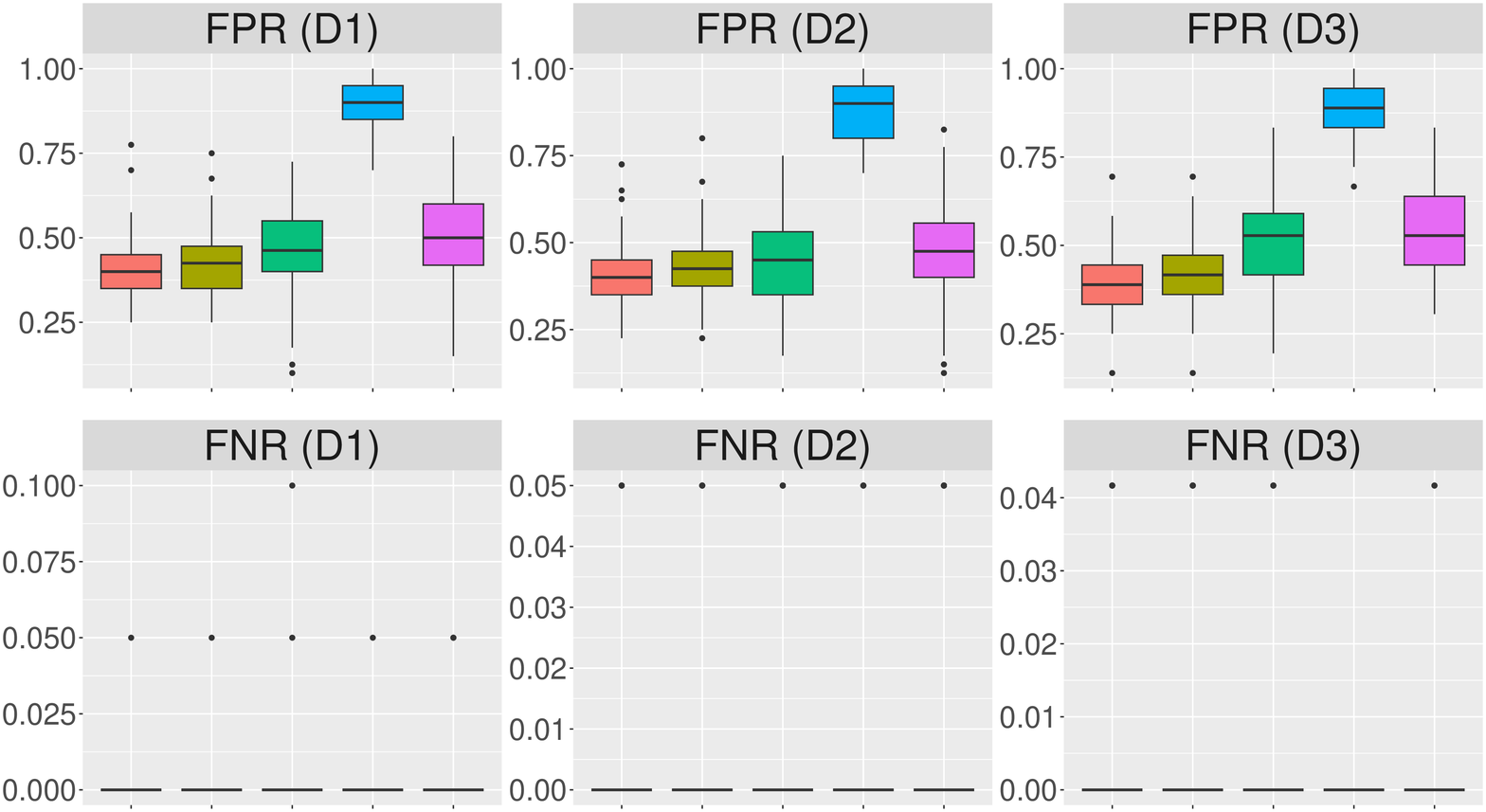}
\vspace{-2.5mm}
\subcaption{$s=5, \rho_x=0.1, \rho_y=0.1$}
%\label{Poi_SPCR}
\end{minipage}
\begin{minipage}[b]{0.33\linewidth}
\centering
\includegraphics[width=7cm,height=4.6cm]{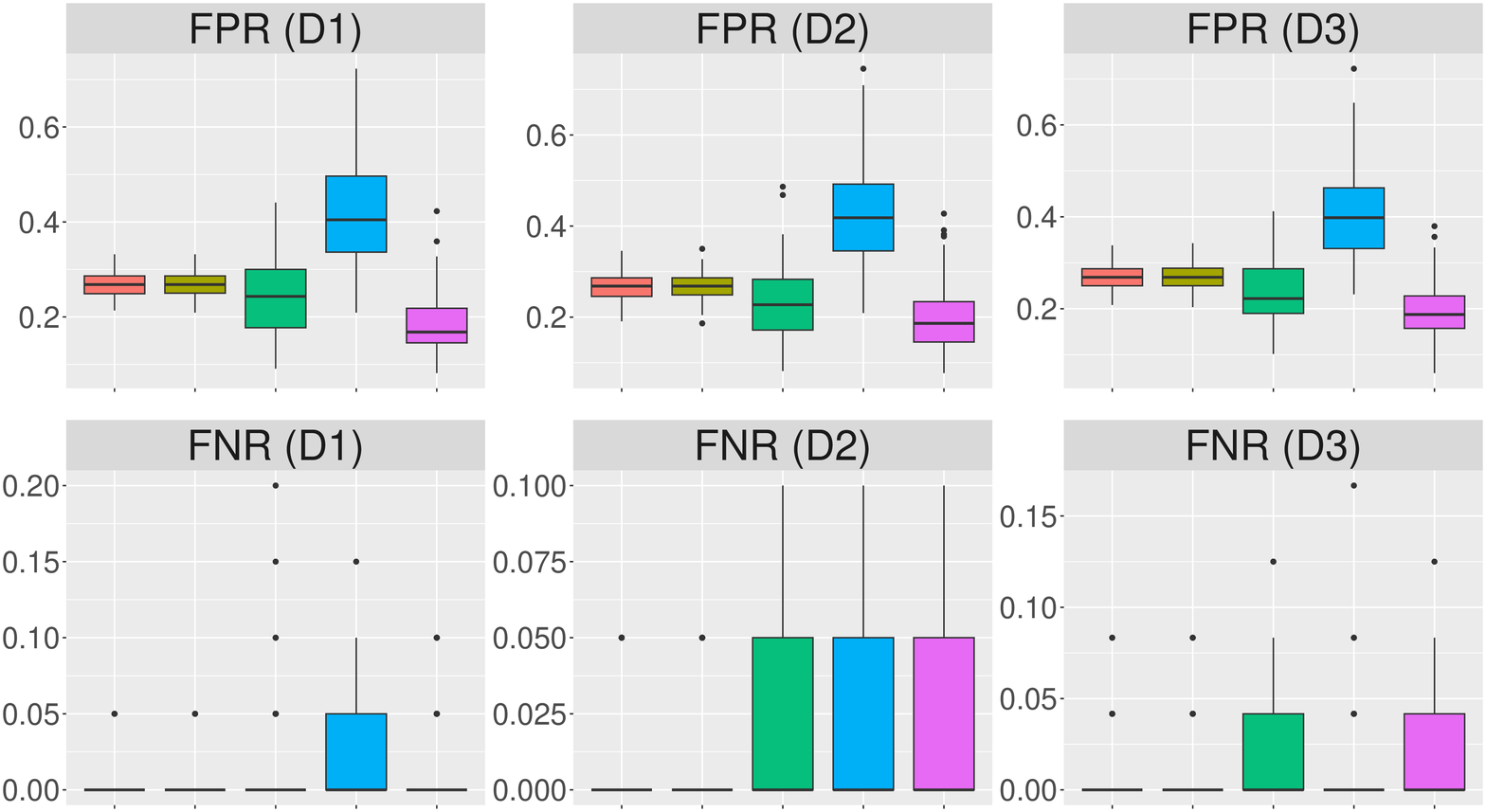} %\\ [-0.5cm]
\vspace{-2.5mm}
\subcaption{$s=50, \rho_x=0.1, \rho_y=0.1$}
%\label{Poi_PCA}
\end{minipage}
\begin{minipage}[b]{0.33\linewidth}
\centering
\includegraphics[width=7cm,height=4.6cm]{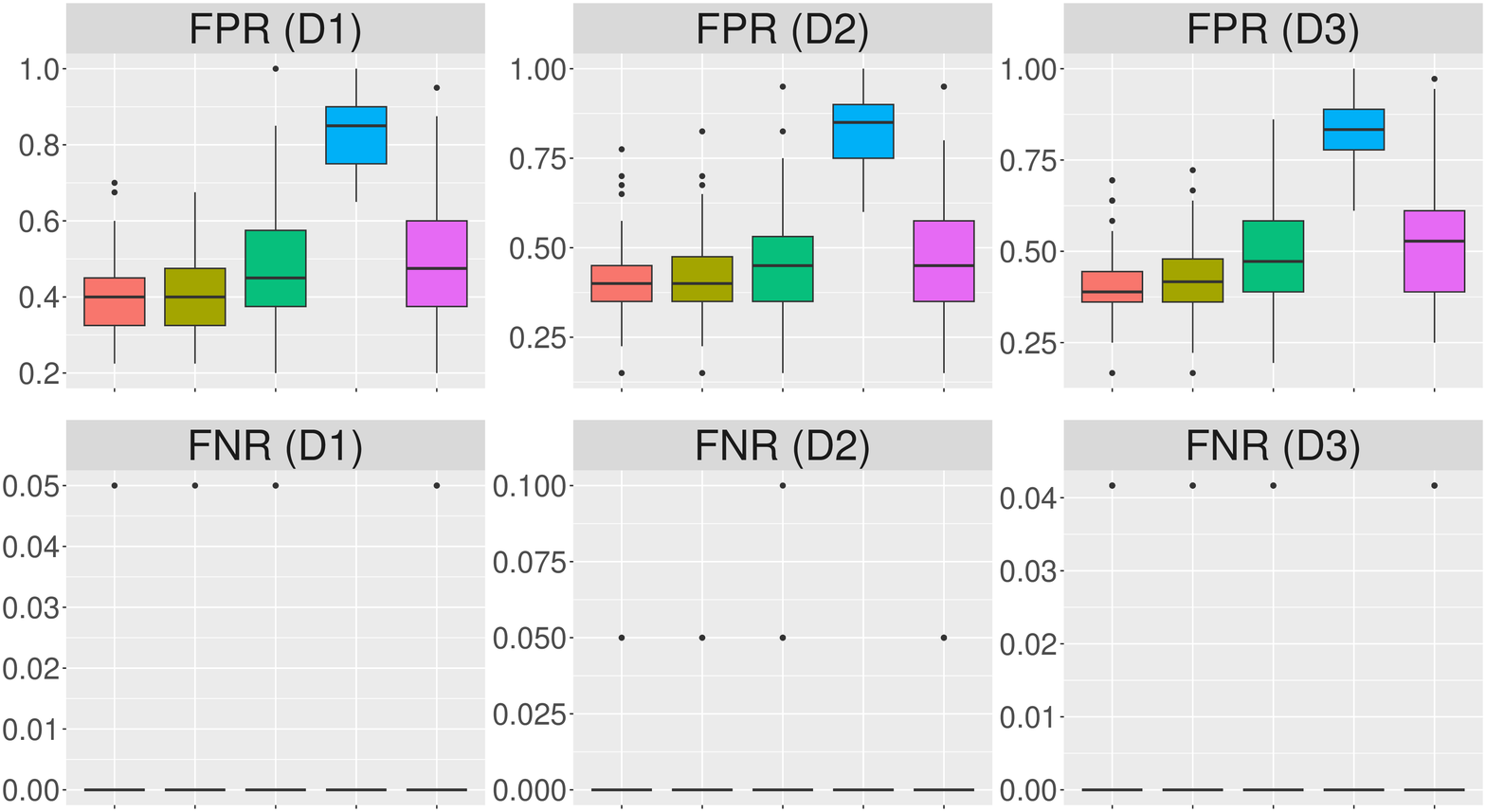} %\\ [-0.5cm] 
\vspace{-2.5mm}
\subcaption{$s=5, \rho_x=0.1, \rho_y=0.9$}
%\label{Poi_PLS-GLR}
\end{minipage}
\begin{minipage}[b]{0.33\linewidth}
\centering
\includegraphics[width=7cm,height=4.6cm]{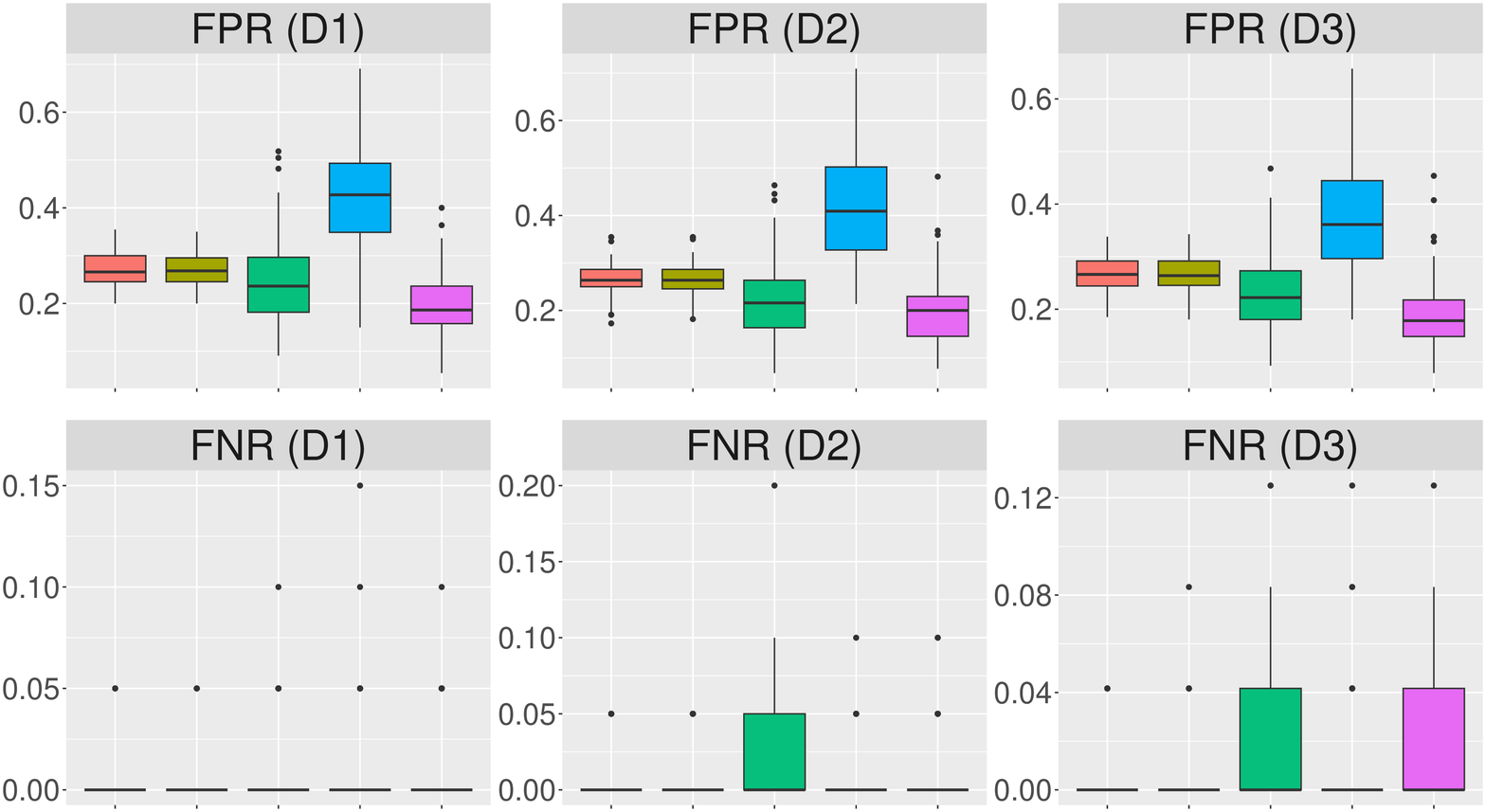}
\vspace{-2.5mm}
\subcaption{$s=50, \rho_x=0.1, \rho_y=0.9$}
%\label{Poi_SPCA}
\end{minipage}
\begin{minipage}[b]{0.33\linewidth}
\centering
\includegraphics[width=7cm,height=4.6cm]{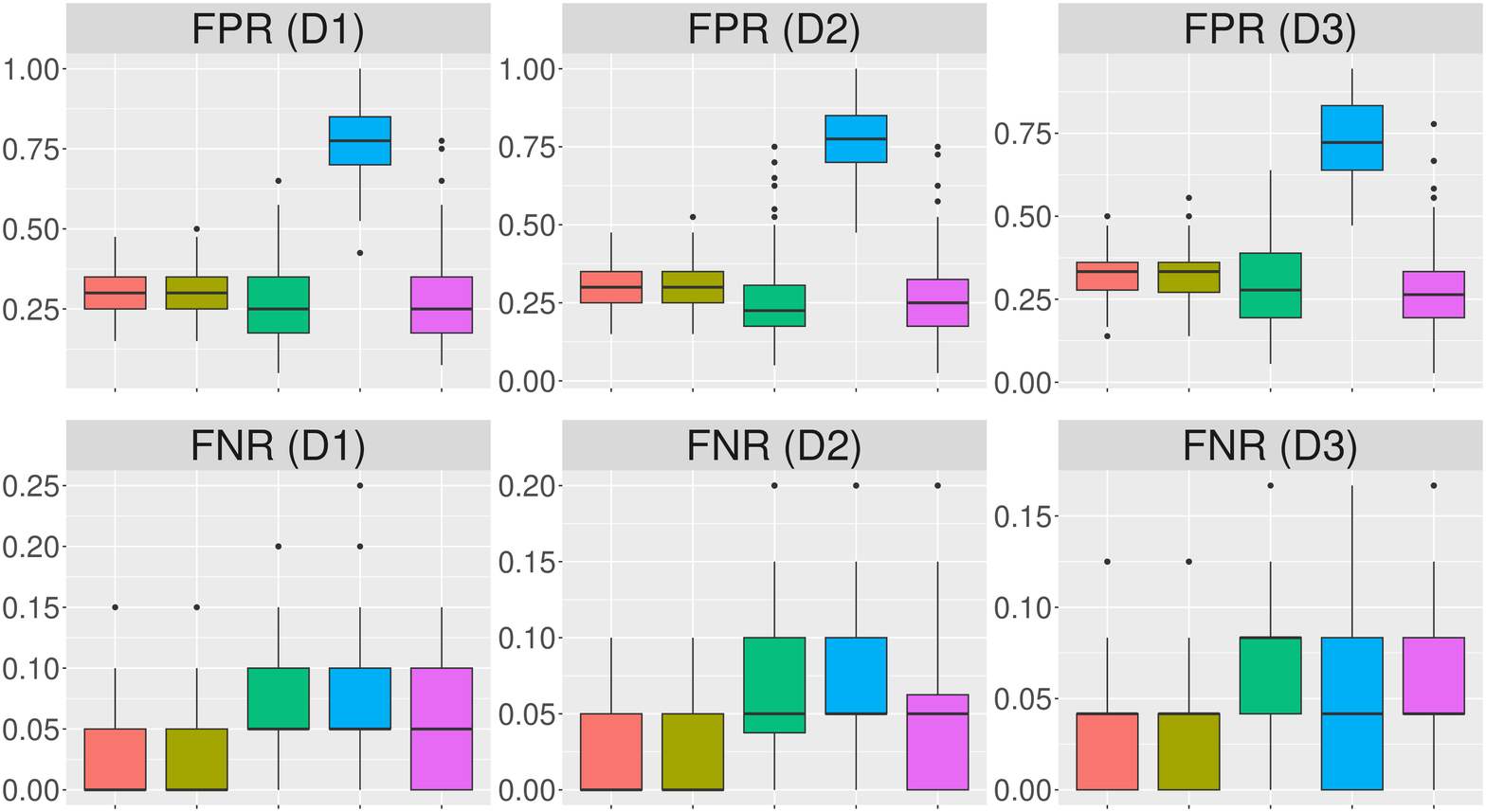}
\vspace{-2.5mm}
\subcaption{$s=5, \rho_x=0.9, \rho_y=0.1$}
%\label{Poi_DSPCA}
\end{minipage}
\begin{minipage}[b]{0.33\linewidth}
\centering
\includegraphics[width=7cm,height=4.6cm]{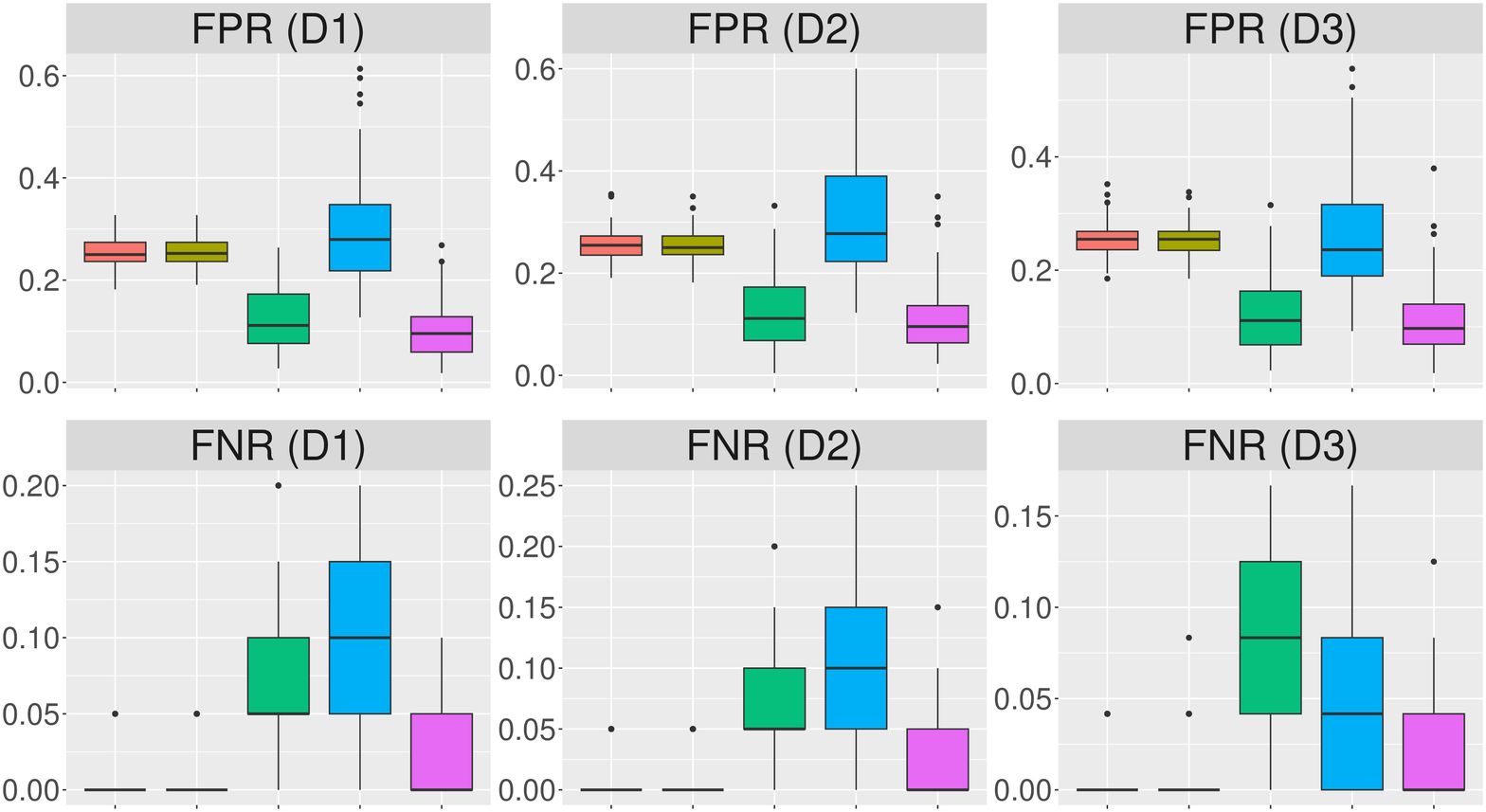}
\vspace{-2.5mm}
\subcaption{$s=50, \rho_x=0.9, \rho_y=0.1$}
%\label{Poi_FPS}
\end{minipage}
\begin{minipage}[b]{0.33\linewidth}
\centering
\includegraphics[width=7cm,height=4.6cm]{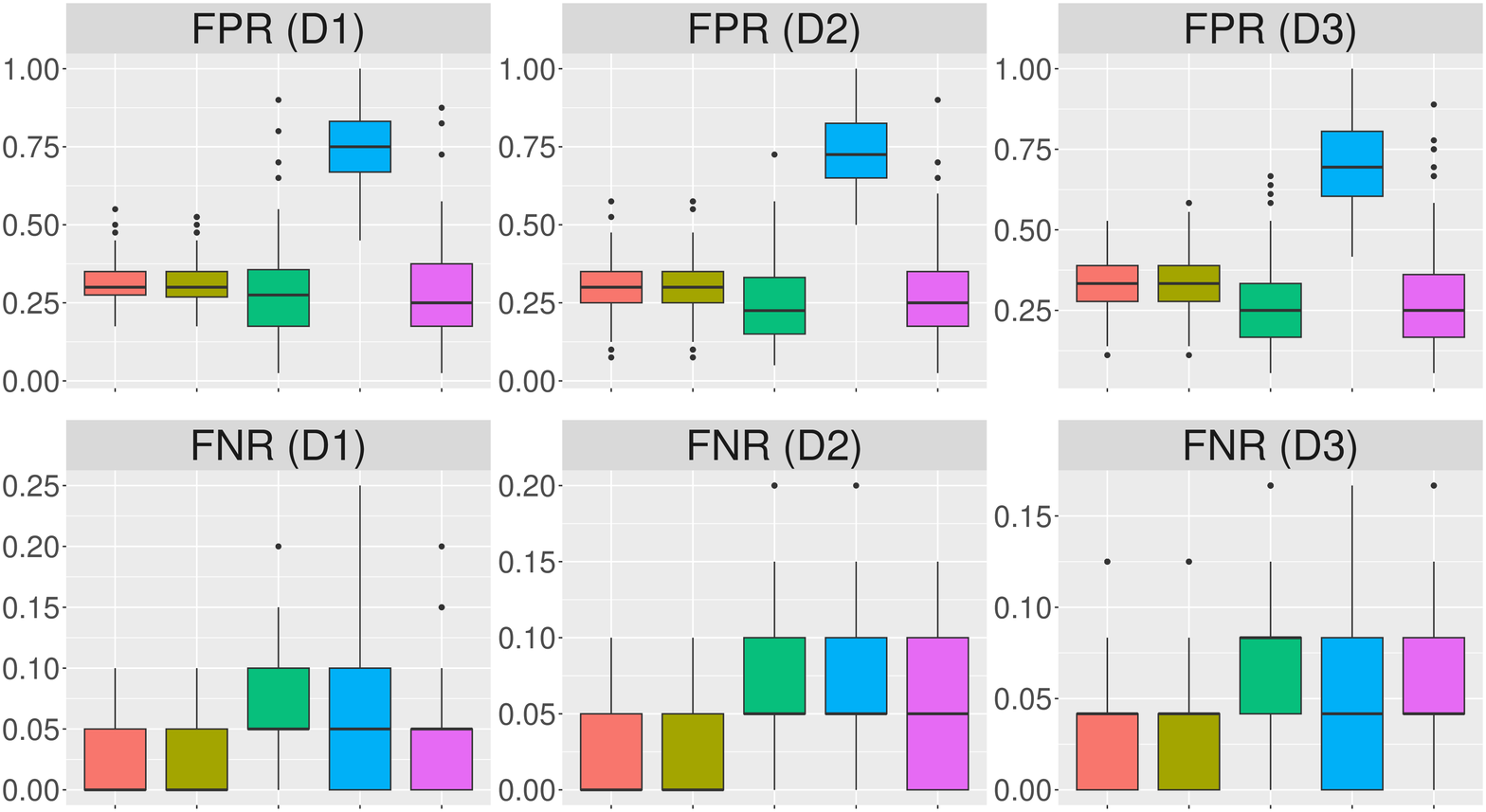}
\vspace{-2.5mm}
\subcaption{$s=5, \rho_x=0.9, \rho_y=0.9$}
%\label{Poi_SPC}
\end{minipage}
\begin{minipage}[b]{0.33\linewidth}
\centering
\includegraphics[width=7cm,height=4.6cm]{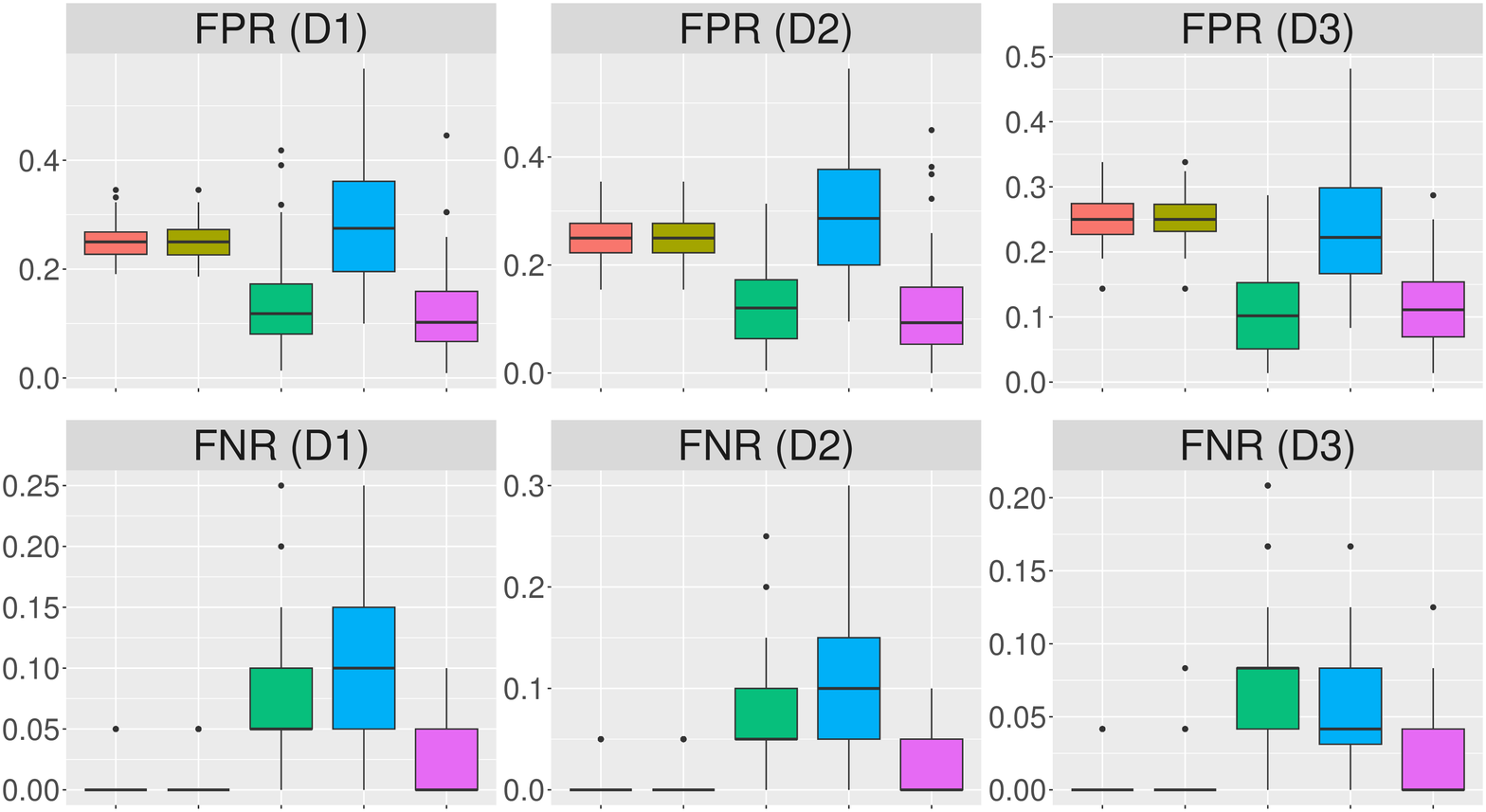}
\vspace{-2.5mm}
\subcaption{$s=50, \rho_x=0.9, \rho_y=0.9$}
%\label{Poi_SPC}
\end{minipage}
\caption{Boxplots of FPR and FNR for $n=75$ when the case $M=3$.
%The red, dark yellow, green, blue, and magenta boxplots shows, respectively, MR, UR, lasso, mglasso, and mlasso. 
The red boxplot indicates MR, dark yellow UR, green lasso, blue mglasso, and magenta mlasso. 
%The bold values correspond to the smallest means among SPCRsvd-LADMM, SPCRsvd-ADMM, and SPCR.  
%Scatter plots of principal and PLS components for the doctor visits data. 
%(a) True structure of the principal components. 
%(b) PCA. 
%(c) PLS-GLR. 
%(d) SPCR-glm. }
}
\label{fig:SimuFPRFNR_M3n75_FPRFNR}
\end{figure}
\end{landscape}

%\bibliography{../KawanoEtAl}
%\bibliography{KawanoEtAl}
%\end{document}

%\twocolumngrid

\end{document}